\documentclass[12pt,a4paper]{amsart}

\usepackage{amsmath,amssymb,amsthm,mathrsfs}
\usepackage[margin=1in]{geometry}
\usepackage{enumitem}
\usepackage{hyperref}
\usepackage{tikz-cd}
\usepackage{xcolor}
\usepackage{esint}
\usepackage{comment}
\usepackage{tikz}
\usetikzlibrary{decorations.pathmorphing}

\theoremstyle{plain}
\newtheorem{theorem}{Theorem}[section]
\newtheorem{conjecture}[theorem]{Conjecture}
\theoremstyle{definition}
\newtheorem{corollary}[theorem]{Corollary}

\newtheorem{proposition}[theorem]{Proposition}
\newtheorem{definition}[theorem]{Definition}
\newtheorem{example}[theorem]{Example}
\newtheorem{construction}[theorem]{Construction}
\theoremstyle{remark}
\newtheorem{remark}[theorem]{Remark}

\newcommand{\Zrt}{Z_{\mathrm{RT}}}
\newcommand{\Zbv}{Z_{\mathrm{BV}}}
\newcommand{\Loc}{\mathrm{Loc}}
\newcommand{\Rep}{\mathrm{\mathbf{Rep}}}
\newcommand{\Obs}{\mathrm{Obs}}
\newcommand{\EE}{\mathbb{E}}
\newcommand{\ZZ}{\mathbb{Z}}
\newcommand{\CC}{\mathbb{C}}
\newcommand{\QQ}{\mathbb{Q}}
\newcommand{\RR}{\mathbb{R}}

\newcommand{\fg}{\mathfrak{g}}
\newcommand{\Flat}{\mathrm{Flat}}
\newcommand{\Quant}{\mathrm{Quant}}
\newcommand{\BV}{\mathrm{BV}}

\newcommand{\CS}{\mathrm{CS}}
\newcommand{\MTC}{\mathscr{C}}
\newcommand{\Bord}{\mathrm{\mathbf{Bord}}}
\newcommand{\Vect}{\mathrm{\mathbf{Vect}}}
\newcommand{\id}{\mathrm{id}}
\newcommand{\op}{\mathrm{op}}

\newcommand{\IndCoh}{\mathrm{\mathbf{IndCoh}}}
\newcommand{\QCoh}{\mathrm{\mathbf{QCoh}}}
\DeclareMathOperator{\Hom}{Hom}

\DeclareMathOperator{\Map}{Map}
\DeclareMathOperator{\Sym}{Sym}
\DeclareMathOperator{\Tr}{Tr}
\DeclareMathOperator{\HH}{HH}
\DeclareMathOperator{\MCG}{MCG}
\DeclareMathOperator{\Ad}{Ad}

\DeclareMathOperator{\rk}{rk}

\usepackage{mathbbol}

\title[From BV-BFV Quantization to RT Invariants]{From BV-BFV Quantization to\\ Reshetikhin--Turaev Invariants}
\author[N. Moshayedi]{Nima Moshayedi}
\address{Institut f\"ur Mathematik\\ Universit\"at Z\"urich\\ 
Winterthurerstrasse 190
CH-8057 Z\"urich}
\email[N. Moshayedi]{nima.moshayedi@math.uzh.ch}
\date{}
\thanks{This paper was prepared with the assistance of AI tools for writing, mathematical exposition, 
literature search, and typesetting. All content has been 
verified by the author to the best of his ability. The 
author takes full responsibility for all mathematical 
content, including the formulation of conjectures, the 
correctness of attributed results, and any errors.}

\begin{document}
\maketitle

\begin{abstract}
We propose a program for bridging the gap between the perturbative BV-BFV quantization of Chern--Simons theory and the non-perturbative Reshetikhin--Turaev (RT) invariants of 3-manifolds, passing through factorization homology of $\EE_n$-algebras and the derived algebraic geometry of character stacks. We conjecture that the modular tensor category underlying the RT construction arises as the $\EE_2$-category from BV-BFV quantization of Chern--Simons theory on the disk, with the derived character stack $\Loc_G(\Sigma)$ and its shifted symplectic structure mediating the proposed identification. We formulate seven conjectures, including a main conjecture asserting natural equivalence of the BV-BFV and RT constructions as $(3\text{-}2\text{-}1)$-extended topological quantum field theories, develop a proof strategy via deformation quantization of shifted symplectic stacks, and clarify the role of $\EE_n$-Koszul duality in translating between perturbative and non-perturbative data. Supporting evidence is examined in the abelian, low-genus, and Seifert fibered cases. Connections to resurgence, categorification, and the geometric Langlands program are discussed as further motivation, though significant technical gaps remain open.
\end{abstract}

\tableofcontents

\section{Introduction}\label{sec:intro}

\subsection{Historical context and motivation}

The Reshetikhin--Turaev (RT) invariants \cite{RT1990} constitute one of the central achievements of quantum topology. Starting from a modular tensor category $\MTC$ (prototypically the semisimple quotient of representations of a quantum group $U_q(\fg)$ at a root of unity), one constructs a $(2{+}1)$-dimensional topological quantum field theory (TQFT) assigning vector spaces to closed surfaces and linear maps to cobordisms. Applied to a closed 3-manifold presented via surgery on a framed link, the construction yields a complex-valued topological invariant $\Zrt(M)$.

The construction proceeds in two stages. First, Reshetikhin and Turaev \cite{RT1990} defined invariants of framed oriented links in $S^3$ colored by representations of a quantum group, generalizing the Jones polynomial. Second, using the Kirby calculus (which relates different surgery presentations of the same 3-manifold via a finite set of moves), they showed that an appropriate normalization of the colored link invariant, summed over all irreducible representations, yields a topological invariant of the resulting 3-manifold. Turaev \cite{Turaev1994} later axiomatized the algebraic input as a modular tensor category, abstracting away from the specific quantum group origin (see \cite{BK2001,EGNO2015} for textbook treatments).

On the field-theoretic side, Witten's seminal insight \cite{Witten1989} identifies these invariants as the partition function of Chern--Simons gauge theory at level $k$. For a compact simple Lie group $G$ with Lie algebra $\fg$, and a closed oriented 3-manifold $M$, the Chern--Simons partition function is formally
\begin{equation}\label{eq:CS-path-integral}
Z_{\CS}(M) = \int_{\mathcal{A}/\mathcal{G}} \exp\!\left(\frac{ik}{4\pi} \int_M \Tr\left(A \wedge dA + \tfrac{2}{3} A \wedge A \wedge A\right)\right) \mathcal{D}A,
\end{equation}
where $\mathcal{A}$ is the space of connections on a principal $G$-bundle over $M$, $\mathcal{G}$ is the gauge group, $\Tr$ denotes an invariant bilinear form on $\fg$ normalized so that the action is $2\pi$-periodic in $k$, and $\mathcal{D}A$ is the (formal) path integral measure. Witten argued, using a combination of canonical quantization and path integral methods, that this produces the Jones polynomial for knot complements and the RT invariant for closed 3-manifolds, with the identification $q = e^{2\pi i/(k + h^\vee)}$, where $h^\vee$ is the dual Coxeter number of $\fg$. (In the quantum group literature, one often writes $q = e^{i\pi/(k+h^\vee)}$ (e.g. \cite{Turaev1994}), so that $q^2$ equals Witten's parameter; we adopt this latter convention from Section \ref{sec:RT} onward.)

Making this path integral rigorous in the perturbative regime has been the subject of extensive work by Axelrod--Singer \cite{AS1994}, Kontsevich \cite{Kontsevich1994}, Bar-Natan \cite{BarNatan1995}, Bott--Cattaneo \cite{BC1998}, and many others. The modern culmination of this line of work is the BV-BFV program of Cattaneo, Mnev, and Reshetikhin \cite{CMR2014,CMR2018,CMR2020}, which provides a general framework for the perturbative quantization of gauge theories on manifolds with boundary, assigning to the bulk a BV structure and to the boundary a BFV structure, subject to a compatibility condition (the modified quantum master equation). Applied to Chern--Simons theory, this yields well-defined perturbative invariants as formal power series in $\hbar = 2\pi i/(k + h^\vee)$.

\subsection{The fundamental gap}

The fundamental open problem that motivates this paper is:

\medskip
\noindent\textbf{Main Question:} \emph{How are the perturbative invariants produced by BV-BFV quantization related to the exact, non-perturbative RT invariants?}
\medskip

The standard expectation, going back to Witten and elaborated by many authors \cite{Freed1995,Jeffrey1992,Garoufalidis2004,Gukov2005,Marino2004}, is that the RT invariant admits an asymptotic expansion as $k \to \infty$ of the form
\begin{equation}\label{eq:asymptotic}
\Zrt(M,k) \;\sim\; \sum_{[A] \in \Flat_G(M)} e^{2\pi i k \,\CS(A)} \cdot k^{d_A/2} \cdot \tau_A(M) \cdot \left(1 + \sum_{n=1}^\infty a_n(A)\, k^{-n}\right),
\end{equation}
where the sum runs over gauge-equivalence classes of flat $G$-connections on $M$ (critical points of the Chern--Simons functional), $\CS(A)$ is the classical Chern--Simons invariant of the flat connection, \[d_A = \dim H^0(M; \Ad_A) - \dim H^1(M; \Ad_A)\] is a spectral dimension encoding the size of the stabilizer and deformation space, $\tau_A(M)$ is the Reidemeister--Ray--Singer analytic torsion \cite{RS1971} (a regularized determinant of the twisted de Rham operator), and the coefficients $a_n(A)$ are the Feynman diagram contributions at $(n+1)$ loops, which are precisely the quantities computed by the BV-BFV formalism.

Several deep obstructions prevent a direct passage from \eqref{eq:asymptotic} to an identification of $\Zrt$ with $\Zbv$:

\begin{enumerate}[label=(\alph*)]
\item \textbf{Divergence of the perturbative series.} The power series $\sum_n a_n(A) k^{-n}$ is generically divergent (zero radius of convergence). Making sense of the right-hand side of \eqref{eq:asymptotic} requires Borel resummation or more general resurgent analysis.

\item \textbf{Stokes phenomena.} The Borel resummation of the perturbative series around different flat connections are not independent. They are related by Stokes automorphisms that encode non-perturbative tunneling between critical points. The Stokes constants carry information invisible to any single perturbative expansion.

\item \textbf{Change of description.} The BV-BFV formalism works within gauge theory (connections on bundles), while the RT construction works within algebra (modular tensor categories, surgery calculus). These are fundamentally different mathematical frameworks, and matching them requires identifying the algebraic structures that emerge from quantization.

\item \textbf{Phase and normalization.} Even where the asymptotic expansion is verified, matching the overall phase and normalization requires careful treatment of framings, $\eta$-invariants, and gravitational Chern--Simons terms.

\item \textbf{Functoriality.} Both frameworks produce TQFT-like structures (assignment of state spaces to surfaces, gluing), but matching these structures requires identifying not just the partition functions for closed manifolds but the entire functorial apparatus.
\end{enumerate}

\subsection{Our approach}

In this paper, we propose an approach to the Main Question that avoids the analytic difficulties of asymptotic resummation entirely, instead passing through algebraic and categorical structures. Our thesis is:

\medskip
\noindent\textbf{Main Thesis.} \emph{The modular tensor category $\MTC$ used in the RT construction should be canonically equivalent to the $\EE_2$-monoidal category arising from BV-BFV quantization of Chern--Simons theory on the disk. If established, this equivalence, mediated by the derived geometry of character stacks, would provide a direct identification of the RT functor with the (properly completed) BV-BFV partition function, bypassing the need for resurgent resummation. This paper formulates a precise program toward this identification and assembles supporting evidence, but significant technical gaps remain open.}

\medskip

The strategy involves four layers of structure:

\begin{enumerate}[label=\textbf{Layer \arabic*.},leftmargin=3.5em]
\item \textbf{Classical:} The derived character stack $\Loc_G(\Sigma)$, with its shifted symplectic structure (PTVV \cite{PTVV2013}), provides the classical phase space common to both frameworks. This identification is well-established.
\item \textbf{Quantum-algebraic:} Deformation quantization of $\Loc_G(\Sigma)$ produces the quantum group $U_q(\fg)$ and its representation category $\Rep_q(G)$, which at roots of unity yields the modular tensor category (CPTVV \cite{CPTVV2017}, Safronov \cite{Safronov2017b}, BBJ \cite{BBJ2018}). This layer rests on established theorems, though the compatibility with BV-BFV quantization (Layer 4) remains conjectural.
\item \textbf{Topological:} Factorization homology \cite{AF2015,AFT2019} integrates the $\EE_2$-category $\Rep_q(G)$ over surfaces to produce quantum character varieties (BBJ \cite{BBJ2018}), which at roots of unity yield the RT state spaces as distinguished objects (AKZ \cite{AKZ2017}). These are established results; the conjectural step is connecting them to the BV-BFV side.
\item \textbf{Field-theoretic:} The BV-BFV quantization, reformulated in derived algebraic geometry, should coincide with the deformation quantization of Layer 2. This is the main open problem and the subject of Conjecture \ref{conj:quant-agree}.
\end{enumerate}

\begin{figure}[ht]
\centering
\begin{tikzpicture}[
    box/.style={rounded corners=6pt, minimum width=3.8cm, minimum height=1.4cm, align=center, font=\small\bfseries, text=white, draw=none},
    arr/.style={-{stealth}, thick},
    darr/.style={-{stealth}, thick, dashed},
    every node/.style={font=\small},
    scale=0.95
]
% Four pillars
\node[box, fill=cyan!60!black] (bv) at (-5,2.5) {BV-BFV\\[-2pt]\fontsize{7}{9}\selectfont perturbative, non-extended\\[-1pt]\fontsize{7}{9}\selectfont $\Zbv: \Bord_3^{\mathrm{or}} \to \Vect_\hbar$};
\node[box, fill=magenta!60!black] (rt) at (5,2.5) {Reshetikhin--Turaev\\[-2pt]\fontsize{7}{9}\selectfont non-perturbative, extended\\[-1pt]\fontsize{7}{9}\selectfont $\Zrt: \Bord_3^{\mathrm{ext}} \to 2\text{-}\Vect$};
\node[box, fill=green!60!black] (fh) at (5,-2.5) {Factorization\\Homology\\[-2pt]\fontsize{7}{9}\selectfont $\iint_\Sigma \mathcal{A}$, $\otimes$-excision\\[-1pt]\fontsize{7}{9}\selectfont BBJ, AKZ};
\node[box, fill=orange!60!black] (dag) at (-5,-2.5) {Derived Character\\Stack\\[-2pt]\fontsize{7}{9}\selectfont shifted symplectic\\[-1pt]\fontsize{7}{9}\selectfont PTVV, Calaque};
% Center object
\node[draw=black!70, thick, rounded corners=4pt, fill=white, inner sep=6pt, align=center] (center) at (0,0) {$\Loc_G(\Sigma)$\\[-2pt]\fontsize{7}{9}\selectfont\color{black!60} $(2{-}d)$-shifted symplectic};
% Solid arrows (established or conjectural connections)
\draw[arr, orange!60!black] (dag) -- node[left, align=right, text=black, font=\footnotesize] {Layer 1\\[-2pt]\fontsize{7}{9}\selectfont classical\\[-1pt]\fontsize{7}{9}\selectfont phase space} (bv);
\draw[arr, orange!60!black] (dag) -- node[below, text=black, font=\footnotesize, align=center] {Layer 2: deformation quantization\\[-2pt]\fontsize{7}{9}\selectfont CPTVV, Safronov, Etingof--Kazhdan} (fh);
\draw[arr, green!60!black] (fh) -- node[right, align=left, text=black, font=\footnotesize] {Layer 3\\[-2pt]\fontsize{7}{9}\selectfont semisimplify\\[-1pt]\fontsize{7}{9}\selectfont $+$ AKZ} (rt);
% Dashed arrow (the main conjecture)
\draw[darr, magenta!60!black, line width=1.2pt] (bv) -- node[above, text=black, font=\footnotesize\bfseries] {Main Conjecture} node[below, text=black!60, font=\fontsize{7}{9}\selectfont] {promote to $(3\text{-}2\text{-}1)$-ext.\ and identify} (rt);
% Dashed arrow (Layer 4)
\draw[darr, cyan!60!black] (-4.8,1.7) -- node[right=4pt, align=left, text=black, font=\footnotesize] {Layer 4\\[-2pt]\fontsize{7}{9}\selectfont BV-BFV $=$\\[-1pt]\fontsize{7}{9}\selectfont def.\ quant.?} (-4.8,-1.5);
% Connections to center
\draw[thick, black!30] (center) -- (bv);
\draw[thick, black!30] (center) -- (rt);
\draw[thick, black!30] (center) -- (fh);
\draw[thick, black!30] (center) -- (dag);
\end{tikzpicture}
\caption{Schematic overview of the program. The BV-BFV functor (top left) is an ordinary, perturbative TQFT valued in $\Vect_\hbar$. The RT construction (top right) is a $(3\text{-}2\text{-}1)$-extended TQFT valued in $2\text{-}\Vect$. The Main Conjecture asserts that the BV-BFV data can be promoted to a $(3\text{-}2\text{-}1)$-extended TQFT that agrees with RT. Solid arrows indicate connections supported by existing results; dashed arrows indicate conjectural identifications. The derived character stack $\Loc_G(\Sigma)$ sits at the center, mediating between all four frameworks.}
\label{fig:big-picture}
\end{figure}
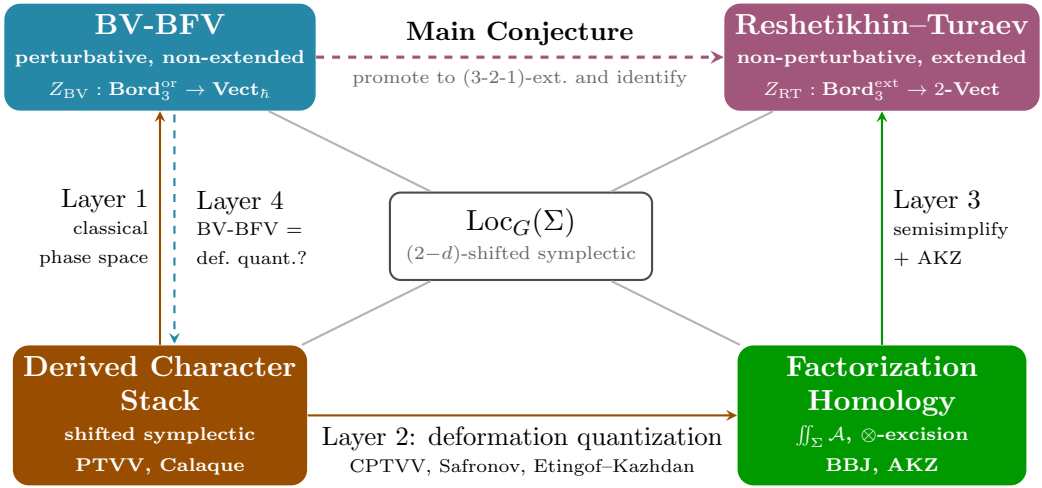

\subsection{Structure of the paper}

In Section \ref{sec:BV-BFV}, we give a detailed review of the BV-BFV formalism. We begin with the classical BV formalism (graded symplectic geometry, the classical master equation, and the BRST operator) and the BFV formalism for boundaries (the boundary phase space, BFV charge, and physical state space). We then specialize to Chern--Simons theory, giving the explicit BV fields, the superfield formulation, the BFV boundary data with the Atiyah--Bott symplectic form, and a proof that the BV-BFV axiom is satisfied. We develop the gauge-fixing procedure (Lorenz gauge), the perturbative expansion around flat connections, the structure of Feynman diagram contributions as configuration space integrals, the modified quantum master equation, and the gluing formula that gives the BV-BFV partition function the structure of a symmetric monoidal functor to $\Vect_\hbar$. We also describe the cellular (simplicial) version of the construction.

In Section \ref{sec:RT}, we review the Reshetikhin--Turaev construction. We define ribbon categories and their graphical calculus, modular tensor categories with the modularity condition on the $S$-matrix, and give the quantum group example $\overline{\Rep}_q(G)$ with the explicit Kac--Peterson formula for the $S$-matrix. We present the RT-TQFT functor, including the Verlinde formula for state space dimensions, the surgery formula for closed 3-manifolds, invariance under Kirby moves, and the extended TQFT structure assigning the MTC to circles and conformal blocks to punctured surfaces.

In Section \ref{sec:fact-homology}, we develop factorization homology. We define $\EE_n$-operads and algebras, explain the hierarchy from $\EE_0$ (formal moduli) through $\EE_2$ (braided monoidal) to $\EE_\infty$ (commutative), and define factorization homology as a colimit over disk embeddings. We state the $\otimes$-excision theorem, compute factorization homology of $S^1$ as Hochschild homology and of $S^1$ at the categorical level as the Drinfeld center, and state the fundamental theorems of Ben-Zvi--Brochier--Jordan (identifying factorization homology of braided tensor categories with quantum character varieties) and of Ai--Kong--Zheng (identifying the distinguished object of factorization homology of an MTC on a closed surface with the RT state space). We connect this to the Cobordism Hypothesis.

In Section \ref{sec:derived-character}, we introduce derived algebraic geometry (derived stacks, cotangent complexes), define the derived character stack $\Loc_G(\Sigma)$, and develop the PTVV theory of shifted symplectic structures. We explain the AKSZ mechanism producing these structures, the Lagrangian structure on restriction maps (Calaque), and the precise dictionary between BV-BFV data and derived geometry. We then develop the deformation quantization of shifted symplectic stacks (combining CPTVV, Safronov, and Etingof--Kazhdan) and the BBJ identification of quantized character stacks with factorization homology, distinguishing the two levels of the theory: quantum character varieties at generic $q$ and the collapse to $\Vect$ with a distinguished object at roots of unity.

In Section \ref{sec:bridge}, we state our main conjectures. We construct the $\EE_2$-category from BV-BFV via vertical and horizontal composition, state the $\EE_2$-equivalence conjecture and the Main Conjecture (natural equivalence of functors), and develop the four-step proof strategy: classical identification, quantization comparison (with three detailed sub-steps), factorization homology transport, and cobordism map matching via quantized Lagrangian correspondences.

In Section \ref{sec:koszul}, we develop $\EE_n$-Koszul duality (the general theory and its geometric interpretation), apply it to reconstruct the global category $\mathcal{B}_{\CS}$ from local perturbative data via ind-coherent sheaves on the formal completion, and work out the solid torus example explicitly.

In Section \ref{sec:resurgence}, we introduce resurgent series, alien derivatives, and Stokes automorphisms in the Chern--Simons context, present the\\ Gukov--Mari\~no--Putrov evidence for the Poincar\'e homology sphere, and formulate the resurgence--Koszul duality correspondence conjecture identifying Stokes data with algebraic transition maps.

In Section \ref{sec:evidence}, we test the conjectural picture in the abelian case ($G = U(1)$, where everything is exact), genus zero and one (including the pillowcase computation and $SL_2(\ZZ)$ representations), mapping class group representations (including Andersen's asymptotic faithfulness theorem), lens spaces (with explicit formulas and Jeffrey's verification), and Seifert manifolds.

In Section \ref{sec:outlook}, we discuss the remaining gaps toward a proof, higher-categorical extensions (the fully extended TQFT conjecture),\\ 4-dimensional implications (Crane--Yetter theory and Khovanov homology categorification), and connections to the geometric Langlands program via Betti, de Rham, and Dolbeault variants of the character stack.\\

\textbf{Acknowledgements.} The author thanks P. Mnev, P. Safronov and K. Wernli for valuable comments and remarks.

\section{The BV-BFV Formalism for Chern--Simons Theory}\label{sec:BV-BFV}

\subsection{Classical BV formalism}

The classical Batalin--Vilkovisky formalism provides a systematic cohomological framework for handling gauge symmetry in field theories. Before incorporating boundary structures, we recall the essential ingredients for theories on closed manifolds.\\

The starting point is a $\ZZ$-graded manifold $\mathcal{F}$, called the \emph{space of BV fields}, which collects the physical fields (in degree $0$), the ghosts encoding infinitesimal gauge transformations (in degree $+1$), the antifields serving as sources for BRST variations (in degree $-1$), and, for reducible gauge theories, ghosts-for-ghosts and higher ghosts and their antifields in further degrees. This graded space carries a symplectic form $\omega$ of cohomological degree $-1$, known as the \emph{BV symplectic form} or \emph{antibracket}, which pairs fields with their corresponding antifields. The induced Poisson bracket $\{-,-\}_{\BV}$ has degree $+1$.

\begin{definition}[BV manifold]\label{def:BV-manifold}
A \emph{classical BV manifold} is a quadruple $(\mathcal{F}, \omega, S, Q)$ consisting of a $\ZZ$-graded manifold $\mathcal{F}$ with a degree $(-1)$ symplectic form $\omega$, a degree $0$ function $S \in C^\infty(\mathcal{F})$ called the \emph{BV action}, and a degree $+1$ cohomological vector field\footnote{This is for the finite-dimensional setting, where the odd symplectic form is non-degenerate and the Hamiltonian vector field of $S$ is globally defined. One has to be more careful in the infinite-dimensional setting.} $Q = \{S, -\}_{\BV}$. These are subject to the \emph{classical master equation} (CME):
\begin{equation}\label{eq:CME}
\{S, S\}_{\BV} = 0,
\end{equation}
which implies $Q^2 = 0$ by the graded Jacobi identity.
\end{definition}

The CME encodes the full consistency of the gauge symmetry. It simultaneously contains the gauge invariance of the classical action, the closure of the gauge algebra, and all higher coherence conditions. The resulting cohomology $H^0(Q)$ computes the algebra of classical observables modulo gauge equivalence, while the higher cohomology groups $H^k(Q)$ detect obstructions and redundancies in the gauge structure.

At the quantum level, the BV formalism introduces an additional piece of structure: a second-order differential operator $\Delta_{\BV}$ of degree $+1$, the \emph{BV Laplacian}, satisfying $\Delta_{\BV}^2 = 0$. This operator measures the failure of the antibracket to be a derivation and is related to $\omega$ by the identity \[\{f, g\}_{\BV} = (-1)^{|f|}\Delta_{\BV}(fg) - (-1)^{|f|}\Delta_{\BV}(f)\, g - f\, \Delta_{\BV}(g).\]

\begin{definition}[BV quantization]\label{def:BV-quant}
A \emph{quantum BV theory} is a classical BV manifold equipped with a BV Laplacian\footnote{This is canonically defined for the finite-dimensional case. In the infinite-dimensional case one has to introduce a suitable regularization.} $\Delta_{\BV}$ as above, such that the BV action satisfies the \emph{quantum master equation} (QME):
\begin{equation}\label{eq:QME}
\Delta_{\BV}\, e^{iS/\hbar} = 0 \quad\Longleftrightarrow\quad \{S, S\}_{\BV} - 2i\hbar\, \Delta_{\BV} S = 0.
\end{equation}
\end{definition}

The QME is a quantum deformation of the CME: at $\hbar = 0$ it reduces to $\{S,S\}_{\BV} = 0$, while the correction term $i\hbar\Delta_{\BV} S$ accounts for the anomalies that can arise upon quantization. When the QME is satisfied, one defines the \emph{partition function} by choosing a Lagrangian submanifold $\mathcal{L} \subset \mathcal{F}$. This is the BV formulation of gauge-fixing. When we integrate, we get
\begin{equation}\label{eq:BV-Z-closed}
Z = \int_{\mathcal{L}} e^{iS/\hbar}\, \mu_{\mathcal{L}},
\end{equation}
where $\mu_{\mathcal{L}}$ is the half-density on $\mathcal{L}$ induced by the BV symplectic form. A key consequence of the QME is that $Z$ does not depend on the choice of gauge-fixing Lagrangian $\mathcal{L}$, up to appropriate cohomological equivalence. This gauge-independence is the BV analogue of the BRST Ward identities and is the fundamental reason the formalism produces well-defined invariants.

\subsection{The BFV formalism for boundaries}

When the manifold $M$ has a non-empty boundary $\partial M = \Sigma$, the BV formalism as described above breaks down. The integration-by-parts argument that ensures gauge-independence of the partition function produces boundary contributions that do not vanish. The Batalin--Fradkin--Vilkovisky (BFV) formalism addresses this by equipping the boundary with its own Hamiltonian structure, compatible with the bulk BV data.\\

The idea is that the boundary inherits a phase space from the restriction of the bulk fields, and this phase space carries a degree $0$ symplectic structure (in contrast to the degree $(-1)$ symplectic structure in the bulk). On the boundary, the cohomological data are encoded by a degree $+1$ Hamiltonian on an even symplectic graded manifold.

\begin{definition}[BFV manifold]\label{def:BFV-manifold}
A \emph{BFV manifold} is a quadruple $(\mathcal{F}^\partial, \omega^\partial, S^\partial, Q^\partial)$ where $\mathcal{F}^\partial$ is a $\ZZ$-graded manifold (the \emph{boundary phase space}), $\omega^\partial$ is a symplectic form of degree $0$, $S^\partial \in C^\infty(\mathcal{F}^\partial)$ is a degree $+1$ function called the \emph{BFV charge} satisfying $\{S^\partial, S^\partial\}^\partial = 0$, and $Q^\partial = \{S^\partial, -\}^\partial$ is the resulting cohomological vector field\footnote{We have $(Q^\partial)^2=0$.}.
\end{definition}

The condition $\{S^\partial, S^\partial\}^\partial = 0$ is the BFV analogue of the classical master equation. It ensures that $Q^\partial$ squares to zero and thus defines a cohomology theory on the boundary. Quantization of the boundary data proceeds by geometric quantization of the symplectic manifold $(\mathcal{F}^\partial, \omega^\partial)$, producing\footnote{In perturbative field theory one more often gets a cochain complex or a space of states constructed perturbatively rather than a literal canonically defined Hilbert space by geometric quantization.} a graded Hilbert space $\mathcal{H}_\Sigma$. The BFV charge $S^\partial$ quantizes to a nilpotent operator $\hat{\Omega}_\Sigma$ on $\mathcal{H}_\Sigma$ with $\hat{\Omega}_\Sigma^2 = 0$, and the \emph{physical state space} is its cohomology $H^\bullet(\mathcal{H}_\Sigma,\hat{\Omega}_\Sigma)$. This is the BFV analogue of BRST cohomology\footnote{In a formal quantization, one expects the boundary BFV data to produce a complex of states $\mathcal{H}_\Sigma$ together wth a quantized boundary operator $\hat{\Omega}_\Sigma:\mathcal{H}_\Sigma\to \mathcal{H}_\Sigma$ with $\hat\Omega_\Sigma^2=0$, its cohomology $H^\bullet(\mathcal{H}_\Sigma,\hat\Omega_\Sigma)$ is interpreted as the physical boundary state space. In perturbative field theory, this construction is generally formal and should not be understood as automatic geometric quantization in the strict analytic sense.}. In particular, it selects the gauge-invariant states on the boundary.\\

The central construction of the CMR program is the coupling of the bulk BV structure to the boundary BFV structure through a compatibility condition.

\begin{definition}[BV-BFV data]\label{def:BV-BFV}
A \emph{BV-BFV pair} for a manifold $M$ with $\partial M = \Sigma$ consists of bulk BV data $(\mathcal{F}_M, \omega_M, S_M, Q_M)$ and boundary BFV data\\ $(\mathcal{F}_\Sigma^\partial, \omega^\partial_\Sigma=\delta\alpha^\partial_\Sigma, S^\partial_\Sigma, Q^\partial_\Sigma)$, together with a surjective submersion $\pi: \mathcal{F}_M \to \mathcal{F}_\Sigma^\partial$ (restriction of bulk fields to the boundary), subject to the \emph{BV-BFV axiom}:
\begin{equation}\label{eq:BV-BFV-axiom}
\iota_{Q_M} \omega_M = \delta S_M + \pi^* \alpha^\partial_\Sigma.
\end{equation}
An equivalent and often more useful formulation is that the failure of the classical master equation in the bulk is precisely controlled by the pullback of the boundary BFV charge:
\begin{equation}\label{eq:BV-BFV-axiom-2}
\frac{1}{2}\{S_M, S_M\}_{\BV} = \pi^*S^\partial_\Sigma.
\end{equation}
\end{definition}

\begin{figure}[ht]
\centering
\begin{tikzpicture}[scale=1.0]

% === Fill the body as a single closed path ===
\fill[blue!8]
    % start bottom-left, go up the left wall
    (-3,-2) .. controls (-3.2,-0.5) and (-3,1) .. (-2.5,2)
    % across top-left boundary (back arc)
    arc(180:0:0.8 and 0.3)
    % down through crotch
    .. controls (-0.5,0.8) and (0.5,0.8) ..
    % up to top-right boundary
    (0.9,2)
    % across top-right boundary (back arc)
    arc(180:0:0.8 and 0.3)
    % down the right wall
    .. controls (3,1) and (3.2,-0.5) .. (3,-2)
    % across bottom (back arc)
    arc(0:-180:3 and 0.8)
    -- cycle;

% === Draw all outlines on top ===

% Left wall
\draw[blue!60!black, thick] 
    (-3,-2) .. controls (-3.2,-0.5) and (-3,1) .. (-2.5,2);
% Right wall
\draw[blue!60!black, thick] 
    (3,-2) .. controls (3.2,-0.5) and (3,1) .. (2.5,2);
% Inner crotch
\draw[blue!60!black, thick] 
    (-0.9,2) .. controls (-0.5,0.8) and (0.5,0.8) .. (0.9,2);

% Boundary circles
\fill[red!40!black, opacity=0.2] (0,-2) ellipse (3 and 0.8);
\draw[red!60!black, very thick]
    (-3,-2) arc[start angle=180,end angle=360,x radius=3,y radius=0.8];
\draw[red!60!black, very thick, dashed]
    (3,-2) arc[start angle=0,end angle=180,x radius=3,y radius=0.8];

\draw[red!60!black, very thick] (-1.7,2) ellipse (0.8 and 0.3);
\draw[red!40!black, fill, opacity=0.2] (-1.7,2) ellipse (0.8 and 0.3);

\draw[red!60!black, very thick] (1.7,2) ellipse (0.8 and 0.3);
\draw[red!40!black, fill, opacity=0.2] (1.7,2) ellipse (0.8 and 0.3);

% === Labels ===
\node[blue!60!black, font=\large\bfseries] at (2,0) {$M$};

\node[red!60!black, font=\small\bfseries, above] at (-1.7,2.35) {$\Sigma_1$};
\node[red!60!black, font=\small\bfseries, above] at (1.7,2.35) {$\Sigma_2$};
\node[red!60!black, font=\small\bfseries] at (0,-3.15) {$\Sigma_{\mathrm{in}}$};

% BV label box
\node[draw=blue!50, thick, rounded corners=3pt, fill=blue!5,
      inner sep=4pt, font=\scriptsize, align=center] at (-6,0)
      {BV data on $M$:\\[1pt]
       $(\mathcal{F}_M, \omega_M, S_M, Q_M)$\\[1pt]
       deg.\ $-1$ symplectic};
\draw[-{stealth}, thin, blue!40] (-4.3,0) -- (-3.3,0);

% BFV label box
\node[draw=red!50, thick, rounded corners=3pt, fill=red!5,
      inner sep=4pt, font=\scriptsize, align=center] at (6,0)
      {BFV data on $\partial M$:\\[1pt]
       $(\mathcal{F}_\Sigma^\partial, \omega^\partial_\Sigma, S^\partial_\Sigma, Q^\partial_\Sigma)$\\[1pt]
       deg.\ $0$ symplectic};
\draw[-{stealth}, thin, red!40] (4.3,0) -- (3.3,0);

% Restriction map
\node[violet!70!black, font=\scriptsize\bfseries] at (4,1.1) {$\pi$: restrict};
\draw[-{stealth}, thick, violet!60] (3.3,1.4) -- (2.7,1.9);

% === BV-BFV axiom box ===
\node[draw=black!60, thick, rounded corners=5pt, fill=yellow!8,
      inner sep=7pt, font=\small, align=center] at (0,-4.8)
      {BV-BFV axiom:\; $\dfrac{1}{2}\{S_M, S_M\}_{\mathrm{BV}} = \pi^*S^\partial_\Sigma$};
\draw[-{stealth}, thick, black!40] (0,-3.5) -- (0,-4);

\end{tikzpicture}
\caption{The BV-BFV framework, illustrated for a cobordism 
$M: \Sigma_{\mathrm{in}} \to \Sigma_1 \sqcup \Sigma_2$. The bulk 3-manifold $M$ (blue shading) carries BV data with a degree $(-1)$ symplectic structure; each boundary component (red circles) carries BFV data with a degree $0$ symplectic structure. The restriction map $\pi$ connects bulk and boundary fields, and the BV-BFV axiom states that the failure of the classical master equation in the bulk is precisely controlled by the BFV charge on the boundary.}
\label{fig:BV-BFV}
\end{figure}
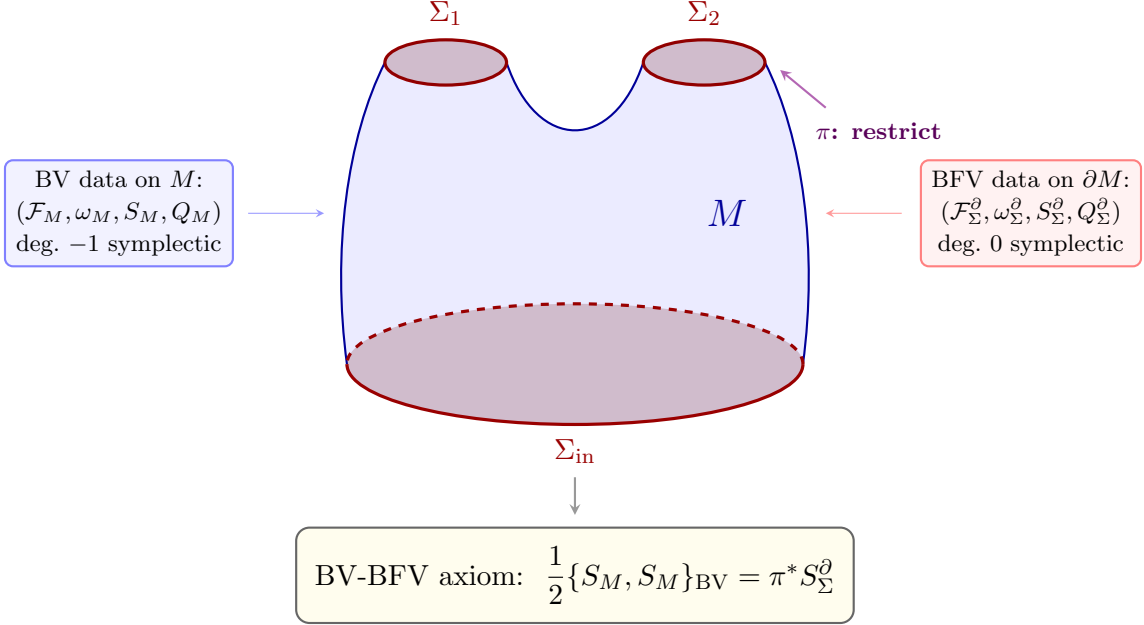

The BV-BFV axiom has a transparent physical meaning. On a closed manifold, the classical master equation $\{S,S\}_{\BV} = 0$ holds because all boundary terms from integration by parts vanish. When the manifold has a boundary, these terms survive, and the axiom \eqref{eq:BV-BFV-axiom-2} states that they are precisely captured by the BFV charge $S^\partial_\Sigma$. This is the cohomological refinement of the familiar variational principle. The variation of the action on a manifold with boundary takes the form $\delta S = (\text{equations of motion}) + \int_\Sigma \theta$, where $\theta$ is the pre-symplectic current on the boundary. The BFV structure organizes this boundary contribution into a consistent Hamiltonian framework, ensuring that the gauge symmetry of the bulk theory is compatible with the symplectic geometry of the boundary phase space.

\subsection{Application to Chern--Simons theory}\label{subsec:CS-BVBFV}

We now specialize the general BV-BFV framework to Chern--Simons theory with a compact simple Lie group $G$ and Lie algebra $\fg$, on an oriented 3-manifold $M$ with boundary $\Sigma = \partial M$. This is the central example for our program, and we describe the BV-BFV data in full detail. The classical Chern--Simons action for $A\in\Omega^1(M,\fg)$ is given by 
\[
\mathrm{CS}(A)=\int_M \Tr\left(A\land dA+\tfrac23A\land A\land A\right).
\]

\subsubsection{The BV space of fields}

The key observation underlying the BV formulation of Chern--Simons theory is that the entire field content (physical fields, ghosts, and antifields) can be packaged into a single inhomogeneous differential form. The space of BV fields on $M$ is the shifted de Rham complex
\begin{equation}
\mathcal{F}_M = \Omega^\bullet(M, \fg)[1] = \bigoplus_{k=0}^{3} \Omega^k(M, \fg)[1-k],
\end{equation}
and a general element is written as a \emph{superfield} $\mathbf{A} = c + A + A^+ + c^+$, where the components and their roles are summarized in Table \ref{tab:fields}.

\begin{table}[h]
\centering
\begin{tabular}{c|c|c|c}
\textbf{Field} & \textbf{Form degree} & \textbf{Ghost number} & \textbf{Role} \\ \hline
$c$ & 0 & $+1$ & Ghost (gauge parameter) \\
$A$ & 1 & $0$ & Connection (physical field) \\
$A^+$ & 2 & $-1$ & Antifield of $A$ \\
$c^+$ & 3 & $-2$ & Antifield of $c$
\end{tabular}
%\caption{Component fields of the Chern--Simons BV superfield.}
\label{tab:fields}
\end{table}

The physical field is the connection 1-form $A$; the ghost $c$ is a $\fg$-valued 0-form encoding infinitesimal gauge transformations; the antifields $A^+$ and $c^+$ are differential forms of degrees 2 and 3, respectively, serving as the BV duals of $A$ and $c$. The BV symplectic form of degree $-1$,
\begin{equation}
\omega_M = \frac{1}{2}\int_M \Tr\,(\delta\mathbf{A} \wedge \delta\mathbf{A})=\int_M\Tr\,(\delta A\wedge \delta A^++\delta c\wedge \delta c^+),
\end{equation}
pairs each field with its antifield via the trace form on $\fg$ and the wedge product on $M$.\\

The elegance of the superfield formulation is that the BV action takes exactly the same functional form as the classical Chern--Simons action, but applied to the full superfield:
\begin{equation}\label{eq:CS-BV-action}
S_M^{\BV} = \int_M \Tr\left(\mathbf{A} \wedge d\mathbf{A} + \tfrac{2}{3}\, \mathbf{A} \wedge \mathbf{A} \wedge \mathbf{A}\right).
\end{equation}
Expanding this in terms of the component fields and sorting by ghost number, one obtains:
\begin{equation}\label{eq:CS-BV-action-expanded}
S_M^{\BV} = \int_M \Tr\Big( A \wedge dA + \tfrac{2}{3}\, A \wedge A \wedge A \nonumber+ \tfrac12 A^+ \wedge d_A c +\tfrac12 c\wedge d_AA^++ c^+ \wedge \tfrac{1}{2}[c, c] \Big),
\end{equation}
where $d_A = d + [A, -]$ is the covariant derivative. The first line is the classical Chern--Simons action functional; the second line encodes the gauge structure. The term $A^+ \wedge d_A c$ couples the antifield of the connection to the linearized gauge transformation and $c^+ \wedge \frac{1}{2}[c,c]$ encodes the closure of the gauge algebra (the structure constants of $\fg$).

\subsubsection{The BFV boundary data}

Restricting the bulk fields to the boundary\\ $\Sigma = \partial M$ produces the BFV phase space. Since $\Sigma$ is a surface, the shifted de Rham complex truncates to forms of degree at most 2, giving
\begin{equation}\label{eq:CS-BFV-fields}
\mathcal{F}_\Sigma^\partial = \Omega^\bullet(\Sigma, \fg)[1],
\end{equation}
with boundary fields $(\mathbb{c}, \mathbb{A}, \mathbb{A}^+)$ in ghost-number degrees $(+1, 0, -1)$, obtained by restricting the bulk superfield $\mathbf{A}|_\Sigma = \mathbb{c} + \mathbb{A} + \mathbb{A}^+$. Here $\mathbb{A}$ is the boundary connection, $\mathbb{c}$ is the boundary ghost, and $\mathbb{A}^+$ is the boundary momentum (the antifield conjugate to the curvature constraint).

The BFV symplectic form of degree $0$ is the Atiyah--Bott form, extended to include the ghost--momentum pairing:
\begin{equation}\label{eq:AB-symplectic}
\omega^\partial_\Sigma =\frac12\int_\Sigma\Tr(\delta\mathbf{A}\vert_\Sigma\land \delta\mathbf{A}\vert_\Sigma)=\int_\Sigma \Tr\,(\delta\mathbb{A} \wedge \delta\mathbb{A}) + \int_\Sigma \Tr\,(\delta\mathbb{c} \wedge \delta\mathbb{A}^+).
\end{equation}
The first term is the standard symplectic structure on the space of connections on $\Sigma$, familiar from the work of Atiyah and Bott; the second is the canonical cotangent pairing between the ghost and its conjugate momentum. The BFV charge that generates both the constraint and the gauge symmetry is:
\begin{equation}\label{eq:BFV-charge}
S^\partial_\Sigma = \int_\Sigma \Tr\left(\mathbb{c}\wedge F_\mathbb{A}+\tfrac12\mathbb{A}^+\wedge[\mathbb{c},\mathbb{c}]\big)\right),
\end{equation}
where $F_{\mathbb{A}} = d\mathbb{A} + \tfrac12\mathbb{A} \wedge \mathbb{A}$ is the curvature of the boundary connection. On the constraint surface $\mathbb{A}^+ = 0$, the condition $S^\partial_\Sigma = 0$ imposes nothing, but the Hamiltonian flow of $S^\partial_\Sigma$ generates the flatness constraint $F_{\mathbb{A}} = 0$ together with the infinitesimal gauge transformations on all boundary fields.

\begin{proposition}\label{prop:CS-BV-BFV-verified}
The Chern--Simons BV-BFV data defined above satisfies the BV-BFV axiom \eqref{eq:BV-BFV-axiom}.
\end{proposition}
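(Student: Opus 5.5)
The plan is to verify \eqref{eq:BV-BFV-axiom} directly in superfield notation, using the fact that every object involved has a uniform description in terms of $\mathbf{A}$. For the boundary one-form we take the natural primitive of \eqref{eq:AB-symplectic},
\[
\alpha^\partial_\Sigma = \tfrac12\int_\Sigma \Tr\,(\mathbf{A}\vert_\Sigma\wedge \delta\mathbf{A}\vert_\Sigma),
\]
and we check $\delta\alpha^\partial_\Sigma=\omega^\partial_\Sigma$. The bulk vector field is $Q_M$, the cohomological vector field acting on the superfield by
\[
Q_M\mathbf{A} = d\mathbf{A}+\tfrac12[\mathbf{A},\mathbf{A}],
\]
which is the superfield curvature $F_{\mathbf{A}}$. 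We will compare $\iota_{Q_M}\omega_M$ with $\delta S_M^{\BV}$. Both sides are local functionals, so the only discrepancy can come from integration by parts, and this produces a boundary term.

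\textbf{Step 1.} We compute $\delta S_M^{\BV}$ from \eqref{eq:CS-BV-action}. Using the cyclicity and ad-invariance of $\Tr$, together with the graded symmetry of the wedge product on superfields (where total degree is form degree plus ghost number), we get
\[
\delta S = \int_M \Tr\bigl(2\,\delta\mathbf{A}\wedge F_{\mathbf{A}}\bigr) \pm \int_M d\,\Tr(\mathbf{A}\wedge\delta\mathbf{A}).
\]
The first term comes from collecting $\delta\mathbf{A}\wedge d\mathbf{A}$ and $\mathbf{A}\wedge d\delta\mathbf{A}$ after moving $d$ across, and from the cubic term. We then apply Stokes' theorem to the second term, which turns it into $\pm\int_\Sigma \Tr(\mathbf{A}\wedge\delta\mathbf{A})$, a multiple of $\pi^*\alpha^\partial_\Sigma$.

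\textbf{Step 2.} We compute the contraction
\[
\iota_{Q_M}\omega_M = \tfrac12\cdot 2\int_M\Tr\bigl(Q_M\mathbf{A}\wedge\delta\mathbf{A}\bigr)
\]
and match it with the bulk term from Step 1. Here we use the conventions that make the overall normalization of $S$ agree with $\omega$. The paper's factors ($\tfrac12$ in $\omega_M$ against the coefficient of $\mathbf{A}\, d\mathbf{A}$ in $S$) may require rescaling $Q_M$, or equivalently $S$ or $\alpha^\partial$, by a constant. We then read off that the remainder is exactly $\pi^*\alpha^\partial_\Sigma$, up to the sign convention absorbed in the definition of $\alpha^\partial$.

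\textbf{Step 3.} As a cross-check, we derive \eqref{eq:BV-BFV-axiom-2}. Applying $\iota_{Q_M}$ to the axiom and using $\iota_Q\delta S=Q(S)$, we should obtain
\[
\tfrac12\{S,S\}=\iota_{Q_M}\pi^*\alpha^\partial_\Sigma .
\]
Restricting $F_{\mathbf{A}}$ to $\Sigma$ then reproduces \eqref{eq:BFV-charge}: the degree $(0,2)$ part gives $\mathbb{c}\wedge F_{\mathbb{A}}$, and the ghost-bracket part gives $\tfrac12\mathbb{A}^+[\mathbb{c},\mathbb{c}]$. This also confirms that $Q_M$ projects to $Q^\partial_\Sigma$.

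The main obstacle is sign and normalization bookkeeping in the graded setting. This includes the Koszul signs when $\delta$ passes through superfield components of mixed parity, the orientation sign in Stokes' theorem for $\partial M=\Sigma$, and the mismatch of numerical factors among \eqref{eq:CS-BV-action}, \eqref{eq:CS-BV-action-expanded}, $\omega_M$ and \eqref{eq:AB-symplectic}. To control these, we would first carry out the computation in components for the abelian case, where only the quadratic terms and the boundary term $\int_\Sigma(\mathbb{A}\,\delta\mathbb{A}+\mathbb{c}\,\delta\mathbb{A}^+ + \mathbb{A}^+\delta\mathbb{c})$ appear. Fixing the constants there, we would then observe that the cubic term contributes no boundary term, since it contains no derivatives. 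This shows the non-abelian case differs from the abelian one only in the bulk identity, and that bulk identity holds by ad-invariance and the Jacobi identity.
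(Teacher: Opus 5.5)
Your Steps 1--2 are the paper's argument: vary the superfield action, integrate by parts to split off $\pi^*\alpha^\partial_\Sigma$ with $\alpha^\partial_\Sigma=\tfrac12\int_\Sigma\Tr(\mathbf A\vert_\Sigma\wedge\delta\mathbf A\vert_\Sigma)$, and read off $Q_M\mathbf A=F_{\mathbf A}$. You handle the factor of $2$ more carefully than the paper does. Taken literally, \eqref{eq:CS-BV-action} forces $Q_M\mathbf A=2F_{\mathbf A}$ and produces a boundary term $2\pi^*\alpha^\partial_\Sigma$. The paper's proof quietly switches to $\int_M\Tr(\tfrac12\mathbf A\wedge d\mathbf A+\tfrac16\mathbf A\wedge[\mathbf A,\mathbf A])$, which is half of it. Rescaling $Q_M$ alone would not fix this, because $Q_M$ is pinned down by $S$ and $\omega_M$ in the bulk; one must halve $S$ or double $\alpha^\partial_\Sigma$.

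The genuine error is in your Step 3 cross-check.

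\textbf{Contraction gives the wrong object.} Contracting the axiom with $Q_M$ only relates three quantities, $\iota_{Q_M}\iota_{Q_M}\omega_M=Q_M(S)+\pi^*\iota_{Q^\partial_\Sigma}\alpha^\partial_\Sigma$. The term $\iota_{Q^\partial_\Sigma}\alpha^\partial_\Sigma$ is not the BFV charge. Since $\alpha^\partial_\Sigma=\pm\tfrac12\iota_E\omega^\partial_\Sigma$ for the Euler vector field $E$, one gets
\[
\iota_{Q^\partial_\Sigma}\alpha^\partial_\Sigma=\pm\tfrac12E(S^\partial_\Sigma)=\pm\bigl(S^\partial_{(2)}+\tfrac32S^\partial_{(3)}\bigr).
\]
Here $S^\partial_{(2)}=\int_\Sigma\Tr(\mathbb c\wedge d\mathbb A)$, and $S^\partial_{(3)}$ collects the terms $\tfrac12\mathbb c\wedge[\mathbb A,\mathbb A]$ and $\tfrac12\mathbb A^+\wedge[\mathbb c,\mathbb c]$. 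So your identity $\tfrac12\{S,S\}=\iota_{Q_M}\pi^*\alpha^\partial_\Sigma$ contradicts \eqref{eq:BV-BFV-axiom-2} by a factor $\tfrac32$ on the cubic part.

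\textbf{Reading off $S^\partial_\Sigma$.} Keeping only $\mathbb c\wedge F_{\mathbb A}$ in $\Tr(\mathbf A\wedge F_{\mathbf A})\vert_\Sigma$ discards the $\mathbb A\wedge d_{\mathbb A}\mathbb c$ and $\mathbb c\wedge[\mathbb A^+,\mathbb c]$ contributions. After integration by parts, these change exactly those cubic coefficients.

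\textbf{Why your check misses it.} The identity $\iota_{Q^\partial_\Sigma}\alpha^\partial_\Sigma=S^\partial_\Sigma$ does hold, up to sign, for abelian $G$, where $S^\partial_\Sigma$ is purely quadratic. Your abelian calibration therefore cannot detect the problem.

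\textbf{The correct route.} The paper identifies $S^\partial_\Sigma$ this way, although it asserts rather than derives the second form of the axiom. Define $S^\partial_\Sigma$ as the Hamiltonian of the restricted vector field, $\iota_{Q^\partial_\Sigma}\omega^\partial_\Sigma=\delta S^\partial_\Sigma$. Then obtain \eqref{eq:BV-BFV-axiom-2} in the form $\tfrac12\iota_{Q_M}\iota_{Q_M}\omega_M=\pi^*S^\partial_\Sigma$ by applying $\delta$ rather than $\iota_{Q_M}$. From $\delta\iota_{Q_M}\omega_M=\pi^*\omega^\partial_\Sigma$ and $Q_M^2=0$ (this is where the Jacobi identity enters), Cartan calculus gives
\[
\delta\bigl(\tfrac12\iota_{Q_M}\iota_{Q_M}\omega_M\bigr)=\pm\iota_{Q_M}\pi^*\omega^\partial_\Sigma=\pm\pi^*\delta S^\partial_\Sigma .
\]
The integration constant vanishes because it would have ghost number $1$.

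None of this affects Steps 1--2, which suffice for the proposition as stated. As written, however, your cross-check would produce the wrong BFV charge.
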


\begin{proof}
We prove the BV--BFV identity in the form
\[
\iota_{Q_M}\omega_M=\delta S_M^{\BV}+\pi^*\alpha^\partial_\Sigma,
\]
from which it follows that
\[
\frac12\{S_M^{\BV},S_M^{\BV}\}_{\BV}=\pi^*S^\partial_\Sigma,
\]
where $S^\partial_\Sigma$ is the BFV charge on the boundary. Let us consider the BV action\footnote{Here \(A\) is \(\fg\)-valued, so the product \(A\wedge A\wedge A\) combines the wedge product of differential forms with the multiplication in the Lie algebra (or, after choosing a matrix realization, with matrix multiplication). Using the invariance of \(\Tr\) and the antisymmetry of the wedge product, one has
\[
\Tr\bigl(A\wedge [A,A]\bigr)
=
\Tr\bigl(A\wedge (A\wedge A - A\wedge A)_{\text{graded Lie}}\bigr)
=
6\,\Tr\bigl(A\wedge A\wedge A\bigr),
\]
so that
\[
\frac16\,\Tr\bigl(A\wedge [A,A]\bigr)
=
\Tr\bigl(A\wedge A\wedge A\bigr).
\]
Equivalently, this explains the standard identity
\[
\Tr\left(A\wedge dA+\tfrac23 A\wedge A\wedge A\right)
=
\Tr\left(A\wedge dA+\tfrac13 A\wedge [A,A]\right),
\]
up to the chosen convention for absorbing the Lie bracket into the cubic term.} 
\[
S_M^{\BV}
=
\int_M \Tr\!\left(
\tfrac12 \mathbf A\wedge d\mathbf A+\tfrac16 \mathbf A\wedge[\mathbf A,\mathbf A]
\right),
\]
where $\mathbf A=c+A+A^++c^+ \in \Omega^\bullet(M,\fg)[1]$ is the BV superfield. Its variation is
\[
\delta S_M^{\BV}
=
\int_M \Tr\left(
\delta\mathbf A\wedge
\bigl(d\mathbf A+\tfrac12[\mathbf A,\mathbf A]\bigr)
\right)
+
\int_{\partial M}\Tr\left(\tfrac12\,\mathbf A\wedge\delta\mathbf A\right).
\]
Thus, the bulk Euler--Lagrange expression is
\[
d\mathbf A+\frac12[\mathbf A,\mathbf A],
\]
and the boundary term is the pullback of the boundary one-form
\[
\alpha^\partial_\Sigma=
\frac12\int_\Sigma \Tr\!\left(\mathbf A\vert_\Sigma\wedge\delta\mathbf A\vert_\Sigma\right).
\]
The bulk BV symplectic form is
\[
\omega_M
=
\int_M \Tr\bigl(\delta A\wedge\delta A^+ + \delta c\wedge\delta c^+\bigr),
\]
or, equivalently, in superfield notation,
\[
\omega_M=\frac12\int_M \Tr(\delta\mathbf A\land\delta\mathbf A).
\]
Hence, the Hamiltonian vector field $Q_M$ generated by $S_M^{\BV}$ is determined by
\[
\iota_{Q_M}\omega_M=\delta S_M^{\BV}+\pi^*\alpha^\partial_\Sigma,
\]
that is,
\[
Q_M \mathbf A = d\mathbf A+\frac12[\mathbf A,\mathbf A].
\]
Restricting to the boundary gives the induced cohomological vector field
\[
Q^\partial_\Sigma \mathbf A\vert_\Sigma
=
d\mathbf A\vert_\Sigma+\frac12\Big[\mathbf A\vert_\Sigma,\mathbf A\vert_\Sigma\Big].
\]

Writing
\[
\mathbf A\vert_\Sigma=\mathbb c+\mathbb A+\mathbb A^+,
\]
with
\[
\mathbb c\in\Omega^0(\Sigma,\fg),\qquad
\mathbb A\in\Omega^1(\Sigma,\fg),\qquad
\mathbb A^+\in\Omega^2(\Sigma,\fg),
\]
one obtains
\[
Q^\partial_\Sigma \mathbb A = d_{\mathbb A}\mathbb c,
\qquad
Q^\partial_\Sigma \mathbb c = \frac12[\mathbb c,\mathbb c],
\qquad
Q^\partial_\Sigma \mathbb A^+ = F_{\mathbb A}+[\mathbb A^+,\mathbb c],
\]
up to the overall sign convention for the BV/BFV bracket.

The boundary BFV symplectic form is
\[
\omega^\partial_\Sigma
=
\int_\Sigma \Tr\bigl(\delta\mathbb A\land\delta\mathbb A
+\delta\mathbb c\land\delta\mathbb A^+\bigr),
\]
and the corresponding BFV charge is the Hamiltonian generating
$Q^\partial_\Sigma$, namely
\[
S^\partial_\Sigma
=
\int_\Sigma \Tr\!\left(
\mathbb c\land F_{\mathbb A}
+\tfrac12\,\mathbb A^+\land[\mathbb c,\mathbb c]
\right),
\]
again up to the global sign convention.

Therefore, the boundary defect of the bulk BV action is exactly the pullback of the BFV data on $\Sigma$, and the BV--BFV compatibility identity holds:
\[
\iota_{Q_M}\omega_M=\delta S_M^{\BV}+\pi^*\alpha^\partial_\Sigma.
\]
Equivalently, we get
\[
\frac12\{S_M^{\BV},S_M^{\BV}\}_{\BV}=\pi^*S^\partial_\Sigma.
\]
This proves the claim.
\end{proof}

\subsection{Gauge-fixing and the perturbative expansion}

To extract concrete numbers from the BV partition function \eqref{eq:BV-Z-closed}, one must choose a gauge-fixing Lagrangian $\mathcal{L} \subset \mathcal{F}_M$. Although Chern--Simons theory is topological, the gauge-fixing procedure requires an auxiliary Riemannian metric $g$ on $M$, which will not appear in the final answer.

The perturbative approach proceeds by expanding around a classical solution (a flat connection $A_0$ satisfying $F_{A_0} = 0$) and organizing the resulting Feynman diagram expansion as a formal power series in $\hbar$.

\begin{construction}[Lorenz gauge-fixing]\label{constr:gauge-fix}
To define perturbation theory around a flat background connection \(A_0\), one sets
\[
A=A_0+a,
\qquad
F_{A_0}=0,
\]
where \(a\in \Omega^1(M,\fg)\) is the fluctuation field. Explicit gauge-fixing requires passing from the minimal BV space of fields to the non-minimal one by adjoining an antighost \(\bar c\), a Nakanishi--Lautrup field \(b\), and their antifields:
\[
\bar c \in \Omega^0(M,\fg), \qquad \mathrm{gh}(\bar c)=-1,
\]
\[
b \in \Omega^0(M,\fg), \qquad \mathrm{gh}(b)=0.
\]
The non-minimal action is
\[
S_{\mathrm{nm}} = S_M^{\BV} + \int_M \Tr(\bar c^+\wedge b).
\]

Fix an auxiliary Riemannian metric on \(M\). The Lorenz gauge condition
\[
d_{A_0}^* a = 0
\]
is implemented by the gauge-fixing fermion
\[
\Psi = \int_M \Tr(\bar c\wedge d_{A_0}^* a).
\]
The gauge-fixed Lagrangian is defined by the usual BV prescription
\[
\Phi^+ = \frac{\delta \Psi}{\delta \Phi}
\]
for all fields \(\Phi\). In particular, we have
\[
A^+ = \frac{\delta \Psi}{\delta a}, \qquad
\bar c^+ = \frac{\delta \Psi}{\delta \bar c}= d_{A_0}^* a,
\qquad
c^+=0,
\qquad
b^+=0.
\]
\end{construction}

Substituting these conditions into the non-minimal action yields the usual Lorenz-gauge-fixed action. Up to cubic order in the fluctuations one obtains
\begin{multline}
\label{eq:gauge-fixed-CS}
S_{\mathrm{gf}}
=
\mathrm{CS}(A_0)
+\frac12 \int_M \Tr(a \wedge d_{A_0} a)
+\int_M \Tr\bigl(b\wedge d_{A_0}^* a\bigr)\\
+\int_M \Tr\bigl(\bar c\wedge \Delta_{A_0} c\bigr)
+\frac16 \int_M \Tr\bigl(a \wedge [a,a]\bigr)
+\int_M \Tr\bigl(\bar c\wedge[a,c]\bigr),
\end{multline}
where
\[
\Delta_{A_0}=d_{A_0}d_{A_0}^*+d_{A_0}^*d_{A_0}
\]
is the covariant Laplacian. Integrating out \(b\) imposes the Lorenz gauge condition
\[
d_{A_0}^* a=0.
\]
The quadratic part determines the propagators, while the cubic terms determine the interaction vertices. The first term is the classical Chern--Simons invariant of the flat connection. The ghost propagator is governed by the Green's operator of \(\Delta_{A_0}\), whereas the gauge-field propagator arises from the first-order Chern--Simons kinetic operator together with the gauge condition. The interaction vertices are determined by the cubic terms in \eqref{eq:gauge-fixed-CS}. Higher-order terms in $\hbar$ arise from iterating this cubic vertex in Feynman diagrams.

In the perturbative BV--BFV formalism, one must in general also account for residual fields arising from zero modes. We suppress these technicalities here and state only the standard formal structure of the perturbative expansion around an acyclic flat background. Carrying out the Gaussian integral over the quadratic part and summing the perturbative corrections gives the following result.

\label{prop:pert-Z}

\begin{proposition}[Structure of the perturbative expansion]
Formally, the perturbative partition function of Chern--Simons theory around a flat connection \(A_0\) has the form
\begin{equation}
\label{eq:pert-CS-structure}
Z^{\mathrm{pert}}_{A_0}(M)
=
e^{ik\,\mathrm{CS}(A_0)}
\cdot \tau_{A_0}(M)^{1/2}
\cdot e^{\frac{i\pi}{4}\eta(A_0)}
\cdot
\exp\!\left(\sum_{n\ge 1}\hbar^n I_n(A_0,M)\right),
\end{equation}
where \(\tau_{A_0}(M)\) is the Ray--Singer torsion \cite{RS1971}, \(\eta(A_0)\) is the eta-invariant of the relevant odd-signature operator, and the coefficients \(I_n(A_0,M)\) are given by sums of compactified configuration-space integrals associated to connected trivalent Feynman diagrams. 
\end{proposition}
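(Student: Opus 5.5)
The plan is to evaluate the gauge-fixed BV integral over the Lagrangian of Construction~\ref{constr:gauge-fix} by the standard stationary-phase/Feynman-diagram recipe. The integrand is split into its value at the critical point, a Gaussian part, and cubic interactions. Throughout I assume $A_0$ is acyclic, so that $H^\bullet(M;\Ad_{A_0})=0$ and there are no residual fields. This is the hypothesis under which the statement is formulated.

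\emph{Step 1 (classical term).} Substitute $A=A_0+a$ into \eqref{eq:gauge-fixed-CS}. The zeroth-order term is $\CS(A_0)$. After restoring the normalization $S/\hbar$ with $\hbar^{-1}\propto k$, this produces the prefactor $e^{ik\,\CS(A_0)}$. No linear term appears, because $F_{A_0}=0$ says exactly that $A_0$ is critical.

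\emph{Step 2 (Gaussian part).} Integrate out $b$. This imposes $d_{A_0}^*a=0$, so the bosonic quadratic form becomes $\tfrac12\int\Tr(a\wedge d_{A_0}a)$ restricted to $\ker d_{A_0}^*\cap\Omega^1$. On this subspace it agrees with the odd signature operator $L_{A_0}=\ast d_{A_0}+d_{A_0}\ast$ acting on even forms. The ghost pair $(\bar c,c)$ contributes a zeta-regularized $\det\Delta_{A_0}^{(0)}$. The oscillatory bosonic Gaussian contributes $|\det L_{A_0}|^{-1/2}$ times a phase $e^{\frac{i\pi}{4}\eta(A_0)}$. Following Atiyah--Patodi--Singer and Witten, this phase is obtained as the zeta-regularized count of positive minus negative eigenvalues. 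Combining the two, and using the spectral identity $|\det L|$ expressed through $\det\Delta^{(p)}$ for $p=0,1$, gives
\[
(\det\Delta^{(0)}_{A_0})\,\bigl(\det\Delta^{(1)}_{A_0}\bigr)^{-1/2}\bigl(\det\Delta^{(0)}_{A_0}\bigr)^{1/2}\ldots
\]
which is precisely the Ray--Singer expression $\tau_{A_0}(M)^{1/2}$ in the convention of \cite{RS1971}. Metric independence of $\tau$ for acyclic flat connections then follows from Ray--Singer.

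\emph{Step 3 (perturbative corrections).} Expand $\exp$ of the cubic vertices $\frac16\int\Tr(a\wedge[a,a])$ and $\int\Tr(\bar c\wedge[a,c])$, and apply Wick's theorem. The propagator is the integral kernel of $d_{A_0}^*\Delta_{A_0}^{-1}$, which can be packaged as a single superpropagator $\eta\in\Omega^2(C_2(M),\fg\otimes\fg)$. This packaging combines ghost and gauge lines into trivalent graphs, following Axelrod--Singer and Kontsevich. By the linked-cluster theorem the sum over all diagrams is the exponential of the sum over connected ones. A trivalent graph with $2n$ vertices carries a factor $\hbar^{n}$, up to an overall shift. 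Each graph amplitude is $\int_{C_V(M)}\prod_e\pi_e^*\eta$. Convergence, and the fact that these integrals extend to the Fulton--MacPherson compactification, follow from the short-distance behavior $\eta\sim$ the pulled-back volume form of $S^2$ near the diagonal. This yields the coefficients $I_n(A_0,M)$.

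\emph{Main obstacle.} The hard step is Step 2, namely the rigorous identification of the regularized Gaussian phase with $\tfrac{\pi}{4}\eta(A_0)$ and of the modulus with $\tau^{1/2}$, with exactly the stated exponents. Conventions for $\tau$ (torsion versus its inverse, and squaring) need to be fixed carefully. The metric dependence of $\eta$ must also be noted: it is compensated by the gravitational counterterm, which the formal statement absorbs. I would first do the computation on a finite-dimensional model (the cellular version) to pin down signs, and then pass to the continuum via zeta regularization. The compactification estimates in Step 3 I would cite from Axelrod--Singer rather than reprove. I would also keep the statement explicitly "formal", since the $I_n$ depend on framing and anomaly corrections that I am not fixing here.
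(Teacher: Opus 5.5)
Your route is the same as the paper's. The paper offers nothing beyond the one-line heuristic (carry out the Gaussian integral over the quadratic part, then sum the perturbative corrections) and explicitly presents the formula as the standard formal structure. Your Steps 1 and 3 are fine at that level of rigor.

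\textbf{The error in Step 2.} The one computation that fixes the exponent in the statement is wrong as written. Your displayed product is $(\det\Delta^{(0)}_{A_0})^{3/2}(\det\Delta^{(1)}_{A_0})^{-1/2}$. Ray--Singer's normalization is $\log\tau=\tfrac12\sum_q(-1)^{q+1}q\log\det\Delta^{(q)}$; with Hodge duality ($\det\Delta^{(3)}=\det\Delta^{(0)}$, $\det\Delta^{(2)}=\det\Delta^{(1)}$) your product is $\tau_{A_0}(M)$, not $\tau_{A_0}(M)^{1/2}$. Two things are off.

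\emph{The bosonic exponent.} Once $b$ is integrated out, the bosonic Gaussian lives on coexact $1$-forms with operator $\ast d_{A_0}$, whose square is $\Delta^{(1)}_{A_0}$ restricted to coexact forms. Since $\Delta^{(1)}_{A_0}$ on exact forms is conjugate via $d_{A_0}$ to $\Delta^{(0)}_{A_0}$, the Gaussian contributes $|\det(\ast d_{A_0}|_{\mathrm{coex}})|^{-1/2}=(\det\Delta^{(1)}_{A_0}/\det\Delta^{(0)}_{A_0})^{-1/4}$. Your middle factors $(\det\Delta^{(1)}_{A_0})^{-1/2}(\det\Delta^{(0)}_{A_0})^{1/2}$ correspond to exponent $-1$ rather than $-\tfrac12$.

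\emph{The missing Jacobian.} You omitted the Jacobian of $\delta(d^*_{A_0}a)$. Writing $a=a^{\mathrm{coex}}+d_{A_0}\phi$ gives $Da=(\det\Delta^{(0)}_{A_0})^{1/2}\,Da^{\mathrm{coex}}\,D\phi$. Also $\delta(\Delta^{(0)}_{A_0}\phi)=(\det\Delta^{(0)}_{A_0})^{-1}\delta(\phi)$, so the net factor is $(\det\Delta^{(0)}_{A_0})^{-1/2}$.

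\textbf{The corrected computation.} With both corrections and the ghost factor,
\[
\det\Delta^{(0)}_{A_0}\cdot(\det\Delta^{(0)}_{A_0})^{-1/2}\cdot\bigl(\det\Delta^{(1)}_{A_0}/\det\Delta^{(0)}_{A_0}\bigr)^{-1/4}=(\det\Delta^{(0)}_{A_0})^{3/4}(\det\Delta^{(1)}_{A_0})^{-1/4}=\tau_{A_0}(M)^{1/2},
\]
as claimed.

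Witten's bookkeeping is cleaner: do not integrate out $b$, but pair it with $a$. The quadratic form on $\Omega^0\oplus\Omega^1$ is then the full odd signature operator $L_{A_0}$, with $L_{A_0}^2=\Delta^{(0)}_{A_0}\oplus\Delta^{(1)}_{A_0}$. Hence $|\det L_{A_0}|^{-1/2}\det\Delta^{(0)}_{A_0}$ gives the same answer directly. Your sentence ``the bosonic Gaussian contributes $|\det L_{A_0}|^{-1/2}$'' is true only after the $\delta$-function Jacobian is included.

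\textbf{The phase.} If you restrict to $\ker d^*_{A_0}$, you should say why the phase is unaffected. On $\Omega^0\oplus d_{A_0}\Omega^0$ the spectrum of $L_{A_0}$ is $\pm\sqrt{\lambda}$ with $\lambda\in\mathrm{spec}\,\Delta^{(0)}_{A_0}$. This is symmetric, so $\eta(\ast d_{A_0}|_{\mathrm{coex}})=\eta(L_{A_0})$.

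With this fixed, your proposal yields the formal structure stated in the proposition. The framing anomaly and the metric dependence of $\eta$ remain, but you already flag these and the statement also leaves them implicit.
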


Equation \eqref{eq:pert-CS-structure} should be understood as the standard formal structure of the perturbative expansion rather than as a literal consequence of the naive finite-dimensional Gaussian integral.

\begin{remark}[Structure of the Feynman diagrams]\label{rmk:Feynman}
Since the only interaction vertex is cubic (trivalent), the Feynman diagrams contributing to $I_n$ are trivalent graphs $\Gamma$ with $2n$ vertices and $3n$ edges, at loop order $n+1$. The contribution of each graph is not an ordinary integral but a \emph{configuration space integral}: one integrates a product of propagator forms over a compactified configuration space of points in $M$ labeled by the vertices of $\Gamma$. Explicitly, we have
\begin{equation}\label{eq:Feynman-integral}
I_\Gamma(A_0, M) = \int_{C_\Gamma(M)} \bigwedge_{e \in E(\Gamma)} \omega_e(x_{s(e)},\, x_{t(e)}),
\end{equation}
where $C_\Gamma(M)$ is the Axelrod--Singer compactification of the configuration space of $|V(\Gamma)|$ distinct points in $M$, and $\omega_e$ is a propagator form (the integral kernel of the Green's operator of $\Delta_{A_0}$) associated to the edge $e$ connecting the source vertex $s(e)$ to the target vertex $t(e)$. The compactification is essential: it resolves the singularities that arise when two or more points collide, rendering the integral well-defined. This regularization procedure, introduced by Axelrod--Singer \cite{AS1994} and independently by Kontsevich \cite{Kontsevich1994}, is one of the technical cornerstones of perturbative Chern--Simons theory (see also \cite{CMR2014b}).
\end{remark}
\subsection{The modified quantum master equation}

\begin{proposition}[CMR \cite{CMR2018}]\label{prop:CMR-mQME}
The effective BV action $S_M^{\mathrm{eff}}$ satisfies the modified quantum master equation:
\begin{equation}\label{eq:mQME}
(\hbar^2 \Delta_M + \hat{\Omega}_\Sigma)\, e^{iS_M^{\mathrm{eff}}/\hbar} = 0,
\end{equation}
where $\Delta_M$ is the BV Laplacian on residual fields and $\hat{\Omega}_\Sigma$ is the quantized BFV operator.
\end{proposition}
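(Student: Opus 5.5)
The modified quantum master equation is the quantum shadow of the classical BV-BFV axiom of Proposition \ref{prop:CS-BV-BFV-verified}, so the plan is to lift that identity through the quantization and the fiber BV integral.

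\textbf{Step 1: boundary polarization and boundary quantization.} Choose a polarization on $\mathcal{F}^\partial_\Sigma$, for instance the $\mathbb{A}$-representation. The state space $\mathcal{H}_\Sigma$ then consists of functionals of the leaf coordinates $\mathbb{b}$. Quantize the BFV charge \eqref{eq:BFV-charge} by the standard ordering: every conjugate momentum is replaced by $-i\hbar\,\delta/\delta\mathbb{b}$. This gives an operator $\hat\Omega_\Sigma$ built from $\int_\Sigma \Tr(\mathbb{c}\,F_{\mathbb{A}})$ and the ghost term. In this polarization the bulk one-form $\alpha^\partial_\Sigma$ must be adapted, by adding a boundary term to $S_M$, so that it vanishes along the fibers.

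\textbf{Step 2: split the bulk fields.} Write $\mathcal{F}_M = \mathcal{B}_\Sigma \times \mathcal{V}_M \times \mathcal{Y}'$. Here $\mathcal{B}_\Sigma$ is the space of boundary leaf values, $\mathcal{V}_M = H^\bullet(M,\partial M;\fg)[1]$ (twisted by $A_0$) is the space of residual fields, and $\mathcal{Y}'$ is the complement of fluctuations. The complement $\mathcal{Y}'$ is gauge-fixed by a Lagrangian $\mathcal{L}\subset\mathcal{Y}'$ built from a Hodge decomposition adapted to the boundary condition, namely an axial or Lorenz propagator with the appropriate boundary vanishing. Define
\[ e^{iS^{\mathrm{eff}}_M/\hbar} = \int_{\mathcal{L}} e^{iS_M/\hbar}\,\mu_{\mathcal{L}}. \]
This integral is computed as a sum over Feynman graphs as in Remark \ref{rmk:Feynman}, with boundary vertices inserting the fields $\mathbb{b}$.

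\textbf{Step 3: derive the identity.} First note that, by the BV-BFV axiom \eqref{eq:BV-BFV-axiom-2} together with the polarization choice, the full exponential satisfies
\[ (\hbar^2\Delta_{\mathcal{F}} + \hat\Omega_\Sigma)\, e^{iS_M/\hbar} = 0 \]
at the formal level. The first-order part of $\hat\Omega_\Sigma$ reproduces the boundary term $\pi^*S^\partial_\Sigma$, while $\Delta_{\mathcal{F}}S_M = 0$ by unimodularity of $\fg$ and a suitable regularization. Then split $\Delta_{\mathcal{F}} = \Delta_{\mathcal{V}} + \Delta_{\mathcal{Y}'}$ and integrate over $\mathcal{L}$. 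The term coming from $\Delta_{\mathcal{Y}'}$ drops out by the BV Stokes theorem. The operator $\hat\Omega_\Sigma$ commutes with the fiber integral because it acts only on boundary and residual variables. This leaves exactly \eqref{eq:mQME}.

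\textbf{Main obstacle: making Step 3 rigorous.} The naive operator $\hat\Omega_\Sigma$ and $\Delta_{\mathcal{F}}$ are ill-defined in infinite dimensions, so Step 3 must instead be done diagrammatically. One applies $\hbar^2\Delta_{\mathcal{V}}$ and $\hat\Omega_\Sigma$ to the graph expansion and uses Stokes' theorem on the compactified configuration spaces $C_\Gamma(M)$: the $d$ of the integrand equals the sum of boundary-stratum contributions. The argument then proceeds stratum by stratum.
\begin{enumerate}
\item Principal faces, where two bulk points collide, cancel via the IHX/Jacobi identity.
\item Hidden faces, where three or more points collapse, vanish by Kontsevich-type vanishing lemmas.
\item Faces where points collapse at $\partial M$ reproduce the action of $\hat\Omega_\Sigma$, including its higher, regularized composite-field terms.
\item The term $d\eta = $ harmonic-projection part of the propagator gives $\hbar^2\Delta_{\mathcal{V}}$.
\end{enumerate}
Item 3 is the hard step. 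I would first try it in the $\mathbb{A}$-polarization with the axial gauge on a collar, where the collapses at $\partial M$ can be computed explicitly. This also requires accepting that $\hat\Omega_\Sigma$ is defined only up to these boundary corrections, as in the construction of \cite{CMR2018}.
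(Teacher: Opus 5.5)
The paper itself gives no proof of this proposition; it only cites \cite{CMR2018}. Your outline is the CMR argument: adapted polarization, the splitting into boundary, residual and fluctuation fields, a formal derivation from the BV-BFV axiom plus BV Stokes, and then Stokes' theorem on $C_\Gamma(M)$ with principal, hidden, boundary and $d\eta$ strata. The genuine gap is the polarization on which you build everything.

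The $\mathbb{A}$-representation is a BF-theory device; there $\mathbb{A}$ is conjugate to a separate field $\mathbb{B}$. In Chern--Simons theory the superfield is self-conjugate. By \eqref{eq:AB-symplectic}, the term $\int_\Sigma\Tr(\delta\mathbb{A}\wedge\delta\mathbb{A})$ pairs $\Omega^1(\Sigma,\fg)$ nondegenerately with itself. So the span of the $(\mathbb{c},\mathbb{A})$-directions is coisotropic, not Lagrangian: its $\omega^\partial_\Sigma$-orthogonal is only the $\mathbb{c}$-directions. As a result:
\begin{enumerate}
\item Functionals of $(\mathbb{c},\mathbb{A})$ are not a space of states.
\item The components of $\mathbb{A}$ have no conjugate momenta to replace by $-i\hbar\,\delta/\delta\mathbb{b}$, so Step~1 cannot be carried out as written.
\item In Step~2, $H^\bullet(M,\partial M;\fg)[1]$ is not a $(-1)$-symplectic space. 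Lefschetz duality pairs $H^\bullet(M,\partial M)$ with $H^\bullet(M)$, not with itself. For the solid torus, $H^2(M,\partial M)\neq 0$ while $H^1(M,\partial M)=0$, so the induced pairing is degenerate.
\item Hence $\Delta_{\mathcal{V}}$ is undefined, and there is no symplectic splitting $\mathcal{Y}=\mathcal{V}\times\mathcal{Y}'$ carrying a gauge-fixing Lagrangian.
\end{enumerate}
Your planned computation of the boundary strata ``in the $\mathbb{A}$-polarization with axial gauge'' therefore has nothing to run on.

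The missing idea is that polarizing Chern--Simons requires extra structure. For compact $\fg$ the invariant form is definite, so there is no splitting $\fg=V\oplus W$ into isotropic subspaces; that rules out the ``split Chern--Simons'' route of Cattaneo--Mnev--Wernli. You must instead choose a complex structure on $\Sigma$ and use the holomorphic polarization:
\begin{enumerate}
\item Take leaf coordinates $(\mathbb{c},\mathbb{A}^{1,0})$ and momenta $(\mathbb{A}^+,\mathbb{A}^{0,1})$.
\item Adapt $\alpha^\partial_\Sigma$ by a boundary term $\propto\int_\Sigma\Tr(\mathbb{A}^{1,0}\wedge\mathbb{A}^{0,1})$.
\end{enumerate}
Your formal Step~3 survives in this polarization, because $S^\partial_\Sigma$ is linear in the momenta: $F_{\mathbb{A}}=\bar\partial\mathbb{A}^{1,0}+\partial\mathbb{A}^{0,1}+[\mathbb{A}^{1,0},\mathbb{A}^{0,1}]$. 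However, the rest must be redone:
\begin{enumerate}
\item The residual fields become the cohomology of the complex with the induced mixed boundary conditions $c|_\Sigma=0$, $(A|_\Sigma)^{1,0}=0$.
\item The propagator must satisfy these same boundary conditions.
\item The boundary-collapse strata that produce the corrections to $\hat\Omega_\Sigma$, and its nilpotency, must be computed in this setting.
\end{enumerate}
This is exactly where Chern--Simons departs from the BF-type examples worked out in \cite{CMR2018}. It was carried out later by Cattaneo--Mnev--Wernli, for example on cylinders.
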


\begin{remark}\label{rmk:mQME-meaning}
The mQME says that $\Zbv(M) = e^{iS_M^{\mathrm{eff}}/\hbar}$ is a cocycle in the BFV cohomology of the boundary, i.e., $\Zbv(M) \in H^\bullet(\hat{\Omega}_\Sigma) = \mathcal{H}_\Sigma^{\mathrm{phys}}$. The partition function is a physical state associated to the boundary.
\end{remark}

\subsection{Functoriality and gluing}\label{subsec:gluing}

A central virtue of the BV-BFV formalism is that it is designed from the outset to be compatible with cutting and gluing of manifolds. This is what gives the partition function the structure of a topological quantum field theory, and it is the property that will allow us to compare it with the RT functor.

Suppose a closed 3-manifold $M$ is decomposed as $M = M_1 \cup_\Sigma M_2$ by cutting along an embedded surface $\Sigma$. Each piece $M_i$ is a cobordism with $\Sigma$ as a boundary component, and the BV-BFV formalism assigns to each a partition function $\Zbv(M_i)$ living in the BFV state space $\mathcal{H}_\Sigma$ of the cutting surface. The partition function of the glued manifold is then recovered by pairing these two states.

\begin{theorem}[CMR gluing \cite{CMR2018,CMR2020}]\label{thm:BV-gluing}
Let $M = M_1 \cup_\Sigma M_2$ be a decomposition along a common boundary component $\Sigma$. Then the BV-BFV partition function satisfies the gluing formula
\begin{equation}\label{eq:BV-gluing}
\Zbv(M) = \langle \Zbv(M_1),\, \Zbv(M_2) \rangle_{\mathcal{H}_\Sigma},
\end{equation}
where $\langle -,\, - \rangle_{\mathcal{H}_\Sigma}$ is the pairing on the BFV state space induced by the symplectic structure on the boundary. At the level of the path integral, this pairing is realized as a fiber BV integral over the boundary fields:
\begin{equation}\label{eq:BV-gluing-integral}
Z_{\BV}(M_1 \cup_\Sigma M_2) = \int_{\mathcal{L}_\Sigma} Z_{\BV}(M_1) \cdot Z_{\BV}(M_2)\;\mu_\Sigma,
\end{equation}
where $\mathcal{L}_\Sigma$ is a Lagrangian submanifold in the BFV phase space of $\Sigma$ and $\mu_\Sigma$ is the induced measure.
\end{theorem}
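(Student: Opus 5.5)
The plan is to derive the gluing formula from the behaviour of the BV-BFV data under cutting. The first step is a locality statement for the space of fields. Away from a collar of $\Sigma$ the bulk fields split as a fibre product
\[
\mathcal{F}_M=\mathcal{F}_{M_1}\times_{\mathcal{F}^\partial_\Sigma}\mathcal{F}_{M_2}.
\]
This holds for smooth superfields that match along $\Sigma$, up to the usual regularity issues that one handles by working with forms of appropriate Sobolev class or in the cellular model. Under this identification:
\begin{itemize}
\item the action is additive, $S_M=S_{M_1}+S_{M_2}$;
\item the symplectic form satisfies $\omega_M=\omega_{M_1}+\omega_{M_2}$;
\item the boundary one-forms $\pi_1^*\alpha^\partial_\Sigma$ and $\pi_2^*\alpha^\partial_{\bar\Sigma}$ cancel, because $\Sigma$ appears with opposite orientations.
\end{itemize}
Proposition \ref{prop:CS-BV-BFV-verified} then shows that the glued data satisfy the closed CME, since the two boundary charges cancel.

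The second step passes to the quantum level. I would choose on $\mathcal{F}^\partial_\Sigma$ a polarization, i.e.\ a Lagrangian fibration with base $\mathcal{B}_\Sigma$; for Chern--Simons this means a splitting of $\Omega^\bullet(\Sigma,\fg)$ into, say, $A$-representation and $B$-representation components, with the opposite polarization on $\bar\Sigma$. With these choices:
\begin{itemize}
\item the space of states $\mathcal{H}_\Sigma$ is the space of functionals on $\mathcal{B}_\Sigma$;
\item the pairing $\langle-,-\rangle_{\mathcal{H}_\Sigma}$ is integration over the fibre, realized as the Lagrangian $\mathcal{L}_\Sigma$ in \eqref{eq:BV-gluing-integral}.
\end{itemize}
For each $M_i$ I would fix gauge-fixing data (propagators for the chosen polarization, and residual fields given by the harmonic forms relative to the boundary) and define $\Zbv(M_i)$ by fibre BV integration. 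Gluing then becomes a two-stage computation. First, fibre BV integration over $\mathcal{F}_M$ factors as integration over the fields on $M_1$ and $M_2$ with fixed boundary values, followed by integration over the boundary fields. Second, by Fubini for fibre BV integrals, the result equals the pairing in \eqref{eq:BV-gluing}, up to a residual-field reduction.

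The third step identifies this paired quantity with $\Zbv(M)$ computed directly for some gauge-fixing on $M$. The pairing produces an effective action on the residual fields $\mathcal{V}_{M_1}\times\mathcal{V}_{M_2}$, together with the boundary integration variables. Performing the remaining finite-dimensional BV pushforward to $\mathcal{V}_M$ (the cohomology $H^\bullet(M,\fg)$) yields a state that differs from the direct one by a $(\hbar^2\Delta_M+\hat\Omega_\Sigma)$-exact term. By Proposition \ref{prop:CMR-mQME} and the standard BV-pushforward/Stokes lemma, this means the two agree in $H^\bullet(\hat\Omega_\Sigma)$, i.e.\ as classes as in Remark \ref{rmk:mQME-meaning}.

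The main obstacle is the diagrammatic part of the third step. One must show that the Feynman graphs built from the glued propagator (a sum of the $M_1$ and $M_2$ propagators together with boundary-to-boundary terms produced by the fibre integration) reproduce the graphs of a legitimate propagator on $M$, modulo a change of gauge-fixing. I would attempt this through the CMR argument: view the glued kernel as a chain homotopy for a valid Hodge-type decomposition of $\Omega^\bullet(M,\fg)$, so that the sum over graphs is itself a BV pushforward. Independence of the gauge choice then reduces to the compactified configuration-space Stokes argument of Remark \ref{rmk:Feynman}. The delicate analytic points are the hidden faces of $C_\Gamma$ and the faces where points collide on $\Sigma$. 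For the hidden faces, vanishing uses the trivalent, unimodular structure of $\fg$. For the boundary-collision faces, the contributions are exactly those assembled into $\hat\Omega_\Sigma$, and they cancel between the two sides.
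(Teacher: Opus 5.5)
The paper gives no argument of its own here: the statement is quoted from Cattaneo--Mnev--Reshetikhin. Your outline follows the route of those works: transversal polarizations, pairing of states, BV pushforward from $\mathcal{V}_{M_1}\times\mathcal{V}_{M_2}$ to $\mathcal{V}_M$, and comparison of gauge-fixings. What this route can deliver is equivalence of states after that pushforward, modulo $\hbar^2\Delta_M+\hat\Omega_{\partial M}$ for the \emph{remaining} boundary. It is not modulo $\hat\Omega_\Sigma$, since $\Sigma$ is interior after gluing. This is weaker than the literal equality \eqref{eq:BV-gluing}.

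The genuine gap is your gauge-independence step. In Chern--Simons theory the hidden faces of $C_\Gamma(M)$ do not all vanish. The face on which all vertices of a connected graph collapse carries the Axelrod--Singer/Kontsevich anomaly. Under a change of propagator, its contribution is proportional to the variation of the gravitational Chern--Simons term of the metric. (Principal faces cancel by the Jacobi/IHX identity, and the other hidden faces by Kontsevich-type symmetry arguments; trivalence and unimodularity do nothing to the total-collapse face.) Your glued kernel is not the propagator of a smooth metric on $M$. Comparing it with a direct gauge-fixing on $M$ therefore leaves, besides an exact term, a multiplicative phase that your argument does not control. The one-loop $\eta$-phase in \eqref{eq:pert-CS-structure} is likewise not additive under cutting. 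To close this you must choose framings on $M_1$, $M_2$, $M$ compatibly and include the local counterterm. The gluing law then holds only relative to that data, which is the perturbative counterpart of the central extension on the RT side.

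Two further steps need proofs rather than assertions. First, the glued kernel is a chain contraction of the complex $\Omega^\bullet(M_1,\fg)\times_{\Omega^\bullet(\Sigma,\fg)}\Omega^\bullet(M_2,\fg)$ of piecewise forms, not a Hodge-type contraction of $\Omega^\bullet(M,\fg)$. You must prove this, in particular that its tangential components match across $\Sigma$; that is what makes the faces where a vertex reaches $\Sigma$ from the two sides cancel. After that you still have to transfer to a smooth gauge-fixing through the quasi-isomorphism with $\Omega^\bullet(M,\fg)$. Your claim that the $\Sigma$-collision terms ``are those assembled into $\hat\Omega_\Sigma$ and cancel between the two sides'' presupposes that the gluing pairing intertwines the quantum-corrected BFV operators of the two transversal polarizations. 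That must be checked; it does not follow from the mQME on each piece.

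Second, the polarization. The Atiyah--Bott form pairs $\mathbb{A}$ with itself, so Chern--Simons has no BF-type $\mathbb{A}/\mathbb{B}$ splitting. A real isotropic splitting $\fg=V\oplus W$ also does not exist for compact $\fg$, whose Killing form is definite. A local polarization therefore needs extra structure, e.g.\ a complex structure on $\Sigma$ with holomorphic/antiholomorphic representations. The boundary Feynman rules and the cancellations above have to be established in that polarization.
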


The gluing formula endows the BV-BFV partition function with the algebraic structure of a TQFT. Surfaces are assigned state spaces, cobordisms are assigned vectors (or linear maps) in those state spaces, and composition of cobordisms corresponds to the pairing \eqref{eq:BV-gluing}.

\begin{corollary}\label{cor:BV-functor}
The BV-BFV partition function defines a symmetric monoidal functor
\begin{equation}
\Zbv: \Bord_3^{\mathrm{or}} \longrightarrow \Vect_\hbar,
\end{equation}
where $\Vect_\hbar = \Vect_{\CC[\![\hbar]\!]}$ denotes the category of topological vector spaces over the ring of formal power series in $\hbar$. This is an \emph{ordinary} (non-extended) TQFT functor. It assigns vector spaces to surfaces and formal power series to closed 3-manifolds, but does not assign categorical data to circles. One of the goals of our program is to promote this to a $(3\text{-}2\text{-}1)$-extended functor by extracting the $\EE_2$-category $\mathcal{B}_{\CS}^{(k)}$ at the circle level (Section \ref{sec:bridge}).
\end{corollary}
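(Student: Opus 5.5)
The plan is to assemble $\Zbv$ as a functor directly from the gluing formula of Theorem~\ref{thm:BV-gluing}, with the modified quantum master equation of Proposition~\ref{prop:CMR-mQME} ensuring that everything descends to cohomology.

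\textbf{Objects.} To a closed oriented surface $\Sigma$ we assign the physical state space
\[
\Zbv(\Sigma) := H^\bullet(\mathcal{H}_\Sigma,\hat\Omega_\Sigma),
\]
a topological $\CC[\![\hbar]\!]$-module.

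\textbf{Morphisms.} To a cobordism $M:\Sigma_{\mathrm{in}}\to\Sigma_{\mathrm{out}}$ we assign a map as follows.
\begin{enumerate}[label=(\roman*)]
\item Split the boundary BFV space as $\mathcal{F}^\partial_{\partial M}=\overline{\mathcal{F}^\partial_{\Sigma_{\mathrm{in}}}}\times\mathcal{F}^\partial_{\Sigma_{\mathrm{out}}}$, with the orientation reversal flipping the sign of $\omega^\partial$. This gives
\[
\mathcal{H}_{\partial M}\cong \mathcal{H}_{\Sigma_{\mathrm{in}}}^\vee\hat\otimes\mathcal{H}_{\Sigma_{\mathrm{out}}}.
\]
\item Take the effective action $e^{iS^{\mathrm{eff}}_M/\hbar}$ and fiber-integrate over the residual fields, using a Lagrangian in the residual field space.
\item By Proposition~\ref{prop:CMR-mQME} and Remark~\ref{rmk:mQME-meaning}, the result is a $\hat\Omega_{\partial M}$-cocycle. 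It therefore defines an element of $\Zbv(\Sigma_{\mathrm{in}})^\vee\hat\otimes\Zbv(\Sigma_{\mathrm{out}})$, hence a linear map $\Zbv(\Sigma_{\mathrm{in}})\to\Zbv(\Sigma_{\mathrm{out}})$.
\item Well-definedness on diffeomorphism classes of cobordisms, and independence of the gauge-fixing data (metric, choice of residual fields, propagator), follows from the standard BV statement. Changing the data alters $e^{iS^{\mathrm{eff}}/\hbar}$ by a term $(\hbar^2\Delta_M+\hat\Omega_{\partial M})(\cdots)$. After integrating out residual fields this is $\hat\Omega$-exact, so its class is unchanged.
\end{enumerate}

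\textbf{Composition.} Functoriality for composition is exactly \eqref{eq:BV-gluing}. Gluing $M_1:\Sigma_0\to\Sigma$ and $M_2:\Sigma\to\Sigma_2$ along $\Sigma$ pairs the $\mathcal{H}_\Sigma$ and $\mathcal{H}_\Sigma^\vee$ factors. This is precisely composition of the corresponding linear maps. One must check that the pairing $\langle-,-\rangle_{\mathcal{H}_\Sigma}$ is $\hat\Omega_\Sigma$-invariant, i.e.\ $\hat\Omega_\Sigma$ is (graded) self-adjoint. This follows because $\hat\Omega_{\overline\Sigma}$ is the transpose of $\hat\Omega_\Sigma$ under the orientation-reversed polarization. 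Consequently the pairing descends to cohomology.

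\textbf{Identities and monoidal structure.} For identities we use the cylinder $\Sigma\times[0,1]$. I would argue that its partition function is the identity on cohomology. Because of gluing, $Z(\Sigma\times I)$ is an idempotent. In the axial gauge, or by CMR's computation for cylinders, it is chain homotopic to the identity kernel.
\begin{itemize}
\item If identities fail strictly, the fallback is the standard one: replace $\Zbv(\Sigma)$ by the image of this idempotent.
\end{itemize}
Symmetric monoidality comes from three observations:
\begin{itemize}
\item For disjoint unions, fields factor, $\mathcal{F}_{M\sqcup M'}=\mathcal{F}_M\times\mathcal{F}_{M'}$, and actions add. Hence $Z$ factorizes as a completed tensor product.
\item The empty surface gives $\mathcal{H}_\emptyset=\CC[\![\hbar]\!]$.
\item The symmetry isomorphisms are the graded swap, compatible with the Koszul signs from ghost degrees.
\end{itemize}

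\textbf{Main obstacle.} I expect the hardest step to be the identity/cylinder step, together with the completed tensor products. The perturbative state spaces are infinite-dimensional, so $\hat\otimes$ must be chosen so that Künneth holds on $\hat\Omega$-cohomology and the gluing pairing converges order by order in $\hbar$. My first attempt is to sidestep this by working with the cellular version, where all spaces are finite-dimensional. There the cylinder is computed explicitly and shown to act as identity up to $\hat\Omega$-homotopy. Topological invariance would then follow by comparing subdivisions.
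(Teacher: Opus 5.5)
Your route is the paper's: the corollary is deduced there in one step from the CMR gluing formula (Theorem~\ref{thm:BV-gluing}), with the mQME (Proposition~\ref{prop:CMR-mQME}, Remark~\ref{rmk:mQME-meaning}) making partition functions physical states. The paper says nothing about identities, monoidality or gauge independence, so your proposal is more detailed than its argument. The obstacles you flag (the cylinder, completed tensor products and K\"unneth) are real, and the paper does not resolve them either.

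\textbf{Main gap: step (iv).} You treat gauge-fixing independence as routine, but it fails for Chern--Simons theory. Take $M$ closed with an acyclic flat background $A_0$. Then there are no residual fields and $\hat\Omega_{\emptyset}=0$, so the ambiguity $(\hbar^2\Delta_M+\hat\Omega_{\partial M})(\cdots)$ vanishes identically, and your argument would make $\Zbv(M)$ literally metric independent. It is not:
\begin{itemize}
\item the one-loop factor $e^{\frac{i\pi}{4}\eta(A_0)}$ in \eqref{eq:pert-CS-structure} varies with the metric by a multiple of the gravitational Chern--Simons invariant (Atiyah--Patodi--Singer);
\item at higher loops, the faces of the Axelrod--Singer compactification where all points collapse produce variations of the same kind.
\end{itemize}
This framing anomaly is removed only by a framing-dependent counterterm. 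So what the gluing argument can give is a functor on bordisms with framing or $p_1$-structure, i.e.\ a projective functor on $\Bord_3^{\mathrm{or}}$, exactly as for RT. This corrects the statement itself, not just your proof.

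The same phenomenon appears on objects. $\mathcal{H}_\Sigma$ requires a choice of polarization of $\mathcal{F}^\partial_\Sigma$, and your identification of the incoming space with a dual uses the transversal polarization. The isomorphisms between the cohomologies for different polarizations can only be expected to be coherent up to scalars, as with the Hitchin connection of Remark~\ref{rmk:geom-quant}.

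\textbf{Further repairs.}
\begin{itemize}
\item \emph{Step (ii).} Integrating out all residual fields is not in general a perturbative operation. At a non-acyclic background, e.g.\ $H^1(M;\fg_{\Ad_{A_0}})\neq 0$, the effective action has no quadratic part along the zero modes. The fiber integral is then not Gaussian and does not produce a formal power series in $\hbar$. CMR keep $Z_M$ as a half-density on residual fields and compose via a further BV pushforward. A strict functor to $\Vect_\hbar$ therefore comes out only in restricted (e.g.\ acyclic) situations.
\item \emph{Identities.} Your fallback of taking the image of the cylinder idempotent is legitimate. However, it replaces $H^\bullet(\mathcal{H}_\Sigma,\hat\Omega_\Sigma)$ by a different object, so it proves a modified statement.
\item \emph{Cellular sidestep.} This is not available as stated. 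The finite-dimensional cellular BV-BFV theory of Cattaneo--Mnev--Reshetikhin is constructed for BF theory. For Chern--Simons one lacks a nondegenerate cochain-level pairing on a single cell decomposition, so it gives no finite-dimensional model in which to check the cylinder.
\end{itemize}
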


\begin{remark}[Perturbative nature]\label{rmk:perturbative}
It is important to emphasize that the functor $\Zbv$ is \emph{perturbative}. Its values are formal power series in $\hbar$, not complex numbers. The partition function of a closed 3-manifold is an element of $\CC[\![\hbar]\!]$, and the state space of a surface is a module over this ring. In particular, $\Zbv$ cannot directly ``see'' non-perturbative phenomena such as tunneling between distinct flat connections. The fundamental question motivating this paper is precisely how to \emph{de-formalize} this functor, i.e., how to pass from the formal power series ring $\CC[\![\hbar]\!]$ to $\CC$ by specializing $\hbar$ to a concrete value, and whether the result agrees with the RT invariants.
\end{remark}

\subsection{The BV-BFV formalism on simplicial decompositions}

An important refinement of the CMR program, developed in \cite{CMR2020}, is the construction of the BV-BFV theory on simplicial (or more generally CW) decompositions of $M$. This ``cellular'' version of the theory replaces the infinite-dimensional spaces of differential forms with finite-dimensional combinatorial data, making the partition function concretely computable while preserving the essential algebraic structure.

\begin{construction}[Cellular BV-BFV \cite{CMR2020}]\label{constr:cellular}
Let $\Delta$ be a triangulation of $M$. The cellular BV-BFV construction assigns a finite-dimensional BV-BFV theory to each simplex: the BV space associated to a 3-simplex is $\fg[1] \oplus \fg \oplus \fg[-1] \oplus \fg[-2]$ (one copy of the Lie algebra in each degree, mirroring the four components of the superfield), and the BFV space associated to a 2-simplex is $\fg[1] \oplus \fg \oplus \fg[-1]$ (the boundary truncation). The partition function of $M$ is then computed by assembling the simplices: one performs an iterated BV pushforward, gluing the finite-dimensional BV-BFV data of adjacent simplices step by step according to the combinatorics of $\Delta$.
\end{construction}

The key advantage of this approach is computability. Each step involves a finite-dimensional integral rather than a functional integral, and the entire computation reduces to a sequence of linear-algebraic operations on copies of $\fg$. The price one pays is an apparent dependence on the triangulation $\Delta$. However, the partition function is in fact independent of this choice. This follows from a cellular version of the modified quantum master equation, together with the observation that any two triangulations of $M$ are related by a finite sequence of Pachner moves (bistellar flips), and the BV-BFV partition function is invariant under each such move. This cellular construction will play an important role in Section \ref{sec:bridge}, where we compare it with the factorization homology approach to computing topological invariants. See also \cite{CMR2014b} for the semiclassical perspective on this construction and \cite{CG2017,CG2021,GwilliamHaugseng2018} for the factorization algebra viewpoint.

\section{The Reshetikhin--Turaev Construction}\label{sec:RT}

We now turn to the other side of the bridge, namely, the \emph{Reshetikhin--Turaev invariants}. These are constructed purely algebraically, starting from a modular tensor category, and produce a fully functorial $(2+1)$-dimensional TQFT. Our goal in this section is to describe the construction in enough detail to make the comparison with the BV-BFV formalism precise.

\subsection{Ribbon categories and graphical calculus}

The algebraic input for the RT construction is a \emph{ribbon category}, i.e., a monoidal category with enough additional structure (braiding, twist, duality) to support a topological calculus for morphisms.

\begin{definition}[Ribbon category]\label{def:ribbon}
A \emph{ribbon category} is a monoidal category $\MTC$ equipped with:
\begin{enumerate}[label=(\roman*)]
\item A \emph{braiding}, i.e., a natural family of isomorphisms\\ $c_{V,W}: V \otimes W \xrightarrow{\sim} W \otimes V$ satisfying the hexagon axioms, which are the categorical incarnation of the Yang--Baxter equation.
\item A \emph{twist} (or \emph{balancing}), i.e., a natural automorphism\\ $\theta_V: V \xrightarrow{\sim} V$ compatible with the braiding in the sense that\\ $\theta_{V \otimes W} = c_{W,V}\, c_{V,W}\, (\theta_V \otimes \theta_W)$.
\item \emph{Rigidity}: every object $V$ possesses a dual $V^*$, together with evaluation $\mathrm{ev}_V: V^* \otimes V \to \mathbf{1}$ and coevaluation\\ $\mathrm{coev}_V: \mathbf{1} \to V \otimes V^*$ morphisms satisfying the usual zigzag identities, and these are compatible with the braiding and twist.
\end{enumerate}
\end{definition}

The power of ribbon categories lies in their \emph{graphical calculus}. Morphisms in $\MTC$ can be represented as ribbon tangles in the strip $\RR^2 \times [0,1]$. Objects label the strands (oriented ribbons), the braiding corresponds to one strand crossing over another, the twist corresponds to a full $2\pi$-rotation of a ribbon, and duality is represented by cups (coevaluation) and caps (evaluation). The key property is that isotopic ribbon tangles represent the same morphism, so the algebraic relations in $\MTC$ are faithfully encoded by the topology of the tangles. This calculus is what ultimately connects the algebra to 3-dimensional topology.

A ribbon category comes with natural notions of dimension and trace that generalize the ordinary ones.

\begin{definition}[Quantum dimension and trace]\label{def:qdim}
For an object $V$ in a ribbon category, the \emph{quantum trace} of an endomorphism $f: V \to V$ is defined using the ribbon structure as \[\Tr_q(f) = \mathrm{ev}_V \circ c_{V,V^*} \circ (\theta_V f \otimes \id_{V^*}) \circ \mathrm{coev}_V,\]
where $\theta_V$ is the twist. The \emph{quantum dimension} of $V$ is the quantum trace of the identity: $\dim_q(V) = \Tr_q(\id_V)$. In the graphical calculus, this is computed by closing a strand labeled $V$ into a loop with a single twist. The resulting ``framed unknot'' evaluates to $\dim_q(V)$. The \emph{total quantum dimension} of the category is $\mathcal{D}^2 = \sum_{i \in \mathrm{Irr}(\MTC)} (\dim_q V_i)^2$, where the sum runs over isomorphism classes of simple objects.
\end{definition}

The quantum dimension need not be a positive integer. It is in general a complex number, and for quantum groups at roots of unity it is an algebraic number that can be negative or even zero for certain objects. This departure from ordinary linear algebra is one of the sources of the richness of quantum topology.

\subsection{Modular tensor categories}

We establish notation that will be used throughout the paper. For a simple Lie algebra $\fg$ and $q$ not a root of unity, $\Rep_q(G)$ denotes the braided monoidal category of finite-dimensional representations of $U_q(\fg)$; this is a non-semisimple, infinite category. At a root of unity $q = e^{i\pi/(k+h^\vee)}$, the category $\Rep_q(G)$ acquires a tensor ideal $\mathcal{N}$ of \emph{negligible morphisms}. The \emph{semisimplification} (or \emph{modularization}) is the quotient $\overline{\Rep}_q(G) := \Rep_q(G)/\mathcal{N}$, which is a finite semisimple ribbon category; it is moreover modular (Definition \ref{def:MTC}). We use the bar notation $\overline{(-)}$ consistently to denote this semisimplification. When the distinction matters, we will always specify whether we work at generic $q$, at a root of unity with the non-semisimplified category, or with the semisimplified MTC.

Not every ribbon category is suitable for constructing 3-manifold invariants. The additional condition needed is \emph{modularity}, i.e., a non-degeneracy requirement on the braiding that ensures the category carries enough information to detect all topological features of 3-manifolds.

To state this precisely, recall that a \emph{fusion category} is a semisimple $\CC$-linear rigid monoidal category with finitely many isomorphism classes of simple objects (including the unit object, which is required to be simple) and finite-dimensional Hom spaces. The tensor product of simple objects decomposes into simples according to the \emph{fusion rules} $V_i \otimes V_j \cong \bigoplus_k N_{ij}^k\, V_k$, where the multiplicities $N_{ij}^k \in \ZZ_{\geq 0}$ encode the multiplicative structure of the Grothendieck ring.

\begin{definition}[Modular tensor category]\label{def:MTC}
A \emph{modular tensor category} (MTC) is a ribbon fusion category $\MTC$ that is \emph{non-degenerate} in the following sense. The $S$-matrix, defined by
\begin{equation}\label{eq:S-matrix}
S_{ij} = \frac{1}{\mathcal{D}}\, \Tr_q(c_{V_j, V_i} \circ c_{V_i, V_j}),
\end{equation}
where $\{V_i\}_{i \in I}$ are the simple objects and $\mathcal{D}$ is the total quantum dimension, is invertible.
\end{definition}

In the graphical calculus, the entry $S_{ij}$ is computed by evaluating the Hopf link (two linked unknots, one colored by $V_i$ and the other by $V_j$) and normalizing by $\mathcal{D}$. The modularity condition (invertibility of $S$) says that the braiding is ``maximally non-symmetric'', i.e., no non-trivial object braids trivially with every other object. This condition is both natural and essential. Physically, it corresponds to the absence of ``transparent'' anyons in the associated topological phase of matter. Mathematically, it is what makes the surgery formula for 3-manifold invariants well-defined. The invertibility of $S$ is needed to show invariance under Kirby moves.

The prototypical and motivating family of examples comes from quantum groups at roots of unity.

\begin{example}[Quantum group MTC]\label{ex:quantum-MTC}
Let $\fg$ be a simple Lie algebra of a compact group $G$, let $k$ be a positive integer (the \emph{level}), and set $q = e^{i\pi/(k+h^\vee)}$, where $h^\vee$ is the \emph{dual Coxeter number}. The quantum group $U_q(\fg)$ at this root of unity has a category of \emph{tilting modules} $\mathrm{Tilt}(U_q(\fg))$, which is a ribbon category but is neither semisimple nor finite. To obtain a modular tensor category, one passes to the \emph{semisimplification}, i.e., the quotient by the tensor ideal of \emph{negligible morphisms} (those whose quantum trace vanishes in every composition),
\begin{equation}
\MTC = \overline{\Rep}_q(G) := \mathrm{Tilt}(U_q(\fg))\, /\, \mathcal{N}.
\end{equation}
The resulting category is a semisimple ribbon category whose simple objects are indexed by the dominant integral weights $\lambda$ lying in the \emph{Weyl alcove}
\begin{equation}
\mathcal{A}_k = \big\{\lambda \in \Lambda^+_{\mathrm{dom}} \;:\; 0 \leq \langle\lambda, \theta\rangle \leq k\big\},
\end{equation}
where $\theta$ is the highest root and $\Lambda^+_{\mathrm{dom}}$ is the set of dominant integral weights. This is a finite set, so $\overline{\Rep}_q(G)$ is a fusion category. It is moreover modular. Indeed, the $S$-matrix is given explicitly by the Kac--Peterson formula
\begin{equation}\label{eq:KP-formula}
S_{\lambda\mu} = \frac{i^{|\Delta_+|}}{(k+h^\vee)^{\rk(\fg)/2}} \cdot \frac{\displaystyle\sum_{w \in W} (-1)^{\ell(w)}\, e^{-2\pi i \langle w(\lambda + \rho),\, \mu + \rho\rangle / (k+h^\vee)}}{\displaystyle\prod_{\alpha \in \Delta_+} 2\sin\!\big(\pi \langle\alpha, \rho\rangle / (k+h^\vee)\big)},
\end{equation}
where $W$ is the Weyl group, $\rho$ is the Weyl vector (half-sum of positive roots), $\Delta_+$ is the set of positive roots, and $\ell(w)$ is the length function on $W$. The invertibility of this matrix is a classical result in the representation theory of affine Lie algebras.
\end{example}

For $\fg = \mathfrak{sl}_2$ at level $k$, the Weyl alcove contains $k+1$ weights (corresponding to spins $j = 0, \tfrac{1}{2}, 1, \ldots, \tfrac{k}{2}$), and the $S$-matrix reduces to $S_{jj'} = \sqrt{2/(k+2)}\;\sin\!\big((j+1)(j'+1)\pi/(k+2)\big)$, which is manifestly invertible. This is the simplest non-trivial example and the one most frequently encountered in the literature. We note that the RT construction has been extended to the non-semisimple setting in \cite{DGGPR2022}, which is relevant to our program since the non-semisimplified category $\Rep_q(G)$ at a root of unity is the natural output of BV-BFV quantization before the semisimplification step.

\subsection{The RT-TQFT functor}

Given a modular tensor category $\MTC$, the Reshetikhin--Turaev construction produces a symmetric monoidal functor
\begin{equation}\label{eq:RT-functor}
\Zrt: \Bord_3^{\mathrm{ext}} \longrightarrow 2\text{-}\Vect,
\end{equation}
assigning categories to circles (namely, the MTC $\MTC$ itself), vector spaces to closed surfaces, linear maps to 3-cobordisms, and complex numbers to closed 3-manifolds. We describe each of these assignments in turn.

\subsubsection{Assignment to surfaces}

To a closed oriented surface $\Sigma_g$ of genus $g$, the RT construction assigns a finite-dimensional vector space $V(\Sigma_g) = \Zrt(\Sigma_g)$, interpreted as the ``space of states'' on $\Sigma_g$. Its dimension is determined by the algebraic data of the MTC through the celebrated \emph{Verlinde formula}, given by
\begin{equation}\label{eq:Verlinde}
\dim V(\Sigma_g) = \sum_{i \in \mathrm{Irr}(\MTC)} \left(\frac{\mathcal{D}}{\dim_q V_i}\right)^{2-2g} = \mathcal{D}^{2g-2} \sum_{i} (S_{0i})^{2-2g},
\end{equation}
where the sum runs over the (finitely many) isomorphism classes of simple objects, $\mathcal{D}$ is the total quantum dimension, and $S_{0i} = \dim_q(V_i)/\mathcal{D}$ relates the first row of the $S$-matrix to the quantum dimensions. For genus $0$ this gives $\dim V(S^2) = 1$; for genus $1$ it gives $\dim V(T^2) = |\mathrm{Irr}(\MTC)|$, the number of simple objects.

For the quantum group category $\overline{\Rep}_q(\mathfrak{sl}_2)$ at level $k$, the simple objects are $V_0, V_1, \ldots, V_k$ (corresponding to spins $0, \frac{1}{2}, \ldots, \frac{k}{2}$), and the Verlinde formula specializes to
\begin{equation}\label{eq:Verlinde-sl2}
\dim V(\Sigma_g) = \left(\frac{k+2}{2}\right)^{g-1} \sum_{j=0}^{k} \left(\sin\left(\frac{(j+1)\pi}{k+2}\right)\right)^{2-2g}.
\end{equation}
This formula, originally conjectured by Verlinde on the basis of conformal field theory, was one of the early triumphs of the RT construction. It provides a rigorous mathematical proof of a prediction from physics.

More generally, for a surface with punctures $\Sigma_{g,n}$ where the $j$-th puncture is labeled by a simple object $V_{i_j} \in \MTC$, the state space is the space of \emph{conformal blocks}, a finite-dimensional vector space whose dimension can be computed recursively from the fusion rules $N_{ij}^k$ by decomposing the surface into pairs of pants. The dimension depends on the genus, the number of punctures, and the labels, and is given by a generalization of the Verlinde formula involving the full $S$-matrix.

\subsubsection{Surgery formula for closed 3-manifolds}

The RT invariant of a closed 3-manifold is computed via a surgery presentation. Recall that every closed oriented 3-manifold can be obtained from $S^3$ by Dehn surgery on a framed link, and two surgery presentations yield the same 3-manifold if and only if they are related by a finite sequence of Kirby moves (stabilization and handle slides). The RT construction exploits this by defining an expression associated to a surgery presentation and proving that it is invariant under Kirby moves.

Concretely, suppose $M$ is presented as surgery on a framed link $L = L_1 \cup \cdots \cup L_m$ in $S^3$, with each component $L_j$ carrying an integer framing $f_j$. The RT invariant is
\begin{equation}\label{eq:surgery}
\Zrt(M) = \kappa^{-\sigma(L)} \cdot \mathcal{D}^{-(m+1)} \sum_{V_1, \ldots, V_m \in \mathrm{Irr}(\MTC)} \left(\prod_{j=1}^m \dim_q(V_j)\right) J_L(V_1, \ldots, V_m).
\end{equation}
Here $J_L(V_1, \ldots, V_m)$ is the \emph{colored link invariant} (or \emph{quantum link invariant}), obtained by evaluating the ribbon tangle presentation of $L$ in the graphical calculus of $\MTC$ with the $j$-th component colored by the simple object $V_j$. The linking matrix $B$ has entries $B_{ij} = \mathrm{lk}(L_i, L_j)$ for $i \neq j$ and $B_{ii} = f_i$, and $\sigma(L) = \sigma_+ - \sigma_-$ is its signature. The constant $\kappa = \mathcal{D}^{-1} \sum_i (\dim_q V_i)^2\, \theta_i$, where $\theta_i$ is the twist eigenvalue of the simple object $V_i$, is a root of unity whose argument is related to the central charge of the associated conformal field theory. The overall normalization factor $\mathcal{D}^{-(m+1)}$ is chosen to ensure the correct behavior under stabilization.

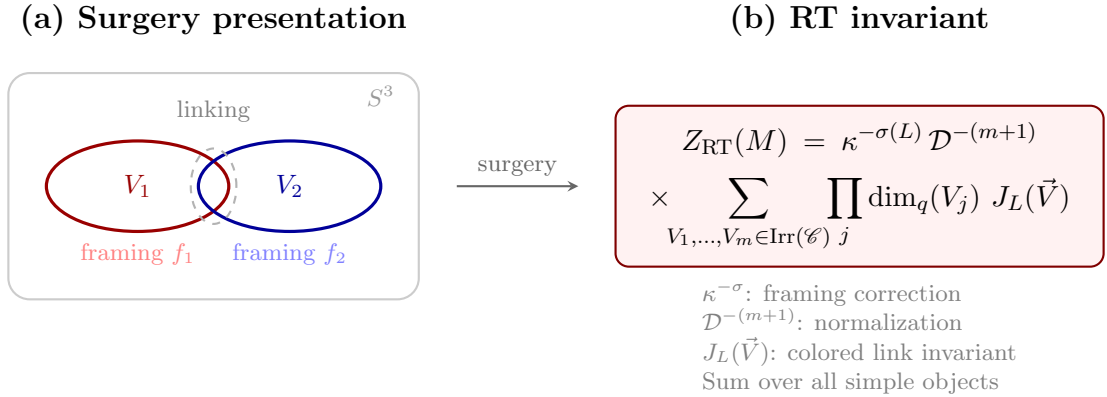
\begin{figure}[ht]
\centering
\begin{tikzpicture}[scale=1]

% === Left: framed link in S^3 ===

\node[font=\bfseries] at (-4.5,3.2) {(a) Surgery presentation};

% Component 1 (trefoil-like curve, simplified as an ellipse)
\draw[red!60!black, very thick] (-5.5,1) ellipse (1.2 and 0.6);
\node[red!60!black, font=\footnotesize\bfseries] at (-5.5,1) {$V_1$};
\node[red!50, font=\scriptsize] at (-5.5,0.1) {framing $f_1$};

% Component 2
\draw[blue!60!black, very thick] (-3.5,1) ellipse (1.2 and 0.6);
\node[blue!60!black, font=\footnotesize\bfseries] at (-3.5,1) {$V_2$};
\node[blue!50, font=\scriptsize] at (-3.5,0.1) {framing $f_2$};

% Linking (overlap region)
\draw[black!30, thick, dashed] (-4.5,1) ellipse (0.3 and 0.5);
\node[black!50, font=\scriptsize] at (-4.5,2.0) {linking};

% S^3 background
\draw[black!20, thick, rounded corners=8pt] (-7.2,-0.5) rectangle (-1.8,2.5);
\node[black!40, font=\scriptsize] at (-2.3,2.2) {$S^3$};

% Arrow
\draw[-{stealth}, thick, black!60] (-1.3,1) -- (0.3,1);
\node[above, font=\scriptsize, black!60] at (-0.5,1) {surgery};

% === Right: the invariant ===

\node[font=\bfseries] at (4,3.2) {(b) RT invariant};

\node[draw=red!50!black, thick, rounded corners=5pt, fill=red!5, 
      inner sep=6pt, font=\small, align=center, text width=6cm] at (4,1) 
      {$\displaystyle\Zrt(M) = \kappa^{-\sigma(L)}\,\mathcal{D}^{-(m+1)}$\\[6pt]
       $\displaystyle\times\!\!\sum_{V_1,\ldots,V_m \in \mathrm{Irr}(\MTC)}\!\!
       \prod_j \dim_q(V_j)\; J_L(\vec{V})$};

% Annotations
\node[black!50, font=\scriptsize, align=left] at (4,-1) 
      {$\kappa^{-\sigma}$: framing correction\\
       $\mathcal{D}^{-(m+1)}$: normalization\\
       $J_L(\vec{V})$: colored link invariant\\
       Sum over all simple objects};

\end{tikzpicture}
\caption{The RT invariant via surgery. \textbf{(a)} A closed 3-manifold $M$ is presented as surgery on a framed link $L = L_1 \cup L_2$ in $S^3$, with each component colored by a simple object $V_j$ of the MTC. \textbf{(b)} The RT invariant is computed by summing the colored link invariant $J_L(\vec{V})$ over all labelings by simple objects, weighted by quantum dimensions, and corrected by normalization factors depending on the signature $\sigma(L)$ and the total quantum dimension $\mathcal{D}$. Invariance under Kirby moves requires modularity of $\MTC$.}
\label{fig:surgery}
\end{figure}

The essential content of the RT construction is that this expression is a topological invariant:

\begin{theorem}[Reshetikhin--Turaev \cite{RT1990}]\label{thm:RT-invariance}
The expression \eqref{eq:surgery} is independent of the choice of surgery presentation of $M$. That is, it is invariant under the two Kirby moves:
\begin{enumerate}[label=(\roman*)]
\item \emph{Stabilization:} adding or removing an unknotted component with framing $\pm 1$, disjoint from the rest of the link. Invariance under this move is guaranteed by the normalization factors $\mathcal{D}^{-(m+1)}$ and $\kappa^{-\sigma(L)}$.
\item \emph{Handle slide:} replacing one component by its band-sum with another. Invariance under this move relies on the modularity of $\MTC$ (specifically, on the invertibility of the $S$-matrix) and is where the non-degeneracy condition plays its essential role.
\end{enumerate}
\end{theorem}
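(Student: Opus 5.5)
By Kirby's theorem, two framed links in $S^3$ give homeomorphic surgery manifolds if and only if they are related by a finite sequence of stabilizations (blow-ups/downs) and handle slides, together with isotopy. So it suffices to check that \eqref{eq:surgery} is invariant under each move separately. Invariance under isotopy is built in: by the graphical calculus of ribbon categories, $J_L(V_1,\dots,V_m)$ depends only on the isotopy class of the framed colored link. The linking matrix $B$ is also an isotopy invariant. The proof thus reduces to the two local computations in (i) and (ii).

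\textbf{Stabilization (i).} Adding a disjoint $\pm1$-framed unknot $O_{\pm}$ multiplies $J_L$ by the evaluation of $O_\pm$ colored by $V$. The reason is that $J$ of a split link is the product of $J$'s divided by nothing, since the unit object is simple and the closed diagram evaluates to a scalar. Summing over $V$ with weights $\dim_q V$ gives the Gauss sum
\[
p_\pm=\sum_{i}\theta_i^{\pm1}(\dim_q V_i)^2 .
\]
This is $\mathcal{D}\kappa$ for the $+$ sign, and $\mathcal{D}\kappa^{-1}$ for the $-$ sign once one checks $p_+p_-=\mathcal{D}^2$. The new linking matrix is $B\oplus(\pm1)$, so $\sigma$ shifts by $\pm1$ and $m$ by $1$. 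The factors $\kappa^{\mp1}$ and $\mathcal{D}^{-1}$ then cancel $p_\pm$ exactly. I will derive $p_+p_-=\mathcal{D}^2$ from modularity via the standard relation $(ST)^3=p_+ S^2/\mathcal{D}$ together with $S^2=C$. This is a secondary use of non-degeneracy.

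\textbf{Handle slide (ii): the main step.} Here I introduce the Kirby color $\Omega=\sum_i \dim_q(V_i)\,V_i$. Using multilinearity of $J$ in each color, the sum in \eqref{eq:surgery} is just $J_L(\Omega,\dots,\Omega)$. Sliding a strand colored $V$ over a component colored $\Omega$ is then equivalent to the local "encircling lemma". It says that an $\Omega$-colored unknot encircling a bundle of strands with total color $X$ acts as $\mathcal{D}^2$ times the projection onto the $\mathbf{1}$-isotypic part of $X$. I would prove this by first reducing to $X$ simple, using semisimplicity and a decomposition of the identity of $X$ into simple summands. Then I compute the encircled simple $V_j$ as the scalar $\sum_i \dim_q(V_i)\,\tilde S_{ij}/\dim_q(V_j)$, where $\tilde S=\mathcal{D}S$. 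Unitarity-type orthogonality of the rows of $S$ shows this sum vanishes unless $j=0$, where it equals $\mathcal{D}^2$; this orthogonality is exactly where invertibility of $S$ enters, via the Verlinde relations. From the lemma, the standard Reshetikhin--Turaev sliding argument follows. The slid strand, running parallel to an $\Omega$-component, can be fused through it: decompose $V\otimes V_i$ into simples, and note that $\Omega$ absorbs tensoring, $V\otimes\Omega\cong\dim_q(V)\,\Omega$ up to the right multiplicities via $N_{ij}^k$.

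Finally, a handle slide changes $B$ by a congruence $B\mapsto P^TBP$ with $P$ unimodular. This preserves $\sigma(L)$ and $m$, so the prefactors are untouched. I expect the encircling lemma (the "killing property") and the absorption $V\otimes\Omega$-identity to be the main obstacle. I will prove them carefully using semisimplicity, duality $N_{ij}^k=N_{i k^*}^{j^*}$, and modularity.
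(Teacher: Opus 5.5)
Your treatment of isotopy, of the congruence $B\mapsto P^{T}BP$ under a slide, and of stabilization is fine. You are also right that the $(-1)$-blow-up needs $p_+p_-=\mathcal{D}^2$; in fact this is the only place in the proof where invertibility of $S$ is used. The gap is in the handle slide. Your reduction of sliding over an $\Omega$-component to the encircling (killing) lemma does not work: the two properties are not equivalent, and you never say how killing would actually be used in the slide. The sliding property of $\Omega=\sum_i d_iV_i$ holds in every ribbon fusion category, whereas killing holds exactly when $S$ is invertible. For example, take $\Rep(\ZZ/2)$ with symmetric braiding and trivial twist. There an $\Omega$-loop around $V_1$ is $2\cdot\id_{V_1}$, not $0$. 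Yet every framed link evaluates as an unlink, so $\sum_{\vec V}\prod_j d_j\,J_L=2^m$ is trivially slide-invariant. So the parenthetical in (ii) misplaces where non-degeneracy enters, and your proof should not try to realize it.

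The argument you sketch in its place, $V\otimes\Omega\cong\dim_q(V)\,\Omega$, is only an identity in $K_0(\MTC)\otimes\CC$. It records colors, which are the same before and after the slide, so it cannot identify the two diagrams. What is missing is the morphism-level sliding lemma, which goes as follows.
\begin{enumerate}
\item Where the slid $V$-strand runs parallel to the $V_i$-colored component, insert $\id_{V_i\otimes V}=\sum_{k,\alpha}\iota^k_\alpha\pi^k_\alpha$ with $\pi^k_\alpha\iota^k_\beta=\delta_{\alpha\beta}\id_{V_k}$.
\item Use naturality of braiding and twist to push one vertex once around the loop. The component becomes a $V_k$-loop with a short $V_i$-segment, at whose two ends the incoming and outgoing $V$-strand attach.
\item Rotate both vertices via the pivotal structure, $\Hom(V_k,V_i\otimes V)\cong\Hom(V_i,V_k\otimes V^*)$. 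Comparing traces, the rotated bases are again dual, but only up to the factor $d_k/d_i$.
\item That factor converts $\sum_i d_i(\cdots)$ into $\sum_k d_k(\cdots)$. The sum over $i$ then recombines the $V_i$-segment into $\id_{V_k\otimes V^*}$. This leaves an unbroken $\Omega$-colored loop and a $V$-strand that no longer goes around it, i.e.\ the unslid diagram.
\end{enumerate}
Alternatively, use the Gauss-sum identity $\sum_i d_i\theta_i^{\pm1}\tilde S_{ij}=p_\pm\theta_j^{\mp1}d_j$. Together with naturality, it shows that an $\Omega$-colored $(\pm1)$-framed unknot around any bundle $X$ equals $p_\pm\theta_X^{\mp1}$. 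This is the algebraic form of the Fenn--Rourke move, which by itself generates Kirby equivalence. Neither route uses invertibility of $S$.
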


\subsection{Extended TQFT structure}

The RT construction as described so far assigns vector spaces to closed surfaces and numbers to closed 3-manifolds. However, the theory has a richer structure: it can be \emph{extended} one categorical level down, assigning data not only to 2- and 3-dimensional manifolds but also to 1-dimensional manifolds (circles) and their cobordisms (surfaces with boundary). This extended structure is both mathematically natural and essential for the comparison with factorization homology in Section \ref{sec:fact-homology}.

\begin{definition}[Extended TQFT]\label{def:extended-TQFT}
A \emph{$(3\text{-}2\text{-}1)$-extended TQFT} is a symmetric monoidal 2-functor
\begin{equation}
Z: \Bord_3^{\mathrm{ext}} \longrightarrow 2\text{-}\Vect,
\end{equation}
where $\Bord_3^{\mathrm{ext}}$ is the extended bordism 2-category whose objects are closed 1-manifolds, whose 1-morphisms are surfaces with boundary (cobordisms between 1-manifolds), and whose 2-morphisms are 3-manifolds with corners (cobordisms between cobordisms) and $2\text{-}\Vect$ is the 2-category of $\CC$-linear categories, linear functors, and natural transformations.
\end{definition}

For the Reshetikhin--Turaev theory built from a modular tensor category $\MTC$, the extended structure takes the following form:
\begin{enumerate}[label=(\roman*)]
\item To a circle $S^1$, the theory assigns the modular tensor category itself, i.e., $\Zrt(S^1) = \MTC$. This is the fundamental ``local datum'' from which everything else is built.

\item To a surface $\Sigma$ with incoming boundary circles $S^1_{\mathrm{in}}$ and outgoing boundary circles $S^1_{\mathrm{out}}$, the theory assigns a functor $\Zrt(\Sigma): \MTC^{\boxtimes |\mathrm{in}|} \to \MTC^{\boxtimes |\mathrm{out}|}$ between (Deligne) tensor powers of $\MTC$. Gluing surfaces corresponds to composition of functors.

\item The annulus $S^1 \times [0,1]$, viewed as a cobordism from $S^1$ to itself, is assigned the identity functor. More interestingly, the category associated to the annulus as an object in its own right (via a different decomposition) is the \emph{Drinfeld center} $\mathcal{Z}(\MTC)$ which is the category of objects of $\MTC$ equipped with a natural half-braiding. This is the categorical analogue of the fact that the Hochschild cohomology of an algebra acts on its modules. The full dualizability of $\overline{\Rep}_q(G)$ as an object in the Morita 4-category, which is required for the Cobordism Hypothesis to apply, is established in \cite{BJS2021} (Full dualizability pertains to the fully extended $(3\text{-}2\text{-}1\text{-}0)$-setting; for the $(3\text{-}2\text{-}1)$-extended framework used in our Main Conjecture, the relevant input is the $\EE_2$-structure on $\overline{\Rep}_q(G)$ and its factorization homology over surfaces.).
\end{enumerate}

For a surface with punctures $\Sigma_{g,n}$, where the $j$-th puncture is labeled by an object $V_{i_j} \in \MTC$, the state space takes the form of a Hom space in the factorization homology of the surface:
\begin{equation}\label{eq:conformal-blocks}
V(\Sigma_{g,n};\, V_{i_1}, \ldots, V_{i_n}) = \Hom_{\MTC}\!\left(\mathbf{1},\; \iint_{\Sigma_{g,n}} \MTC(V_{i_1}, \ldots, V_{i_n})\right).
\end{equation}
This is the space of \emph{conformal blocks} associated to the punctured surface with the given labels. Its dimension can be computed recursively by decomposing the surface into pairs of pants and applying the fusion rules. They say that each pair of pants contributes a tensor $N_{ij}^k$, and the dimension of the space of conformal blocks is obtained by contracting these tensors according to the combinatorics of the decomposition. The result is independent of the pants decomposition, reflecting the associativity of the tensor product in $\MTC$.

\section{Factorization Homology}\label{sec:fact-homology}

Factorization homology provides a systematic machine for producing topological invariants of manifolds from algebraic input. It is the mathematical framework that will mediate between the field-theoretic (BV-BFV) and the combinatorial (RT) sides of our bridge. The basic idea is that given an algebraic object $\mathcal{A}$ with a multiplication parametrized by configurations of disks (an $\EE_n$-algebra), one can ``integrate'' $\mathcal{A}$ over any $n$-manifold to obtain a topological invariant. The resulting theory satisfies a gluing axiom ($\otimes$-excision) that parallels both the BV-BFV gluing formula and the TQFT composition law.

\subsection{$\EE_n$-algebras}

The algebraic input for factorization homology is the notion of an $\EE_n$-algebra which is an algebra whose multiplication operations are parametrized by the little $n$-disks operad. We recall the definitions.

\begin{definition}[Little $n$-disks operad]\label{def:En-operad}
The \emph{little $n$-disks operad} $\EE_n$ is the topological operad whose space of $k$-ary operations is
\begin{equation}
\EE_n(k) = \mathrm{Emb}^{\mathrm{rec}}\!\left(\coprod_{i=1}^k D^n,\; D^n\right),
\end{equation}
the space of rectilinear embeddings of $k$ disjoint copies of the standard $n$-disk into a single $n$-disk. Composition in the operad is given by substitution of embeddings, i.e., plugging the output disk of one configuration into an input disk of another.
\end{definition}

\begin{definition}[$\EE_n$-algebra]\label{def:En-algebra}
An \emph{$\EE_n$-algebra} in a symmetric monoidal $(\infty,1)$-category $\mathcal{V}$ is an algebra over the operad $\EE_n$ in $\mathcal{V}$. Concretely, this is an object $A \in \mathcal{V}$ together with, for each $k \geq 0$ and each configuration $\varphi \in \EE_n(k)$ of $k$ non-overlapping disks inside a larger disk, a $k$-ary multiplication map $A^{\otimes k} \to A$, subject to coherent associativity and equivariance conditions encoded by the topology of the configuration spaces $\EE_n(k)$.
\end{definition}

The $\EE_n$-algebras form a hierarchy of increasingly commutative algebraic structures, interpolating between fully associative and fully commutative.

\begin{example}[The hierarchy of $\EE_n$-algebras]\label{ex:En-hierarchy}
\leavevmode
\begin{enumerate}[label=(\roman*)]
\item \emph{$\EE_0$-algebras} are pointed objects, i.e., a cochain complex $A$ equipped with a distinguished cocycle $\eta: \CC \to A$. These encode formal deformation problems and correspond to formal moduli spaces in derived algebraic geometry.

\item \emph{$\EE_1$-algebras} are associative algebras up to coherent homotopy, also known as $A_\infty$-algebras. The space $\EE_1(k)$ is contractible for each $k$ (an interval is the space of configurations of $k$ points on a line), so the only structure is a binary product with higher associativity homotopies.

\item \emph{$\EE_2$-algebras} carry a braided multiplication. The configuration space $\EE_2(2)$ of two disks in a larger disk is homotopy equivalent to $S^1$, so there is a non-trivial loop exchanging the two inputs, giving rise to a braiding. By the Dunn additivity theorem, $\EE_2 \simeq \EE_1 \otimes \EE_1$, which means that an $\EE_2$-algebra has two compatible associative products (one for each coordinate direction). In the categorical setting, $\EE_2$-algebras in the $(\infty,2)$-category of linear categories are precisely braided monoidal categories.

\item \emph{$\EE_3$-algebras} in $2\text{-}\mathrm{\mathbf{Cat}}$ are braided monoidal 2-categories with additional symmetry related to the ribbon structure. The extra room in three dimensions allows the braiding to be ``untwisted,'' which is connected to the framing anomaly in 3-dimensional topology.

\item \emph{$\EE_\infty$-algebras} are fully commutative up to all higher homotopies. Since $\EE_\infty(k) \simeq *$ for all $k$, the multiplication is commutative and all the homotopies witnessing this commutativity are contractible. These are commutative differential graded algebras up to quasi-isomorphism.
\end{enumerate}
\end{example}

The key point for our purposes is that a ribbon category (such as the modular tensor category $\MTC$ appearing in the RT construction) is naturally an $\EE_2$-algebra in the $(\infty,2)$-category of linear categories. This is the algebraic structure that factorization homology will ``integrate'' over surfaces.

\subsection{Definition and basic properties of Factorization homology}

With the notion of $\EE_n$-algebra in hand, we can define factorization homology as the operation that ``integrates'' an $\EE_n$-algebra over a manifold to produce a topological invariant.

\begin{definition}[Factorization homology]\label{def:fact-homology}
Let $\mathcal{A}$ be an $\EE_n$-algebra in a symmetric monoidal $(\infty,1)$-category $\mathcal{V}$, and let $M$ be a closed $n$-framed manifold. The \emph{factorization homology} of $M$ with coefficients in $\mathcal{A}$ is\footnote{We will denote the factorization homology over a 1-dimensional manifold by $\int$, over a 2-dimensional manifold by $\iint$ and over a 3-dimensional manifold by $\iiint$. For the general $n$-dimensional case we just use $\int$.}
\begin{equation}\label{eq:FH-def}
\int_M \mathcal{A} \;:=\; \mathop{\mathrm{colim}}\limits_{\mathbf{Disk}_{n/M}}\; \mathcal{A}^{(-)}\;\;\in\;\mathcal{V},
\end{equation}
where $\mathrm{\mathbf{Disk}}_{n/M}$ is the $(\infty,1)$-category whose objects are rectilinear embeddings \[\varphi: \coprod_{i \in I} D^n \hookrightarrow M\] of finite disjoint unions of $n$-disks into $M$, and the functor $\mathcal{A}^{(-)}$ sends such an embedding to the tensor product $\bigotimes_{i \in I} \mathcal{A}$. See Figure \ref{fig:fact-homology} for a visualization.
\end{definition}

\begin{remark}\label{rmk:FH-intuition}
The colimit in \eqref{eq:FH-def} can be understood informally as a ``continuous tensor product''. In particular, one places a copy of $\mathcal{A}$ at each point of $M$ and multiplies nearby copies using the $\EE_n$-structure, then assembles the result globally by taking the colimit over all ways of covering $M$ by disks. This is the higher-algebraic analogue of integrating a function over a manifold, with the $\EE_n$-multiplication playing the role of pointwise multiplication and the colimit playing the role of the integral. From the perspective of quantum field theory, if $\mathcal{A}$ is the algebra of local observables of a topological field theory (as constructed by Costello--Gwilliam \cite{CG2017,CG2021}), then $\int_M \mathcal{A}$ is the algebra of global observables on $M$. So the factorization homology reconstructs the global theory from its local data. See also \cite{Scheimbauer2014} for the construction of factorization homology as a fully extended topological field theory, and \cite{AFT2019} for the extension to stratified spaces.
\end{remark}

The fundamental computational tool for factorization homology is the excision property, which allows one to compute $\int_M \mathcal{A}$ by cutting $M$ into simpler pieces and assembling the results.

\begin{theorem}[$\otimes$-Excision, Ayala--Francis \cite{AF2015}]\label{thm:AF-excision}
Let $M = M_1 \cup_{N \times \RR} M_2$ be a decomposition of an $n$-manifold along a collar neighborhood of a codimension-1 submanifold $N$ (a ``collar-gluing''). Then factorization homology satisfies
\begin{equation}\label{eq:excision}
\int_M \mathcal{A} \;\simeq\; \int_{M_1} \mathcal{A} \;\otimes_{\int_{N \times \RR} \mathcal{A}}\; \int_{M_2} \mathcal{A},
\end{equation}
where $\otimes_{\int_{N \times \RR} \mathcal{A}}$ denotes the relative tensor product (bar construction) over the $\EE_1$-algebra $\int_{N \times \RR} \mathcal{A}$. See Figure \ref{fig:fact-homology} for a visualization.
\end{theorem}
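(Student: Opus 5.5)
The plan is to reduce $\otimes$-excision to a comparison of colimits over categories of disk embeddings, in three steps.

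\textbf{Step 1: the collar algebra and the module structures.} Write $U = N\times\RR$ for the collar. Since $U$ carries an additional $\RR$-direction, stacking embeddings in the $\RR$-coordinate makes $\int_U \mathcal{A}$ into an $\EE_1$-algebra. The stacking is done by embeddings $U\sqcup U\hookrightarrow U$ that are rescalings of the $\RR$-coordinate. In the same way, embeddings $M_1 \sqcup U \hookrightarrow M_1$ that push the collar into the end of $M_1$ make $\int_{M_1}\mathcal{A}$ a right module over $\int_U\mathcal{A}$; symmetrically, $\int_{M_2}\mathcal{A}$ becomes a left module. Functoriality of factorization homology under embeddings, which follows from the colimit definition \eqref{eq:FH-def}, supplies all of these maps. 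The coherences come for free from the functoriality of $\mathbf{Disk}_{n/(-)}$.

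\textbf{Step 2: build the comparison map.} For each $p\ge 0$, choose a decomposition of $M$ as $M_1 \sqcup U^{\sqcup p}\sqcup M_2$ embedded in $M$, obtained by slicing the collar region into $p$ parallel slabs. This gives maps
\[
\int_{M_1}\mathcal{A}\otimes \Bigl(\int_U\mathcal{A}\Bigr)^{\otimes p}\otimes\int_{M_2}\mathcal{A}\longrightarrow \int_M\mathcal{A}.
\]
These are compatible with the face and degeneracy maps of the two-sided bar construction, so they assemble into a map from the geometric realization $\mathrm{Bar}(\int_{M_1}\mathcal{A},\int_U\mathcal{A},\int_{M_2}\mathcal{A})$ to $\int_M\mathcal{A}$. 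Here one uses that $\otimes$ in $\mathcal{V}$ commutes with geometric realizations (sifted colimits) separately in each variable; I would state this as a standing hypothesis on $\mathcal{V}$.

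\textbf{Step 3: show the map is an equivalence.} This is the main obstacle. The strategy is a cofinality argument. Let $\mathcal{J}$ be the $(\infty,1)$-category of pairs consisting of a disk embedding into $M$ together with a choice of finitely many ``cut levels'' $t_1<\dots<t_p$ in $\RR$ whose slices $N\times\{t_i\}$ avoid the disks. The bar construction is the colimit of $\mathcal{A}^{(-)}$ over $\mathcal{J}$, computed by first taking the colimit in the disk directions for fixed cuts, and then realizing over the simplicial direction of cut levels. What remains is to show that the forgetful functor $\mathcal{J}\to\mathbf{Disk}_{n/M}$ is final. By Quillen's Theorem A, this means checking that for each disk configuration $\varphi$, the category of compatible cut data under $\varphi$ is weakly contractible. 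That category is the poset of finite nonempty sets of admissible levels in an open subset of $\RR$ which, after isotoping $\varphi$ off some slice, is nonempty. A poset of finite nonempty subsets of a nonempty set is contractible, since it has a cone-like structure obtained by adjoining a level. The delicate point is isotoping the disks so that some slice is actually missed. I would handle it by replacing $\mathbf{Disk}_{n/M}$ with the equivalent category of disks inside $M$ considered up to isotopy, i.e.\ the localization along isotopy equivalences, which is harmless because $\mathcal{A}^{(-)}$ inverts isotopy equivalences. A finite disjoint union of disks can always be shrunk by isotopy to avoid a given hypersurface.
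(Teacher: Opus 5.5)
The paper does not prove this statement; it only quotes it from Ayala--Francis. So I am comparing your proposal with the standard argument. Your Steps 1--2 are the right setup. Step 3, however, is the whole content of the theorem, and it does not go through as written, because your $\mathcal{J}$ is asked to do two incompatible things.

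For the fiberwise computation to return $\int_{M_1}\mathcal{A}\otimes(\int_U\mathcal{A})^{\otimes p}\otimes\int_{M_2}\mathcal{A}$ over a fixed cut set $S$, the avoidance condition must be strict. That is, the fiber over $S$ has to be $\mathbf{Disk}_{n/(M\setminus N\times S)}$, with isotopies staying off the slices. In that model the finality check fails. A single disk $B\times\RR\subset N\times\RR$ (a thin tube running along the whole collar) meets every slice, and so does every disk-union containing it. The comma category over it is therefore empty, and ``isotoping $\varphi$ off a slice'' is not a morphism you have in this model.

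Suppose instead you impose avoidance only up to isotopy, as in your localization. Then it is vacuous, since every object of $\mathbf{Disk}_{n/M}$ is equivalent to one missing any given slices. So $\mathcal{J}\simeq\mathbf{Disk}_{n/M}\times\mathcal{P}$, where $\mathcal{P}$ is the poset of cut sets. Finality becomes trivial, but the fiber over every $S$ computes $\int_M\mathcal{A}$ again, and the simplicial structure is gone. Note also that Quillen's Theorem A concerns the comma category $\mathcal{J}_{\varphi/}$, not the fiber over $\varphi$. Reducing to fibers uses that $\mathcal{J}\to\mathbf{Disk}$ is a cartesian fibration, which holds only in the strict model.

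What is actually needed are two separate statements.

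\emph{Descent.} You need $\int_M\mathcal{A}\simeq\mathrm{colim}_{S\in\mathcal{P}}\int_{M\setminus N\times S}\mathcal{A}$, with $\mathcal{P}$ the poset of finite nonempty level sets under reverse inclusion. This holds because the opens $M\setminus N\times S$ are closed under intersection and form a cover that is complete for finite subsets. For each finite $T\subset M$, the cut sets whose slices miss $T$ form a nonempty poset closed under unions, hence a contractible one. The check happens on finite sets of points; ultimately it is a Seifert--van Kampen argument for configuration spaces. This is where your ``finitely many things can be moved off a slice'' becomes rigorous: for points, not for disks.

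\emph{Cofinality.} The functor $\mathcal{P}\to\Delta^{\mathrm{op}}$, $S\mapsto S$, is final. The comma category over $[p]$ is the poset of finite $S\subset\RR$ with monotone labelings $S\to[p]$, ordered by restriction. It is contractible, because any finite piece is coned off by adjoining a point far to the left labelled $0$. This genuinely uses degeneracies: $\mathcal{P}\to\Delta^{\mathrm{op}}_{\mathrm{inj}}$ is \emph{not} final, since over $[0]$ the comma category is the discrete set $\RR$. So ``realizing over the simplicial direction of cut levels'' is not automatic.

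This second statement also repairs Step 2. The assignment $S\mapsto\int_{M\setminus N\times S}\mathcal{A}$ is an honest functor on $\mathcal{P}$ with a cocone to $\int_M\mathcal{A}$. By contrast, separate choices of slab decompositions for each $p$ only give homotopy-commutative squares, not a map out of the geometric realization.
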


\begin{figure}[ht]
\centering
\begin{tikzpicture}[scale=1]

% === Left: Surface with disks ===

% Genus-2 surface (simplified as a blob)
\fill[green!6] (0,0) ellipse (3.2 and 1.8);
\draw[green!40!black, thick] (0,0) ellipse (3.2 and 1.8);

% Holes to suggest genus (two small ellipses cut out)
\fill[white] (-1.2,0.1) ellipse (0.45 and 0.25);
\draw[green!40!black, thick] (-1.2,0.1) ellipse (0.45 and 0.25);
\fill[white] (1.2,0.1) ellipse (0.45 and 0.25);
\draw[green!40!black, thick] (1.2,0.1) ellipse (0.45 and 0.25);

% Small disks covering the surface
\foreach \x/\y in {-2.1/0.7, -0.3/1.3, 1.8/0.9, -1.9/-0.7, 1.5/-0.9, 2.2/-0.5, 0.1/-0.2, -0.5/0.5, 1/0.8, 2.5/0.3} {
    \fill[green!25, opacity=0.6] (\x,\y) circle (0.35);
    \draw[green!50!black, thick, opacity=0.7] (\x,\y) circle (0.35);
    \node[green!50!black, font=\tiny\bfseries, opacity=0.9] at (\x,\y) {$\mathcal{A}$};
}

% Surface label
\node[green!40!black, font=\footnotesize\bfseries] at (0,-2.3) {$\Sigma$};

% === Arrow ===
\draw[-{stealth}, thick, black!60] (3.8,0) -- (5.2,0);
\node[above, font=\scriptsize, black!60] at (4.5,0) {$\displaystyle\mathop{\mathrm{colim}}\limits_{\mathbf{Disk}_{2/\Sigma}}$};

% === Right: Result ===
\node[draw=green!40!black, thick, rounded corners=5pt, fill=green!5,
      inner sep=8pt, font=\normalsize] at (7,0) 
      {$\displaystyle\iint_\Sigma \mathcal{A}$};

% === Below: Excision ===
\draw[thick, black!20] (-5.5,-3.2) -- (9,-3.2);

\node[font=\bfseries, black!70] at (-3.5,-3.8) {$\otimes$-Excision:};

% Cut surface
\fill[green!6] (-3.5,-5.5) ellipse (1.2 and 0.8);
\draw[green!40!black, thick] (-3.5,-5.5) ellipse (1.2 and 0.8);
\node[green!40!black, font=\footnotesize\bfseries] at (-3.5,-5.5) {$\Sigma_1$};

% Cutting curve
\draw[red!60!black, very thick] (-2,-5.5) -- (-2,-4.5);
\draw[red!60!black, very thick] (-2,-5.5) -- (-2,-6.5);
\node[red!60!black, font=\scriptsize] at (-1.5,-4.5) {$N$};

\fill[green!6] (-0.5,-5.5) ellipse (1.2 and 0.8);
\draw[green!40!black, thick] (-0.5,-5.5) ellipse (1.2 and 0.8);
\node[green!40!black, font=\footnotesize\bfseries] at (-0.5,-5.5) {$\Sigma_2$};

% Equals
\node[font=\normalsize] at (1.5,-5.5) {$\leadsto$};

% Formula
\node[draw=green!40!black, thick, rounded corners=4pt, fill=green!5,
      inner sep=6pt, font=\small, align=center] at (5.2,-5.5) 
      {$\displaystyle\iint_\Sigma \mathcal{A} \;\simeq\; 
       \iint_{\Sigma_1}\!\mathcal{A} \;\otimes_{\iint_{N \times \RR}\!\mathcal{A}}\; 
       \iint_{\Sigma_2}\!\mathcal{A}$};

\end{tikzpicture}
\caption{\emph{Top:} Factorization homology as a ``continuous tensor product.'' The $\EE_2$-algebra $\mathcal{A}$ is placed on each disk embedded in the surface $\Sigma$; the colimit over all disk embeddings produces the global invariant $\iint_\Sigma \mathcal{A}$. \emph{Bottom:} The $\otimes$-excision property (Theorem \ref{thm:AF-excision}): cutting $\Sigma$ along a codimension-1 submanifold $N$ expresses the factorization homology as a relative tensor product over the data assigned to $N \times \RR$.}
\label{fig:fact-homology}
\end{figure}

\begin{remark}\label{rmk:excision-TQFT}
The excision formula \eqref{eq:excision} is the factorization homology incarnation of the TQFT gluing axiom. The structural parallel with the BV-BFV gluing formula \eqref{eq:BV-gluing} is striking and not coincidental. In both cases, the invariant of a manifold obtained by gluing along a hypersurface is computed as a pairing (or relative tensor product) over the data associated to the gluing surface. We would like to make this parallel precise. In particular, we want to show that the BV-BFV pairing over $\mathcal{H}_\Sigma$ and the factorization homology relative tensor product over $\iint_\Sigma \mathcal{A}$ are computing the same thing.
\end{remark}

\subsection{Low-dimensional computations}

To build intuition and establish the key identifications that our bridge relies on, we compute factorization homology in the simplest cases.

\begin{proposition}\label{prop:FH-S1}
For an $\EE_1$-algebra $A$ (i.e., an associative algebra up to homotopy), the factorization homology of the circle is the Hochschild homology of $A$. So, we get
\begin{equation}
\int_{S^1} A \;\simeq\; \HH_\bullet(A) \;:=\; A \otimes_{A \otimes A^{\op}} A.
\end{equation}
\end{proposition}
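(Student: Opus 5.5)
The proof will use $\otimes$-excision (Theorem \ref{thm:AF-excision}) applied to a decomposition of the circle into two intervals. One technical point comes first. The circle is not rectilinearly framed in the sense of Definition \ref{def:fact-homology} unless we use its standard framing. The trivialization $TS^1 \cong S^1\times\RR$ induced by $S^1 = \RR/\ZZ$ is a framing, so $\int_{S^1} A$ makes sense for an $\EE_1$-algebra $A$. With this framing, every framed embedding of an interval is orientation-preserving. Hence $\int_{(0,1)} A \simeq A$: the category $\mathbf{Disk}_{1/(0,1)}$ has a terminal object, the identity embedding.

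The steps are as follows. First, write $S^1 = U_1 \cup_{N\times\RR} U_2$, where $U_1,U_2$ are open arcs and $N = S^0$ consists of two points. The overlap is then two disjoint intervals, i.e. $N\times\RR \cong (0,1)\sqcup(0,1)$. Second, identify the pieces:
\[
\int_{U_i} A \simeq A, \qquad \int_{N\times\RR} A \simeq A\otimes A^{\op}.
\]
Factorization homology is symmetric monoidal on disjoint unions, so the overlap contributes $A\otimes A$. The $\EE_1$-structure on $\int_{N\times\RR}A$ comes from stacking in the $\RR$-direction. That direction is the "outward" coordinate of the collar, and it runs in opposite directions relative to the circle's orientation at the two endpoints of an arc. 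Consequently one factor receives the opposite multiplication, which gives $A\otimes A^{\op}$.

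Third, determine the module structures. $\int_{U_1}A \simeq A$ is a right $A\otimes A^{\op}$-module via $a\cdot(b\otimes c) = c\,a\,b$. This is multiplication at its two ends by the collar algebras. Similarly, $\int_{U_2}A\simeq A$ is a left module. Fourth, apply \eqref{eq:excision}:
\[
\int_{S^1}A \simeq A\otimes_{A\otimes A^{\op}} A,
\]
which is by definition the derived tensor product computing $\HH_\bullet(A)$. If we want to verify this against classical Hochschild homology, we resolve by the two-sided bar construction $\mathrm{Bar}(A,A,A)$. This yields the cyclic bar complex $A^{\otimes(\bullet+1)}$, with the standard Hochschild differential.

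I expect the main obstacle to be the third step: keeping track of orientations and of which end of each arc is acted on by which factor. Getting this wrong would produce $A\otimes_{A\otimes A}A$ with the wrong bimodule structures, for instance a non-cyclic pairing. To handle it, I would first treat the decomposition as a pair of collar-gluings. Gluing $U_1$ to $U_2$ along one end gives an interval, and $A\otimes_A A\simeq A$ there. The second gluing, of that interval's two ends to each other, is the genuinely cyclic step. It exhibits the result as the coequalizer of the left and right actions of $A$ on $A$, and its derived version is precisely $A\otimes_{A\otimes A^{\op}}A$.
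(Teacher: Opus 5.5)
Your proposal is correct and follows the paper's route. You decompose $S^1$ into two arcs, identify each arc's factorization homology with $A$ and the two-component overlap with $A\otimes A^{\op}$ acting by left and right multiplication, and then apply $\otimes$-excision. Your version is more careful than the paper's sketch in two respects: you use open arcs overlapping in genuine collars rather than closed intervals glued at endpoints, and you explain via the collar orientations where the $A^{\op}$ factor comes from.
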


\begin{proof}[Proof sketch]
Decompose the circle as a union of two intervals glued at their endpoints, i.e., $S^1 = I \cup_{\{0,1\}} I$. Factorization homology of an interval is simply $\int_I A \simeq A$ (since the interval is contractible, the colimit over disk embeddings degenerates). The two boundary points contribute $\int_{\{0,1\}} A \simeq A \otimes A$, which acts on each copy of $A$ as a bimodule via left and right multiplication. Applying $\otimes$-excision (Theorem \ref{thm:AF-excision}) to this decomposition gives $\int_{S^1} A \simeq A \otimes_{A \otimes A^{\op}} A$, which is the bar construction computing the Hochschild homology of $A$.
\end{proof}

This computation has a natural categorical upgrade. When we pass from $\EE_1$-algebras (associative algebras) to $\EE_2$-algebras (braided monoidal categories), the Hochschild homology is replaced by its categorical analogue which is the Drinfeld center.

\begin{proposition}\label{prop:FH-cat}
Let $\MTC$ be an $\EE_2$-algebra in the $(\infty,2)$-category of $\CC$-linear categories (i.e., a braided monoidal category). Then we get that
\begin{enumerate}[label=(\roman*)]
\item $\int_{S^1} \MTC \simeq \mathcal{Z}(\MTC)$, the Drinfeld center of $\MTC$. This is the categorical Hochschild homology.
\item $\iint_{\RR^2} \MTC \simeq \MTC$. Factorization homology of Euclidean space recovers the algebra itself, reflecting the fact that $\RR^n$ is the ``unit'' for the gluing operation.
\item $\iint_{S^1 \times \RR} \MTC \simeq \mathcal{Z}(\MTC)$ as a monoidal category, where the monoidal structure comes from the non-compact $\RR$ direction. This is the category that factorization homology assigns to an annulus, consistent with the extended TQFT picture of Section \ref{sec:RT}.
\end{enumerate}
\end{proposition}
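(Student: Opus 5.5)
The plan is to prove (ii) first, then (iii) by a categorified version of the proof sketch of Proposition \ref{prop:FH-S1}, and finally (i) as the underlying-category shadow of (iii). Throughout I work in the symmetric monoidal $(\infty,2)$-category of presentable (or finite, in the semisimple case) $\CC$-linear categories with the Deligne--Kelly tensor product $\boxtimes$. I will need $\MTC$ to be rigid and finite, which holds for a modular tensor category; for generic-$q$ examples one should pass to ind-completions.

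\textbf{Step 1: (ii).} In $\mathbf{Disk}_{2/\RR^2}$ the identity embedding $\RR^2\cong D^2\hookrightarrow\RR^2$ is a final object up to contractible choice. This is because every finite disjoint union of disks in $\RR^2$ embeds into $\RR^2$ itself, and the space of such factorizations is contractible. A colimit over a category with a final object is the value there, so $\iint_{\RR^2}\MTC\simeq\MTC$.

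\textbf{Step 2: reduction of (iii) to a bar construction.} Write $S^1\times\RR$ as a collar-gluing of two copies of $(0,1)\times\RR\cong\RR^2$ along $\{0,1\}\times\RR\times\RR$, which is two disjoint copies of $\RR\times\RR$. Viewing $\{0,1\}\times\RR\times\RR$ as a thickened $0$-manifold, Step 1 identifies the factorization homology of each piece with $\MTC$. The collar contributes the $\EE_1$-algebra $\MTC\boxtimes\MTC^{\otimes\op}$, acting on each piece by left and right multiplication in the $S^1$-direction. Theorem \ref{thm:AF-excision} then gives
\[
\iint_{S^1\times\RR}\MTC\;\simeq\;\MTC\boxtimes_{\MTC\boxtimes\MTC^{\otimes\op}}\MTC,
\]
which is the categorical Hochschild homology (trace) of $\MTC$ as a monoidal category, exactly as in Proposition \ref{prop:FH-S1}. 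The remaining $\RR$-direction is untouched by this decomposition. Stacking annuli radially therefore induces an $\EE_1$-structure on the result, and this structure is what will give the monoidal structure claimed in (iii).

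\textbf{Step 3: trace $=$ center (the main obstacle).} I then identify the relative tensor product with $\mathcal{Z}(\MTC)=\mathrm{Fun}_{\MTC\boxtimes\MTC^{\otimes\op}}(\MTC,\MTC)$. The input is that, for $\MTC$ rigid and finite, $\MTC$ is a dualizable $\MTC\boxtimes\MTC^{\otimes\op}$-module whose dual is again $\MTC$. This uses the duality morphisms $\mathrm{ev},\mathrm{coev}$ of Definition \ref{def:ribbon}(iii) to build the bimodule self-duality, as in the Douglas--Schommer-Pries--Snyder treatment of fusion categories. Given this, one gets
\[
\MTC\boxtimes_{\MTC\boxtimes\MTC^{\otimes\op}}\MTC\simeq\mathrm{Fun}_{\MTC\boxtimes\MTC^{\otimes\op}}(\MTC,\MTC)\simeq\mathcal{Z}(\MTC).
\]
One must also check that radial stacking of annuli corresponds to composition of bimodule endofunctors, which is the tensor product of half-braided objects. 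I would verify this on generators, namely the images of $\MTC$ under the inclusion of a disk into the annulus. These images go to the free half-braided objects $I(V)$, where $I$ is adjoint to the forgetful functor $\mathcal{Z}(\MTC)\to\MTC$, and one checks that monoidality is compatible with the braiding of $\MTC$. This is where I expect the real work to lie. Without rigidity the trace and the center genuinely differ: one only gets the cocenter, or, in the BBJ setting, modules for the reflection equation algebra. The hypotheses therefore have to be used precisely here.

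\textbf{Step 4: (i).} Factorization homology over the 1-manifold $S^1$ uses only the restricted $\EE_1$-structure on $\MTC$. The same excision argument, now along two points of $S^1$, yields the identical bar construction as in Step 2. Hence $\int_{S^1}\MTC$ is the underlying category of $\iint_{S^1\times\RR}\MTC$, and Step 3 identifies it with $\mathcal{Z}(\MTC)$.
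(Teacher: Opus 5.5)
The paper states this proposition without proof; it only sketches the $\EE_1$ case in Proposition~\ref{prop:FH-S1}. Your Steps 1, 2 and 4, and the identification of \emph{underlying categories} $\mathrm{Tr}(\MTC):=\MTC\boxtimes_{\MTC\boxtimes\MTC^{\otimes\op}}\MTC\simeq\mathcal{Z}(\MTC)$ in Step 3, are a sound way to fill it in. This needs the hypotheses you correctly add: finite and rigid. It also needs a pivotal structure, to untwist the double dual in the bimodule self-duality of $\MTC$; the ribbon structure supplies one.

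The genuine gap is the monoidal half of Step 3. You assert that radial stacking of annuli corresponds, under this equivalence, to composition in $\mathrm{Fun}_{\MTC\boxtimes\MTC^{\otimes\op}}(\MTC,\MTC)$, i.e.\ to the tensor product of half-braided objects. This is false, and your check on generators is exactly where it breaks. The unit for stacking is the vacuum object: the empty embedding factors through a disk, so the vacuum is the image of $\mathbf{1}_{\MTC}$ under any disk inclusion. As you say, this image corresponds to $I(\mathbf 1)=\bigoplus_i V_i\otimes V_i^*$. By adjunction,
\[
\mathrm{End}_{\mathcal{Z}(\MTC)}\big(I(\mathbf 1)\big)\;\cong\;\Hom_{\MTC}\Big(\mathbf 1,\;\bigoplus_{i} V_i\otimes V_i^*\Big)\;\cong\;\CC^{|\mathrm{Irr}(\MTC)|},
\]
whereas the unit of $\mathcal{Z}(\MTC)$ has endomorphism algebra $\CC$. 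Any monoidal equivalence preserves $\mathrm{End}(\mathbf 1)$. Hence there is \emph{no} monoidal equivalence between $\iint_{S^1\times\RR}\MTC$ with stacking and $\mathcal{Z}(\MTC)$ as soon as $\MTC$ has more than one simple object. Correspondingly, $I$ is only lax monoidal, not monoidal.

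So (iii) holds only as an equivalence of categories, and no refinement of your argument can rescue the monoidal claim. Stacking is a multiplication on the trace (the homology side), built from the braiding. The tensor product of half-braidings is the multiplication on the center (the cohomology side). The self-duality that identifies the two underlying categories does not intertwine these two products.

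A concrete check is the symmetric case $\Rep G$. There $\iint_{S^1\times\RR}\Rep G\simeq\QCoh(G/G)$, and stacking is pullback along the diagonal of $\Loc_G(S^1\times\RR)\simeq G/G$, i.e.\ the pointwise tensor product. By contrast, $\mathcal{Z}(\Rep G)$ carries the convolution product along the multiplication of $G$.

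For modular $\MTC$, the stacked annulus category is a multifusion category with decomposable unit. It is Morita trivial, as invertibility of the Crane--Yetter theory forces, and so it is not the fusion category $\MTC\boxtimes\overline{\MTC}\simeq\mathcal{Z}(\MTC)$.

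You should either drop ``as a monoidal category'' from (iii), or describe the stacking structure correctly. One option is Ben-Zvi--Brochier--Jordan's description of the annulus category as modules over the reflection equation algebra.
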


\begin{remark}[The Drinfeld center]\label{rmk:Drinfeld}
The Drinfeld center $\mathcal{Z}(\MTC)$ is the category whose objects are pairs $(V, \sigma)$, where $V$ is an object of $\MTC$ and \[\sigma = \{\sigma_W: V \otimes W \xrightarrow{\sim} W \otimes V\}_{W \in \MTC}\] is a \emph{half-braiding}, i.e., a natural family of isomorphisms compatible with the tensor product, but not required to come from the braiding of $\MTC$ itself. The Drinfeld center is always a braided monoidal category, even when $\MTC$ is only monoidal (not braided); when $\MTC$ is already braided, $\mathcal{Z}(\MTC)$ is typically larger. For the full (non-semisimplified) quantum group category, there is a canonical equivalence $\mathcal{Z}(\Rep_q(G)) \simeq \Rep_q(G) \boxtimes \overline{\Rep_q(G)}$, where $\overline{\Rep_q(G)}$ denotes the same category with the reversed braiding and $\boxtimes$ is the Deligne tensor product. This is a categorical analogue of the algebraic identity $\HH_\bullet(A) \simeq A \otimes_{A^e} A$, where $A^e = A \otimes A^{\op}$ is the enveloping algebra.
\end{remark}

\subsection{The Ben-Zvi--Brochier--Jordan theorem}

The results of the previous subsection (factorization homology of $S^1$ giving the Drinfeld center, and the excision formula allowing computations on arbitrary surfaces) set the stage for the theorems that form the first mathematical pillar supporting our conjectural bridge. We emphasize that the BBJ and AKZ theorems stated below are established results in the literature; the conjectural content of our program begins with the identification of these structures with the BV-BFV output in Section \ref{sec:bridge}. It connects the factorization homology of a braided tensor category to quantum character varieties, which at roots of unity yield the Reshetikhin--Turaev state spaces via the Ai--Kong--Zheng theorem.

\begin{theorem}[Ben-Zvi--Brochier--Jordan \cite{BBJ2018,BBJ2018b}]\label{thm:BBJ}
Let $\mathcal{A}$ be a braided tensor category (e.g., $\mathcal{A} = \Rep_q(G)$ for a reductive group $G$ at generic $q$), viewed as an $\EE_2$-algebra in the symmetric monoidal $(\infty,2)$-category of presentable $\CC$-linear categories. Then:
\begin{enumerate}[label=(\roman*)]
\item For any oriented surface $\Sigma$ (possibly with boundary and punctures), the factorization homology $\iint_\Sigma \mathcal{A}$ is a well-defined $\CC$-linear category, the \emph{quantum character variety} of $\Sigma$.
\item For a punctured surface $\Sigma^\circ$, the quantum character variety $\iint_{\Sigma^\circ} \mathcal{A}$ is equivalent to the category of modules over an explicitly presented algebra, quantizing the algebra of functions on the classical character variety.
\item For the torus, $\iint_{T^2} \Rep_q(G) \simeq \mathcal{D}_q(G/G)\text{-}\mathrm{\mathbf{mod}}$, the category of equivariant quantum $\mathcal{D}$-modules on $G$.
\item For a closed surface $\Sigma_g$, the quantum character variety is computed as a quantum Hamiltonian reduction of the punctured surface category.
\end{enumerate}
\end{theorem}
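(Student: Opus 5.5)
The plan is to obtain each of the four items of Theorem \ref{thm:BBJ} from $\otimes$-excision (Theorem \ref{thm:AF-excision}) together with a monadic (Barr--Beck) reconstruction argument. We work throughout in the symmetric monoidal $(\infty,2)$-category $\mathrm{Pr}^L_\CC$ of presentable $\CC$-linear categories with the Deligne--Lurie tensor product. In this setting $\boxtimes$ commutes with colimits in each variable. Item (i) is then formal: the colimit \eqref{eq:FH-def} over $\mathbf{Disk}_{2/\Sigma}$ exists because $\mathrm{Pr}^L_\CC$ is presentable and closed under colimits. Independence of the chosen framing for oriented surfaces requires the balancing (ribbon twist) on $\mathcal{A}$, which makes $\mathcal{A}$ a framed $\EE_2$-algebra. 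Surfaces with boundary are handled by the collar-gluing version of excision. In that version, $\iint_{\partial\Sigma\times\RR}\mathcal{A}$ acts on $\iint_\Sigma\mathcal{A}$, and the resulting category is pointed by the image of the unit $\mathbf{1}\in\mathcal{A}$, called the \emph{distinguished object} $\mathcal{O}_\Sigma$.

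For item (ii), I would present a punctured surface $\Sigma^\circ$ of genus $g$ with $r\ge1$ punctures as a thickened ribbon graph: one disk with $2g+r-1$ handles attached. This gives an iterated excision presentation built from annuli glued to a disk, and the annuli carry the $\EE_1$-structure of $\iint_{S^1\times\RR}\mathcal{A}$ from Proposition \ref{prop:FH-cat}(iii). Inductively, $\iint_{\Sigma^\circ}\mathcal{A}$ is a module category over $\mathcal{A}$ acting at a boundary interval. The key step is to show that the action functor $\mathcal{A}\to\iint_{\Sigma^\circ}\mathcal{A}$, $V\mapsto V\triangleright\mathcal{O}_{\Sigma^\circ}$, has a colimit-preserving, conservative right adjoint, so that Barr--Beck applies. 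This yields
\[\iint_{\Sigma^\circ}\mathcal{A}\simeq \mathrm{Mod}_{A_{\Sigma^\circ}}(\mathcal{A}),\qquad A_{\Sigma^\circ}=\mathrm{End}(\mathcal{O}_{\Sigma^\circ}),\]
an algebra internal to $\mathcal{A}$. Next I compute $A_{\Sigma^\circ}$ explicitly. Each handle contributes a copy of the reflection-equation algebra $\mathcal{O}_q(G)=\int^{V}V^*\otimes V$. These copies are multiplied with braided crossing relations determined by the handle combinatorics. For $\mathcal{A}=\Rep_q(G)$ at generic $q$, setting $q=1$ recovers $\mathcal{O}(G^{2g+r-1})$, the coordinate ring of the character variety of the free group.

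Item (iii) follows by cutting $T^2$ into the once-punctured torus and a disk glued along an annulus. Excision gives a relative tensor product over $\iint_{S^1\times\RR}\mathcal{A}\simeq\mathcal{Z}(\mathcal{A})$. The moment map $\mathcal{O}_q(G)\to A_{T^2\setminus D}=\mathcal{D}_q(G)$ identifies this relative tensor product with strongly equivariant modules, namely $\mathcal{D}_q(G/G)\text{-}\mathrm{\mathbf{mod}}$. Item (iv) uses the same gluing for general $g$: capping the puncture of $\Sigma_{g}^\circ$ with a disk amounts to tensoring over $\mathcal{O}_q(G)\text{-}\mathrm{mod}_{\mathcal{A}}$ with $\mathrm{Vect}$ via the counit. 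This is precisely the categorical quantum Hamiltonian reduction of $\mathrm{Mod}_{A_{\Sigma_g^\circ}}(\mathcal{A})$.

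I expect the main obstacle to be the Barr--Beck step in (ii), namely proving that the right adjoint is colimit-preserving and conservative. My first approach would be to show that $\mathcal{O}_{\Sigma^\circ}$ is a compact projective generator. I would argue this inductively along the handle decomposition, using that $\mathcal{A}$ is generated by compact projectives at generic $q$ and that relative tensor products of such module categories over rigid monoidal categories preserve generation. If this fails for the non-semisimple case, I would instead restrict to $\mathcal{A}$ rigid and cp-rigid, so that the relative tensor products reduce to categories of comodules where generation can be checked by hand.
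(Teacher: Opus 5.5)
The paper gives no proof of this statement; it quotes it from Ben-Zvi--Brochier--Jordan, so the relevant comparison is with their argument. Your treatment of (i)--(ii) is essentially theirs:
\begin{itemize}
\item factorization homology in $\mathrm{Pr}^L_\CC$, which need only be cocomplete with $\boxtimes$ cocontinuous in each variable (it is not itself presentable);
\item a balancing to pass from framed to oriented surfaces (you are right that the statement as printed omits this hypothesis);
\item the disk-with-$(2g+r-1)$-handles gluing pattern and monadic reconstruction from the distinguished object;
\item $A_{\Sigma^\circ}$ as a braided tensor product of copies of $\mathcal{O}_q(G)$.
\end{itemize}

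Two adjustments are needed there. First, the property you need is the \emph{relative} one. $\mathcal{O}_{\Sigma^\circ}$ alone does not generate; what is needed is that the objects $V\triangleright\mathcal{O}_{\Sigma^\circ}$, $V\in\mathcal{A}$, generate and that $\underline{\mathrm{Hom}}(\mathcal{O}_{\Sigma^\circ},-)$ is cocontinuous. Second, cp-rigidity of $\mathcal{A}$ is not a fallback but part of the main line. Barr--Beck only produces modules over a monad on $\mathcal{A}$. Identifying that monad with $-\otimes A_{\Sigma^\circ}$ requires the right adjoint to be a strong, not merely lax, $\mathcal{A}$-module functor, and that is exactly where dualizability of compact projectives enters.

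The genuine gap is in (iii)--(iv).

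\emph{The disk contributes $\mathcal{A}$, not $\Vect$.} The capping disk contributes $\iint_{D^2}\mathcal{A}\simeq\mathcal{A}$, on which $\iint_{S^1\times\RR}\mathcal{A}\simeq\mathcal{O}_q(G)\text{-mod}_{\mathcal{A}}$ acts through the counit $\mathcal{O}_q(G)\to\mathbf{1}$. Tensoring with $\Vect$ instead kills the residual equivariance, and this gives the wrong answer already at $q=1$. Take $\mathcal{A}=\Rep G$ and let $\mu:G^{2g}\to G$ be the product of commutators. Then $\QCoh(G^{2g}/G)\otimes_{\QCoh(G/G)}\QCoh(BG)\simeq\QCoh(\mu^{-1}(e)/G)$, whereas base change to a point gives the non-equivariant $\QCoh(\mu^{-1}(e))$.

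\emph{Strong equivariance does not follow from the moment map alone.} The sentence ``the moment map identifies this relative tensor product with strongly equivariant modules'' is the main result of Ben-Zvi--Brochier--Jordan's paper on braided module categories, not a formal consequence. It needs two further inputs.
\begin{itemize}
\item[(a)] A proof that the annulus action on $A_{\Sigma^\circ}\text{-mod}_{\mathcal{A}}$ is encoded by a quantum moment map $\mu:\mathcal{O}_q(G)\to A_{\Sigma^\circ}$. For this, the relevant description of the annulus category is $\mathcal{O}_q(G)\text{-mod}_{\mathcal{A}}$, i.e.\ your own (ii) applied to the annulus, not the Drinfeld center.
\item[(b)] A second monadicity argument, for the functor $A_{\Sigma^\circ}\text{-mod}_{\mathcal{A}}\to A_{\Sigma^\circ}\text{-mod}_{\mathcal{A}}\otimes_{\mathcal{O}_q(G)\text{-mod}_{\mathcal{A}}}\mathcal{A}$. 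You must show this functor has a cocontinuous, conservative right adjoint, and identify the resulting monad. Its algebras should be exactly the $A_{\Sigma^\circ}$-modules on which the $\mathcal{O}_q(G)$-action through $\mu$ agrees with the canonical one.
\end{itemize}
This step is at least as delicate as the Barr--Beck step you singled out as the main obstacle. It is where (iii) (the case $A_{T^2\setminus D}=\mathcal{D}_q(G)$) and (iv) actually live.
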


\begin{theorem}[Ai--Kong--Zheng \cite{AKZ2017}]\label{thm:AKZ}
Let $\MTC$ be a unitary modular tensor category, viewed as an anomaly-free\footnote{See \cite{AKZ2017} for a detailed definition of anomaly-free.} coefficient system on a closed oriented surface $\Sigma_g$. Then
\begin{equation}
\iint_{\Sigma_g} \MTC \;\simeq\; (\mathbf{H},\, u_{\Sigma_g}),
\end{equation}
where $\mathbf{H}$ is the category of finite-dimensional Hilbert spaces and $u_{\Sigma_g} \in \mathbf{H}$ is a distinguished object that coincides with the RT state space $V_{\mathrm{RT}}(\Sigma_g)$.
\end{theorem}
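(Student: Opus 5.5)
The plan is to prove the Ai--Kong--Zheng statement by reducing factorization homology over the closed surface to a computation with the Drinfeld center, using $\otimes$-excision (Theorem \ref{thm:AF-excision}). The key input is that for a modular (anomaly-free) $\MTC$ the natural functor $\MTC\boxtimes\MTC^{\mathrm{rev}}\to\mathcal{Z}(\MTC)$, $(X,Y)\mapsto X\otimes Y$ with half-braidings given by the braiding and its reverse, is an equivalence. By Proposition \ref{prop:FH-cat}(iii), $\iint_{S^1\times\RR}\MTC\simeq\mathcal{Z}(\MTC)$. This gives $\mathcal{Z}(\MTC)\simeq \MTC\boxtimes\MTC^{\mathrm{rev}}\simeq \mathrm{End}(\MTC)$, so $\MTC$ is an invertible $\mathcal{Z}(\MTC)$-module in the sense of Morita theory.

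First, decompose $\Sigma_g=\Sigma_{g,1}\cup_{S^1\times\RR} D^2$. By excision,
\[
\iint_{\Sigma_g}\MTC\simeq \iint_{\Sigma_{g,1}}\MTC\;\boxtimes_{\mathcal{Z}(\MTC)}\;\MTC .
\]
Second, compute $\iint_{\Sigma_{g,1}}\MTC$ by induction on $g$. A punctured torus is obtained by gluing handles, and each step is again an excision over annuli. The goal of this step is to show $\iint_{\Sigma_{g,1}}\MTC\simeq\MTC$ as a right $\mathcal{Z}(\MTC)$-module, with module structure twisted by a distinguished object $\bigoplus_{i_1,\dots,i_g} V_{i_1}\otimes V_{i_1}^*\otimes\cdots$, i.e.\ the image of the ``handle'' object $\bigoplus_i V_i\boxtimes V_i^*\in\MTC\boxtimes\MTC^{\mathrm{rev}}$. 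Since $\MTC$ is semisimple and finite, these relative tensor products are computed as categories of modules over algebras in $\mathcal{Z}(\MTC)$. Third, combine the two steps: $\MTC\boxtimes_{\mathcal{Z}(\MTC)}\MTC\simeq \mathrm{Fun}_{\MTC\boxtimes\MTC^{\mathrm{rev}}}(\MTC,\MTC)\simeq\mathbf{H}$, using nondegeneracy and unitarity. The pointing transported by the handle object is then the vector space
\[
u_{\Sigma_g}\simeq\Hom_{\MTC}\Bigl(\mathbf 1,\bigoplus_{i_1,\dots,i_g}\bigotimes_{j}V_{i_j}\otimes V_{i_j}^*\otimes\cdots\Bigr),
\]
i.e.\ $\Hom_\MTC(\mathbf 1, L^{\otimes g})$ with $L=\bigoplus_i V_i\otimes V_i^*$ the canonical coend.

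Finally, identify $u_{\Sigma_g}$ with $V_{\mathrm{RT}}(\Sigma_g)$. Turaev's construction presents the RT state space on $\Sigma_g$ exactly as $\Hom_\MTC(\mathbf 1,L^{\otimes g})$ via a handlebody. As a check, the dimension count $\sum_{i}(S_{0i})^{2-2g}\mathcal{D}^{2g-2}$ matches the Verlinde formula \eqref{eq:Verlinde}, which it does by fusion-rule computations.

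The main obstacle is the second step: keeping track of the distinguished object through iterated excision. One must show that the $\mathcal{Z}(\MTC)$-module $\iint_{\Sigma_{g,1}}\MTC$ is canonically pointed by the image of the unit, and that this pointing is compatible with the handle gluing, rather than merely having the correct underlying category. I would first attempt this with a monadic/reflexive-algebra description: write $\iint_{\Sigma_{g,1}}\MTC$ as modules over the algebra $\mathrm{End}(\text{distinguished object})$, and verify inductively that this algebra becomes $L^{\otimes g}$ with its canonical multiplication, using the Hopf-algebra structure of the coend $L$.
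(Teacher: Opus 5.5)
There is a genuine gap, and it sits in your first identification. The paper gives no proof of this theorem; it is quoted from \cite{AKZ2017}. The paper's only related argument, Proposition~\ref{prop:genus-zero}(ii), rests on the same step as your outline, and that step fails.

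Excision along the collar $S^1\times\RR$ requires $\iint_{S^1\times\RR}\MTC$ with the monoidal structure coming from the $\RR$-direction. The identification with $\mathcal{Z}(\MTC)$ in Proposition~\ref{prop:FH-cat}(iii) holds only for the underlying linear categories. Under it, the unit for the $\RR$-direction product is the induced object $I(\mathbf 1)\cong\bigoplus_i V_i\boxtimes V_i^*\in\MTC\boxtimes\overline{\MTC}$. Its endomorphism space is $|\mathrm{Irr}(\MTC)|$-dimensional, since $\mathrm{End}(I(\mathbf 1))\cong\Hom_\MTC(\bigoplus_i V_i\otimes V_i^*,\mathbf 1)$, whereas the unit of $\mathcal{Z}(\MTC)$ is simple.

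Gluing over $\mathcal{Z}(\MTC)$ with its usual tensor product therefore gives the wrong answer. The dual of $\MTC$ as a $\MTC\boxtimes\MTC^{\otimes\mathrm{op}}$-module is $\mathcal{Z}(\MTC)$. By the double centralizer theorem, $\mathrm{Fun}_{\MTC\boxtimes\overline{\MTC}}(\MTC,\MTC)\simeq\mathrm{Fun}_{\mathcal{Z}(\MTC)}(\MTC,\MTC)\simeq\MTC\boxtimes\MTC^{\otimes\mathrm{op}}$, which has $|\mathrm{Irr}(\MTC)|^2$ simple objects. So already for $g=0$ your Step 3 produces a rank-$|\mathrm{Irr}(\MTC)|^2$ category rather than $\mathbf{H}$.

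For the same reason, $\mathcal{Z}(\MTC)\simeq\mathrm{End}(\MTC)$ is false as monoidal categories: one has a simple unit, the other a unit with $|\mathrm{Irr}(\MTC)|$ summands. Nor can $\MTC$ be an invertible $\mathcal{Z}(\MTC)$--$\mathbf{H}$ bimodule, because a fusion category with simple unit is Morita trivial only if it is $\mathbf{H}$ itself. Unitarity does not change any of this.

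The missing idea is where nondegeneracy actually enters. Cutting $S^1\times\RR$ along two radial arcs exhibits the annulus algebra as $\MTC\boxtimes_{\MTC\boxtimes\overline{\MTC}}\MTC$. Its $\RR$-direction product is induced from the tensor products of \emph{both} copies of $\MTC$, not from composition of module functors. So the relative tensor product you evaluate in Step 3 is the annulus, not the closed surface.

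The equivalence $\MTC\boxtimes\overline{\MTC}\simeq\mathcal{Z}(\MTC)$ is what identifies this multifusion category with $\mathrm{Fun}_{\mathbf{H}}(\MTC,\MTC)$. That category has one unit summand per simple object, matching the dimension of $\mathrm{End}(I(\mathbf 1))$ computed above. It is Morita trivial, with $\MTC$ as the invertible bimodule. Gluing the disk to $\iint_{\Sigma_{g,1}}\MTC\simeq\MTC$ (your Step 2, now as a module over this algebra) therefore yields $\mathbf{H}$. Tracking the pointing $L^{\otimes g}$ through this Morita equivalence gives $\Hom_\MTC(\mathbf 1,L^{\otimes g})$.

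With that correction, your handle bookkeeping and the comparison with Turaev's description of $V_{\mathrm{RT}}(\Sigma_g)$ go through. Finally, check dimensions against $\sum_i S_{0i}^{2-2g}$. The expression you take from \eqref{eq:Verlinde} carries a spurious factor $\mathcal{D}^{2g-2}$ and equals $\mathcal{D}^{-2}$ at $g=0$, so it cannot match $\dim\Hom_\MTC(\mathbf 1,L^{\otimes g})$.
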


\begin{remark}
For the quantum group category $\overline{\Rep}_q(G)$ with $G$ a compact simple Lie group and $q = e^{i\pi/(k+h^\vee)}$, unitarity follows from the $*$-structure on $U_q(\fg)$ induced by the compact real form, and modularity from the invertibility of the Kac--Peterson $S$-matrix and anomaly-freeness follows from modularity via M\"uger's theorem $\MTC \boxtimes \overline{\MTC} \simeq \mathcal{Z}(\MTC)$, so the hypotheses of Theorem \ref{thm:AKZ} are satisfied.
\end{remark}

\begin{remark}[Two levels of the theory]\label{rmk:two-levels}
It is important to distinguish two levels: 
\begin{enumerate}
    \item[(a)] At \emph{generic $q$}, the factorization homology $\iint_\Sigma \Rep_q(G)$ is a genuinely non-trivial category (the quantum character variety of BBJ). This is the natural setting for comparison with BV-BFV, which produces formal deformations over $\CC[\![\hbar]\!]$.
    \item[(b)] At a \emph{root of unity}, after semisimplification to the MTC $\overline{\Rep}_q(G)$, the factorization homology of a closed surface yields $(\mathbf{H},\, u_{\Sigma_g})$, where $u_{\Sigma_g}$ is the RT state space (Theorem \ref{thm:AKZ}).
\end{enumerate} 
The passage from (a) to (b) is part of the conjectural Step 2c of our program. The dualizability properties required for these constructions are established in \cite{BJS2021}, the classification of invertible braided tensor categories relevant to the framing anomaly is given in \cite{BJSS2020}, and the extension of factorization homology computations to the 4-dimensional Crane--Yetter theory appears in \cite{KT2022}.
\end{remark}

\subsection{Factorization homology and the Cobordism Hypothesis}

The factorization homology framework connects to the deepest structural result in the theory of topological quantum field theories which is the \emph{Cobordism Hypothesis}, that classifies fully extended TQFTs in terms of a single piece of algebraic data.

\begin{theorem}[Cobordism Hypothesis; Lurie \cite{Lurie_cobordism}]\label{thm:cobordism-hyp}
A fully extended framed $n$-dimensional TQFT with values in a symmetric monoidal $(\infty,n)$-category $\mathcal{V}$ is completely determined by the object it assigns to a point, and this object must be \emph{fully dualizable} in $\mathcal{V}$. More precisely, there is an equivalence
\begin{equation}
\mathrm{\mathbf{Fun}}^{\otimes}(\Bord_n^{\mathrm{fr}},\, \mathcal{V}) \;\simeq\; (\mathcal{V}^{\mathrm{fd}})^{\sim},
\end{equation}
where $\mathcal{V}^{\mathrm{fd}}$ denotes the full subcategory of fully dualizable objects and $(-)^{\sim}$ takes the underlying $\infty$-groupoid (discarding non-invertible morphisms).
\end{theorem}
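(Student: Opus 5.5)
The statement is Lurie's Cobordism Hypothesis. I would not try to reprove it from scratch. Instead I would organize the argument along Lurie's published outline and reduce it to two pieces: a dimensional induction and a generators-and-relations presentation of the framed bordism $(\infty,n)$-category.

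\emph{First step: the direction from theories to objects.} Given a symmetric monoidal functor $Z:\Bord_n^{\mathrm{fr}}\to\mathcal{V}$, evaluation at the positively framed point gives an object $Z(\mathrm{pt})$. Full dualizability is automatic. The point is dualizable in $\Bord_n^{\mathrm{fr}}$, with dual the oppositely framed point; the evaluation and coevaluation are the two bent intervals. Their units and counits are the elbow 2-bordisms, and so on up to dimension $n$. Each of these admits adjoints, given by the same bordisms read backwards. Since symmetric monoidal functors preserve duals and adjoints, $Z(\mathrm{pt})$ lies in $\mathcal{V}^{\mathrm{fd}}$. Functors of $\infty$-groupoids are detected objectwise, and natural transformations between symmetric monoidal functors out of a category in which every morphism has an adjoint are invertible. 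So the evaluation map $\mathrm{\mathbf{Fun}}^{\otimes}(\Bord_n^{\mathrm{fr}},\mathcal{V})\to(\mathcal{V}^{\mathrm{fd}})^{\sim}$ indeed lands in the stated groupoid.

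\emph{Second step: showing that evaluation is an equivalence.} I would induct on $n$. The case $n=1$ is the classical statement that $\Bord_1^{\mathrm{fr}}$ is the free symmetric monoidal $(\infty,1)$-category on a dualizable object, proved by Morse theory on 1-manifolds. For the inductive step, I would pass from $\Bord_{n-1}$ to $\Bord_n$ by adjoining $n$-morphisms. The key input is a relative statement. Let $\Bord_n^{\mathrm{fr}}$ with $(n-1)$-skeleton fixed be obtained from $\Bord_{n-1}^{\mathrm{fr}}$ by freely attaching handles of index $0,\dots,n$, subject only to the cancellation relations. To prove this I would use the contractibility of the space of framed generalized Morse functions (Igusa, Eliashberg--Mishachev) with the appropriate framed-function refinement. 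It identifies the space of handle decompositions of a given $n$-bordism with a contractible space. Hence the $\infty$-groupoid of $n$-morphisms is the classifying space of handle presentations modulo births and deaths. Once this presentation is in hand, a symmetric monoidal extension from $\Bord_{n-1}$ to $\Bord_n$ exists uniquely, up to contractible choice, exactly when the images of the $(n-1)$-dimensional elbows admit adjoints in $\mathcal{V}$. That is the extra layer of full dualizability.

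\emph{Main obstacle.} I expect the hardest step to be the inductive step, namely the $h$-principle identification of the space of framed functions with the classifying data of handle attachments. Also hard is checking that the resulting $(\infty,n)$-categorical presentation (in complete $n$-fold Segal spaces) really has the universal property: framed index-$k$ handles correspond to $k$-fold iterated adjunction data and nothing more. I would first attempt this via Lurie's "framed Morse theory" reduction to the unframed $(\infty,1)$ statement about $\Bord_n$ relative to $\Bord_{n-1}$. I would accept that a fully rigorous treatment requires the later work of Ayala--Francis (factorization homology and $\mathbf{Disk}$-categories) and the Grady--Pavlov framework, rather than the sketch alone.
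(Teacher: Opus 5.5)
The paper gives no argument of its own here; it simply quotes the theorem from Lurie. So deferring to Lurie's outline is the same route, and your first step is fine: the framed point is fully dualizable in $\Bord_n^{\mathrm{fr}}$, symmetric monoidal functors preserve duals and adjoints, and transformations out of a category with duals are invertible.

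Your inductive step, however, fails as stated, under either reading of $\Bord_{n-1}^{\mathrm{fr}}$.

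\emph{First reading.} Suppose $\Bord_{n-1}^{\mathrm{fr}}$ is the $(n-1)$-skeleton of $\Bord_n^{\mathrm{fr}}$, so that one only adjoins $n$-morphisms. Then its manifolds carry $n$-framings, and your claim that an extension exists uniquely up to contractible choice is false. Take $n=2$, $\mathcal{V}$ the Morita bicategory of algebras, and $Z$ the trivial theory with $Z(\mathrm{pt})=\CC$. For $\lambda\in\CC^\times$, multiply the value on every $2$-bordism $W\colon M_0\Rightarrow M_1$ by $\lambda^{\chi(W)-\chi(M_0)}$. This rescaling is compatible with vertical gluing, horizontal gluing and disjoint union. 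It therefore defines another symmetric monoidal functor with literally the same restriction to the $1$-skeleton. The circle $S^1_b$ with its bounding framing lies in that skeleton, and the value on the disk $D^2\colon\emptyset\Rightarrow S^1_b$ gets multiplied by $\lambda$. Hence for $\lambda\neq 1$ no equivalence whose restriction to the $1$-skeleton is isomorphic to the identity relates the two extensions. The space of extensions has at least a $\CC^\times$ worth of components.

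\emph{Second reading.} Suppose instead $\Bord_{n-1}^{\mathrm{fr}}$ means $(n-1)$-framed bordisms, included by stabilization. Then your uniqueness claim is true; given the $(n-1)$-dimensional case it is equivalent to the theorem itself. But the $(n-1)$-skeleton is no longer fixed. For example, $S^1_b$, or the interval whose $2$-framing rotates once, is not the stabilization of a $1$-framed bordism. So there is no presentation of $\Bord_n^{\mathrm{fr}}$ by attaching handles to that skeleton. Under either reading, the step that carries the content of the theorem is not supplied.

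\emph{The missing idea.} The framed statement is not closed under induction. The $(n-1)$-skeleton of $\Bord_n^{\mathrm{fr}}$ consists of manifolds $M$ with a framing of $TM\oplus\RR^{n-\dim M}$. That is a tangential structure, but not a framing of $M$. So the inductive hypothesis you actually need is the cobordism hypothesis for arbitrary structures $(X,\zeta)$, in which theories correspond to $O(n)$-equivariant maps $\widetilde X\to(\mathcal{V}^{\mathrm{fd}})^{\sim}$. This is why Lurie's induction runs on that general version.

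In that version, the genuinely new datum at stage $n$ is the value on the $n$-disk, viewed as an $n$-morphism into $Z(S^{n-1})$ from the unit. It is chosen equivariantly over the space of framings and is subject to a nondegeneracy condition. This is real, non-unique data; it is exactly the $\lambda$ above.

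The framed-function and index-filtration argument is then used to show that handles of positive index add nothing new: they are forced as units and counits of adjunctions through handle cancellation. It does not produce a presentation by handles of all indices subject only to cancellations. Cerf theory also involves handle slides and critical-value crossings, which is why one works with the contractible space of framed functions rather than an explicit presentation.

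Uniqueness in the framed theorem appears only at the end. There, equivariance over the free orbit $\widetilde X=O(n)$ collapses the whole tower of data to the single object $Z(\mathrm{pt})$.
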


\begin{remark}\label{rmk:cobordism-FH}
Factorization homology provides the explicit construction behind the Cobordism Hypothesis. Given a fully dualizable $\EE_n$-algebra $\mathcal{A} \in \mathcal{V}$, the assignment $M \mapsto \iint_M \mathcal{A}$ is precisely the fully extended TQFT corresponding to $\mathcal{A}$ under the equivalence of the Cobordism Hypothesis. This is the content of a theorem of Ayala--Francis \cite{AF2015}, who show that a symmetric monoidal functor from $\Bord_n^{\mathrm{fr}}$ to $\mathcal{V}$ satisfying $\otimes$-excision is uniquely determined by its value on a disk (which is the $\EE_n$-algebra $\mathcal{A}$), and conversely that factorization homology of any $\EE_n$-algebra satisfies $\otimes$-excision. In other words, factorization homology is not merely a \emph{tool} for computing TQFTs, but it \emph{is} the TQFT, in the precise sense that it provides the unique excisive extension of local data to a global invariant of manifolds. This perspective will be important in Section \ref{sec:bridge}, where we argue that the BV-BFV partition function, which also satisfies a gluing axiom, must agree with factorization homology once the local data (the $\EE_2$-algebra) is matched.
\end{remark}

\section{The Derived Character Stack and Shifted Symplectic Geometry}\label{sec:derived-character}

The second pillar of our bridge is the derived algebraic geometry of the character stack, in particular, the moduli space of local systems (flat connections) on a manifold. This is the geometric object that both the BV-BFV formalism and the RT construction ``see,'' albeit from different angles. The BV-BFV formalism quantizes it perturbatively via gauge-fixing and Feynman diagrams, while the RT construction accesses it non-perturbatively through the representation theory of quantum groups. The shifted symplectic structures discovered by Pantev--To\"en--Vaqui\'e--Vezzosi (PTVV) \cite{PTVV2013} provide the geometric framework in which these two perspectives can be compared.

\subsection{Essentials of derived algebraic geometry}

We briefly recall the notions from derived algebraic geometry (DAG) that we need. The reader may consult \cite{TV2005, GR2017, Lurie_SAG} for comprehensive treatments.

The basic idea of DAG is to replace ordinary commutative rings by commutative differential graded algebras (or, more generally, simplicial commutative rings), allowing the ``functions'' on a space to form a cochain complex rather than a plain ring. This extra structure records higher-order information about the geometry. In particular, the obstructions and deformations that are invisible to the classical (underived) moduli space.

\begin{definition}[Derived stack]\label{def:derived-stack}
A \emph{derived stack} (over $\CC$) is a functor \[X: \mathrm{\mathbf{cdga}}_\CC^{\leq 0} \to \mathcal{S}\] from the $\infty$-category of connective commutative differential graded $\CC$-algebras to the $\infty$-category of spaces (simplicial sets), satisfying \'etale descent. Every derived stack $X$ has an underlying \emph{classical truncation} $X_{\mathrm{cl}} = X|_{\mathrm{\mathbf{cAlg}}_\CC}$, obtained by restricting the functor to ordinary commutative algebras (dg-algebras concentrated in degree $0$). The derived stack remembers strictly more information than its classical truncation. Actually, the values on dg-algebras with non-trivial cohomology encode the deformation theory and obstruction theory of $X$.
\end{definition}

\begin{definition}[Cotangent complex]\label{def:cotangent}
For a derived stack $X$, the \emph{cotangent complex} $\mathbb{L}_X \in \QCoh(X)$ is the quasi-coherent sheaf that represents the functor of derivations; it is the derived-geometric replacement of the sheaf of K\"ahler differentials. Its fiber at a point $x \in X$ is the cochain complex governing the deformation theory of $x$. The tangent space $H^0(\mathbb{T}_{X,x})$ parametrizes first-order deformations, $H^1(\mathbb{T}_{X,x})$ contains obstructions, and $H^{-1}(\mathbb{T}_{X,x})$ detects infinitesimal automorphisms (where $\mathbb{T}_X = \mathbb{L}_X^\vee$ is the tangent complex). For a derived mapping stack $X = \Map(Y, Z)$ with $Y$ a compact oriented manifold, there is an explicit formula $\mathbb{L}_X \simeq \mathrm{ev}^*\, \mathbb{L}_Z \otimes p_Y^*\, \omega_Y$, where $\omega_Y$ is the dualizing sheaf of $Y$ and the tensor product implements integration over the fibers of the projection $p_Y$.
\end{definition}

\subsection{The derived character stack}

The geometric object at the heart of our program is the derived moduli space of local systems (the ``character stack'') which parametrizes flat $G$-connections on a manifold up to gauge equivalence, together with all of the higher homotopical data that the classical moduli space forgets.

\begin{definition}[Derived character stack]\label{def:char-stack}
Let $G$ be a reductive algebraic group over $\CC$ and let $\Sigma$ be a compact oriented manifold. The \emph{derived character stack} (or \emph{derived moduli stack of local systems}) is the derived mapping stack
\begin{equation}
\Loc_G(\Sigma) \;:=\; \Map(\Sigma_B,\, BG),
\end{equation}
where $\Sigma_B$ is the Betti stack (the constant derived stack associated to the homotopy type of $\Sigma$) and $BG = [\mathrm{pt}/G]$ is the classifying stack of $G$.
\end{definition}

This definition is compact but rich in content. The mapping stack construction automatically produces a derived stack that encodes, in a single object, the space of representations of $\pi_1(\Sigma)$ into $G$, the gauge equivalences between them, the deformation theory around each representation, and the obstructions to extending infinitesimal deformations. We unpack this in the following proposition.

\begin{proposition}\label{prop:Loc-explicit}
The derived character stack $\Loc_G(\Sigma)$ has the following concrete descriptions:
\begin{enumerate}[label=(\roman*)]
\item \emph{Points.} The set of isomorphism classes of points of $\Loc_G(\Sigma)$ is the character variety $\pi_0(\Loc_G(\Sigma)) = \Hom(\pi_1(\Sigma), G)/G$, parametrizing conjugacy classes of representations of the fundamental group into $G$.

\item \emph{Tangent complex.} At a point $[\rho] \in \Loc_G(\Sigma)$ corresponding to a representation $\rho: \pi_1(\Sigma) \to G$, the tangent complex is the shifted group cohomology complex $\mathbb{T}_{[\rho]}\, \Loc_G(\Sigma) \simeq C^\bullet(\Sigma;\, \fg_{\Ad_\rho})[1]$, where $\fg_{\Ad_\rho}$ is the adjoint local system determined by $\rho$. In particular, $H^0$ of the tangent complex (first-order deformations) is $H^1(\Sigma;\, \fg_{\Ad_\rho})$, while $H^{-1}$ (infinitesimal automorphisms) is $H^0(\Sigma;\, \fg_{\Ad_\rho})$, the centralizer of the representation.

\item \emph{The circle.} For $\Sigma = S^1$, the fundamental group is $\ZZ$, and a representation $\ZZ \to G$ is determined by the image of the generator (an element of $G$). Conjugation acts by the adjoint action, so $\Loc_G(S^1) = [G/G]$ is the adjoint quotient stack.

\item \emph{Closed surfaces.} For a closed surface $\Sigma_g$ of genus $g$, the fundamental group has the standard presentation with $2g$ generators $a_1, b_1, \ldots, a_g, b_g$ and one relation $\prod_{i=1}^g [a_i, b_i] = 1$. The classical truncation of the character stack is therefore the GIT quotient $\Hom(\pi_1(\Sigma_g), G) /\!/ G$, parametrizing conjugacy classes of tuples $(A_1, B_1, \ldots, A_g, B_g) \in G^{2g}$ satisfying the relation $\prod_{i=1}^g [A_i, B_i] = 1$. The derived structure enhances this by remembering the full deformation--obstruction complex at each point.
\end{enumerate}
\end{proposition}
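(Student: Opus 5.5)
The plan is to work directly from the definition $\Loc_G(\Sigma) = \Map(\Sigma_B, BG)$ and reduce every item to two standard facts. First, maps from the Betti stack of a connected space into $BG$ are principal $G$-bundles with flat structure, i.e.\ $G$-local systems. Second, the tangent complex of a derived mapping stack is computed by the formula recalled in Definition \ref{def:cotangent}. Throughout I take $\Sigma$ connected and fix a basepoint $x_0$, so that $\Sigma_B \simeq B\pi_1(\Sigma, x_0)$ up to higher homotopy groups. Those higher homotopy groups do not affect maps into the $1$-truncated target $BG$ on classical points.

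For (i), I would evaluate the functor on $\CC$. The space $\Map(\Sigma_B, BG)(\CC)$ is the mapping space $\Map(\Sigma, BG(\CC))$. Since $BG$ is a $1$-type, this equals $\Map(B\pi_1(\Sigma), BG)$. Its $\pi_0$ is the set of free homotopy classes of maps, which is $\Hom(\pi_1(\Sigma), G)/G$ with $G$ acting by conjugation; the conjugation arises from forgetting basepoints.

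For (ii), I would apply $\mathbb{T}_{\Map(Y,Z)} \simeq p_{Y,*}\,\mathrm{ev}^*\mathbb{T}_Z$, with $Y=\Sigma_B$ and $Z = BG$. For $BG$ we have $\mathbb{T}_{BG} = \fg[1]$ with the adjoint action. Pulling back along the map classifying $\rho$ gives the local system $\fg_{\Ad_\rho}[1]$ on $\Sigma$. Pushing forward to a point computes cochains, so $\mathbb{T}_{[\rho]} \simeq C^\bullet(\Sigma; \fg_{\Ad_\rho})[1]$. Reading off the cohomology gives $H^{-1}$ and $H^0$ of the tangent complex as $H^0(\Sigma;\fg_{\Ad_\rho})$ and $H^1(\Sigma;\fg_{\Ad_\rho})$. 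The former is the invariants $\fg^{\rho}$, which is the Lie algebra of the centralizer; I would phrase this last point precisely, since the statement's wording "the centralizer" means its Lie algebra.

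For (iii), I would use $S^1_B \simeq B\ZZ$. Then $\Map(B\ZZ, BG)$ is the stack of $G$-torsors with an automorphism. Presenting $BG = [\mathrm{pt}/G]$, a map $B\ZZ \to BG$ is a $G$-torsor $P$ with an automorphism. Trivializing $P$ identifies this with an element $g \in G$ modulo simultaneous conjugation, which gives $[G/G]$ with the adjoint action. Equivalently, $\Loc_G(S^1) = BG \times_{BG\times BG} BG$ is the derived loop space, which is $[G/G]$. This is underived because $G$ is smooth and the diagonal of $BG$ is flat.

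For (iv), I would use the CW structure of $\Sigma_g$: one $0$-cell, $2g$ one-cells, and one $2$-cell attached along $\prod[a_i,b_i]$. Since $\Map(-,BG)$ sends colimits to limits, this gives a fiber product
\[
\Loc_G(\Sigma_g) \simeq \big[\,(G^{2g} \times_{G} \{1\})/G\,\big],
\]
where the map $G^{2g}\to G$ is the commutator product. Its classical truncation is the quotient stack of the representation variety by conjugation. Passing to the coarse (GIT) quotient recovers $\Hom(\pi_1(\Sigma_g),G)/\!/G$. I expect this step to be the main obstacle, because the statement conflates the classical truncation, which is a quotient stack, with the GIT quotient. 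I would resolve this by proving the stack statement and then remarking that the GIT quotient is its good moduli space. The derived fiber product also makes explicit the claim that the stack remembers the deformation–obstruction complex, matching (ii).
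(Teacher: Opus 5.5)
The paper states this proposition without proof, presenting it as an unpacking of Definition~\ref{def:char-stack}, so there is no argument of the paper's to compare against. Your proposal is correct and is the standard argument:
\begin{itemize}
\item (i) uses the functor of points $\Loc_G(\Sigma)(R)\simeq\Map(\Sigma,BG(R))$;
\item (ii) uses the mapping-stack formula $\mathbb{T}_{\Map(Y,Z)}\simeq p_*\mathrm{ev}^*\mathbb{T}_Z$;
\item (iii) uses the loop-stack description, with flatness of the diagonal showing there is no derived structure;
\item (iv) uses the cell decomposition to write $\Loc_G(\Sigma_g)\simeq[(G^{2g}\times_G\{1\})/G]$, with a derived fiber of the commutator map $\mu\colon G^{2g}\to G$.
\end{itemize}

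What your version adds over the bare statement is precision. You correctly read ``the centralizer'' in (ii) as its Lie algebra $\fg^{\rho}$. You also correctly note that the classical truncation in (iv) is the quotient stack $[\Hom(\pi_1(\Sigma_g),G)/G]$, and that the GIT quotient is only its good moduli space.

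The same caveat applies to (i), which you did not flag. $\Hom(\pi_1(\Sigma),G)/G$ is the set of all conjugacy classes, whereas the points of the GIT character variety correspond only to closed orbits, i.e.\ semisimple representations. So the displayed formula for $\pi_0$ is right, but the name ``character variety'' is not.

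Your care in restricting the $1$-type argument to classical points is well placed, because the higher homotopy of $\Sigma$ does affect the derived structure. For $\Sigma=S^2$ your formula gives $\mathbb{T}\simeq\fg[1]\oplus\fg[-1]$ at the trivial point, so $\Loc_G(S^2)$ is not $BG$ as a derived stack. For the same reason, ``group cohomology'' in (ii) should be read as $C^\bullet(\Sigma;\fg_{\Ad_\rho})$, which agrees with group cohomology of $\pi_1(\Sigma)$ only when $\Sigma$ is aspherical.
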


\subsection{Shifted symplectic structures}

The central result connecting derived character stacks to both symplectic geometry and the BV formalism is the theory of \emph{shifted symplectic structures}, developed by Pantev, To\"en, Vaqui\'e, and Vezzosi in \cite{PTVV2013}. The idea is that on a derived stack, symplectic forms can exist in degrees other than zero, and the degree is dictated by the dimension of the source manifold.

\begin{definition}[$n$-shifted symplectic structure; PTVV \cite{PTVV2013}]\label{def:shifted-symplectic}
An \emph{$n$-shifted symplectic structure} on a derived stack $X$ is a closed 2-form $\omega \in \mathcal{A}^{2,\mathrm{cl}}(X, n)$ of cohomological degree $n$ in the de Rham complex of $X$, such that the induced map $\omega^\sharp: \mathbb{T}_X \to \mathbb{L}_X[n]$ is an equivalence of quasi-coherent sheaves. The non-degeneracy condition generalizes the classical requirement that a symplectic form induce an isomorphism between tangent and cotangent bundles.
\end{definition}

The following theorem reveals a striking pattern. Namely, that the character stack of a $d$-dimensional manifold carries a natural symplectic structure whose degree is determined by $d$, and the three cases $d = 1, 2, 3$ correspond precisely to the three types of structure (BFV, classical symplectic, BV) that appear in the BV-BFV formalism.

\begin{theorem}[PTVV \cite{PTVV2013}]\label{thm:PTVV}
Let $G$ be reductive and let $\Sigma$ be a compact oriented $d$-dimensional manifold. Then the derived character stack $\Loc_G(\Sigma)$ carries a canonical $(2-d)$-shifted symplectic structure. The three cases relevant to Chern--Simons theory are:
\begin{enumerate}[label=(\roman*)]
\item \emph{$d = 1$ (circles):} $\Loc_G(S^1) = [G/G]$ carries a $1$-shifted symplectic structure. At a point $[g] \in G/G$, the tangent complex is the two-term complex $\fg \xrightarrow{\Ad_g - 1} \fg$ in degrees $0$ and $1$, and the symplectic form is given by the Killing form on $\fg$ composed with the Poincar\'e duality pairing of $S^1$. The $1$-shifted degree reflects the fact that the symplectic pairing takes values in degree $1$, corresponding to the ``odd'' nature of the BFV phase space.

\item \emph{$d = 2$ (surfaces):} $\Loc_G(\Sigma_g)$ carries a $0$-shifted symplectic structure. This is a symplectic structure in the ordinary sense (on the smooth locus of the classical truncation), and it recovers the classical Atiyah--Bott--Goldman symplectic form given by
\begin{equation}\label{eq:AB-form}
\omega_{AB}([\alpha],\, [\beta]) = \int_\Sigma \Tr\,(\alpha \wedge \beta),
\end{equation}
for tangent vectors $[\alpha], [\beta] \in H^1(\Sigma;\, \fg_{\Ad_\rho}) \cong T_{[\rho]}\, \Loc_G(\Sigma)_{\mathrm{cl}}$. This is the symplectic structure on the phase space that Chern--Simons theory assigns to a spatial slice.

\item \emph{$d = 3$ (3-manifolds):} $\Loc_G(M)$ carries a $(-1)$-shifted symplectic structure. A $(-1)$-shifted symplectic structure is precisely a BV symplectic structure in the sense of Definition \ref{def:BV-manifold}: it pairs degree $k$ fields with degree $(-1-k)$ antifields. This is the derived-geometric origin of the BV structure on the space of solutions to the Chern--Simons equations of motion.
\end{enumerate}
\end{theorem}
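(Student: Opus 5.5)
The natural route is the PTVV AKSZ/transgression mechanism. The target $BG$ carries a canonical $2$-shifted symplectic structure; the orientation on the $d$-dimensional $\Sigma$ gives a fundamental class pairing of degree $-d$; the mapping stack $\Loc_G(\Sigma)=\Map(\Sigma_B,BG)$ then inherits a $(2-d)$-shifted symplectic structure. After establishing the general statement, we would specialize to $d=1,2,3$ and identify the resulting forms with the classical ones.

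\textbf{Step 1 (the target).} First we construct the $2$-shifted symplectic form on $BG$. Reductivity of $G$ gives a nondegenerate invariant symmetric bilinear form $\langle-,-\rangle$ on $\fg$. The tangent complex is $\mathbb{T}_{BG}\simeq\fg[1]$, with $\fg$ in degree $-1$, so $\langle-,-\rangle$ defines a pairing $\mathbb{T}_{BG}\otimes\mathbb{T}_{BG}\to\mathcal{O}[2]$. This pairing is nondegenerate. We must also check that it lifts to a \emph{closed} $2$-form of degree $2$, i.e.\ that it underlies a class in $\mathcal{A}^{2,\mathrm{cl}}(BG,2)$. For this we use that $\mathcal{A}^{2,\mathrm{cl}}(BG,2)\simeq(\Sym^2\fg^\vee)^G$, since the de Rham complex of $BG$ is computed by the Chern--Weil/Cartan model. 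Hence invariant forms give closed classes directly.

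\textbf{Step 2 (transgression).} Next, for $\Sigma$ compact oriented of dimension $d$, the Betti stack $\Sigma_B$ is $\mathcal{O}$-compact. The orientation provides $[\Sigma]:H^\bullet(\Sigma,\CC)\to\CC[-d]$, and this is nondegenerate by Poincar\'e duality. We would define $\omega_\Sigma=\int_{[\Sigma]}\mathrm{ev}^*\omega_{BG}$ using the evaluation map $\mathrm{ev}:\Sigma_B\times\Loc_G(\Sigma)\to BG$. Closedness is preserved because pullback and fiber integration commute with the de Rham differential, so $\omega_\Sigma$ is a closed form of degree $2-d$.

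\textbf{Step 3 (nondegeneracy).} This is the main obstacle, because it must hold at all points, including stacky ones. By the mapping-stack formula for the cotangent complex (Definition \ref{def:cotangent}) and Proposition \ref{prop:Loc-explicit}(ii), the tangent complex at $[\rho]$ is $C^\bullet(\Sigma;\fg_{\Ad_\rho})[1]$. The induced pairing is then the composite of the cup product, the form $\langle-,-\rangle$ on coefficients, and evaluation on $[\Sigma]$. Nondegeneracy therefore reduces to twisted Poincar\'e duality for local systems with a self-dual coefficient system. The subtle point is to do this in families rather than pointwise. We would first try to show that $\omega^\sharp$ is a map of perfect complexes, and then check that it is a quasi-isomorphism on fibers at all $\CC$-points (and base-changed points). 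Since equivalences of perfect complexes can be detected fiberwise, this would suffice.

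\textbf{Step 4 (the three cases).} Finally, we read off the three cases. For $d=1$, the two-term complex $\fg\xrightarrow{\Ad_g-1}\fg$ computes $C^\bullet(S^1;\fg_{\Ad_g})$ shifted by $[1]$. For $d=2$, on the smooth locus the degree-$0$ pairing restricted to $H^1$ is $\int_\Sigma\langle\alpha\wedge\beta\rangle$, which is exactly \eqref{eq:AB-form}. For $d=3$, the form has degree $-1$ and pairs $H^k$ with $H^{3-k}$, which after the shift matches the field/antifield pairing of Definition \ref{def:BV-manifold}.
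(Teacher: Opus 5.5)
Your proposal is correct and follows essentially the same route as the paper, which cites PTVV and, in its remark on the PTVV mechanism, sketches the same ingredients you use: the $2$-shifted form on $BG$ coming from an invariant pairing on $\fg$, Poincar\'e duality of the oriented $\Sigma$ shifting degrees by $-d$, and AKSZ transgression along $\mathrm{ev}$ (your Step 3 makes explicit the nondegeneracy that PTVV build into their notion of $\mathcal{O}$-orientation). One bookkeeping point in case (i): the tangent complex $C^\bullet(S^1;\fg_{\Ad_g})[1]$ is $\fg\xrightarrow{\Ad_g-1}\fg$ in degrees $-1$ and $0$, so the ``degrees $0$ and $1$'' in the statement actually describes the unshifted complex $C^\bullet(S^1;\fg_{\Ad_g})$, and your Step 4 should state this reconciliation explicitly rather than leave it ambiguous.
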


\begin{figure}[ht]
\centering
\begin{tikzpicture}[scale=0.95]

% === Column 1: d=1, circle ===
\begin{scope}[shift={(-5,0)}]
  % Circle drawing
  \draw[violet!60!black, very thick] (0,2.8) circle (0.6);
  \node[violet!60!black, font=\footnotesize\bfseries] at (0,3.8) {$d = 1$: circle};
  
  % Box
  \node[draw=violet!60!black, thick, rounded corners=4pt, fill=violet!5, 
        inner sep=5pt, font=\footnotesize, align=center, 
        minimum width=3.2cm] (b1) at (0,1.2) 
        {$\Loc_G(S^1) = [G/G]$\\[3pt]
         \textbf{1-shifted symplectic}};
  
  % BFV label
  \node[draw=blue!50, thick, rounded corners=3pt, fill=blue!5, 
        inner sep=4pt, font=\scriptsize, align=center,
        minimum width=3.2cm] at (0,-0.3) 
        {BFV phase space\\[1pt]
         degree $0$ symplectic\\[1pt]
         $\to$ \textbf{category} $\mathcal{B}_{\CS}^{(k)}$};
\end{scope}

% === Column 2: d=2, surface ===
\begin{scope}[shift={(0,0)}]
  % Surface drawing (genus 1)
  \draw[violet!60!black, very thick] (0,2.8) ellipse (0.8 and 0.45);
  \fill[violet!10, very thick] (0,2.8) ellipse (0.8 and 0.45);
  \draw[violet!60!black, thick] (-0.25,2.8) arc(180:360:0.25 and 0.12);
  \node[violet!60!black, font=\footnotesize\bfseries] at (0,3.8) {$d = 2$: surface};
  
  % Box
  \node[draw=violet!60!black, thick, rounded corners=4pt, fill=violet!5, 
        inner sep=5pt, font=\footnotesize, align=center, 
        minimum width=3.2cm] (b2) at (0,1.2) 
        {$\Loc_G(\Sigma_g)$\\[3pt]
         \textbf{0-shifted symplectic}};
  
  % Classical label
  \node[draw=blue!50, thick, rounded corners=3pt, fill=blue!5, 
        inner sep=4pt, font=\scriptsize, align=center,
        minimum width=3.2cm] at (0,-0.3) 
        {Atiyah--Bott--Goldman\\[1pt]
         classical symplectic form\\[1pt]
         $\omega_{AB} = \int_\Sigma \mathrm{Tr}(\alpha \wedge \beta)$};
\end{scope}

% === Column 3: d=3, 3-manifold ===
\begin{scope}[shift={(5,0)}]
  % 3-manifold drawing (blob)
  \draw[violet!60!black, very thick, rounded corners=8pt] 
       (-0.7,2.3) -- (-0.5,3.3) -- (0.5,3.3) -- (0.7,2.3) -- cycle;
  \fill[violet!10, rounded corners=8pt] 
       (-0.7,2.3) -- (-0.5,3.3) -- (0.5,3.3) -- (0.7,2.3) -- cycle;
  \node[violet!60!black, font=\footnotesize\bfseries] at (0,3.8) {$d = 3$: 3-manifold};
  
  % Box
  \node[draw=violet!60!black, thick, rounded corners=4pt, fill=violet!5, 
        inner sep=5pt, font=\footnotesize, align=center, 
        minimum width=3.2cm] (b3) at (0,1.2) 
        {$\Loc_G(M)$\\[3pt]
         \textbf{$(-1)$-shifted symplectic}};
  
  % BV label
  \node[draw=blue!50, thick, rounded corners=3pt, fill=blue!5, 
        inner sep=4pt, font=\scriptsize, align=center,
        minimum width=3.2cm] at (0,-0.3) 
        {BV symplectic structure\\[1pt]
         degree $-1$ antibracket\\[1pt]
         $\to$ \textbf{partition function}};
\end{scope}

% Connecting formula
\node[draw=black!50, thick, rounded corners=4pt, fill=yellow!8, 
      inner sep=6pt, font=\small, align=center] at (0,-1.8) 
      {PTVV: $\;\Loc_G(\Sigma)$ carries a canonical $(2-d)$-shifted symplectic structure};

\end{tikzpicture}
\caption{The PTVV hierarchy of shifted symplectic structures on character stacks (Theorem \ref{thm:PTVV}). A compact oriented $d$-manifold $\Sigma$ gives rise to $\Loc_G(\Sigma)$ with a $(2-d)$-shifted symplectic structure. The three cases $d = 1, 2, 3$ correspond precisely to the BFV, classical symplectic, and BV structures in the BV-BFV formalism. This dimensional pattern, produced by a single mechanism (AKSZ transgression of the Killing form via Poincar\'e duality), is the geometric foundation of the program.}
\label{fig:PTVV-hierarchy}
\end{figure}
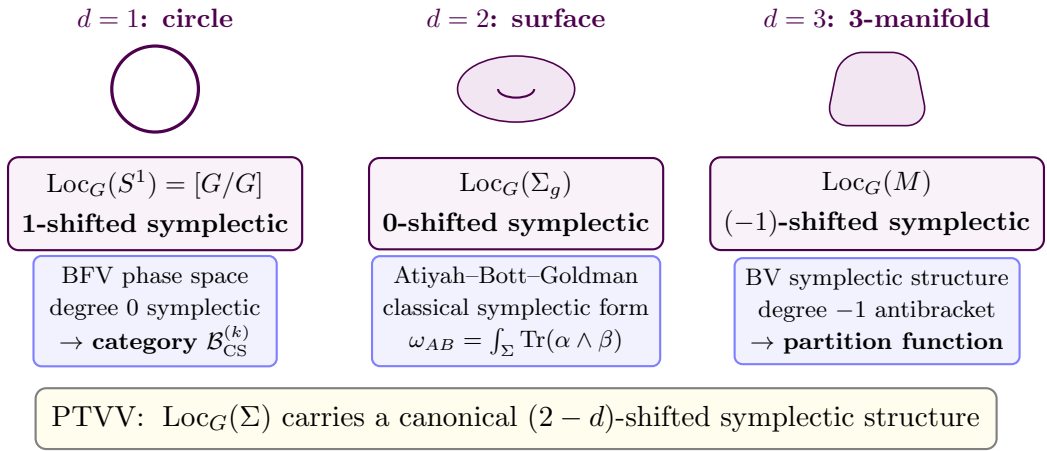

\begin{remark}[The PTVV mechanism]\label{rmk:PTVV-mechanism}
The shifted symplectic structure on\\ $\Loc_G(\Sigma) = \Map(\Sigma, BG)$ is constructed by a derived-geometric version of the \emph{AKSZ transgression}, using three ingredients:
\begin{enumerate}[label=(\alph*)]
\item The invariant pairing on $\fg$ (the Killing form), which endows the classifying stack $BG = [\mathrm{pt}/G]$ with a \emph{$2$-shifted symplectic structure}. This is a ``symplectic form of degree $2$'' in the sense that it pairs the tangent complex $\mathbb{T}_{BG} = \fg[1]$ with itself using the Killing form, producing a closed 2-form in $\mathcal{A}^{2,\mathrm{cl}}(BG, 2)$.

\item The \emph{Poincar\'e duality} of the source manifold $\Sigma$. For a compact oriented $d$-manifold, Poincar\'e duality provides a non-degenerate pairing $H^k(\Sigma) \otimes H^{d-k}(\Sigma) \to \CC$, which shifts the degree of the symplectic structure by $-d$.

\item The \emph{AKSZ transgression}: the derived mapping stack $\Map(\Sigma, BG)$ inherits a $(2-d)$-shifted symplectic structure by ``integrating'' the $2$-shifted structure on $BG$ over the fundamental class of $\Sigma$. This integration is the derived-geometric analogue of the classical AKSZ construction, which produces BV symplectic structures on mapping spaces between a source manifold with a volume form and a target with a symplectic structure.
\end{enumerate}
The beauty of this construction is that a single mechanism (transgression of the Killing form via Poincar\'e duality) simultaneously produces the BFV structure in dimension 1, the Atiyah--Bott form in dimension 2, and the BV structure in dimension 3. See also \cite{Calaque2017} for the shifted cotangent perspective on these structures.
\end{remark}

\subsection{Lagrangian structures and the BV-BFV correspondence}

The PTVV theorem tells us that the character stacks of the bulk and the boundary each carry shifted symplectic structures, of degrees $-1$ and $0$ respectively. The final ingredient from derived geometry is the observation that the relationship between bulk and boundary, i.e., the restriction of flat connections from $M$ to $\partial M$, is captured by a \emph{Lagrangian structure} on the restriction map. This provides a precise derived-geometric reformulation of the BV-BFV axiom.

\begin{definition}[Lagrangian morphism; PTVV \cite{PTVV2013}]\label{def:Lagrangian}
Let $(X, \omega)$ be an $n$-shifted symplectic derived stack. A morphism $f: L \to X$ is \emph{Lagrangian} if it is equipped with an isotropic structure, i.e., a homotopy $f^*\omega \sim 0$ witnessing the vanishing of the pullback of the symplectic form, such that the induced map on tangent/cotangent complexes $\mathbb{T}_L \to \mathbb{L}_L[n-1]$ is an equivalence. The isotropic condition says that $L$ is ``half-dimensional'' relative to $X$; the equivalence condition says it is maximally so (not merely coisotropic or isotropic in a weaker sense).
\end{definition}

The key theorem connecting this to the BV-BFV formalism is the following.

\begin{theorem}[Calaque \cite{Calaque2015}]\label{thm:AKSZ-PTVV}
Let $M$ be a compact oriented 3-manifold with boundary $\partial M = \Sigma$. The restriction map
\begin{equation}
\mathrm{res}: \Loc_G(M) \longrightarrow \Loc_G(\Sigma)
\end{equation}
carries a canonical Lagrangian structure, where $\Loc_G(M)$ has its $(-1)$-shifted symplectic structure and $\Loc_G(\Sigma)$ has its $0$-shifted symplectic structure.
\end{theorem}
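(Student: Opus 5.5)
The plan is to obtain the Lagrangian structure by a relative version of the AKSZ transgression in Remark \ref{rmk:PTVV-mechanism}. The transgression should be read off from \emph{relative} Poincar\'e--Lefschetz duality of the pair $(M,\Sigma)$ rather than from absolute Poincar\'e duality.

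First I would formalize the source side. The compact oriented 3-manifold $M$ with boundary $\Sigma$ gives Betti stacks with a map $\Sigma_B \to M_B$. The fundamental class $[M] \in H_3(M,\Sigma)$ satisfies $\partial[M]=[\Sigma]$, so $[M]$ is a nullhomotopy of the image of $[\Sigma]$ in $H_2(M)$. At the level of chains, this makes $\Sigma_B \to M_B$ an ``oriented cobordism'' (a relative $\mathcal{O}$-orientation in the PTVV sense) of dimension $3$.

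Second, I would transgress the $2$-shifted symplectic form $\omega_{BG}\in\mathcal{A}^{2,\mathrm{cl}}(BG,2)$ along the evaluation maps. Pulling it back to $\Map(\Sigma_B,BG)\times\Sigma_B$ and integrating against $[\Sigma]$ produces the $0$-shifted form $\omega_\Sigma$ of Theorem \ref{thm:PTVV}(ii). Integrating instead over $M$ against the chain $[M]$, and using Stokes $\int_M d=\int_\Sigma$, produces a homotopy $\mathrm{res}^*\omega_\Sigma\sim 0$ in $\mathcal{A}^{2,\mathrm{cl}}(\Loc_G(M),0)$. This homotopy is the isotropic structure. It is canonical because it depends only on $[M]$ and the Killing form, and the construction is functorial in the target.

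Third, I would check non-degeneracy pointwise. At a point $\rho$, Proposition \ref{prop:Loc-explicit}(ii) gives $\mathbb{T}_{\Loc_G(M)}\simeq C^\bullet(M;\fg_\rho)[1]$ and $\mathbb{T}_{\Loc_G(\Sigma)}\simeq C^\bullet(\Sigma;\fg_\rho)[1]$. The homotopy fiber of the restriction is therefore $C^\bullet(M,\Sigma;\fg_\rho)[1]$. The required equivalence $\mathbb{T}_{\mathrm{res}}\to\mathbb{L}_{\Loc_G(M)}[-1]$ then amounts to the statement that the pairing
\[
C^\bullet(M,\Sigma;\fg_\rho)\otimes C^\bullet(M;\fg_\rho)\to C^\bullet(M,\Sigma)\xrightarrow{\;\cap[M]\;}\CC[-3]
\]
(Killing form followed by cap product with $[M]$) is perfect. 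This is twisted Poincar\'e--Lefschetz duality with local coefficients, which holds because $\fg_\rho$ is self-dual via the nondegenerate invariant form. Equivalently, the comparison is the map of long exact sequences for the pair $(M,\Sigma)$ against its dual. Its outer terms are equivalences by absolute Poincar\'e duality on $\Sigma$ (already used in Theorem \ref{thm:PTVV}), and the five lemma then gives the remaining term.

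The main obstacle is the second step: producing the isotropic homotopy at the level of closed forms, with all higher de Rham compatibility, rather than just at the level of $2$-forms. To handle it, I would follow the PTVV/Calaque strategy of working in the derived category of mixed complexes. There, integration against a relative cycle is a morphism of graded mixed complexes, and the boundary formula $\int_M\circ d=\int_\Sigma\circ\mathrm{res}$ holds strictly. Having that strict formula is what makes the homotopy closed automatically. Non-degeneracy in the third step is then a purely linear check.
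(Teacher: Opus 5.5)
Your proposal reconstructs Calaque's own argument; the paper gives no proof and only cites it. In that argument the relative class $[M]$, viewed as a nullhomotopy of $i_*[\Sigma]$, transgresses $\omega_{BG}$ to the isotropic structure, and non-degeneracy reduces pointwise to Poincar\'e--Lefschetz duality for $(M,\Sigma)$ with coefficients in $\fg_\rho$. You are also right to state non-degeneracy as $\mathbb{T}_{\mathrm{res}}\simeq\mathbb{L}_{\Loc_G(M)}[-1]$ and never to use a $(-1)$-shifted symplectic form on $\Loc_G(M)$. For $\Sigma\neq\emptyset$ there is no such form. The degree $-1$ form $\int_{[M]}\mathrm{ev}^*\omega_{BG}$ is not closed; its differential is $\pm\mathrm{res}^*\omega_\Sigma$, so it is your nullhomotopy rather than a symplectic form. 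The corresponding phrase in the statement, and the condition $\mathbb{T}_L\to\mathbb{L}_L[n-1]$ in Definition \ref{def:Lagrangian}, should therefore be read in your relative form.

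One step does not hold as written: your alternative justification of non-degeneracy via the five lemma is circular. Your direct appeal to twisted Poincar\'e--Lefschetz duality makes it dispensable, so simply drop it. In the ladder comparing the long exact sequence of $(M,\Sigma)$ with its dual, only the maps $H^k(\Sigma;\fg_\rho)\to H^{2-k}(\Sigma;\fg_\rho)^\vee$ are known isomorphisms. The other two families, $H^k(M,\Sigma;\fg_\rho)\to H^{3-k}(M;\fg_\rho)^\vee$ and $H^k(M;\fg_\rho)\to H^{3-k}(M,\Sigma;\fg_\rho)^\vee$, are both instances of the duality you want. The five lemma at $H^k(M,\Sigma)$ needs the maps at $H^{k-1}(M)$ and $H^k(M)$ to be isomorphisms already.

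No argument using only Poincar\'e duality on $\Sigma$ and $\partial[M]=[\Sigma]$ can work. Take $X=M\vee S^1$, wedged at an interior point. The same class $[M]$ gives a commuting ladder, yet $\dim H^2(X,\Sigma;\CC)=\dim H^1(X;\CC)-1$, so the pairing is degenerate.

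The three-dimensional input has to come from the manifold structure of the interior. One option is Poincar\'e duality on the closed double $M\cup_\Sigma\overline{M}$, with $\fg_\rho$ pulled back along the fold map, fed into the Mayer--Vietoris ladder. There the five lemma does apply, because the unknown term is flanked by known ones. The other option is compactly supported duality on $M\setminus\Sigma$. In Calaque's formalism this is exactly the non-degeneracy condition built into the definition of a relative orientation.
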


This theorem provides a complete derived-geometric reinterpretation of the BV-BFV formalism for Chern--Simons theory. Every element of the BV-BFV data described in Section \ref{sec:BV-BFV} has a counterpart in the language of derived symplectic geometry. In particular, we have the following table.

\begin{center}
\begin{tabular}{c|c}
\textbf{BV-BFV} & \textbf{Derived geometry} \\ \hline
BV symplectic (bulk) & $(-1)$-shifted symplectic on $\Loc_G(M)$ \\
BFV symplectic (boundary) & $0$-shifted symplectic on $\Loc_G(\Sigma)$ \\
BV-BFV axiom & Lagrangian structure of restriction \\
BV action / CME & Derived critical locus \\
Gauge-fixing & Choice of Lagrangian in $\Loc_G(\Sigma)$
\end{tabular}
\end{center}

\medskip

This dictionary is the foundation of our bridge. On the left, we have the differential-geometric and perturbative framework of the BV-BFV formalism; on the right, the algebro-geometric and potentially non-perturbative framework of derived symplectic geometry. The two descriptions agree at the classical level (they encode the same moduli problem: flat connections on $M$ restricting to flat connections on $\Sigma$), and the central question is whether they also agree at the quantum level. The remainder of this paper is devoted to arguing that they do, with factorization homology providing the mechanism for the comparison. The quantization of shifted coisotropic and Lagrangian structures is developed in \cite{MS2018}; see also \cite{Safronov2020} for a shifted geometric quantization perspective.

\subsection{Deformation quantization of shifted symplectic stacks}

Having identified the classical geometric structure underlying both the BV-BFV and RT frameworks (the derived character stack with its shifted symplectic form) we now turn to quantization. The key result is that the shifted symplectic structures on character stacks can be deformation-quantized in a canonical way, and the output of this quantization is precisely the quantum group and its representation category.

\begin{theorem}[CPTVV \cite{CPTVV2017}, Safronov \cite{Safronov2017b}, Etingof--Kazhdan \cite{EK1996}, BBJ \cite{BBJ2018}]\label{thm:Safronov-quant}
Combining the shifted deformation quantization framework of \cite{CPTVV2017}, the identification of Poisson-Lie structures with shifted Poisson structures \cite{Safronov2017b}, and the Etingof--Kazhdan quantization theorem \cite{EK1996}, one obtains the following hierarchy of quantum objects from the character stack:
\begin{enumerate}[label=(\roman*)]
\item The classifying stack $BG$, with its $2$-shifted Poisson structure, quantizes to the quantum group $U_q(\fg)$, viewed as an $\EE_2$-algebra (equivalently, a braided Hopf algebra).
\item The adjoint quotient $[G/G]$, with its $1$-shifted Poisson structure, quantizes to the representation category $\Rep_q(G)$, viewed as an $\EE_1$-category (equivalently, a monoidal category).
\item For a closed surface $\Sigma_g$, the character stack $\Loc_G(\Sigma_g)$, with its $0$-shifted Poisson structure, quantizes to the factorization homology $\iint_{\Sigma_g} \Rep_q(G)$ (by BBJ \cite{BBJ2018}).
\end{enumerate}
These are formal deformations over $\CC[\![\hbar]\!]$ with $q = e^{\hbar/2}$.
\end{theorem}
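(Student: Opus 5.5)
The theorem is a synthesis of existing results, so the proof will assemble them in sequence rather than build anything new. The organizing principle is the AKSZ/PTVV hierarchy of Theorem \ref{thm:PTVV} and Remark \ref{rmk:PTVV-mechanism}. A $(2-d)$-shifted Poisson structure on a character stack should quantize to an $\EE_{2-d+1}$-type structure, one categorical level up: $BG$ plays the role of $d=0$, $[G/G]$ of $d=1$, and $\Loc_G(\Sigma_g)$ of $d=2$. The CPTVV formalism gives, for an $n$-shifted Poisson stack, a deformation quantization of its category of quasi-coherent sheaves as an $\EE_n$-monoidal category over $\CC[\![\hbar]\!]$. I will apply this at each level and then identify the output concretely.

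\textbf{Step 1 ($BG$).} Equip $BG$ with the $2$-shifted symplectic, hence Poisson, structure coming from the invariant pairing on $\fg$. Then $\QCoh(BG)=\Rep(G)$, and CPTVV yields an $\EE_2$-monoidal deformation over $\CC[\![\hbar]\!]$. To identify it, I will use the Drinfeld--Kohno picture: the $2$-shifted Poisson structure on $BG$ is an invariant symmetric tensor $t\in(\Sym^2\fg)^G$, and its $\EE_2$-quantizations are braided deformations of $\Rep(G)$ with infinitesimal braiding $t$. Etingof--Kazhdan \cite{EK1996}, using a Drinfeld associator, produce exactly $\Rep(U_\hbar\fg)$ with its braiding. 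Drinfeld--Kohno then identifies this category with $\Rep_q(G)$ for $q=e^{\hbar/2}$, which gives item (i) in its categorical form. \textbf{Step 2 ($[G/G]$).} Safronov \cite{Safronov2017b} identifies $1$-shifted Poisson structures on $[G/G]$ with quasi-Poisson data on $G$, namely the semiclassical limit of the reflection equation algebra. Combining this with $\Loc_G(S^1)=\Loc_G(\RR\times S^0)$ lets me transport Step 1 through factorization homology: the $\EE_1$-category is $\int_{S^1\times\RR}$ of the $\EE_2$-category from Step 1. This is the monoidal category $\Rep_q(G)$, equivalently modules over the reflection equation algebra $\mathcal{O}_q(G)$ in $\Rep_q(G)$. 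That gives item (ii). \textbf{Step 3 ($\Sigma_g$).} Build $\Sigma_g$ by $\otimes$-excision (Theorem \ref{thm:AF-excision}) from a punctured surface and a disk. Theorem \ref{thm:BBJ}(ii),(iv) presents $\iint_{\Sigma_g}\Rep_q(G)$ as a quantum Hamiltonian reduction of modules over a quantized $\mathcal{O}(G^{2g})$. On the classical side, $\Loc_G(\Sigma_g)$ is the $0$-shifted Lagrangian intersection, or symplectic reduction, of the fusion of $g$ copies of $\Loc_G(T^2\setminus D)$ along $[G/G]$. The Lagrangian-correspondence functoriality of CPTVV quantization, via the quantized coisotropics of \cite{MS2018}, then matches the two sides. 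That gives item (iii).

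\textbf{Main obstacle.} The hard part is the compatibility in Step 3, and similarly the uniqueness claims in Steps 1 and 2. CPTVV quantizations depend on a choice of formality/associator, so I have to show that the quantization of $\Loc_G(\Sigma_g)$ is compatible with excision, i.e. that quantization commutes with Lagrangian intersection and with the gluing used in factorization homology. My first attempt would be to prove this locally: check that the $\EE_2$-quantization of $BG$ is unique up to equivalence, using the vanishing of the relevant obstruction groups $H^{3}$ of invariants for simple $\fg$ together with the rigidity of the Drinfeld--Kohno braiding. Having pinned down the local datum, I would then appeal to the Ayala--Francis uniqueness of excisive extension (Remark \ref{rmk:cobordism-FH}), so that all global quantizations are determined by the disk value. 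One caveat: for (iii) to be fully rigorous, the correspondence can only be asserted at the level of formal categories over $\CC[\![\hbar]\!]$, with the Poisson-structure matching taken on formal neighborhoods.
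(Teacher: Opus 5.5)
The paper gives no proof of this statement beyond the attributions, so the question is whether your assembly actually goes through.

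Step~1 does go through, read categorically as you propose. CPTVV give an $\EE_2$-deformation of $\Rep(G)=\QCoh(BG)$ with infinitesimal braiding $t$, and Drinfeld's theorem identifies any such deformation with braided $\Rep_q(G)$.

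Step~2 fails at its last sentence. Your computation $\int_{S^1\times\RR}\Rep_q(G)\simeq\mathcal{O}_q(G)\text{-}\mathrm{mod}_{\Rep_q(G)}$ is BBJ's, and it is the right deformation of $\QCoh([G/G])$. But the ``equivalently'' that identifies it with the monoidal category $\Rep_q(G)$ is false, already at $\hbar=0$. A monoidal equivalence $\QCoh([G/G])\simeq\QCoh(BG)=\Rep(G)$ would match endomorphisms of the unit. These are the class functions $\mathcal{O}(G)^G$ on one side and $\CC$ on the other. After quantization, the unit $\mathcal{O}_q(G)$ has endomorphism algebra $\mathcal{O}_q(G)^{U_q(\fg)}$, the quantum class functions, still far from $\CC[\![\hbar]\!]$.

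So your argument does not prove (ii), and no argument can as (ii) is written: it conflates $\Loc_G(S^1)=[G/G]$ with $\Loc_G(D^2)=BG$. The monoidal category $\Rep_q(G)$ is just the underlying $\EE_1$-category of your Step~1 output. In BBJ's picture it is $\iint_{D^2}\Rep_q(G)$, a module category over $\int_{S^1\times\RR}\Rep_q(G)$. It matches the Lagrangian $BG\to[G/G]$, not $[G/G]$ itself. (Also, $\Loc_G(\RR\times S^0)=BG\times BG$; you mean $S^1\times\RR$.)

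For (iii) you locate the crux correctly, but the proposed fix is circular. The same circularity already sits in Step~2, which assumes that the quantization of $\Loc_G(S^1)$ is computed as $\int_{S^1\times\RR}$ of the quantization of $BG$.

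\begin{itemize}
\item Ayala--Francis uniqueness (Remark~\ref{rmk:cobordism-FH}) determines a theory by its disk value only once that theory is known to be a symmetric monoidal $\otimes$-excisive functor. You would first have to show that $\Sigma\mapsto\Quant(\Loc_G(\Sigma))$, with its module structures for surfaces with boundary, is excisive. That is exactly the compatibility at issue, and uniqueness of the local quantization of $BG$ says nothing about it. (That local uniqueness is Drinfeld's twist-equivalence theorem, not a vanishing statement: $(\Lambda^3\fg)^{\fg}\cong\CC$ for simple $\fg$, and it is the hexagon axiom that kills this class.)
\item The Lagrangian-intersection route does not supply the missing input either. \cite{MS2018} constructs quantizations of $n$-shifted coisotropic structures only for $n>1$. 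Your maps $\Loc_G(\Sigma_g^\circ)\to[G/G]$ and $BG\to[G/G]$ are Lagrangians in a $1$-shifted target, which is outside that range.
\item None of the cited works identifies the CPTVV quantization of the $0$-shifted intersection $\Loc_G(\Sigma_g)$ with the relative tensor product $\iint_{\Sigma_g^\circ}\Rep_q(G)\otimes_{\int_{S^1\times\RR}\Rep_q(G)}\Rep_q(G)$. For a $0$-shifted structure, that quantization is a $\CC[\![\hbar]\!]$-linear deformation of $\mathbf{Perf}(\Loc_G(\Sigma_g))$.
\item The paper asserts this identification by citation (Theorem~\ref{thm:BBJ-quant}). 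BBJ, however, relate their categories to character varieties only semiclassically: their algebras for punctured surfaces quantize the Fock--Rosly bracket.
\end{itemize}

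So (iii) remains a conjectural comparison. Restricting to formal neighbourhoods, as in your caveat, does not provide the missing excision statement.
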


The remarkable feature of this theorem is the compatibility across dimensions. In fact, quantizing the character stack of a surface can be computed either directly (as a deformation quantization of a $0$-shifted symplectic stack) or by first quantizing the local data (the $2$-shifted structure on $BG$, producing the quantum group) and then integrating over the surface via factorization homology. The two procedures give the same answer.

\begin{theorem}[Ben-Zvi--Brochier--Jordan \cite{BBJ2018}]\label{thm:BBJ-quant}
The deformation quantization of the character stack $\Loc_G(\Sigma)$ is computed by factorization homology of the quantized local data, so we have
\begin{equation}\label{eq:quant-as-FH}
\Quant\big(\Loc_G(\Sigma)\big) \;\simeq\; \iint_\Sigma \Rep_q(G).
\end{equation}
\end{theorem}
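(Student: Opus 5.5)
The plan is to prove \eqref{eq:quant-as-FH} by showing that both sides are excisive invariants of surfaces that agree on disks. The left-hand side is built from the derived character stack. The right-hand side satisfies $\otimes$-excision by Theorem \ref{thm:AF-excision}. Rather than quantize $\Loc_G(\Sigma)$ globally, I would exploit that $\Loc_G(-)=\Map((-)_B,BG)$ converts colimits of spaces into limits of stacks. Applying $\QCoh$ (which on these stacks behaves well enough to turn such limits into relative tensor products of categories) then shows that the classical side $\Sigma\mapsto \QCoh(\Loc_G(\Sigma))$ is itself a factorization homology. Concretely, it is $\iint_\Sigma \Rep(G)$, with $\Rep(G)=\QCoh(BG)$ regarded as a symmetric, hence $\EE_2$, category. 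This is the $\hbar=0$ version of the statement, and I would prove it first, as the base case.

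Next, I would define $\Quant$ locally and show that it is compatible with gluing. By Theorem \ref{thm:Safronov-quant}(i)--(ii), the $2$-shifted structure on $BG$ quantizes to the braided category $\Rep_q(G)$, which is an $\EE_2$-deformation of $\Rep(G)$ over $\CC[\![\hbar]\!]$. The $1$-shifted structure on $[G/G]=\Loc_G(S^1)$ quantizes to its monoidal ($\EE_1$) shadow. The key step is to interpret deformation quantization of an $n$-shifted stack as an $\EE_{n}$-monoidal deformation of $\QCoh$. Under this interpretation, AKSZ transgression along an oriented surface (Remark \ref{rmk:PTVV-mechanism}) corresponds to integrating the $\EE_2$-deformation over $\Sigma$, and the Lagrangian restriction maps (Theorem \ref{thm:AKSZ-PTVV}, in its lower-dimensional form for $\Sigma\supset\partial\Sigma$) quantize to module structures over the boundary circle categories. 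I would then define $\Quant(\Loc_G(\Sigma))$ for surfaces with boundary as the category of modules obtained this way, and check three things. First, it reduces to $\QCoh$ at $\hbar=0$. Second, on a disk it gives $\Rep_q(G)$, using Proposition \ref{prop:FH-cat}(ii). Third, it satisfies excision along collar gluings, because the quantized Lagrangian correspondences compose via relative tensor products over the quantized annulus category $\iint_{S^1\times\RR}\Rep_q(G)$.

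With locality and the disk value established, the uniqueness part of Remark \ref{rmk:cobordism-FH} gives the equivalence. An excisive symmetric monoidal functor on framed (here oriented, since the $\EE_2$-algebra is balanced) surfaces is determined by its value on $\RR^2$, so both sides agree. As a consistency check, I would verify the result on punctured surfaces. There, $\Sigma^\circ$ retracts onto a wedge of circles, $\Loc_G(\Sigma^\circ)\simeq [G^{2g+n-1}/G]$, and Theorem \ref{thm:BBJ}(ii) gives an explicit algebra, namely a Fock--Rosly / Alekseev--Grosse--Schomerus quantization. I would check that this algebra is a deformation quantization of the Fock--Rosly Poisson structure, which is the classical limit of the $0$-shifted form.

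I expect the main obstacle to be the second step: showing that the CPTVV quantization, which is only defined up to formality choices (Etingof--Kazhdan or Tamarkin), is genuinely local and compatible with the gluing maps. Without this, ``$\Quant$'' is not known to be an excisive functor at all. My first attempt would be to fix one Drinfeld associator throughout, so that the quantizations of $BG$, $[G/G]$ and the restriction Lagrangians all come from a single $\EE_2$-formality morphism, and to show that coherence for this morphism yields the required gluing equivalences. Closed surfaces would then follow from the punctured case by the quantum Hamiltonian reduction in Theorem \ref{thm:BBJ}(iv).
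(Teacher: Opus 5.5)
Your argument has a genuine gap exactly where you suspect it, and the proposed remedy does not reach it.

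\textbf{The locality step is missing.} The Ayala--Francis uniqueness you invoke applies only to a symmetric monoidal functor on oriented surfaces and embeddings, including the isotopy and diffeomorphism data, that satisfies $\otimes$-excision. Nothing in the shifted quantization framework of CPTVV and Safronov produces such a functor. Quantization there is carried out stack by stack, and there is no available theorem that it commutes with AKSZ transgression. Nor is there one saying that quantized Lagrangians $\Loc_G(\Sigma)\to\Loc_G(\partial\Sigma)$ compose by relative tensor products over the quantized annulus category.

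\textbf{Defining $\Quant$ by gluing makes the argument circular.} Suppose you instead \emph{define} $\Quant$ on surfaces with boundary by gluing such quantized Lagrangians. Then excision holds by construction, and the uniqueness step only shows that your glued object is $\iint_\Sigma\Rep_q(G)$. The substantive claim is never addressed: that this agrees with an intrinsic deformation quantization of the $0$-shifted symplectic stack $\Loc_G(\Sigma_g)$.

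\textbf{Further points.}
\begin{itemize}
\item Such quantizations are not unique; already for ordinary symplectic manifolds they carry $H^2_{dR}[\![\hbar]\!]$-type moduli. So agreement at $\hbar=0$ and on disks cannot pin down $\Quant(\Loc_G(\Sigma_g))$ without excision.
\item Fixing a single Drinfeld associator removes ambiguity in the local inputs, but it supplies no gluing theorem for the global quantization.
\item Passing from punctured to closed surfaces by quantum Hamiltonian reduction is again a factorization-homology computation, not a comparison with $\Quant$.
\end{itemize}

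\textbf{What the paper relies on.} The paper gives no argument of its own here; it cites \cite{BBJ2018,BBJ2018b}. There, in effect, your ``consistency check'' is the main proof, and the right-hand side serves as the definition of the quantum character variety.
\begin{itemize}
\item At $q=1$ one recovers $\QCoh(\Loc_G(\Sigma))$. Your base case is correct and standard, via the Ben-Zvi--Francis--Nadler formula for $\QCoh$ of fiber products of perfect stacks.
\item For punctured surfaces, monadic reconstruction identifies $\iint_{\Sigma^\circ}\Rep_q(G)$ with modules in $\Rep_q(G)$ over an explicit algebra. This algebra is a braided tensor product of $2g+n-1$ copies of the reflection equation algebra $\mathcal{O}_q(G)$: a flat deformation of $\mathcal{O}(G^{2g+n-1})$ with Fock--Rosly semiclassical limit.
\item Closed surfaces then follow by quantum Hamiltonian reduction.
\end{itemize}
So ``deformation quantization'' is established there only in this explicit sense, not as a comparison theorem with an independently defined CPTVV quantization. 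Your route needs such a comparison, and if completed it would go beyond what is proved in \cite{BBJ2018}.

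\textbf{A smaller slip.} The $1$-shifted structure on $[G/G]$ does not quantize to the monoidal shadow of $\Rep_q(G)$. That shadow has classical limit $\QCoh(BG)=\Rep(G)$, not $\QCoh([G/G])$. The quantization of $[G/G]$ is the annulus category $\iint_{S^1\times\RR}\Rep_q(G)\simeq\mathcal{O}_q(G)\text{-mod}_{\Rep_q(G)}$ with its stacking product. This is the category your excision step actually uses; the paper's Theorem \ref{thm:Safronov-quant}(ii) is loosely worded on this point.
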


\begin{remark}
This is the second pillar of our bridge. The RT state spaces arise from quantizing the same geometric object (namely, the character stack) that underlies the BV-BFV formalism. The classical phase spaces are identical; the question is whether the quantization procedures also agree.
At a root of unity, after semisimplification, the factorization homology of a closed surface $\Sigma_g$ yields $(\mathbf{H},\, u_{\Sigma_g})$ with $u_{\Sigma_g} \cong V_{\mathrm{RT}}(\Sigma_g)$ (Theorem \ref{thm:AKZ}).
\end{remark}

\begin{remark}[Quantum character varieties]\label{rmk:QCV}
The term \emph{quantum character variety}, introduced by Ben-Zvi--Brochier--Jordan \cite{BBJ2018}, refers to the factorization homology $\iint_\Sigma \Rep_q(G)$. The content of Theorem \ref{thm:BBJ-quant} is that this coincides with the deformation quantization $\Quant(\Loc_G(\Sigma))$ of the derived character stack, so the two descriptions are equivalent. In the diagram of Conjecture \ref{conj:main}, the arrow labeled ``BBJ'' is this identification; the substantive open problem is the arrow labeled ``Step 2,'' which asserts that the BV-BFV quantization agrees with the algebro-geometric deformation quantization $\Quant(\Loc_G(\Sigma))$.
\end{remark}

\begin{remark}[The quantization parameter]\label{rmk:q-parameter}
The deformation quantization of Theorem \ref{thm:Safronov-quant} produces formal deformations over the power series ring $\CC[\![\hbar]\!]$, with $q = e^{\hbar/2}$. The modular tensor category, however, lives at a specific non-formal value that is given by $\hbar = 2\pi i/(k+h^\vee)$, i.e., $q = e^{i\pi/(k+h^\vee)}$. Passing from the formal to the actual value requires an \emph{algebraization step}. One must show that the formal power series deformation extends to an algebraic family over $\CC[\hbar]$ (or a suitable localization) that can be specialized. For quantum groups, this extension is known to exist thanks to the explicit Drinfeld--Jimbo construction, but establishing it purely from the deformation-quantization perspective remains a non-trivial step, and it is one of the key technical challenges in our program.
\end{remark}

\begin{remark}[Comparison with geometric quantization]\label{rmk:geom-quant}
There is an alternative, more classical approach to quantizing the character variety. It can be done by \emph{geometric quantization}. One chooses a complex structure on $\Sigma$, which identifies $\Loc_G(\Sigma)$ with a moduli space of holomorphic bundles, and then defines the quantum Hilbert space as the space of holomorphic sections of a certain line bundle (the determinant bundle at level $k$). Different choices of complex structure give a priori different Hilbert spaces, but the celebrated \emph{Hitchin connection} provides a projectively flat connection on the resulting bundle of Hilbert spaces over Teichm\"uller space, ensuring that the quantization is independent of the complex structure up to projective equivalence. This produces the same projective representations of $\MCG(\Sigma_g)$ as the RT theory.
The deformation quantization approach of Theorem \ref{thm:Safronov-quant} has the advantage of being manifestly algebraic, i.e., it does not require a choice of complex structure, and of being naturally compatible with factorization homology. On the other hand, geometric quantization gives more direct access to the Hilbert space structure (inner products, unitarity). The geometric quantization approach was initiated by Hitchin \cite{Hitchin1990}; for further developments connecting it to the RT representations, see \cite{Andersen2012}. Establishing a rigorous equivalence between the two quantization procedures, beyond the level of projective MCG representations, remains an important open problem and a consistency check for our program.
\end{remark}

\section{Main Conjectures and Proof Strategy}\label{sec:bridge}

We now assemble the ingredients of the previous sections into precise conjectural statements and develop the strategy for establishing them. The logic is as follows: from the BV-BFV quantization of Chern--Simons theory, we extract an $\EE_2$-category; we conjecture that this category is equivalent to the modular tensor category used in the RT construction; and we argue that this equivalence, together with the factorization homology machinery, implies the identification of the two TQFT functors.

\subsection{The $\EE_2$-category from BV-BFV}

Throughout this section and the Main Conjecture, we work in the $(3\text{-}2\text{-}1)$-extended TQFT framework of Definition \ref{def:extended-TQFT}, in which the circle $S^1$ is the lowest-dimensional stratum and is assigned an $\EE_2$-category. The fully extended $(3\text{-}2\text{-}1\text{-}0)$-framework (assigning data to points) is discussed separately in Section \ref{sec:outlook}.

The first step is to extract an $\EE_2$-monoidal category from the BV-BFV quantization of Chern--Simons theory. The key geometric input is the topology of the disk. The little 2-disks operad acts on configurations of sub-disks inside a larger disk, and this action translates into algebraic structure on the BFV state space.

\begin{construction}\label{constr:E2-from-BVBFV}
Consider Chern--Simons theory on the cylinder $D^2 \times [0,1]$, where $D^2$ is a disk. We take the following points into account.

\begin{enumerate}[label=(\roman*)]
\item To the spatial boundary circle $S^1 = \partial D^2$, the extended TQFT assigns a \emph{category} $\mathcal{B}_{\CS}^{(k)}$ of boundary conditions (not a vector space; the BFV state space is assigned to codimension-1 surfaces, while $S^1$ is codimension-2).

\item \textbf{Vertical composition.} Stacking two cylinders $D^2 \times [0,1]$ end to end (gluing along a common $D^2$ face) gives a longer cylinder $D^2 \times [0,2]$. At the level of partition functions, this gluing is computed by the BV-BFV pairing over the BFV state space $\mathcal{H}_{D^2}$ of the intermediate disk. For fixed boundary conditions on the lateral circle $S^1 = \partial D^2$, this defines composition of morphisms in the category $\mathcal{B}_{\CS}^{(k)}$, giving it an $\EE_1$-structure.

\item \textbf{Horizontal juxtaposition.} Embedding two disjoint smaller disks $D^2_1 \sqcup D^2_2$ into a larger disk $D^2$ and considering the partition function on $(D^2_1 \sqcup D^2_2) \times [0,1]$ versus $D^2 \times [0,1]$ gives a second multiplication, parametrized by the space of such embeddings.
\end{enumerate}
The two operations (ii) and (iii) are compatible in the precise sense required by the little 2-disks operad. By the recognition principle for $\EE_2$-algebras, they endow the category $\mathcal{B}_{\CS}^{(k)}$ of boundary conditions (objects: boundary conditions on the lateral circle $S^1 = \partial D^2$; morphisms: BV-BFV partition functions on the annulus) with an $\EE_2$-monoidal structure, i.e., a braided monoidal structure.
\end{construction}

The central question is: which braided monoidal category does this construction produce?

\begin{construction}[The category $\mathcal{B}_{\CS}^{(k)}$]\label{constr:BCS}
We define the category $\mathcal{B}_{\CS}^{(k)}$ as the category of 
boundary conditions for Chern--Simons theory on the disk $D^2$. 
Consider CS theory on the 3-manifold $D^2 \times [0,1]$, where $D^2$ 
is the spatial slice and $[0,1]$ is the time direction. The full 
boundary decomposes as 
\[
\partial(D^2 \times [0,1]) = \underbrace{D^2 \times \{0,1\}}_{\text{temporal: incoming/outgoing}} \;\cup\; \underbrace{S^1 \times [0,1]}_{\text{spatial: lateral boundary}}.
\]
The BV-BFV formalism assigns state spaces to the temporal 
boundary components $D^2 \times \{0,1\}$, but the partition function 
depends on the choice of \emph{boundary condition} imposed on the 
lateral boundary $S^1 \times [0,1]$. The \emph{objects} of 
$\mathcal{B}_{\CS}^{(k)}$ are these boundary conditions. In the 
extended TQFT language, they correspond to the data assigned to the 
codimension-2 stratum $S^1 = \partial D^2$; physically, they are 
the labels for Wilson lines ending on the boundary circle. 
The \emph{morphisms} $\Hom(V, W)$ are defined by the BV-BFV 
partition functions on the annulus $S^1 \times [0,1]$ interpolating 
between boundary conditions $V$ and $W$, and composition is given by 
the gluing formula (Theorem \ref{thm:BV-gluing}). The \emph{monoidal 
structure} is defined by the pair-of-pants cobordism. Given boundary 
conditions $V, W$ on two smaller disks $D^2_1, D^2_2 \subset D^2$, 
their tensor product $V \otimes W$ is the boundary condition on 
$\partial D^2$ obtained by evaluating the BV-BFV partition function 
on the trinion $D^2 \setminus (D^2_1 \sqcup D^2_2)$. The 
\emph{braiding} arises from exchanging the two inner disks by a 
half-rotation in $D^2$, which is a non-trivial path in $\EE_2(2) 
\simeq S^1$. The resulting $\EE_2$-monoidal structure encodes the 
braided monoidal category of boundary conditions for CS theory on 
the disk. At the perturbative level (over $\CC[\![\hbar]\!]$), the objects of 
$\mathcal{B}_{\CS}^{(k)}$ can be identified with representations of 
the quantum group $U_\hbar(\fg)$ (these are the Wilson line labels) with the braiding given by the universal $R$-matrix, but the 
definition above is independent of this identification. See Figure \ref{fig:cylinder} and \ref{fig:E2-structure} for a visualization.
\end{construction}

\begin{figure}[ht]
\centering
\begin{tikzpicture}[scale=1.3]

% --- Left: the solid cylinder ---

% Back half of top ellipse (drawn first, behind the body)
\draw[blue!50, thick, dashed] (0,3) ellipse (1.5 and 0.5);

% Lateral surface
\fill[red!8] (-1.5,0) -- (-1.5,3) arc(180:360:1.5 and 0.5) -- (1.5,0) arc(360:180:1.5 and 0.5) -- cycle;
\draw[red!60!black, thick] (-1.5,0) -- (-1.5,3);
\draw[red!60!black, thick] (1.5,0) -- (1.5,3);

% Bottom disk (filled)
\fill[blue!12] (0,0) ellipse (1.5 and 0.5);
\draw[blue!60!black, thick, dashed] (0,0) ellipse (1.5 and 0.5);

% Top disk (filled, front half only to show depth)
\fill[blue!12, opacity=0.7] (0,3) ellipse (1.5 and 0.5);
\draw[blue!60!black, thick] (0,3) ellipse (1.5 and 0.5);

% Front half of lateral surface border
\draw[red!60!black, thick] (-1.5,0) arc(180:360:1.5 and 0.5);
\draw[red!60!black, thick] (-1.5,3) arc(180:360:1.5 and 0.5);

% Labels on the cylinder
\node[blue!60!black, font=\footnotesize\bfseries] at (0,-0.05) {$D^2 \times \{0\}$};
\node[blue!60!black, font=\footnotesize\bfseries] at (0,3) {$D^2 \times \{1\}$};
\node[red!60!black, font=\footnotesize\bfseries] at (2.2,1.5) {$S^1 \!\times\! [0,1]$};

% Time arrow
\draw[-{stealth}, thick, black!50] (-2.3,0) -- (-2.3,3) node[midway, left, font=\footnotesize] {time};

% --- Right: the schematic decomposition ---

%\node[font=\normalsize] at (5.3,1.5) {$\partial(D^2 \times [0,1])\;=$};

% Temporal part
\node[draw=blue!60!black, thick, rounded corners=4pt, fill=blue!8, 
      minimum width=2.8cm, minimum height=1cm, align=center, 
      font=\footnotesize] (temp) at (4.5,2.7) 
      {\textbf{Temporal boundary}\\[1pt] 
       $D^2 \times \{0\} \;\cup\; D^2 \times \{1\}$\\[1pt]
       \color{blue!40!black}\scriptsize codim-1 surfaces: BFV state spaces};

% Spatial part
\node[draw=red!60!black, thick, rounded corners=4pt, fill=red!8, 
      minimum width=2.8cm, minimum height=1cm, align=center, 
      font=\footnotesize] (spat) at (4.5,0.3) 
      {\textbf{Lateral boundary}\\[1pt] 
       $S^1 \times [0,1]$\\[1pt]
       \color{red!40!black}\scriptsize codim-2 corner $S^1 = \partial D^2$:\\ 
       \color{red!40!black}\scriptsize category of boundary conditions};

% Cup symbol
%\node[font=\normalsize] at (9.5,1.5) {$\cup$};

\end{tikzpicture}
\caption{The boundary decomposition of the cylinder $D^2 \times [0,1]$. The temporal boundary components $D^2 \times \{0,1\}$ (blue) are codimension-1 surfaces to which the BFV formalism assigns state spaces. The lateral boundary $S^1 \times [0,1]$ (red) contains the codimension-2 corner $S^1 = \partial D^2$, to which the extended theory assigns the \emph{category} $\mathcal{B}_{\CS}^{(k)}$ of boundary conditions. The objects of $\mathcal{B}_{\CS}^{(k)}$ are the choices of boundary condition on this lateral circle.}
\label{fig:cylinder}
\end{figure}
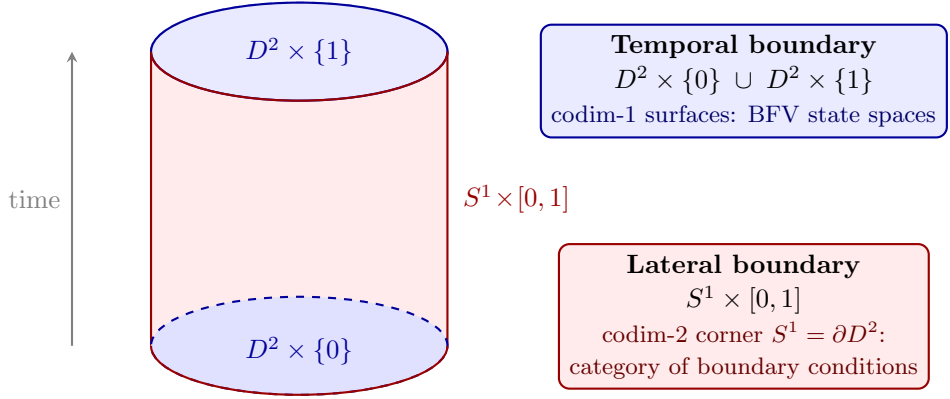

\begin{figure}[ht]
\centering
\begin{tikzpicture}[scale=1.1]

% === Panel (a): Vertical composition ===

\node[font=\bfseries] at (-5.5,3.5) {(a) Vertical composition ($\EE_1$)};

% Bottom cylinder
\fill[blue!8]
  (-7,0) -- (-7,1.22)
  arc(180:360:1.5 and 0.35)
  -- (-4,0)
  arc(360:180:1.5 and 0.35) -- cycle;

\draw[blue!50, thick] (-7,0) -- (-7,1.3);
\draw[blue!50, thick] (-4,0) -- (-4,1.3);

\fill[blue!15] (-5.5,0) ellipse (1.5 and 0.35);
\draw[blue!60!black, thick, dashed] (-5.5,0) ellipse (1.5 and 0.35);
\draw[blue!60!black, thick]
  (-7,0) arc[start angle=180,end angle=360,x radius=1.5,y radius=0.35];

% Top ellipse of lower cylinder / gluing interface
\draw[red!60!black, thick]
  (-7,1.3) arc[start angle=180,end angle=360,x radius=1.5,y radius=0.35];
\draw[red!60!black, thick, dashed]
  (-4,1.3) arc[start angle=0,end angle=180,x radius=1.5,y radius=0.35];

\node[blue!60!black, font=\footnotesize\bfseries] at (-5.5,0.65) {$M_1$};

% Top cylinder
\fill[blue!8]
  (-7,1.5) -- (-7,2.8)
  arc(180:360:1.5 and 0.35)
  -- (-4,1.5)
  arc(360:180:1.5 and 0.35) -- cycle;

\draw[blue!50, thick] (-7,1.5) -- (-7,2.8);
\draw[blue!50, thick] (-4,1.5) -- (-4,2.8);

\fill[blue!15, opacity=0.7] (-5.5,2.8) ellipse (1.5 and 0.35);

% redraw lower ellipse of upper cylinder on top of the fill
\draw[blue!60!black, thick]
  (-7,1.5) arc[start angle=180,end angle=360,x radius=1.5,y radius=0.35];
\draw[blue!60!black, thick, dashed]
  (-4,1.5) arc[start angle=0,end angle=180,x radius=1.5,y radius=0.35];

% top ellipse of upper cylinder
\draw[blue!60!black, thick]
  (-7,2.8) arc[start angle=180,end angle=360,x radius=1.5,y radius=0.35];
\draw[blue!60!black, thick]
  (-4,2.8) arc[start angle=0,end angle=180,x radius=1.5,y radius=0.35];

\node[blue!60!black, font=\footnotesize\bfseries] at (-5.5,2.15) {$M_2$};

% Gluing surface label
\draw[red!60!black, thick, ->] (-3.3,1.4) -- (-2.8,1.4);
\node[red!60!black, font=\footnotesize, right] at (-2.8,1.4) {glue along $D^2$};

% === Panel (b): Braiding ===

\node[font=\bfseries] at (2.5,3.5) {(b) Braiding ($\EE_2$)};

% Big disk
\fill[blue!5] (2.5,0.8) circle (2);
\draw[blue!60!black, thick] (2.5,0.8) circle (2);

% Small disk 1
\fill[green!15] (1.5,1.3) circle (0.55);
\draw[green!50!black, thick] (1.5,1.3) circle (0.55);
\node[green!50!black, font=\footnotesize\bfseries] at (1.5,1.3) {$V$};

% Small disk 2
\fill[orange!15] (3.5,0.7) circle (0.55);
\draw[orange!60!black, thick] (3.5,0.7) circle (0.55);
\node[orange!60!black, font=\footnotesize\bfseries] at (3.5,0.7) {$W$};

% Exchange arrows
\draw[-{stealth}, thick, red!60!black, 
      decorate, decoration={snake, amplitude=1pt, segment length=4pt}] 
      (2.0,1.7) to[bend left=30] (3.0,1.3);
\draw[-{stealth}, thick, red!60!black, 
      decorate, decoration={snake, amplitude=1pt, segment length=4pt}] 
      (3.0,0.3) to[bend left=30] (2.0,0.9);

\node[red!60!black, font=\scriptsize] at (2.5,3) {half-rotation in $\EE_2(2) \simeq S^1$};

\end{tikzpicture}
\caption{The two operations generating the $\EE_2$-structure on $\mathcal{B}_{\CS}^{(k)}$. 
\textbf{(a)} Vertical composition: stacking two cylinders and gluing along the intermediate disk $D^2$ via the BFV pairing gives the $\EE_1$-structure. 
\textbf{(b)} Braiding: two smaller disks carrying boundary conditions $V, W$ are embedded in a larger disk $D^2$; the trinion (complement) defines $V \otimes W$, and exchanging the disks by a half-rotation produces $c_{V,W}$.}
\label{fig:E2-structure}
\end{figure}
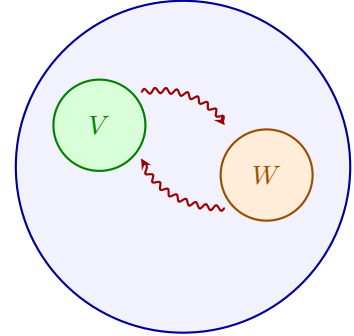

\begin{remark}[Boundary conditions as Wilson lines]\label{rmk:bc-Wilson}
The ``boundary conditions on $S^1 = \partial D^2$'' appearing in 
Construction \ref{constr:BCS} admit a concrete physical interpretation. They are \emph{Wilson line labels}. In Chern--Simons theory, a Wilson line is an observable associated to a representation $V$ of $G$, supported on an embedded curve. When the curve is the boundary circle 
$\partial D^2$, the Wilson line imposes a boundary condition on the fields. In particular, the holonomy of the connection around $\partial D^2$ is 
weighted by the character of~$V$. In the RT framework, the simple objects of $\overline{\Rep}_q(G)$ (the irreducible representations in the Weyl alcove $\mathcal{A}_k$) correspond precisely to the 
distinct Wilson line types, and the morphism spaces correspond to local operators at junctions between Wilson lines. Thus the objects of $\mathcal{B}_{\CS}^{(k)}$ are Wilson line labels and the category 
structure encodes their fusion and braiding, which is the physical origin of the braided monoidal structure described in Construction \ref{constr:E2-from-BVBFV}.
We emphasize that this physical picture, while well-established in the physics literature (see e.g.\ \cite{Witten1989}), has not been formulated as a rigorous theorem within the BV-BFV framework. Making 
the notion of ``boundary condition on a codimension-2 stratum'' mathematically precise in the CMR setting is one of the open problems discussed in Section \ref{sec:outlook}.
\end{remark}

\begin{remark}\label{rmk:extended-BVBFV}
Construction \ref{constr:E2-from-BVBFV} and \ref{constr:BCS} implicitly use an \emph{extended} version of the BV-BFV formalism, in which data is assigned not only to codimension-1 boundaries (surfaces, carrying BFV phase spaces and vector spaces of states) but also to codimension-2 corners (circles, carrying categories of boundary conditions). In the standard CMR formalism \cite{CMR2014,CMR2018}, Chern--Simons theory on a 3-manifold $M$ with boundary $\Sigma = \partial M$ assigns a BFV phase space to the surface $\Sigma$ and a state in its quantization to the bulk $M$. The category $\mathcal{B}_{\CS}^{(k)}$ arises one level lower in the extended TQFT hierarchy. The circle $S^1$, viewed as a codimension-2 stratum (the boundary of the spatial slice $D^2$), is assigned a category whose objects are boundary conditions on $S^1$ and whose morphisms are partition functions on the annulus $S^1 \times [0,1]$. The $\EE_2$-structure comes from the 2-dimensional geometry of the spatial slice $D^2$. The extension of the BV-BFV formalism to this codimension-2 setting is initiated in \cite{CMR2014} (which treats manifolds with corners) and is closely related to the factorization algebra approach of \cite{CG2017,CG2021}; a complete derived-geometric formulation remains part of the program outlined in Section \ref{sec:outlook}. We note that the $\EE_2$-structure on $\mathcal{B}_{\CS}^{(k)}$ described 
in Constructions \ref{constr:E2-from-BVBFV} and \ref{constr:BCS} is at present a heuristic construction. It relies on the BV-BFV formalism applied to manifolds with corners (the complement of sub-disks in $D^2$, 
times an interval), which goes beyond the standard CMR framework. On the algebraic side, the $\EE_2$-structure on $\overline{\Rep}_q(G)$ is well-established (it comes from the quasi-triangular Hopf algebra structure on $U_q(\fg)$). The conjecture that the BV-BFV construction 
reproduces this same $\EE_2$-structure is part of the content of Conjecture \ref{conj:E2-equiv}.
\end{remark}

\begin{conjecture}[$\EE_2$-equivalence]\label{conj:E2-equiv}
The $\EE_2$-monoidal category $\mathcal{B}_{\CS}^{(k)}$ obtained from BV-BFV quantization of Chern--Simons theory at level $k$ with gauge group $G$ is equivalent, as an $\EE_2$-category, to the ribbon category $\overline{\Rep}_q(G)$ at $q = e^{i\pi/(k+h^\vee)}$:
\begin{equation}
\mathcal{B}_{\CS}^{(k)} \;\simeq_{\EE_2}\; \overline{\Rep}_q(G).
\end{equation}
\end{conjecture}

\begin{remark}\label{rmk:E2-evidence}
At the perturbative level (working over $\CC[\![\hbar]\!]$ rather than specializing $q$), this conjecture is closely related to the Etingof--Kazhdan quantization theorem \cite{EK1996}, which shows that every Lie bialgebra can be quantized to produce a quantum group, giving a canonical $\EE_2$-deformation of $\Rep(G)$. The classical $r$-matrix underlying the Lie bialgebra structure on $\fg$ is precisely the datum encoded in the $1$-shifted Poisson structure on $[G/G]$. The content of Conjecture \ref{conj:E2-equiv} is that the $\EE_2$-deformation produced by the BV-BFV quantization of Chern--Simons theory on the disk agrees with the one produced by the Etingof--Kazhdan algebraic machinery, and moreover that this agreement persists after specializing to a root of unity and passing to the semisimple quotient. See also \cite{CostelloYangian} for a related perspective connecting Costello's 4-dimensional Chern--Simons theory to quantum group structures via factorization algebras.
\end{remark}

\begin{remark}\label{rmk:conventions-extended}
We work with the $(3\text{-}2\text{-}1)$-extended TQFT convention 
of Definition \ref{def:extended-TQFT}, in which the circle $S^1$ 
is the lowest-dimensional object and is assigned the $\EE_2$-category 
$\mathcal{B}_{\CS}^{(k)}$. Factorization homology of this 
$\EE_2$-algebra over a surface $\Sigma$ produces the category 
$\iint_\Sigma \mathcal{B}_{\CS}^{(k)}$ assigned to $\Sigma$. 
This should be distinguished from the fully extended 
$(3\text{-}2\text{-}1\text{-}0)$-convention used in the Cobordism 
Hypothesis (Theorem \ref{thm:cobordism-hyp}), where the point is 
the lowest-dimensional object, $\mathcal{B}_{\CS}^{(k)}$ is the 
value at a point, and $S^1$ is assigned the Drinfeld center 
$\mathcal{Z}(\mathcal{B}_{\CS}^{(k)})$. The two conventions are 
equivalent once the $\EE_2$-algebra is fixed, but the distinction 
matters when interpreting the BV-BFV construction. The cylinder 
$D^2 \times [0,1]$ with its boundary conditions on 
$S^1 = \partial D^2$ produces the $\EE_2$-algebra itself (the value 
at the circle in the $(3\text{-}2\text{-}1)$-convention, or the 
value at a point in the fully extended convention).
\end{remark}

\subsection{The main identification}

Assuming Conjecture \ref{conj:E2-equiv}, the factorization homology machinery of Section \ref{sec:fact-homology} would propagate the local identification (the $\EE_2$-category assigned to a circle) to the surface-level data (the categories and distinguished objects assigned to closed surfaces). Extending this further to a full $(3\text{-}2\text{-}1)$-extended TQFT, including cobordism maps, requires the additional conjectural input of quantized Lagrangian correspondences (Conjecture \ref{conj:Lagrangian-quant} below). With this caveat, we state the central conjecture of the paper.

\begin{conjecture}[Main Conjecture]\label{conj:main}
There exists a natural equivalence of $(3\text{-}2\text{-}1)$-extended topological quantum field theories
\begin{equation}
Z_{\mathrm{BV,np}}^{(k)} \;\simeq\; Z_{\mathrm{RT}}^{(k)} \;:\; \Bord_3^{\mathrm{ext}} \longrightarrow 2\text{-}\Vect,
\end{equation}
where $Z_{\mathrm{BV,np}}^{(k)}$ denotes the non-perturbative completion of the BV-BFV partition function at level $k$ and $2\text{-}\Vect$ is the $2$-category of $\CC$-linear categories, functors, and natural transformations. At each categorical level, the equivalence identifies:
\begin{enumerate}[label=(\roman*)]
\item \emph{Circles:} the $\EE_2$-category $\mathcal{B}_{\CS}^{(k)} \simeq \overline{\Rep}_q(G)$ (Conjecture \ref{conj:E2-equiv});
\item \emph{Closed surfaces:} the factorization homology categories $\iint_{\Sigma_g} \overline{\Rep}_q(G) \simeq (\mathbf{H},\, u_{\Sigma_g})$, with $u_{\Sigma_g} = V_{\mathrm{RT}}(\Sigma_g)$ (Theorem \ref{thm:AKZ});
\item \emph{3-cobordisms:} the linear maps between RT state spaces $\Zrt(M): V_{\mathrm{RT}}(\Sigma_1) \to V_{\mathrm{RT}}(\Sigma_2)$.
\end{enumerate} More precisely, for every closed oriented surface $\Sigma_g$, the BV-BFV state space is identified with the RT state space via the chain
\begin{equation*}
\begin{tikzcd}[column sep=2.5em, row sep=2.5em]
\mathcal{H}_{\Sigma_g}^{\BV} \arrow[d, dashed, leftrightarrow, "\sim"'] & \Quant\big(\Loc_G(\Sigma_g)\big) \arrow[l, "\textnormal{Step 2}"'] \arrow[r, "\sim", "\textnormal{BBJ}"'] & \displaystyle\iint_{\Sigma_g} \Rep_q(G) \arrow[d, "\textnormal{semisimplify}"] \\
V_{\mathrm{RT}}(\Sigma_g) & (\Vect,\, u_{\Sigma_g}) \arrow[l, "u_{\Sigma_g} \,\mapsto"'] \arrow[r, leftarrow, "\sim"', "\textnormal{AKZ}"] & \displaystyle\iint_{\Sigma_g} \overline{\Rep}_q(G)
\end{tikzcd}
\end{equation*}
where the distinguished object $u_{\Sigma_g}$ is the RT state space $V_{\mathrm{RT}}(\Sigma_g)$. For every oriented 3-dimensional cobordism $M: \Sigma_1 \to \Sigma_2$, the cobordism induces a functor $\iiint_M$ between the factorization homology categories $\iint_{\Sigma_i} \mathcal{B}_{\CS}^{(k)}$. After the identification with $\overline{\Rep}_q(G)$ via Conjecture \ref{conj:E2-equiv}, and the collapse $\iint_{\Sigma_i} \overline{\Rep}_q(G) \simeq (\Vect,\, u_{\Sigma_i})$ via AKZ, this functor induces a linear map between the distinguished objects. The conjecture asserts that this linear map agrees with the RT partition function:
\begin{equation}\label{eq:main-diagram}
\begin{tikzcd}[column sep=4em]
\iint_{\Sigma_1} \overline{\Rep}_q(G) \arrow[r, "\iiint_M"] \arrow[d, "\textnormal{AKZ}"', "\sim"] & \iint_{\Sigma_2} \overline{\Rep}_q(G) \arrow[d, "\sim"', "\textnormal{AKZ}"] \\
(\Vect,\, u_{\Sigma_1} = V_{\mathrm{RT}}(\Sigma_1)) \arrow[r, "\Zrt(M)"'] & (\Vect,\, u_{\Sigma_2} = V_{\mathrm{RT}}(\Sigma_2))
\end{tikzcd}
\end{equation}
Here the vertical equivalences are given by the Ai--Kong--Zheng theorem (Theorem \ref{thm:AKZ}), applied after identifying $\mathcal{B}_{\CS}^{(k)} \simeq \overline{\Rep}_q(G)$ (Conjecture \ref{conj:E2-equiv}). The top horizontal arrow is the functor on factorization homology categories induced by the cobordism $M$; the bottom horizontal arrow is the linear map between distinguished objects, which is the RT partition function $\Zrt(M): V_{\mathrm{RT}}(\Sigma_1) \to V_{\mathrm{RT}}(\Sigma_2)$.
\end{conjecture}

\begin{remark}[Non-perturbative partition function]\label{rmk:np-meaning}
The notation $Z_{\mathrm{BV,np}}^{(k)}$ requires clarification. The BV-BFV partition function as constructed by Cattaneo--Mnev--Reshetikhin is inherently perturbative. It takes values in $\CC[\![\hbar]\!]$ and is defined by expansion around individual flat connections. By ``non-perturbative completion'' we mean the functor obtained by 
constructing the $\EE_2$-category $\mathcal{B}_{\CS}^{(k)}$ from the BV-BFV data on the disk (Construction \ref{constr:BCS}), which is defined without reference to any particular flat connection, applying factorization homology to extend this local datum to all surfaces and cobordisms, and specializing to level $k$.
The resulting functor is non-perturbative in the sense that it does not depend on a choice of classical background and takes values in $\Vect$ rather than $\Vect_\hbar$. The content of the Main Conjecture is that this procedure recovers the RT functor.
\end{remark}

In other words, the Main Conjecture asserts that the BV-BFV and RT constructions define the \emph{same $(3\text{-}2\text{-}1)$-extended TQFT}: the same $\EE_2$-category assigned to the circle, the same categories (with distinguished objects) assigned to closed surfaces, and the same linear maps assigned to 3-cobordisms. The ordinary (non-extended) CMR functor $\Zbv: \Bord_3^{\mathrm{or}} \to \Vect_\hbar$ of Corollary \ref{cor:BV-functor} captures only the top two levels (surfaces and 3-manifolds); the conjectural non-perturbative completion promotes this to a $(3\text{-}2\text{-}1)$-extended functor by adjoining the $\EE_2$-category $\mathcal{B}_{\CS}^{(k)}$ at the circle level. Namely, that both sides define $(3\text{-}2\text{-}1)$-extended TQFTs determined by an $\EE_2$-category (the value at the circle), and the conjecture is that the $\EE_2$-categories are the same. The surface-level identifications then follow from factorization homology, while the cobordism-level identifications require the additional input of quantized Lagrangian correspondences (Conjecture \ref{conj:Lagrangian-quant}).

\subsection{Detailed proof strategy}

We outline a strategy for establishing the Main Conjecture in four steps, proceeding from the classical level to the fully quantum functorial identification. Each step builds on the previous one, and we identify the precise technical challenges at each stage.

\subsubsection{Step 1: Classical phase space identification}

The first step is at the classical level and is the most well-established part of the program, though a fully rigorous treatment in the derived setting requires care.

\begin{proposition}\label{prop:step1}
The classical BFV phase space for Chern--Simons theory on a surface $\Sigma$ is canonically identified with the derived character stack
\begin{equation}
\mathcal{F}_\Sigma^{\partial,\, \mathrm{cl}} \;\cong\; \Loc_G(\Sigma).
\end{equation}
Under this identification, the Atiyah--Bott symplectic form \eqref{eq:AB-symplectic} on the BFV phase space agrees with the PTVV $0$-shifted symplectic structure of Theorem \ref{thm:PTVV}, and the BFV charge generates the Hamiltonian flow of the moment map for the gauge group action.
\end{proposition}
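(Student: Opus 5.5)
The identification will be made through a common intermediate object: the derived zero locus of the BFV charge, i.e.\ the derived critical/flatness locus of the boundary theory, modeled as a dg manifold. Concretely, I would regard the BFV data $(\mathcal{F}_\Sigma^\partial = \Omega^\bullet(\Sigma,\fg)[1], Q^\partial_\Sigma)$ as a formal pointed dg manifold. The Chevalley--Eilenberg algebra of the dg Lie algebra $(\Omega^\bullet(\Sigma,\fg_{\Ad_\rho}), d_{A_0}, [\cdot,\cdot])$ is the algebra of functions on a formal neighborhood of a flat connection $A_0$. On the derived side, I would use Proposition \ref{prop:Loc-explicit}(ii): the tangent complex of $\Loc_G(\Sigma)$ at $[\rho]$ is $C^\bullet(\Sigma;\fg_{\Ad_\rho})[1]$. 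By the de Rham theorem this is quasi-isomorphic to $\Omega^\bullet(\Sigma,\fg_{\Ad_\rho})[1]$. Lurie--Pridham formal moduli theory then identifies the formal completion of $\Loc_G(\Sigma)$ at $[\rho]$ with the formal moduli problem of this dg Lie algebra. This is the first step.

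The second step is to globalize. Following the Riemann--Hilbert correspondence, I would present $\Loc_G(\Sigma)$ as the quotient stack $[\mathrm{Conn}^{\flat,\mathrm{der}}/\mathcal{G}]$, where the derived flat locus is the derived zero locus of $F_{\mathbb{A}}$. Its Chevalley--Eilenberg/Koszul model is exactly the algebra of functions on $\mathcal{F}^\partial_\Sigma$ in which the coordinates $\mathbb{c},\mathbb{A},\mathbb{A}^+$ carry the differential $Q^\partial_\Sigma$ computed in Proposition \ref{prop:CS-BV-BFV-verified}. The $\mathbb{c}$ direction is the Chevalley--Eilenberg resolution of the gauge action, and the $\mathbb{A}^+$ direction is the Koszul resolution of $F_{\mathbb{A}}=0$. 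From this I read off the claim that the BFV charge $S^\partial_\Sigma=\int\Tr(\mathbb{c}F_{\mathbb{A}}+\tfrac12\mathbb{A}^+[\mathbb{c},\mathbb{c}])$ is the BRST charge of a Hamiltonian reduction with moment map $\mu(\mathbb{A})=F_{\mathbb{A}}$. This is the classical Atiyah--Bott statement that curvature is the moment map for the gauge action with respect to $\int\Tr(\delta\mathbb{A}\wedge\delta\mathbb{A})$. So the BFV cohomology is the derived Hamiltonian reduction $T^*[\ldots]$-style model of $\mathrm{Conn}/\!/\mathcal{G}$, and this matches $\Loc_G(\Sigma)$ as derived stacks (over the locus where a dg-manifold model exists).

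The third step is the symplectic forms. The PTVV form is the AKSZ transgression (Remark \ref{rmk:PTVV-mechanism}) of the $2$-shifted form on $BG$ along $[\Sigma]$. On the dg model, this transgression is literally $\tfrac12\int_\Sigma\Tr(\delta\mathbf{A}\wedge\delta\mathbf{A})$, which is \eqref{eq:AB-symplectic}. Its induced pairing on cohomology is \eqref{eq:AB-form}, as in Theorem \ref{thm:PTVV}(ii). I would check nondegeneracy by Poincar\'e duality on $H^\bullet(\Sigma;\fg_{\Ad_\rho})$, and I would verify the closedness data by noting that the form is constant in the superfield coordinates.

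I expect the main obstacle to be the global comparison in the second step, for three reasons. First, the BFV space is an infinite-dimensional graded Fr\'echet manifold and not an algebraic derived stack. Second, the dg-manifold model only captures formal neighborhoods, so it misses stackiness at reducible points and global non-affine gluing. Third, matching the closed-form data of PTVV with the BFV form requires the PTVV comparison for AKSZ mapping stacks. My first attempt would be to state the identification pointwise formally, via the first step, and then assemble the pieces through Calaque's AKSZ comparison (Theorem \ref{thm:AKSZ-PTVV} and \cite{Calaque2015}). The statement is then canonical as an equivalence of formal derived stacks over each point of $\pi_0\Loc_G(\Sigma)$, compatible with the shifted symplectic structures.
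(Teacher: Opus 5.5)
Your proposal follows essentially the same route as the paper's proof sketch. The paper likewise identifies $\Omega^\bullet(\Sigma,\fg)[1]$ with a de Rham model of the tangent complex $C^\bullet(\Sigma;\fg_{\Ad_\rho})[1]$, reads the BFV charge as encoding the flatness moment map together with the gauge action, and defers the symplectic comparison to Calaque's AKSZ--PTVV result. Your Lurie--Pridham step and your Chevalley--Eilenberg/Koszul reading of $Q^\partial_\Sigma$ are refinements of that same argument, with two small fixes needed. First, the Lurie--Pridham step needs the de Rham comparison as dg Lie algebras, not merely as complexes. Second, closedness needs $L_{Q^\partial_\Sigma}\omega^\partial_\Sigma=0$, which holds because $Q^\partial_\Sigma$ is Hamiltonian, not just because the form is constant.

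The paper concedes that a full verification at the level of derived stacks is missing, and your argument likewise secures the identification only formally near each point of $\pi_0\Loc_G(\Sigma)$. Your explicit weakening of the conclusion to that formal-local statement is an accurate account of what either argument actually proves.
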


\begin{proof}[Proof sketch]
The space of flat $G$-connections on $\Sigma$ modulo gauge transformations is precisely the character variety $\Hom(\pi_1(\Sigma), G)/G = \Loc_G(\Sigma)_{\mathrm{cl}}$. The derived enhancement incorporates two layers of additional structure. These are the stacky (quotient) structure, which encodes the non-trivial stabilizers of flat connections with continuous automorphisms, and the derived structure, which encodes the obstruction--deformation complex $C^\bullet(\Sigma;\, \fg_{\Ad_\rho})$. The BFV fields \eqref{eq:CS-BFV-fields} are exactly the shifted de Rham complex $\Omega^\bullet(\Sigma, \fg)[1]$, which is the Dolbeault model for the tangent complex of $\Loc_G(\Sigma)$, and the BFV charge \eqref{eq:BFV-charge} encodes both the flatness constraint (the moment map) and the gauge transformations (the Hamiltonian vector fields). The compatibility of the symplectic structures follows from the work of Calaque \cite{Calaque2015} and Safronov \cite{Safronov2017} on the AKSZ--PTVV correspondence (Theorem \ref{thm:AKSZ-PTVV}), though a complete verification at the level of derived stacks (as opposed to the smooth locus of the classical truncation) requires the full machinery of derived symplectic geometry.
\end{proof}

This step tells us that both, the BV-BFV and derived algebraic geometry framework, are quantizing the \emph{same} classical object. The divergence between them, if any, must therefore arise at the quantum level.

\subsubsection{Step 2: Quantization comparison}

This is the most technically demanding step in the program. We must show that two a priori different quantization procedures, the BV-BFV perturbative quantization and the shifted deformation quantization (in the framework of CPTVV \cite{CPTVV2017}, Safronov \cite{Safronov2017b}, and Etingof--Kazhdan \cite{EK1996}), produce the same output when applied to the same classical input.

\begin{conjecture}[Quantization agreement]\label{conj:quant-agree}
The BV-BFV quantization of Chern--Simons theory on $D^2 \times [0,1]$ (producing the $\EE_1$-category $\mathcal{B}_{\CS}^{(k)}$ of boundary conditions on the lateral circle $S^1 = \partial D^2$, as in Construction \ref{constr:BCS}) agrees with the shifted deformation quantization of the $1$-shifted symplectic stack $\Loc_G(S^1) = [G/G]$, as constructed in the framework of CPTVV \cite{CPTVV2017}, Safronov \cite{Safronov2017b}, and Etingof--Kazhdan \cite{EK1996}.
\end{conjecture}

\begin{remark}\label{rmk:agreement}
Both procedures start from the same classical data, which is the adjoint quotient $[G/G]$ with its $1$-shifted symplectic structure, and both produce $\EE_1$-categories (monoidal categories), which are deformations of the classical representation category $\Rep(G)$ over the formal parameter $\hbar$. The conjecture asserts that these two deformations are isomorphic as $\EE_1$-categories, both being equivalent to $\Rep_q(G)$ for $q = e^{\hbar/2}$. The promotion to an $\EE_2$-equivalence (Conjecture \ref{conj:E2-equiv}) requires additionally matching the braiding, which on the BV-BFV side arises from the disk-embedding geometry of Construction \ref{constr:E2-from-BVBFV}(iii), and on the algebraic side from the $\EE_2$-structure on the quantum group $U_q(\fg)$.
\end{remark}

We break the argument into three sub-steps, each addressing a different aspect of the comparison.

\medskip
\noindent\textbf{Step 2a: Formality of the BV-BFV data.} The first task is to show that the BV-BFV quantization on the cylinder $D^2 \times [0,1]$ can be expressed purely in terms of the $L_\infty$-algebra structure on the tangent complex of $\Loc_G(S^1)$. The key simplification is that $D^2$ is contractible, i.e., $\pi_1(D^2) = 0$, so there is a unique flat connection (the trivial one), and the construction of $\mathcal{B}_{\CS}^{(k)}$ reduces to computing the BV pushforward of the Chern--Simons action from the fields on the interior of the spatial disk $D^2$ to the fields on its boundary circle $S^1 = \partial D^2$. Since there is only one critical point, there are no non-perturbative corrections (no tunneling between distinct flat connections), and the gauge-fixed perturbative expansion captures the \emph{full} deformation. The effective action on the boundary is therefore determined entirely by the $L_\infty$-structure on the Lie algebra $\fg$, the Lie bracket and its higher homotopies. This is closely related to Kontsevich's formality theorem, which asserts that the Hochschild cochain complex of $\mathcal{O}(\fg^*)$ is formal as an $L_\infty$-algebra, and ensures that the perturbative expansion converges (in the formal sense) to the correct answer.

\medskip
\noindent\textbf{Step 2b: Matching with Kontsevich--Etingof--Kazhdan.} The deformation quantization of $[G/G]$ is governed by the Etingof--Kazhdan quantization of the Lie bialgebra $(\fg, \delta_r)$ determined by the classical $r$-matrix $r \in \fg \otimes \fg$, which encodes the $1$-shifted Poisson structure on the adjoint quotient. On the BV-BFV side, the same $r$-matrix arises from the propagator on the disk. In particular, the Green's function for the covariant Laplacian on $D^2$, when restricted to the boundary circle, produces the \emph{Drinfeld associator}, which is the key analytic datum governing the quantum group structure. The Feynman diagram expansion of the BV-BFV partition function on the disk computes the Kontsevich integral, which by the Etingof--Kazhdan theorem \cite{EK1996} yields exactly the quantum group $U_q(\fg)$.\\

The technical heart of this step is the observation that the BV gauge-fixing on the disk, using the standard round metric, produces a propagator whose boundary behavior encodes precisely the Knizhnik--Zamolodchikov (KZ) connection, which is the flat connection on configuration spaces of points on the boundary circle that governs the braiding and associativity in the quantum group. The holonomy of the KZ connection gives the $R$-matrix of $U_q(\fg)$, and the monodromy around higher-order configurations gives the higher associators. This circle of ideas strongly suggests that the Feynman diagram expansion of Chern--Simons theory on the disk produces the quantum group structure, but assembling these ingredients into a rigorous proof of Conjecture \ref{conj:quant-agree} remains a substantial open problem. In particular, the passage from the analytic data (propagators, KZ connection) to the algebraic data ($\EE_2$-category structure) requires a derived-geometric reformulation of the BV-BFV formalism that is not yet available.

\medskip
\noindent\textbf{Step 2c: Root of unity specialization.} The output of Steps 2a and 2b is a formal deformation, meaning an $\EE_1$-category over $\CC[\![\hbar]\!]$ with $q = e^{\hbar/2}$. To make contact with the RT theory, we must specialize to the non-formal value $q = e^{i\pi/(k+h^\vee)}$, which is a root of unity. This passage from formal to actual is subtle and requires three sub-steps:

\begin{enumerate}[label=(\roman*)]
\item \emph{Algebraization.} One must show that the formal deformation of $\Rep(G)$ over $\CC[\![\hbar]\!]$ extends to an algebraic deformation over $\CC[\hbar]$ (or a suitable Rees algebra) that can be evaluated at specific values of $\hbar$. For quantum groups, this extension is known to exist. It is provided by the explicit Drinfeld--Jimbo presentation of $U_q(\fg)$ as an algebra over $\CC[q, q^{-1}]$. However, it must be verified that this algebraic family matches the formal family produced by the BV-BFV expansion.

\item \emph{Specialization.} Setting $\hbar = 2\pi i/(k+h^\vee)$ yields a braided monoidal category $\Rep_q(G)$ at a root of unity. This category is no longer semisimple. In fact, the quantum dimension of certain objects vanishes, and the category contains non-trivial extensions and negligible morphisms.

\item \emph{Semisimplification.} To obtain a modular tensor category, one passes to the quotient
\begin{equation}
\overline{\Rep}_q(G) \;=\; \Rep_q(G)\, /\, \mathcal{N}
\end{equation}
by the tensor ideal of negligible morphisms (those whose quantum trace vanishes in every composition). This categorical localization is what produces the finite, semisimple, modular tensor category used in the RT construction. A key technical point, essential for the consistency of our program, is that this semisimplification must commute with factorization homology. In particular, one needs
\begin{equation}
\iint_\Sigma \big(\mathcal{A}\, /\, \mathcal{N}\big) \;\simeq\; \Big(\iint_\Sigma \mathcal{A}\Big) \big/\, \mathcal{N}_\Sigma,
\end{equation}
ensuring that it does not matter whether one semisimplifies before or after integrating over the surface. The interaction between non-semisimple categories and TQFT constructions, which is directly relevant to this step, is developed in \cite{DGGPR2022,CGPM2023}.
\end{enumerate}

\subsubsection{Step 3: Factorization homology transport (surfaces)}

Once Conjecture \ref{conj:E2-equiv} is established, factorization homology would propagate the local identification to surface-level data: for every closed surface $\Sigma_g$, the factorization homology $\iint_{\Sigma_g} \mathcal{B}_{\CS}^{(k)} \simeq \iint_{\Sigma_g} \overline{\Rep}_q(G)$ would be identified, and the AKZ theorem would then give $u_{\Sigma_g} = V_{\mathrm{RT}}(\Sigma_g)$. We emphasize that this step concerns only the \emph{surface-level} invariants (the assignment of categories and distinguished objects to closed surfaces). It does \emph{not} by itself establish the identification of cobordism maps; that requires the additional input of Step 4 below.

The state space that BV-BFV assigns to a surface $\Sigma$ is identified with the RT state space through the following chain of equivalences:
\begin{equation}\label{eq:step3}
\mathcal{H}_\Sigma^{\BV} \xleftarrow{\;\text{Step 2}\;} \Quant\big(\Loc_G(\Sigma)\big) \xrightarrow[\text{BBJ}]{\;\sim\;} \iint_\Sigma \Rep_q(G).
\end{equation}
At generic $q$, the right-hand side is a non-trivial quantum character variety. At a root of unity, after semisimplification, Theorem \ref{thm:AKZ} gives $\iint_{\Sigma_g} \overline{\Rep}_q(G) \simeq (\mathbf{H},\, u_{\Sigma_g})$ with $u_{\Sigma_g} \cong V_{\mathrm{RT}}(\Sigma_g)$.

\begin{remark}\label{rmk:step2-disk}
The arrow labeled ``Step 2'' in the diagram of Conjecture \ref{conj:main} should be understood as follows. The comparison between the BV-BFV quantization and the deformation quantization $\Quant(\Loc_G(\Sigma))$ is \emph{not} performed surface by surface. Rather, the key identification occurs on the disk $D^2$. One shows that the $\EE_2$-category $\mathcal{B}_{\CS}^{(k)}$ produced by BV-BFV (Construction \ref{constr:BCS}) is equivalent to $\Rep_q(G)$ produced by deformation quantization of $[G/G] = \Loc_G(S^1)$. Factorization homology then automatically transports this local equivalence to all surfaces, giving
\[
\mathcal{H}_\Sigma^{\BV} \;\simeq\; \iint_\Sigma \mathcal{B}_{\CS}^{(k)} \;\simeq\; \iint_\Sigma \Rep_q(G) \;\simeq\; \Quant\big(\Loc_G(\Sigma)\big),
\]
where the last equivalence is the BBJ theorem (Theorem \ref{thm:BBJ-quant}). The power of this approach is that the infinite-dimensional analytical problem of comparing quantizations on a general surface $\Sigma$ is reduced to a finite-dimensional algebraic problem on the disk.
\end{remark}

The power of this step is its uniformity. The identification $\mathcal{H}_\Sigma^{\BV} \cong V_{\mathrm{RT}}(\Sigma)$ is not established by a separate argument for each genus, but follows from a single equivalence of $\EE_2$-categories combined with a general machine (factorization homology) that automatically produces the correct answer for every surface, including surfaces with punctures and boundary components.

\subsubsection{Step 4: Cobordism maps}

Steps 1--3 address the surface-level data: the identification of the $\EE_2$-categories assigned to circles and, via factorization homology and AKZ, the state spaces assigned to closed surfaces. To complete the proof of the Main Conjecture at the level of a full $(3\text{-}2\text{-}1)$-extended TQFT, we must also match the \emph{linear maps} assigned to 3-dimensional cobordisms. This is a logically independent step that requires additional input beyond factorization homology. For every cobordism $M: \Sigma_1 \to \Sigma_2$, the BV-BFV partition function $\Zbv(M): \mathcal{H}_{\Sigma_1} \to \mathcal{H}_{\Sigma_2}$ must agree with the RT map $\Zrt(M): V_{\mathrm{RT}}(\Sigma_1) \to V_{\mathrm{RT}}(\Sigma_2)$ under the identification of Step 3.

The key input is the derived-geometric interpretation of the BV-BFV partition function as a \emph{quantized Lagrangian correspondence}.

\begin{conjecture}[Quantized Lagrangian correspondence]\label{conj:Lagrangian-quant}
The Lagrangian morphism
\begin{equation}
\Loc_G(M) \longrightarrow \Loc_G(\Sigma_1) \times \Loc_G(\Sigma_2)^-
\end{equation}
of Theorem \ref{thm:AKSZ-PTVV} (where the superscript $(-)$ denotes the opposite symplectic structure) quantizes to a bimodule over the quantized algebras of $\Sigma_1$ and $\Sigma_2$. This bimodule, viewed as a linear map between the quantized state spaces, is the BV-BFV partition function $\Zbv(M)$. Under the identifications of Step 3, it equals the RT map $\Zrt(M)$.
\end{conjecture}

\begin{figure}[ht]
\centering
\begin{tikzpicture}[scale=1]

% === Top: the Lagrangian correspondence (roof diagram) ===

% Loc_G(M) at the top
\node[draw=violet!60!black, thick, rounded corners=5pt, fill=violet!8, 
      inner sep=7pt, font=\small, align=center] (locM) at (0,3) 
      {$\Loc_G(M)$\\[-2pt]\scriptsize $(-1)$-shifted symplectic};

% Loc_G(Sigma_1) bottom left
\node[draw=blue!60!black, thick, rounded corners=5pt, fill=blue!8, 
      inner sep=7pt, font=\small, align=center] (loc1) at (-4,0) 
      {$\Loc_G(\Sigma_1)$\\[-2pt]\scriptsize $0$-shifted symplectic};

% Loc_G(Sigma_2)^- bottom right
\node[draw=red!60!black, thick, rounded corners=5pt, fill=red!8, 
      inner sep=7pt, font=\small, align=center] (loc2) at (4,0) 
      {$\Loc_G(\Sigma_2)^-$\\[-2pt]\scriptsize $0$-shifted symplectic\\[-2pt]\scriptsize (opposite orientation)};

% Restriction arrows
\draw[-{stealth}, thick, violet!60!black] (locM) -- 
      node[above left, font=\scriptsize, align=right] {restrict to\\$\Sigma_1 = \partial_{\mathrm{in}} M$} (loc1);
\draw[-{stealth}, thick, violet!60!black] (locM) -- 
      node[above right, font=\scriptsize, align=left] {restrict to\\$\Sigma_2 = \partial_{\mathrm{out}} M$} (loc2);

% Lagrangian label
\node[draw=orange!70!black, thick, rounded corners=3pt, fill=orange!8, 
      inner sep=4pt, font=\scriptsize\bfseries] at (0,1) 
      {Lagrangian (Calaque)};

% === Bottom: the quantized version ===

\draw[thick, black!20] (-5.5,-1.2) -- (5.5,-1.2);
\node[black!40, font=\scriptsize\bfseries] at (0,-1.5) {quantize (conjectural)};

% Quantized version
\node[draw=blue!60!black, thick, dashed, rounded corners=5pt, fill=blue!5, 
      inner sep=6pt, font=\small, align=center] (q1) at (-4,-3.5) 
      {$\iint_{\Sigma_1}\!\overline{\Rep}_q(G)$\\[-2pt]$\simeq (\mathbf{H},\, u_{\Sigma_1})$};

\node[draw=red!60!black, thick, dashed, rounded corners=5pt, fill=red!5, 
      inner sep=6pt, font=\small, align=center] (q2) at (4,-3.5) 
      {$\iint_{\Sigma_2}\!\overline{\Rep}_q(G)$\\[-2pt]$\simeq (\mathbf{H},\, u_{\Sigma_2})$};

% Linear map
\draw[-{stealth}, thick, dashed, orange!70!black, line width=1.2pt] (q1) -- 
      node[above, font=\scriptsize\bfseries, orange!70!black] {$\Zrt(M)$} 
      node[below, font=\scriptsize, black!50] {quantized Lagr. correspondence} (q2);

% Vertical dashed arrows (quantization)
\draw[-{stealth}, thick, dashed, black!40] (loc1) -- (q1);
\draw[-{stealth}, thick, dashed, black!40] (loc2) -- (q2);

% Cobordism picture on the right
\begin{scope}[shift={(8,0.5)}]
  \fill[violet!8, rounded corners=5pt] (-1,-2) rectangle (1,2);
  \draw[violet!40, thick, rounded corners=5pt] (-1,-2) rectangle (1,2);
  \node[violet!60!black, font=\bfseries] at (0,0) {$M$};
  
  \draw[blue!60!black, very thick] (-1,-2) -- (-1,2);
  \node[blue!60!black, font=\scriptsize, left] at (-1,0) {$\Sigma_1$};
  
  \draw[red!60!black, very thick] (1,-2) -- (1,2);
  \node[red!60!black, font=\scriptsize, right] at (1,0) {$\Sigma_2$};
\end{scope}

\end{tikzpicture}
\caption{The Lagrangian correspondence associated to a cobordism (Conjecture \ref{conj:Lagrangian-quant}). \emph{Top:} At the classical level, a 3-cobordism $M: \Sigma_1 \to \Sigma_2$ gives a ``roof'' diagram. The restriction maps from $\Loc_G(M)$ to $\Loc_G(\Sigma_1) \times \Loc_G(\Sigma_2)^-$ carry a Lagrangian structure (Theorem \ref{thm:AKSZ-PTVV}, due to Calaque). \emph{Bottom:} The conjecture asserts that quantizing this Lagrangian correspondence produces the linear map $\Zrt(M): V_{\mathrm{RT}}(\Sigma_1) \to V_{\mathrm{RT}}(\Sigma_2)$ between the distinguished objects of the factorization homology categories. This is the content of Step 4 of the proof strategy, and is logically independent of the surface-level identifications of Step 3. \emph{Right:} The cobordism $M$ with incoming boundary $\Sigma_1$ and outgoing boundary $\Sigma_2$.}
\label{fig:Lagrangian-correspondence}
\end{figure}
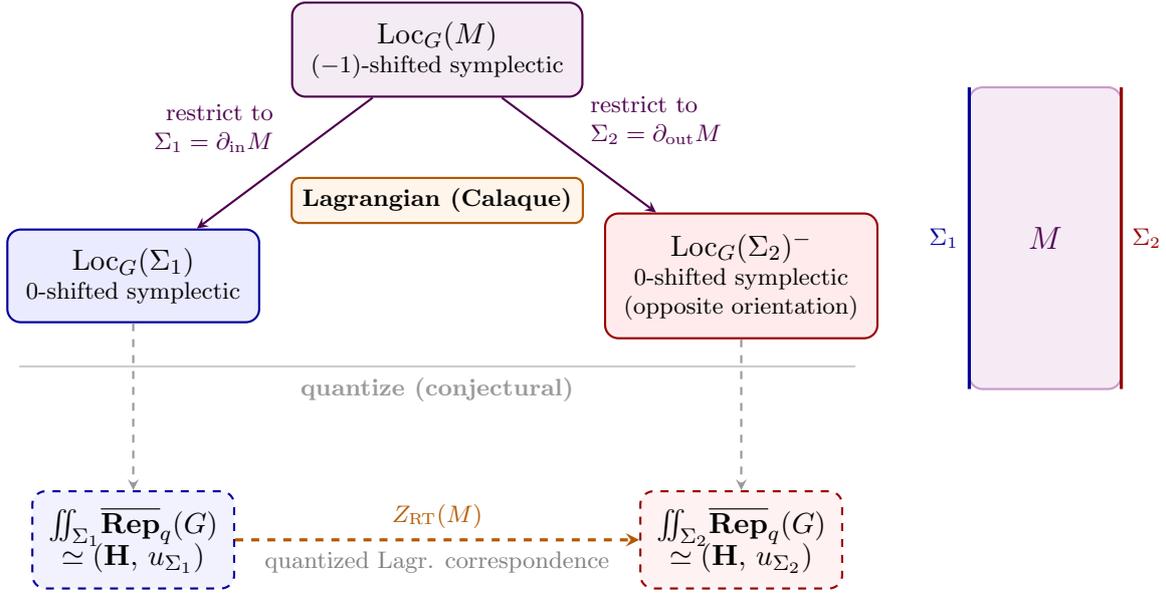

\begin{remark}\label{rmk:Lagrangian-maps}
This conjecture has a natural classical antecedent. In ordinary symplectic geometry, a Lagrangian submanifold $L \subset T^*X_1^- \times T^*X_2$ defines a \emph{canonical relation} between the phase spaces of $X_1$ and $X_2$, which is a multi-valued, measure-preserving map. Upon geometric quantization, such a Lagrangian correspondence becomes a linear map between the quantum Hilbert spaces. This is the Guillemin--Sternberg ``quantization commutes with reduction'' principle, and it is the mechanism by which classical correspondences give rise to quantum operators.\\

The shifted analogue works in the same way, one categorical level up. The Lagrangian morphism $\Loc_G(M) \to \Loc_G(\partial M)$ is a ``shifted canonical relation'' between the derived phase spaces of the two boundary components. Its deformation quantization produces a bimodule (the categorified version of a linear map) and this bimodule is precisely the BV-BFV partition function of the cobordism $M$, reformulated in the language of derived algebraic geometry. The content of Conjecture \ref{conj:Lagrangian-quant} is that this derived-geometric quantization of the Lagrangian correspondence reproduces the RT cobordism map, closing the circle between the two frameworks.
\end{remark}

\subsection{Compatibility with cellular decomposition}

As an important consistency check on the program, we record one further conjecture relating the two computational approaches to the BV-BFV partition function. Namely, the cellular (simplicial) approach of Construction \ref{constr:cellular} and the factorization homology approach.

\begin{conjecture}[Cellular-factorization compatibility]\label{conj:cellular}
The cellular BV-BFV partition function on a triangulated 3-manifold $(M, \Delta)$, computed by iteratively gluing finite-dimensional BV-BFV data along simplices (Construction \ref{constr:cellular}), is equivalent to the value $Z_{\mathcal{B}_{\CS}^{(k)}}(M)$ assigned to $M$ by the $(3\text{-}2\text{-}1)$-extended TQFT determined by the $\EE_2$-category $\mathcal{B}_{\CS}^{(k)}$. More precisely, the iterated BV pushforward along simplices of $\Delta$ should reproduce the same linear map between state spaces that the extended TQFT assigns to the cobordism $M$, with the factorization homology of surfaces providing the intermediate state-space identifications.
\end{conjecture}

This conjecture is natural because both constructions compute the partition function by assembling local data according to a decomposition of $M$, and both satisfy the same gluing axioms. The cellular approach uses a triangulation and finite-dimensional BV-BFV algebras on simplices; the factorization homology approach uses the Ran space of disk embeddings and the $\EE_2$-algebra $\mathcal{B}_{\CS}^{(k)}$. A proof would require showing that the finite-dimensional BV-BFV data assigned to a simplex by the CMR construction \cite{CMR2020} can be identified with the restriction of the factorization algebra to a small disk, and that the iterated BV pushforward (gluing simplices) agrees with the colimit (gluing disks). If established, this conjecture would provide concrete computational access to the factorization homology side of our bridge, reducing it to the finite-dimensional integrals of the cellular theory.
This is supported by both constructions satisfying the same gluing axioms and being determined by local data.

\section{Koszul Duality and the Perturbative/Non-perturbative Bridge}\label{sec:koszul}

The previous section formulated our main conjectures and proof strategy in terms of matching the BV-BFV and RT constructions via factorization homology. In this section, we address a complementary question: what is the algebraic mechanism that reconstructs the global, non-perturbative data (the modular tensor category, the exact RT invariant) from the local, perturbative data (the Feynman diagram expansion around each flat connection)? The answer, we argue, is $\EE_n$-Koszul duality.

\subsection{The general theory of $\EE_n$-Koszul duality}

Koszul duality is a classical construction in homological algebra that exchanges ``small'' and ``large'' algebraic objects, for instance, the symmetric algebra on a vector space and the exterior algebra on its dual. In the $\EE_n$-setting, this duality is vastly generalized. It provides a systematic correspondence between \emph{local} (formal, perturbative) algebraic data and \emph{global} (non-perturbative) algebraic data, with the level of commutativity controlled by $n$.

\begin{theorem}[$\EE_n$-Koszul duality; Lurie \cite{Lurie_HA}, Francis--Gaitsgory \cite{FG2012}]\label{thm:koszul}
There is a contravariant equivalence (the Koszul duality functor)
\begin{equation}
\mathbb{D}_n: \mathrm{\mathbf{Alg}}_{\EE_n}^{\mathrm{aug}}(\mathcal{V})^{\op} \longrightarrow \mathrm{\mathbf{Alg}}_{\EE_n}^{\mathrm{aug}}(\mathcal{V})
\end{equation}
between augmented $\EE_n$-algebras in a suitable $\infty$-category $\mathcal{V}$. The construction specializes as follows:
\begin{enumerate}[label=(\roman*)]
\item For $n = 0$: Koszul duality is the classical bar/cobar adjunction. The Koszul dual of an augmented algebra $A$ is its bar construction $\mathbb{D}_0(A) = \mathrm{Bar}(A)$.
\item For $n = 1$ (the associative case): Koszul duality generalizes the bar construction to $A_\infty$-algebras. For the universal enveloping algebra $U(\fg)$ of a Lie algebra, the Koszul dual is the Chevalley--Eilenberg cochain complex $C^\bullet(\fg)$.
\item For general $n$: the Koszul dual of an $\EE_n$-algebra $A$ can be computed via factorization homology over the $n$-sphere, i.e. $\mathbb{D}_n(A) \simeq \int_{S^n} A$.
\end{enumerate}
\end{theorem}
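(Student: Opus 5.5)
The plan is to build $\mathbb{D}_n$ from the iterated bar construction, not to take it as a primitive. For an augmented $\EE_n$-algebra $\epsilon\colon A\to\mathbf 1$ in a stable, presentably symmetric monoidal $\mathcal V$ (e.g.\ cochain complexes over $\CC$), I would set
\[
\mathbb{D}_n(A):=\bigl(\mathrm{Bar}^{(n)}A\bigr)^{\vee}\simeq \Hom_{\mathbf{Mod}^{\EE_n}_A}(\mathbf 1,\mathbf 1),
\]
the linear dual of the $n$-fold iterated bar construction. I would proceed in three steps.

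\textbf{Step 1 (structure).} By Dunn additivity $\EE_n\simeq\EE_1\otimes\EE_{n-1}$ (Example~\ref{ex:En-hierarchy}(iii)), the one-fold bar construction $\mathrm{Bar}(A)=\mathbf 1\otimes_A\mathbf 1$ of an augmented $\EE_n$-algebra is an augmented $\EE_{n-1}$-algebra, with the $\EE_1$-direction turned into a coalgebra direction. Iterating $n$ times gives an $\EE_n$-coalgebra $\mathrm{Bar}^{(n)}A$. Its dual is then an augmented $\EE_n$-algebra, and contravariance follows from functoriality of $\otimes_A$ and of $(-)^\vee$.

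\textbf{Step 2 (the three cases).} For $n=0$ the object $\mathrm{Bar}^{(0)}$ is just the augmentation ideal, shifted, and item (i) reduces to the classical bar/cobar adjunction; I would recast (i) in that form.

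For $n=1$, take $A=U(\fg)$ with its PBW filtration. The Koszul resolution $U(\fg)\otimes\Lambda^\bullet\fg$ computes $\CC\otimes_{U\fg}\CC\simeq C_\bullet(\fg)$, whose dual is $C^\bullet(\fg)$. This proves (ii).

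For general $n$, I would use $\otimes$-excision (Theorem~\ref{thm:AF-excision}). Decompose $\RR^n$ minus a point as $S^{n-1}\times\RR$ and glue in the augmentation at the two ends. Excision then identifies $\mathrm{Bar}^{(n)}A$ with factorization homology with compact supports, $\int^{c}_{\RR^n}A$. Equivalently, it is the factorization homology of $S^n=(\RR^n)^+$ relative to the point at infinity, with the augmentation module placed there. So (iii) should be read as $\mathbb{D}_n(A)\simeq\bigl(\int_{S^n,\ast}A\bigr)^\vee$. I would state it in this corrected form, because the unpointed $\int_{S^n}A$ is not the Koszul dual: already for $n=1$ it gives Hochschild homology rather than $\mathbf 1\otimes_A\mathbf 1$.

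\textbf{Step 3 (equivalence).} There is always a natural map $A\to\mathbb{D}_n\mathbb{D}_n(A)$. The point is to show it is an equivalence on a suitable subcategory. I would restrict to augmented algebras that are complete and of finite type, for instance connective with finite-dimensional graded pieces for a complete filtration, or pro-nilpotent in the Francis--Gaitsgory sense. On such algebras, compare associated graded objects. On the associated graded the statement reduces to free and trivial algebras, where $\mathrm{Bar}^{(n)}$ of a free $\EE_n$-algebra on $V$ is the trivial square-zero algebra on $V[n]$, and dualizing twice returns the free algebra. A spectral sequence (or Milnor-$\lim^1$) argument then lifts the result to the filtered setting.

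\textbf{Main obstacle.} The hard step is Step 3, together with the precise choice of $\mathcal V$: the statement as written, an equivalence on all of $\mathrm{\mathbf{Alg}}^{\mathrm{aug}}_{\EE_n}(\mathcal V)$, is false without finiteness or completeness hypotheses. For example, $\mathbb{D}_1\mathbb{D}_1(U\fg)$ is the completion $\widehat{U\fg}$ along the augmentation, not $U(\fg)$ itself. So the honest version of the theorem is an equivalence between the subcategory of complete, finite-type algebras and its dual (pro-)category. In the spirit of Francis--Gaitsgory, I would first try to prove an equivalence $\mathrm{\mathbf{Alg}}^{\mathrm{aug}}_{\EE_n}\simeq\mathrm{\mathbf{coAlg}}^{\mathrm{aug,nil}}_{\EE_n}$ onto ind-nilpotent coalgebras, and then dualize only where finite type makes the dual well behaved.
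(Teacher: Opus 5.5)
The paper gives no proof of this statement. It is quoted from Lurie and Francis--Gaitsgory, so there is no in-paper argument to compare with; the only question is whether your outline yields a correct version.

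Two of your diagnoses are right. First, $\mathbb{D}_n$ (the centralizer of the augmentation, equivalently $(\mathrm{Bar}^{(n)}A)^\vee$) is only a functor, and it is an equivalence only on restricted subcategories, as your $\widehat{U\fg}$ example shows. Second, (iii) must use the pointed sphere $(\RR^n)^+$ with the augmentation at $\infty$, followed by linear duality, as your Hochschild example shows.

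You do not apply the same scrutiny to (i), which is also false, and your Step~2 repeats the error. For $n=0$ there is no bar construction at all: $\mathrm{Bar}^{(0)}A=A$, unshifted, since the shift is by $n$. The Koszul dual $\mathbb{D}_0(\mathbf 1\oplus V)\simeq\mathbf 1\oplus V^\vee$ is plain linear duality of the augmentation ideal, an involution precisely when $V$ is of finite type. The classical bar/cobar adjunction is the $n=1$ case. Moreover ``$\mathbb{D}_0(A)=\mathrm{Bar}(A)$'' contradicts your own definition: a bar construction is a coalgebra, and the Koszul dual is its linear dual.

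A more serious problem is the intermediate step you propose in the spirit of Francis--Gaitsgory. An equivalence $\mathbf{Alg}^{\mathrm{aug}}_{\EE_n}\simeq\mathbf{coAlg}^{\mathrm{aug,nil}}_{\EE_n}$ on \emph{all} augmented algebras fails for the same reason as the algebra--algebra version, because $\mathrm{Bar}$ is not conservative. The inclusion $\CC[x]\to\CC[\![x]\!]$ is not an equivalence, yet it induces one on bar constructions: both $\mathrm{Tor}$'s are $\CC$ in degrees $0$ and $1$, and this persists to the iterated bars. Francis--Gaitsgory work with pro-nilpotent algebras. So the completeness hypothesis must be imposed on the algebra side from the outset, not only when you dualize.

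Several smaller points also need repair. The alternative formula should be $\mathbb{D}_n(A)\simeq\Hom_{\mathbf{Mod}^{\EE_n}_A}(A,\mathbf 1)$, not $\Hom(\mathbf 1,\mathbf 1)$. For $n=1$, $\mathbf{Mod}^{\EE_1}_A$ is $A$-bimodules, and for $A=\CC[x]$ your formula gives $\mathrm{Ext}_{\CC[x,y]}(\CC,\CC)$, an exterior algebra on two generators instead of one.

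Similarly, gluing the augmentation in at \emph{both} ends of $S^{n-1}\times\RR$ computes $\mathbf 1\otimes_U\mathbf 1$ with $U=\int_{S^{n-1}\times\RR}A$. You need $A$ on the disk at the origin and the augmentation only at $\infty$, i.e.\ $A\otimes_U\mathbf 1\simeq\mathrm{Bar}^{(n)}A$; for $n=1$ this is $A\otimes_{A\otimes A^{\op}}\mathbf 1\simeq\mathbf 1\otimes_A\mathbf 1$.

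In (ii) you match only underlying complexes. To match algebra structures, use that antisymmetrization $C_\bullet(\fg)\to\mathrm{Bar}(U\fg)$ is a quasi-isomorphism of dg coalgebras. Its dual then identifies the Yoneda product with the Chevalley--Eilenberg cup product.

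Finally, Step~3 names the hard part rather than proving it. $\mathbb{D}_n\mathbb{D}_n$ of a free algebra is the \emph{completed} free algebra; your own example with $\fg$ one-dimensional, where $U\fg=\CC[x]$ is free, shows this. So ``dualizing twice returns the free algebra'' holds only weight by weight. Also, the associated graded of the augmentation-adic filtration of a general $A$ is neither free nor trivial, so ``reduce to free and trivial algebras'' needs an actual mechanism. One option follows Lurie's treatment of $\EE_n$ formal moduli problems: build the relevant algebras from trivial square-zero extensions by square-zero pullbacks, show that $\mathbb{D}_n$ carries these to pushouts, and use the finiteness hypotheses to pass to the limit.
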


\begin{remark}\label{rmk:koszul-formal}
The geometric content of Koszul duality can be understood as follows. An augmented $\EE_n$-algebra $A$ encodes the structure of a ``formal neighborhood'' of a distinguished point in some moduli space. It is the algebra of functions on an infinitesimal neighborhood, together with all the higher homotopical data of the deformation problem. Its Koszul dual $\mathbb{D}_n(A)$ encodes the ``global'' algebraic structure of that moduli space, which is the algebra of functions (or cochains, or sheaves) on the whole space, reconstructed from the local data. This is an enormous generalization of the classical fact that the symmetric algebra $\Sym^\bullet(V)$ (functions on the formal neighborhood of the origin in $V$) and the exterior algebra $\bigwedge^\bullet(V^*)$ (the de Rham cohomology of the dual space) are Koszul dual. In our setting, the ``formal neighborhood'' is the perturbative BV-BFV data around a flat connection, and the ``global structure'' is the full non-perturbative category of boundary conditions.
\end{remark}

\subsection{Application to the Chern--Simons / RT bridge}

We now apply the general theory of $\EE_n$-Koszul duality to the specific problem at hand, which is the reconstruction of the global, non-perturbative category of Chern--Simons boundary conditions from the local perturbative data around each flat connection.

\begin{construction}[Local-to-global via Koszul duality]\label{constr:local-to-global}
The reconstruction proceeds in three stages.

\medskip
\noindent\textbf{Local (perturbative) data.} At each flat connection $[A] \in \Flat_G(M)/G$, the BV-BFV perturbative expansion around $A$ produces a formal $\EE_0$-algebra, which is the BV algebra of perturbative observables
\begin{equation}
\Obs^q_A(M) = \big(C^\bullet(M;\, \fg_{\Ad_A}),\; Q_A,\; \{-,-\}_A\big).
\end{equation}
This is the twisted de Rham complex equipped with the BV differential and bracket induced by the Chern--Simons action expanded around $A$. As an $\EE_0$-algebra, this is a pointed cochain complex. It encodes the formal moduli problem of deforming the flat connection $A$, together with the perturbative partition function (a distinguished cocycle) around $A$.

\medskip
\noindent\textbf{Formal Koszul dual.} Applying $\EE_0$-Koszul duality (the bar construction) to each local algebra gives
\begin{equation}
\mathbb{D}_0\big(\Obs^q_A(M)\big) \;\simeq\; \widehat{\mathcal{O}}_{[A]},
\end{equation}
the completed local ring of $\Loc_G(M)$ at the point $[A]$. This identification is a consequence of Lurie's theorem on formal moduli problems that says that every formal $\EE_0$-algebra (formal moduli problem) is equivalent to the data of a formal neighborhood of a point in a derived stack, and Koszul duality reconstructs the completed structure sheaf of that neighborhood from the deformation data.

\medskip
\noindent\textbf{Global reconstruction.} The local Koszul duals $\widehat{\mathcal{O}}_{[A]}$, one for each flat connection, must be assembled into a global object. This gluing is performed by the formalism of ind-coherent sheaves on the formal completion:
\begin{align}\label{eq:koszul-reconstruction}
\begin{split}
\mathcal{B}_{\CS} \;&\simeq\; \mathrm{Glue}\big\{\mathbb{D}_0(\Obs^q_A(M)) \;:\; [A] \in \Loc_G(M)\big\} \;\\
&\simeq\; \IndCoh\big(\Loc_G(M)^{\wedge}\big).
\end{split}
\end{align}
The passage from local completed rings to a global category of sheaves is the algebraic analogue of reconstructing a sheaf from its stalks. See Figure \ref{fig:koszul} for a visualization.
\end{construction}

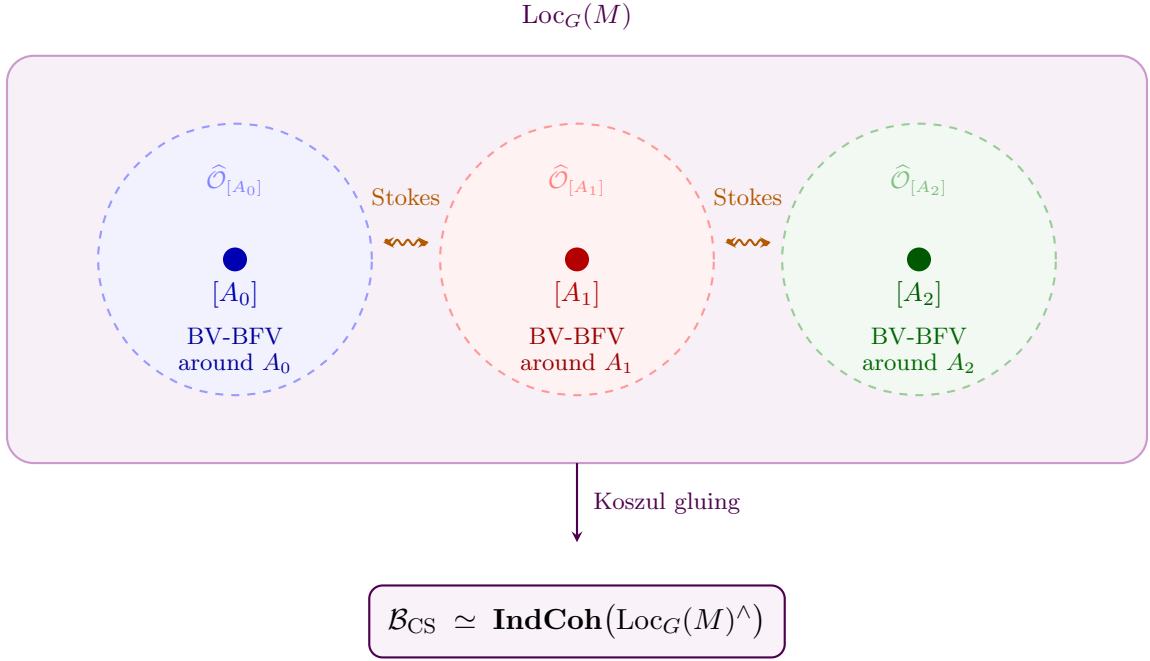
\begin{figure}[ht]
\centering
\begin{tikzpicture}[scale=1.5]

% Character stack background
\fill[violet!6, rounded corners=10pt] (-5,-1.8) rectangle (5,1.8);
\draw[violet!40, thick, rounded corners=10pt] (-5,-1.8) rectangle (5,1.8);
\node[violet!60!black, font=\footnotesize\bfseries] at (0,2.15) {$\Loc_G(M)$};

% Flat connections as points with formal neighborhoods
\foreach \x/\lab/\col in {-3/A_0/blue, 0/A_1/red, 3/A_2/green!50!black} {
    % Formal neighborhood (dashed circle)
    \fill[\col!5] (\x,0) circle (1.2);
    \draw[\col!40, thick, dashed] (\x,0) circle (1.2);
    % Point
    \fill[\col!70!black] (\x,0) circle (3pt);
    % Labels
    \node[\col!70!black, font=\footnotesize\bfseries, below=4pt] at (\x,0) {$[\lab]$};
    \node[\col!50, font=\scriptsize] at (\x,0.7) {$\widehat{\mathcal{O}}_{[\lab]}$};
}

% Perturbative data labels
\node[blue!60!black, font=\scriptsize, align=center] at (-3,-0.8) {BV-BFV\\[-1pt]around $A_0$};
\node[red!60!black, font=\scriptsize, align=center] at (0,-0.8) {BV-BFV\\[-1pt]around $A_1$};
\node[green!40!black, font=\scriptsize, align=center] at (3,-0.8) {BV-BFV\\[-1pt]around $A_2$};

% Gluing / Stokes data arrows
\draw[{stealth}-{stealth}, thick, orange!70!black, 
      decorate, decoration={snake, amplitude=1pt, segment length=4pt}] 
      (-1.7,0.15) -- (-1.3,0.15);
\draw[{stealth}-{stealth}, thick, orange!70!black, 
      decorate, decoration={snake, amplitude=1pt, segment length=4pt}] 
      (1.3,0.15) -- (1.7,0.15);

\node[orange!70!black, font=\scriptsize] at (-1.5,0.55) {Stokes};
\node[orange!70!black, font=\scriptsize] at (1.5,0.55) {Stokes};

% Arrow down to reconstruction
\draw[-{stealth}, thick, violet!60!black] (0,-1.8) -- (0,-2.5);
\node[violet!60!black, font=\scriptsize, right=2pt] at (0,-2.15) {Koszul gluing};

% Result
\node[draw=violet!60!black, thick, rounded corners=5pt, fill=violet!5, 
      inner sep=7pt, font=\small, align=center] at (0,-3.2) 
      {$\mathcal{B}_{\CS} \;\simeq\; \IndCoh\big(\Loc_G(M)^{\wedge}\big)$};

\end{tikzpicture}
\caption{The Koszul reconstruction (Conjecture \ref{conj:koszul}). Each flat connection $[A_i] \in \Loc_G(M)$ has a formal neighborhood $\widehat{\mathcal{O}}_{[A_i]}$, reconstructed from the perturbative BV-BFV data around $A_i$ via $\EE_0$-Koszul duality. The formal neighborhoods are glued together by transition maps that encode the non-perturbative tunneling between flat connections (conjecturally identified with the Stokes data of resurgence, Conjecture \ref{conj:resurgence-koszul}). The global result is the category of ind-coherent sheaves on the formal completion of the character stack.}
\label{fig:koszul}
\end{figure}

\begin{conjecture}[Koszul reconstruction]\label{conj:koszul}
The category of Chern--Simons boundary conditions is equivalent to ind-coherent sheaves on the formal completion of the character stack, i.e., we have
\begin{equation}
\mathcal{B}_{\CS} \;\simeq\; \IndCoh\big(\Loc_G(M)^{\wedge}\big),
\end{equation}
where $\IndCoh$ denotes the category of ind-coherent sheaves in the sense of Gaitsgory--Rozenblyum \cite{GR2017} and $\Loc_G(M)^{\wedge}$ is the formal completion of the derived character stack along its classical truncation. Concretely, this means that the BV-BFV partition function around each flat connection determines (and is determined by) the local structure of an ind-coherent sheaf at the corresponding point of the character stack, and the global sheaf encodes all of the perturbative data simultaneously, together with the non-perturbative ``gluing data'' that relates the different perturbative expansions to one another.
\end{conjecture}

\begin{remark}\label{rmk:IndCoh-QCoh}
It is essential that the conjecture involves $\IndCoh$ (ind-coherent sheaves) rather than $\QCoh$ (quasi-coherent sheaves). On a smooth scheme or stack, these two categories are equivalent, but on singular or derived stacks they differ significantly. The character stack $\Loc_G(\Sigma)$ is typically singular because at reducible flat connections, where the gauge group has a non-trivial stabilizer, the moduli space develops orbifold and worse singularities, and $\IndCoh$ is the correct category for sheaf-theoretic quantization in this setting. The distinction is analogous to the one between $D$-modules and quasi-coherent sheaves in the geometric Langlands program. In particular, $D$-modules (the analogue of $\IndCoh$) are the objects that transform correctly under Langlands duality, while quasi-coherent sheaves do not. The role of singular support conditions for coherent sheaves on the character stack, which is expected to refine our conjecture, is developed in \cite{AG2015}.
\end{remark}

We illustrate the Koszul reconstruction concretely in the simplest non-trivial example.

\begin{example}[Solid torus]\label{ex:solid-torus}
Consider $M = S^1 \times D^2$ (the solid torus) with gauge group $G = SL_2(\CC)$. Since $\pi_1(S^1 \times D^2) = \ZZ$, a flat connection is determined by the holonomy around the $S^1$ factor, which is a conjugacy class in $SL_2$. The character stack is therefore $\Loc_{SL_2}(S^1 \times D^2) = [SL_2/SL_2]$, and the classical character variety is $\CC$, parametrized by the trace $t = z + z^{-1}$ of the holonomy eigenvalues $\{z, z^{-1}\}$.

\begin{figure}[ht]
\centering
\begin{tikzpicture}[scale=1.4]

% === The character variety as a line ===

% Axis
\draw[-{stealth}, thick, black!50] (-5,0) -- (5.5,0);
\node[black!50, font=\footnotesize] at (5.8,-0.35) {$t = z + z^{-1}$};

% Smooth locus
\draw[green!50!black, very thick] (-4,0) -- (-0.3,0);
\draw[green!50!black, very thick] (0.3,0) -- (3.7,0);

% Special points
\fill[red!70!black] (4,0) circle (4pt);
\node[red!70!black, font=\footnotesize\bfseries, above=5pt] at (4,0) {$t = 2$};
\node[red!50, font=\scriptsize, below=5pt, align=center] at (4,0) 
     {trivial connection\\$z = 1$};

\fill[red!70!black] (-4,0) circle (4pt);
\node[red!70!black, font=\footnotesize\bfseries, above=5pt] at (-4,0) {$t = -2$};
\node[red!50, font=\scriptsize, below=5pt, align=center] at (-4,0) 
     {$z = -1$};

% Generic point
\fill[green!50!black] (1.5,0) circle (3pt);
\node[green!50!black, font=\footnotesize, above=5pt] at (1.5,0) {generic $t$};

% Formal neighborhoods
\draw[red!40, thick, dashed] (4,0) circle (1.0);
\draw[red!40, thick, dashed] (-4,0) circle (1.0);
\draw[green!30, thick, dashed] (1.5,0) circle (0.7);

% Stabilizer labels
\node[red!60, font=\scriptsize, align=center] at (4,1.6) 
     {stabilizer $= SL_2$\\$\Obs^q_{A_0}  = C^\bullet(\fg)[\![\hbar]\!]$\\
      \textbf{non-trivial}};
\node[green!40!black, font=\scriptsize, align=center] at (1.5,1.3) 
     {stabilizer $= \CC^*$\\$\Obs^q_z \simeq \CC[\![\hbar]\!]$\\
      \textbf{trivial}};

% Gluing arrows
\draw[{stealth}-{stealth}, thick, orange!70!black, 
      decorate, decoration={snake, amplitude=1pt, segment length=4pt}] 
      (3.0,0.15) -- (2.3,0.15);
\node[orange!70!black, font=\scriptsize] at (2.65,0.55) {gluing};

% Bottom: reconstruction
\draw[-{stealth}, thick, violet!60!black] (0,-0.4) -- (0,-2.2);
\node[violet!60!black, font=\scriptsize, right=2pt] at (0,-1.85) 
     {Koszul reconstruction};

\node[draw=violet!60!black, thick, rounded corners=5pt, fill=violet!5, 
      inner sep=6pt, font=\small, align=center] at (0,-3.0) 
      {$\mathcal{B}_{\CS} \;\simeq\; \IndCoh\big([SL_2/SL_2]^{\wedge}\big)$};

\end{tikzpicture}
\caption{The solid torus example (Example \ref{ex:solid-torus}). The character variety $\Loc_{SL_2}(S^1 \times D^2)_{\mathrm{cl}}$ is parametrized by the trace $t = z + z^{-1}$ of the holonomy. At generic points (green), the stabilizer is the maximal torus $\CC^*$ and the perturbative data is trivial. At $t = \pm 2$ (red), the stabilizer jumps to all of $SL_2$ and the perturbative algebra $\Obs^q_{A_0} = C^\bullet(\fg)[\![\hbar]\!]$ is non-trivial. The Koszul reconstruction (Conjecture \ref{conj:koszul}) assembles the local formal neighborhoods (dashed circles) into a global ind-coherent sheaf via gluing data that conjecturally corresponds to the Stokes constants of resurgence.}
\label{fig:solid-torus}
\end{figure}

The perturbative data varies over this moduli space. At a generic point $t \neq \pm 2$ (corresponding to $z \neq \pm 1$, i.e., an irreducible representation), the stabilizer is the maximal torus $\CC^*$, the tangent complex has trivial cohomology in the relevant degrees, and the perturbative algebra is simply $\Obs^q_z \simeq \CC[\![\hbar]\!]$ (the expansion is trivial because there are no deformations). At $t = 2$ (the trivial representation, $z = 1$), the stabilizer jumps to all of $SL_2$, the tangent complex acquires a 3-dimensional kernel, and the perturbative algebra $\Obs^q_{A_0}$ is the non-trivial BV algebra $C^\bullet(\fg)[\![\hbar]\!]$ encoding the deformation theory of the trivial flat connection.

The Koszul reconstruction assembles these local data (trivial at generic points, non-trivial at the identity) into a single global ind-coherent sheaf on $[SL_2/SL_2]^\wedge$. After quantization at level $k$ and semisimplification, this sheaf determines a vector in the BFV state space of the boundary $\partial(S^1 \times D^2) = T^2$, which is the ``vacuum vector'' generating the representation ring of $\overline{\Rep}_q(\mathfrak{sl}_2)$. The fact that the non-trivial perturbative data is concentrated at the identity (precisely where the stabilizer is largest) is a concrete manifestation of the general principle that the interesting non-perturbative structure arises from the singular locus of the character stack.
\end{example}

\subsection{Relation to resurgence}

The Koszul reconstruction formula \eqref{eq:koszul-reconstruction} assembles the global category $\mathcal{B}_{\CS}$ from local perturbative data by gluing the formal Koszul duals $\widehat{\mathcal{O}}_{[A]}$ across the character stack. A natural question is: what information is carried by the \emph{gluing data} itself, i.e., the transition maps that connect the formal neighborhoods of different flat connections?

We conjecture that this gluing data corresponds to the \emph{Stokes data} governing the resurgent structure of the perturbative Chern--Simons partition function. In the analytic approach to the asymptotic expansion \eqref{eq:asymptotic}, the perturbative series around different flat connections are not independent. In fact, their Borel resummations are related by Stokes automorphisms that encode non-perturbative tunneling effects. In the Koszul duality framework, these same tunneling effects appear algebraically as the morphisms in $\IndCoh(\Loc_G(M)^\wedge)$ that connect the skyscraper sheaves at different points of the character stack. We develop this connection in detail in the next section.

\section{Connection to Resurgence}\label{sec:resurgence}

The Koszul duality framework of the previous section provides an algebraic mechanism for assembling non-perturbative data from perturbative ingredients. There is, however, a parallel and more classical approach to the same problem, which is the theory of \emph{resurgence}, that reconstructs non-perturbative quantities from the analytic structure of divergent perturbative series. In this section, we develop the connection between the two approaches and conjecture that they encode the same information.

\subsection{Resurgence in Chern--Simons theory}

The perturbative series appearing in the asymptotic expansion \eqref{eq:asymptotic} are expected to be \emph{resurgent}. Although they diverge (have zero radius of convergence), they possess a rich analytic structure that, in principle, allows the exact non-perturbative answer to be reconstructed from the perturbative data alone, provided one also knows certain discrete ``Stokes constants'' that govern the interaction between different perturbative sectors.

\begin{definition}[Resurgent series]\label{def:resurgent}
A formal power series $\tilde{f}(\hbar) = \sum_{n \geq 0} a_n\, \hbar^n$ is called \emph{resurgent} if its Borel transform $\hat{f}(\zeta) = \sum_{n \geq 0} \tfrac{1}{n!}a_n\, \zeta^n$ (which typically has a finite radius of convergence) admits analytic continuation along any path in $\CC \setminus \Omega$, where $\Omega$ is a discrete set of \emph{alien singularities}. At each singularity $\omega \in \Omega$, the analytic continuation is multi-valued, and the \emph{alien derivative} $\Delta_\omega \tilde{f}$ measures the discontinuity, i.e., the difference between the analytic continuations along paths passing to the left and right of $\omega$. The alien derivatives are themselves resurgent series, and the collection of all alien derivatives and their iterations forms the \emph{alien algebra}, which encodes the complete non-perturbative structure of the original series.
\end{definition}

In the Chern--Simons context, the resurgent structure takes a particularly geometric form. The Borel singularities of the perturbative series around a flat connection $A$ occur at the values
\begin{equation}
\omega_{AB} = \CS(B) - \CS(A) \pmod{\ZZ}
\end{equation}
for other flat connections $B$ on $M$. The alien derivative $\Delta_{\omega_{AB}}$ at such a singularity encodes the non-perturbative correction from ``tunneling'' between the flat connections $A$ and $B$. It measures how much of the perturbative expansion around $B$ leaks into the Borel resummation of the series around $A$ when one crosses the corresponding Stokes ray in the $\hbar$-plane.

\begin{example}[Gukov--Mari\~no--Putrov \cite{GMP2016}]\label{ex:GMP}
The resurgent structure of Chern--Simons theory has been studied in detail for the Poincar\'e homology sphere $\Sigma(2,3,5)$ with gauge group $G = SU(2)$. This manifold admits exactly two flat connections: the trivial connection $A_0$ and a single irreducible connection $A_1$. The perturbative series around both have been computed to high loop order, and the Borel singularity structure has been verified. In particular, the series around $A_0$ has a Borel singularity at $\omega = \CS(A_1) - \CS(A_0)$, and vice versa. The alien derivatives at these singularities have been computed explicitly, and they precisely account for the discrepancy between the Borel-resummed perturbative contributions and the exact RT invariant $\Zrt(\Sigma(2,3,5))$. This provides a striking confirmation that the non-perturbative content of the RT invariant is, at least in this case, fully encoded in the resurgent structure of the perturbative series. Further developments on the resurgent structure of Chern--Simons invariants appear in \cite{GGM2021}; for the connection to BPS spectra and categorification, see \cite{GPPV2020}.
\end{example}

\subsection{Resurgence meets Koszul duality}

We conjecture that the resurgent structure described above and the Koszul duality reconstruction of Section \ref{sec:koszul} may be two different languages for expressing the same underlying phenomenon. In particular, the reconstruction of global, non-perturbative data from local, perturbative ingredients together with discrete ``transition data'' connecting different perturbative sectors. We conjecture that these two languages are not merely analogous but encode the same mathematical structure, though establishing this precisely would require significant further work.

\begin{conjecture}[Resurgence--Koszul correspondence]\label{conj:resurgence-koszul}
The Stokes data governing the resurgent structure of the perturbative Chern--Simons partition function are identified with the transition maps in the Koszul duality reconstruction \eqref{eq:koszul-reconstruction}. Specifically, the following hold:
\begin{enumerate}[label=(\roman*)]
\item The alien derivative $\Delta_{\omega_{AB}}$, which measures the non-perturbative tunneling between flat connections $A$ and $B$, corresponds to a morphism in $\IndCoh(\Loc_G(M)^\wedge)$ from the skyscraper sheaf $\delta_{[A]}$ to the skyscraper sheaf $\delta_{[B]}$, given by the residue of an ind-coherent sheaf at the connecting singularity of the character stack.

\item The Stokes automorphism $\mathfrak{S}_\theta$ associated to a ray $\theta$ in the Borel plane corresponds to the wall-crossing automorphism of $\IndCoh(\Loc_G(M)^\wedge)$ associated to the corresponding stability condition on the derived category.

\item The full resurgent structure, which is the totality of alien derivatives and their compositions, forming the ``alien algebra'', is equivalent to the $A_\infty$-structure on the Ext-algebra
\begin{equation}
\bigoplus_{[A],\,[B] \in \Flat_G(M)/G} \mathrm{Ext}^\bullet_{\IndCoh}\!\big(\delta_{[A]},\; \delta_{[B]}\big).
\end{equation}
\end{enumerate}
\end{conjecture}

\begin{remark}\label{rmk:resurgence-meaning}
If true, this conjecture provides a purely algebraic interpretation of resurgence phenomena in Chern--Simons theory. The non-perturbative corrections (trans-series terms, Stokes constants, the intricate pattern of tunneling between different saddle points) are not ``extra data'' that must be supplied in addition to the perturbative expansion. Rather, they are \emph{algebraically encoded} in the Koszul duality relations between the local perturbative algebras at different points of the character stack. The category $\IndCoh(\Loc_G(M)^\wedge)$ simultaneously ``knows'' both the perturbative data (through its restriction to formal neighborhoods of individual flat connections, recovering the BV-BFV expansions) and the non-perturbative data (through the global morphisms between skyscraper sheaves at different points, recovering the Stokes constants). From this perspective, the passage from perturbative to non-perturbative is not an analytic problem (requiring delicate Borel resummation and Stokes analysis) but an algebraic one. It is the passage from local sections to global sections in the sheaf theory on the character stack. We stress that this picture, while conceptually compelling, is at this stage a heuristic guide rather than a theorem; making it rigorous would require, at minimum, a precise definition of the gluing data and a proof that the resulting ind-coherent sheaf recovers the expected non-perturbative invariants.
\end{remark}

\subsection{Trans-series and Borel--Laplace reconstruction}

To make the connection between resurgence and Koszul duality more explicit, we recall the standard trans-series representation of the RT invariant. The resurgent reconstruction expresses the exact invariant as a sum over all flat connections, with each contribution given by a Borel-resummed perturbative series dressed by exponential and power-law prefactors, and corrected by instanton contributions at all orders:
\begin{equation}\label{eq:trans-series}
\Zrt(M,k) = \sum_{[A] \in \Flat_G(M)/G}\; \sum_{\ell \geq 0}\; C_{A,\ell} \cdot e^{2\pi ik\,\CS(A)} \cdot k^{d_A/2 - \ell} \cdot \mathcal{S}_\theta\!\big[\tilde{Z}_A^{(\ell)}(k^{-1})\big].
\end{equation}
Here $\tilde{Z}_A^{(0)}$ is the perturbative series around the flat connection $A$ (the output of the BV-BFV formalism), $\tilde{Z}_A^{(\ell)}$ for $\ell \geq 1$ are the higher instanton corrections (the ``trans-series sectors''), $C_{A,\ell}$ are the trans-series parameters (complex constants not determined by perturbation theory alone), and $\mathcal{S}_\theta$ denotes the Borel resummation in the direction $\theta$ in the complex $\hbar$-plane.

In the language of Conjecture \ref{conj:resurgence-koszul}, we propose the following speculative algebraic interpretations of the objects in this formula. The trans-series parameters $C_{A,\ell}$ are expected to correspond to morphisms in $\IndCoh(\Loc_G(M)^\wedge)$ connecting different flat connections. In particular, they are the ``global'' data of the ind-coherent sheaf that the Koszul reconstruction produces, encoding precisely the information that is invisible to any single perturbative expansion. We suggest that the Borel resummation $\mathcal{S}_\theta$ in a direction $\theta$ should correspond to choosing a $t$-structure or stability condition on $\IndCoh(\Loc_G(M)^\wedge)$ compatible with the ray $\theta$. Different choices of $\theta$ give different resummations (related by Stokes automorphisms); we expect this to mirror the fact that different stability conditions give different resolutions of the same derived category, though making this analogy precise remains an open problem. The exact RT invariant, which is independent of $\theta$, should then correspond to a canonical (stability-independent) object in the category. This interpretation, if correct, would give a purely algebraic explanation of the non-perturbative completeness of resurgence.

If correct, this perspective would offer an algebraic explanation for the non-perturbative completeness of resurgence (the expectation that all trans-series parameters can, in principle, be extracted from the perturbative data plus the alien derivatives). It would follow from the categorical assertion that $\IndCoh(\Loc_G(M)^\wedge)$ is generated by the skyscraper sheaves at the flat connections. We stress that this remains a speculative interpretation. Whether it can be made mathematically precise, and whether the relevant categorical generation property actually holds in the required generality, are significant open problems.

\section{Evidence and Explicit Computations}\label{sec:evidence}

We now present evidence supporting the conjectures of Sections \ref{sec:bridge}--\ref{sec:resurgence}, by examining cases where explicit computation is possible. These examples serve as consistency checks and as motivation for the general program, but we emphasize that agreement in special cases does not constitute a proof of the general conjectures, as the main difficulties (particularly the derived-geometric reformulation of BV-BFV and the root-of-unity specialization) are not visible in these simplified settings. These examples range from the fully solvable abelian case (where the entire program can be carried out rigorously) to the non-abelian cases of low genus, lens spaces, and Seifert manifolds (where extensive partial results provide strong support).

\subsection{The abelian case: $G = U(1)$}

When the gauge group is abelian, the entire program collapses to a tractable and completely explicit form, providing a proof-of-concept for the general framework.

\begin{proposition}\label{prop:abelian}
For $G = U(1)$ at level $k$, the conjectural picture of this paper can be verified explicitly in a simplified form. In this case many of the technical difficulties (divergent series, non-trivial Koszul duality, semisimplification) are absent, and the various frameworks reduce to elementary constructions:
\begin{enumerate}[label=(\roman*)]
\item The character stack $\Loc_{U(1)}(\Sigma_g) = H^1(\Sigma_g; U(1)) = (S^1)^{2g}$ is a smooth $2g$-dimensional torus, equipped with its standard symplectic structure given by the intersection form on $H^1(\Sigma_g; \ZZ)$ tensored with $\RR/\ZZ$. The PTVV $0$-shifted symplectic structure of Theorem \ref{thm:PTVV} reduces to this classical form, and the derived structure is trivial (since the moduli space is already smooth and unobstructed).

\item The role of the quantum group is played, in a schematic sense, by $U_q(\mathfrak{u}(1)) \cong \CC[\ZZ]$ (the group algebra of the integers; this identification is an expository shorthand, as the abelian case does not require the full Drinfeld--Jimbo machinery). The corresponding modular tensor category at level $k$ is $\Vect_{\ZZ/k\ZZ}$, which is the category of $\ZZ/k\ZZ$-graded vector spaces, with braiding $c_{a,b} = q^{ab}$ for $q = e^{2\pi i/k}$. The $S$-matrix is $S_{ab} = k^{-1/2}\, q^{ab}$, which is the discrete Fourier transform matrix and hence invertible, confirming modularity.

\item The RT invariant is closely related to the Dijkgraaf--Witten invariant for the finite abelian group $\ZZ/k\ZZ$. Up to standard normalization conventions, it takes the form:
\begin{equation}
\Zrt(M,k) = k^{-b_1(M)/2} \sum_{\gamma \in H^1(M;\,\ZZ/k\ZZ)} e^{2\pi ik \cdot q(\gamma)},
\end{equation}
where $q: H^1(M; \ZZ/k\ZZ) \to \QQ/\ZZ$ is the linking form and $b_1(M)$ is the first Betti number.

\item The BV-BFV perturbative expansion is \emph{one-loop exact}. Indeed, the abelian Chern--Simons action $\CS(A) = \frac{1}{4\pi}\int_M A \wedge dA$ is quadratic in the connection, so there are no cubic vertices and hence no Feynman diagrams beyond one loop. All higher-order coefficients vanish, i.e., $a_n = 0$ for $n \geq 1$. The perturbative partition function $Z^{\mathrm{pert}}_A = e^{2\pi ik\,\CS(A)} \cdot \tau_A^{1/2} \cdot e^{i\pi\eta_A/4}$, summed over flat connections $A \in H^1(M; U(1))$, reproduces $\Zrt(M,k)$ exactly. No resurgence is needed because the perturbative expansion already gives the exact answer.

\item The Koszul reconstruction (Conjecture \ref{conj:koszul}) becomes tautological in this case. The character stack $(S^1)^{2g}$ is smooth, so the distinction between $\IndCoh$ and $\QCoh$ disappears, and the formal completion equals the stack itself. The non-trivial content of the conjecture (the gluing of perturbative data across distinct flat connections) is invisible here because the perturbative expansion is already exact.

\item The factorization homology computation also simplifies because we get \[\iint_{\Sigma_g} \Vect_{\ZZ/k\ZZ} \simeq (\mathbf{H},\, u_{\Sigma_g})\] with $\dim u_{\Sigma_g} = k^g$, matching the Verlinde formula for $U(1)$ at level $k$.
\end{enumerate}
\end{proposition}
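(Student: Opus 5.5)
The plan is to verify the six items of Proposition \ref{prop:abelian} one at a time. Each item reduces to a standard computation once abelianness is used, so the only real work is in item (iv). I would take them in the order classical, algebraic, topological, perturbative, matching the four layers of the program.

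For (i), I would use Proposition \ref{prop:Loc-explicit}(ii) with $\fg = \mathfrak{u}(1)$. The adjoint local system is trivial, so the tangent complex at every point is $H^\bullet(\Sigma_g;\RR)[1]$, and this is independent of the point. Hence $H^1$ has constant rank $2g$. The $H^0$ and $H^2$ pieces split off as the stacky factor $BU(1)$ and its dual, which I will discard when speaking of the coarse torus $(S^1)^{2g}$. Theorem \ref{thm:PTVV}(ii) then gives $\omega_{AB}$ as the cup-product pairing on $H^1$, which is the intersection form.

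For (ii), I would note that the braiding $c_{a,b}=q^{ab}$ with $q=e^{2\pi i/k}$ is a bicharacter, so the hexagon axioms hold trivially; the pointed fusion category $\Vect_{\ZZ/k\ZZ}$ with this quadratic data is standard. The $S$-matrix \eqref{eq:S-matrix} is the discrete Fourier matrix $k^{-1/2}q^{ab}$ (up to sign/conjugation conventions), and $\sum_b q^{(a-a')b} = k\,\delta_{aa'}$ shows it is unitary.

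For (vi), I would evaluate Verlinde \eqref{eq:Verlinde}. Every simple object has $\dim_q = 1$, so $\mathcal{D}^2 = k$ and
$$\dim V(\Sigma_g) = \sum_{a\in\ZZ/k} (\sqrt{k})^{2g-2} = k^g .$$
Theorem \ref{thm:AKZ} then supplies $u_{\Sigma_g}$. Item (v) is immediate from (i): a smooth underived target has $\IndCoh \simeq \QCoh$.

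For (iii), I would compute \eqref{eq:surgery} on a surgery link with linking matrix $B$. Since $J_L(a_1,\dots,a_m) = q^{a^TBa/2}$ up to framing conventions, the invariant becomes a Gauss sum over $(\ZZ/k)^m$. The standard reciprocity for quadratic Gauss sums (Deloup's formula) converts it into the sum over $H^1(M;\ZZ/k)$ weighted by the linking form, and the signature factor absorbs $\kappa^{-\sigma(L)}$.

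For (iv), there are no cubic vertices, so the Feynman expansion of Proposition \ref{prop:pert-Z} truncates at the Gaussian term, giving $I_n = 0$ for all $n$. It then remains to match $\sum_A e^{2\pi i k\CS(A)}\tau_A^{1/2}e^{i\pi\eta/4}$ with (iii). Here I would use that the abelian Ray--Singer torsion equals the Reidemeister torsion, which is expressible via $|{\rm Tors}\,H_1(M)|$ (cf.\ Jeffrey / Manoliu), and that $\CS$ on torsion flat connections is the linking form.

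I expect the main obstacle to be this step (iv). The difficulties are that the flat $U(1)$ connections form a torus of dimension $b_1$ rather than a finite set, so one needs integration over the moduli with the torsion measure, and that the normalizations ($k^{-b_1/2}$, framing phases, $\eta$) must match. I would first do the rational homology sphere case, where everything is a finite sum and the torsion/Gauss-sum comparison is explicit, and then treat $b_1>0$ by the same computation applied to the torsion part of $H_1$, with the $k^{-b_1/2}$ factor produced by the volume of the torus of flat connections. I would state the result only up to the usual phase normalization, consistent with the proposition's ``up to standard normalization conventions.''
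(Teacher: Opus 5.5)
\textbf{Gaps in items (ii), (i) and (v).} Two of the steps you treat as routine fail as written.

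In (ii), compute \eqref{eq:S-matrix} from the braiding you are actually given. You get $c_{b,a}\circ c_{a,b}=q^{2ab}$, so $S_{ab}=k^{-1/2}q^{2ab}$, not $k^{-1/2}q^{ab}$. The difference is not a sign or conjugation convention. For even $k$ the columns $b$ and $b+k/2$ coincide, $V_{k/2}$ is transparent, and the category is not modular (for $k=2$ it is symmetric). So your unitarity check is applied to the wrong matrix. You have two options:
\begin{itemize}
\item restrict to odd $k$, where $q^{2ab}$ is still invertible; or
\item for even $k$, use the genuine $U(1)_k$ ribbon structure with twist $\theta_a=e^{\pi i a^2/k}$ and monodromy $e^{2\pi i ab/k}$, whose $S$-matrix is the Fourier matrix. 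This needs a non-trivial associator, since no bicharacter $\beta$ on $\ZZ/k\ZZ$ has $\beta(a,a)=e^{\pi i a^2/k}$.
\end{itemize}
Your (iii) in fact already uses $J_L=q^{a^{T}Ba/2}$, which is the second structure; the braiding of (ii) would give $J_L=q^{a^{T}Ba}$. The count $k^g$ in (vi) is right, but it is the standard $\sum_i S_{0i}^{2-2g}$, not \eqref{eq:Verlinde} as displayed.

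Next, (v) is not ``immediate from (i)''. The $H^0(\Sigma_g;\fg)$ and $H^2(\Sigma_g;\fg)$ pieces you discard in (i) are exactly what make the claim ``the derived structure is trivial'' false. Because the abelian moment map $\prod_i[a_i,b_i]$ is identically $1$, one gets $\Loc_{\CC^\times}(\Sigma_g)\simeq(\CC^\times)^{2g}\times B\CC^\times\times\mathrm{Spec}\,\Sym(\fg^\vee[1])$. On the last factor the residue field is coherent but not perfect, so $\IndCoh\not\simeq\QCoh$. Moreover, Conjecture \ref{conj:koszul} concerns $\Loc(M)$ for the $3$-manifold $M$. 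Its classical truncation $H^1(M;\CC^\times)\times B\CC^\times$ is smooth, while its tangent complex carries $H^3(M;\fg)\cong\fg$ in degree $2$, so it is never underived either. What you can honestly prove is the classical-truncation version of (i) and (v), and you should say so.

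\textbf{The main gap: items (iii) and (iv).} Gauss reciprocity turns the surgery sum over $\ZZ^m/k\ZZ^m$ into $e^{\pi i\sigma(B)/4}\,k^{m/2}\,|\det B|^{-1/2}$ times a sum over $B^{-1}\ZZ^m/\ZZ^m\cong\mathrm{coker}\,B\cong H_1(M;\ZZ)$, weighted by $e^{\pm\pi i k\,\mathrm{lk}(y,y)}$. That is a sum over \emph{all} flat $U(1)$-connections, not over $H^1(M;\ZZ/k\ZZ)=\ker(B\bmod k)$. A sum over the latter arises instead from splitting off the radical of the quadratic function on $(\ZZ/k\ZZ)^m$, with no reciprocity at all. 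So your derivation of (iii) does not produce the displayed formula. Your plan for (iv), matching the one-loop sum over $\Hom(H_1(M),U(1))$ against a sum over $\Hom(H_1(M),\ZZ/k\ZZ)$, then has no bridge: for $L(p,1)$ with $\gcd(p,k)=1$ that is $p$ terms against one.

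The missing idea is to apply reciprocity once, directly to the surgery formula, for a rational homology sphere. Its right-hand side \emph{is} the one-loop sum term by term, with:
\begin{itemize}
\item $|\det B|^{-1/2}=|H_1(M)|^{-1/2}=\tau^{1/2}$;
\item $e^{\pm\pi i k\,\mathrm{lk}(y,y)}=e^{2\pi i k\,\CS(A_y)}$;
\item $k^{m/2}$ cancelling $\mathcal{D}^{-m}$;
\item the phases $e^{\pi i\sigma(B)/4}$ and $\kappa^{-\sigma(B)}$ reconciled with $e^{i\pi\eta/4}$ and the framing correction by the APS theorem on the surgery trace, whose signature is $\sigma(B)$.
\end{itemize}

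Two further corrections. First, every abelian flat connection has $H^0(M;\fg)=\fg$, so \eqref{eq:pert-CS-structure}, stated for acyclic backgrounds, does not apply verbatim. The $U(1)$ stabilizer is what supplies the leftover $\mathcal{D}^{-1}=k^{-1/2}$. Second, for $b_1>0$ you split off $\ker B$, a direct summand on which the quadratic function is trivial. This yields a net extra factor $k^{+b_1/2}$, to be matched with the integral over the $b_1$-torus of flat connections. The prefactor $k^{-b_1/2}$ in (iii) merely compensates $|H^1(M;\ZZ/k\ZZ)|\ge k^{b_1}$; it is not the torus volume.
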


\begin{remark}\label{rmk:abelian-lesson}
The abelian case is instructive precisely because of its simplicity. The perturbative expansion is exact, so the ``gap'' between perturbative and non-perturbative that motivates this paper vanishes entirely. In particular, there are no divergent series, no Stokes phenomena, no non-trivial Koszul duality, and no resurgent corrections. The passage from BV-BFV to RT is a straightforward summation. 

The non-abelian case is fundamentally different. The cubic vertex produces an infinite sequence of Feynman diagrams, the perturbative series diverges, and the relationship between the perturbative and non-perturbative sides becomes genuinely non-trivial. Nevertheless, all of the \emph{structural} features of our program (the character stack with its shifted symplectic structure, the deformation quantization producing the MTC, the factorization homology recovering the RT state spaces, and the Koszul reconstruction assembling local data into a global category) are already visible in the abelian case. The non-abelian generalization replaces each of these simple instances with a richer and more subtle version of the same structure.
\end{remark}

\subsection{Genus zero: $\Sigma = S^2$}

The genus-zero case is the simplest non-trivial check of our framework for a non-abelian group, and it illustrates how the derived structure of the character stack interacts with the triviality of the fundamental group.

\begin{proposition}\label{prop:genus-zero}
For $\Sigma = S^2$, all frameworks agree and produce a one-dimensional state space:
\begin{enumerate}[label=(\roman*)]
\item Since $\pi_1(S^2) = 0$, the character stack $\Loc_G(S^2)$ has a unique point (the trivial flat connection) up to gauge equivalence. As a derived stack, however, it is not simply a point. It is $\Loc_G(S^2) \cong BG$, the classifying stack of $G$. The tangent complex at the unique point is $\fg[1] \xrightarrow{0} \fg[2]$, reflecting the cohomology $H^\bullet(S^2;\, \fg) = \fg \oplus \fg[-2]$. The derived structure, though the classical truncation is a single point, is non-trivial. In particular, it remembers the automorphism group $G$ of the trivial connection and the 2-dimensional obstruction space $H^2(S^2;\, \fg) \cong \fg$.

\item The factorization homology $\iint_{S^2} \overline{\Rep}_q(G)$ can be computed using excision applied to the decomposition $S^2 = D^2_+ \cup_{S^1} D^2_-$ into two hemispheres glued along the equator. Since $\iint_{D^2} \MTC \simeq \MTC$ and $\int_{S^1} \MTC \simeq \mathcal{Z}(\MTC)$ (the Drinfeld center), excision gives $\iint_{S^2} \MTC \simeq \MTC \otimes_{\mathcal{Z}(\MTC)} \MTC$. For a modular tensor category, this relative tensor product is equivalent to $\mathbf{H}$ (the category of finite-dimensional Hilbert spaces in the unitary setting, or $\Vect$ in the algebraic setting), reflecting the fact that for a modular tensor category, the canonical functor $\MTC \boxtimes \overline{\MTC} \to \mathcal{Z}(\MTC)$ is an equivalence, from which it follows that the relative tensor product is equivalent to $\Vect$. The distinguished object $u_{S^2}$ of Theorem \ref{thm:AKZ} is one-dimensional, consistent with $V_{\mathrm{RT}}(S^2) = \CC$.

\item Both the RT state space $V(S^2) = \CC$ and the BFV state space $\mathcal{H}_{S^2}^{\BV} = \CC[\![\hbar]\!]$ are one-dimensional. Upon specialization $\hbar \mapsto 2\pi i/(k+h^\vee)$, they agree. The Verlinde formula \eqref{eq:Verlinde} confirms this independently since we have $\dim V(S^2) = 1$ for all levels $k$ and all groups $G$.

\item From the BV-BFV perspective, the BFV phase space for $S^2$ is \[T^*[-1](\Omega^\bullet(S^2, \fg)[1]),\] and the physical state space (the BFV cohomology) is one-dimensional. This can be understood as follows. The only flat connection on $S^2$ is the trivial one, its stabilizer is all of $G$, and the reduced phase space $\mathrm{pt}/G$ has a one-dimensional quantization corresponding to the trivial representation.
\end{enumerate}
\end{proposition}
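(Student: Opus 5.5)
The plan is to verify the four items of Proposition~\ref{prop:genus-zero} one at a time, each by the framework it refers to, and then check that the answers match. For (i), I use Definition~\ref{def:char-stack} together with the fact that $S^2$ is simply connected, so its Betti stack maps to $BG$ only through the trivial local system. Hence the classical truncation is a single point with automorphism group $G$. By Proposition~\ref{prop:Loc-explicit}(ii), the tangent complex at that point is $C^\bullet(S^2;\fg)[1]$, whose cohomology is $\fg$ in degree $-1$ and $\fg$ in degree $1$ (from $H^0$ and $H^2$). This gives the stated tangent complex, with infinitesimal automorphisms $\fg$ and obstruction space $H^2(S^2;\fg)\cong\fg$.

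One correction is needed here. The mapping stack is not literally $BG$; it is the derived stack $\Map(S^2_B,BG)\simeq BG\times_{G}BG$, i.e.\ the derived fiber of the trivial loop, equivalently $[\{1\}\times^{h}_{G}\{1\}/G]$. Its classical truncation is $BG$, and that is what I will prove. The $0$-shifted symplectic pairing of Theorem~\ref{thm:PTVV}(ii) then pairs the degree $-1$ and degree $1$ pieces, which is consistent with this identification.

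For (ii), I apply Theorem~\ref{thm:AF-excision} to $S^2=D^2_+\cup_{S^1\times\RR}D^2_-$. By Proposition~\ref{prop:FH-cat}, the factors are $\MTC$ and $\mathcal{Z}(\MTC)$, so
\[\iint_{S^2}\MTC\simeq\MTC\boxtimes_{\mathcal{Z}(\MTC)}\MTC.\]
Next I use modularity: Müger's theorem gives $\mathcal{Z}(\MTC)\simeq\MTC\boxtimes\overline{\MTC}$. Under this equivalence, $\MTC$ is the regular module over the factor $\MTC$, with $\overline{\MTC}$ acting through the tensor product. The relative tensor product then becomes $\MTC\boxtimes_{\MTC\boxtimes\overline{\MTC}}\MTC$, which is the factorization homology of $S^2$ computed by an invertible Morita bimodule, and this is $\Vect$. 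To identify the distinguished object, I invoke Theorem~\ref{thm:AKZ} for $g=0$ rather than rederive it. Its dimension is then fixed by the Verlinde formula \eqref{eq:Verlinde}: with $g=0$ it gives $\mathcal{D}^{-2}\sum_i(\dim_q V_i)^2=1$, using the definition of $\mathcal{D}^2$ in Definition~\ref{def:qdim}.

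Items (iii) and (iv) concern the BFV side, and I expect them to be the main obstacle. The BFV cohomology of $\Omega^\bullet(S^2,\fg)[1]$ is only formally defined, so I will argue at the level of the reduced phase space. Since the Hamiltonian flow of $S^\partial_\Sigma$ in \eqref{eq:BFV-charge} imposes flatness and gauge invariance, the cohomology in degree $0$ is the space of $G$-invariant states on the reduced space $\mathrm{pt}/G$. The invariant part of $\CC[\![\hbar]\!]$ is free of rank one.

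The subtle point is the ghost-number cohomology coming from $H^\bullet(\fg)$ (Chevalley--Eilenberg). I will restrict to ghost number zero, as the statement implicitly does, and note that the higher Lie algebra cohomology is exactly what the derived structure in (i) records. Specialization at $\hbar=2\pi i/(k+h^\vee)$ then matches $V_{\mathrm{RT}}(S^2)=\CC$. The gluing pairing of Theorem~\ref{thm:BV-gluing} on $S^3=D^3\cup_{S^2}D^3$ provides a consistency check that the generator is nonzero.
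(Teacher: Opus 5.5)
You follow the paper's own route, checking the proposition item by item with the justification built into (i)–(iv). Your correction to (i) is right, and in fact forced. $BG$ has $\mathbb{T}_{BG}=\fg[1]$ and is $2$-shifted symplectic, so it cannot carry the $0$-shifted structure of Theorem \ref{thm:PTVV}(ii); only the classical truncation is $BG$.

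Two slips in (i):
\begin{itemize}
\item The gluing $S^2=\mathrm{pt}\cup_{S^1}\mathrm{pt}$ gives $BG\times_{[G/G]}BG$, not $BG\times_G BG$. Your second description, $[\{1\}\times^h_G\{1\}/G]$, is the correct one.
\item The complex you compute, $\fg[1]\oplus\fg[-1]$, corrects the printed $\fg[1]\xrightarrow{0}\fg[2]$ rather than reproducing it. Say so.
\end{itemize}
Your ghost-number-zero restriction in (iii)–(iv) is genuinely necessary. The full BFV cohomology contains $H^\bullet(\fg)$, which is nonzero in degree $3$ for simple $\fg$.

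The step that fails is the excision computation in (ii); the paper's sketch makes the same inference.

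\textbf{Why it fails.} Take the module structure you specify: $\MTC$ acting on itself, and $\overline{\MTC}$ acting by tensor product. Identify $\overline{\MTC}\simeq\MTC^{\mathrm{mop}}$ via the braiding. The hemisphere is then the regular $\MTC$-bimodule, so $\MTC\boxtimes_{\MTC\boxtimes\MTC^{\mathrm{mop}}}\MTC$ is the categorical Hochschild homology of Proposition \ref{prop:FH-cat}(i). That is $\mathcal{Z}(\MTC)\simeq\MTC\boxtimes\overline{\MTC}$, with $|\mathrm{Irr}(\MTC)|^2$ simple objects, not $\Vect$. Equivalently, $\MTC$ is the invertible bimodule between $\mathcal{Z}(\MTC)$ and $\MTC\boxtimes\MTC^{\mathrm{mop}}$, not between $\mathcal{Z}(\MTC)$ and $\Vect$. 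So your clause about an ``invertible Morita bimodule'' is false for the structure you wrote down, and Müger's theorem alone does not give $\Vect$.

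\textbf{What is missing.} The monoidal structure on $\iint_{S^1\times\RR}\MTC$ from stacking in the $\RR$-direction is not the Drinfeld-center product. Already its unit, the distinguished object, is not simple: its endomorphism algebra has dimension $|\mathrm{Irr}(\MTC)|$. Modularity is what makes this annulus algebra Morita trivial: it is $\mathrm{End}(\MTC)$, with the disk category as its defining module. Equivalently, a nondegenerate braided fusion category is an invertible $\EE_2$-algebra (Brochier--Jordan--Safronov--Snyder). Only then is each hemisphere an invertible bimodule to $\Vect$.

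\textbf{Repair.} You already invoke Theorem \ref{thm:AKZ} at $g=0$ for the distinguished object. The simplest fix is to take the whole equivalence $\iint_{S^2}\MTC\simeq(\mathbf{H},u_{S^2})$ from it.

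\textbf{Verlinde formula.} As printed, \eqref{eq:Verlinde} gives $\mathcal{D}^2\sum_i(\dim_qV_i)^{-2}$ (first form) or $\mathcal{D}^{-2}$ (second form) at $g=0$. The value $\mathcal{D}^{-2}\sum_i(\dim_qV_i)^2=1$ that you use is the correct $\sum_i S_{0i}^{2-2g}$, so state that formula explicitly. Alternatively, use $V_{\mathrm{RT}}(S^2)=\Hom_{\MTC}(\mathbf{1},\mathbf{1})=\CC$ directly.
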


\subsection{Genus one: $\Sigma = T^2$}

The torus is the first case where the character variety has positive dimension, and it provides a rich testing ground for our conjectures. We specialize to $G = SU(2)$, the simplest non-abelian group.

\begin{proposition}\label{prop:genus-one}
For $G = SU(2)$ at level $k$, the three frameworks (BV-BFV, factorization homology, and RT) all produce a $(k+1)$-dimensional state space for the torus, with matching $SL_2(\ZZ)$-representations:
\begin{enumerate}[label=(\roman*)]
\item \emph{The character variety.} Flat $SU(2)$-connections on $T^2$ are pairs of commuting elements $(A, B) \in SU(2)^2$ modulo simultaneous conjugation. Since commuting elements in $SU(2)$ must lie in a common maximal torus $U(1) \subset SU(2)$, we may write $A = \mathrm{diag}(e^{i\alpha}, e^{-i\alpha})$ and $B = \mathrm{diag}(e^{i\beta}, e^{-i\beta})$, with $\alpha, \beta \in [0, \pi]$. The residual Weyl group $\ZZ/2\ZZ$ acts by $(\alpha, \beta) \mapsto (-\alpha, -\beta)$, and the quotient $[0,\pi]^2 / \sim$ is the ``pillowcase'', i.e., topologically a 2-sphere with four orbifold points at the corners $(\alpha, \beta) \in \{0, \pi\}^2$, where the stabilizer jumps from $U(1)$ (generic points) to all of $SU(2)$ (the four fixed points of the Weyl reflection).

\item \emph{The Verlinde formula.} The RT state space $V(T^2)$ has dimension $k+1$. This equals the number of integrable highest-weight representations of the affine Lie algebra $\widehat{\mathfrak{su}(2)}$ at level $k$, corresponding to spins $j = 0, \frac{1}{2}, 1, \ldots, \frac{k}{2}$.

\item \emph{Factorization homology.} By Theorem \ref{thm:AKZ}, $\iint_{T^2} \overline{\Rep}_q(\mathfrak{sl}_2) \simeq (\mathbf{H}, u_{T^2})$ where the distinguished object $u_{T^2}$ has dimension $k+1$. This can also be seen from the non-semisimplified computation: \[\iint_{T^2} \Rep_q(G) \simeq \mathcal{D}_q(G/G)\text{-}\mathrm{\mathbf{mod}}\] (Theorem \ref{thm:BBJ}), which upon semisimplification collapses to $\Vect$ with a $(k+1)$-dimensional distinguished object.
\item \emph{Geometric quantization.} The same dimension $k+1$ is obtained by geometric quantization of the pillowcase. The Bohr--Sommerfeld condition selects $k+1$ orbits at $\alpha = j\pi/(k+2)$ for $j = 1, \ldots, k+1$ (the boundary values $j = 0$ and $j = k+2$ are excluded due to the orbifold structure). These orbits are in bijection with the simple objects of $\overline{\Rep}_q(\mathfrak{sl}_2)$. This has been verified rigorously by Andersen \cite{Andersen2006}; see also \cite{Hitchin1990,AU2015} for the geometric quantization framework.

\item \emph{Mapping class group representations.} The mapping class group of the torus is $\MCG(T^2) = SL_2(\ZZ)$, generated by the modular transformations $\tau \mapsto -1/\tau$ (the $S$-transformation) and $\tau \mapsto \tau + 1$ (the $T$-transformation). All three frameworks produce the same projective $SL_2(\ZZ)$-representation on the $(k+1)$-dimensional state space, with generators:
\begin{align}
S_{jj'} &= \sqrt{\frac{2}{k+2}}\; \sin\left(\frac{(j+1)(j'+1)\pi}{k+2}\right), \\[4pt]
T_{jj'} &= \delta_{jj'}\; e^{2\pi i\left(j(j+2)/4(k+2)\, -\, 1/8\right)}.
\end{align}
These matrices satisfy the projective relations $S^4 = \mathbf{1}$ and $(ST)^3 = S^2$, with the projectivity determined by the central charge $c = 3k/(k+2)$ of the $\widehat{\mathfrak{su}(2)}_k$ WZW model. The agreement between the RT prescription, the BV-BFV quantization of the modular transformations on the pillowcase, and the factorization homology autoequivalences is a strong consistency check on the program.
\end{enumerate}
\end{proposition}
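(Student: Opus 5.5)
The plan is to prove the five items of Proposition \ref{prop:genus-one} separately, with the $SU(2)$ level-$k$ modular tensor category of Example \ref{ex:quantum-MTC} as the common reference point. Each framework will be reduced to a statement about that category, and the statements will then be matched.

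For (i), I will use Proposition \ref{prop:Loc-explicit}(iv) with $g=1$. The relation becomes $[A,B]=1$. Every element of $SU(2)$ lies in a maximal torus. The centralizer of a non-central element is exactly that torus. Hence a commuting pair lies in a common $U(1)$, which can be conjugated to the diagonal one. The remaining freedom is the normalizer $N(U(1))$, so the quotient is $U(1)^2/W$ with $W=\ZZ/2$ acting by inversion. From this I read off the pillowcase and its four fixed points $\{0,\pi\}^2$. At those points both holonomies are central, $\pm1$, so the stabilizer is all of $SU(2)$; at generic points it is $U(1)$.

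For (ii), I will evaluate the Verlinde formula \eqref{eq:Verlinde} at $g=1$. It gives $|\mathrm{Irr}(\MTC)|$, and the Weyl alcove description in Example \ref{ex:quantum-MTC} gives $k+1$ weights.

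For (iii), I will invoke Theorem \ref{thm:AKZ}. Its hypotheses hold by the remark following it. To compute $\dim u_{T^2}$, I will use excision (Theorem \ref{thm:AF-excision}) along a circle together with Proposition \ref{prop:FH-cat}. Concretely, $u_{T^2}$ is the trace of the annulus, which is $\bigoplus_i \Hom(V_i,V_i)$, one copy of $\CC$ per simple object, so its dimension is again $k+1$.

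For (iv), I will take Andersen's result \cite{Andersen2006} as given and only check that the count is consistent. Write the action in action-angle coordinates $(\alpha,\beta)$. The prequantum condition at level $k$, shifted by $h^\vee=2$, gives $\alpha\in\frac{\pi}{k+2}\ZZ$. Excluding the two orbifold endpoints leaves $k+1$ values, and these match the alcove labels through $j\mapsto j-1$.

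For (v), I will derive $S$ and $T$ on the RT side. $S$ comes from the Hopf link, reduced through the Kac--Peterson formula \eqref{eq:KP-formula} for $\mathfrak{sl}_2$. $T$ comes from the twist eigenvalues $\theta_j=q^{j(j+2)/2}$ together with the framing correction $e^{-2\pi i c/24}$, where $c=3k/(k+2)$. The relations $S^2=C$ (the charge conjugation, trivial for $SU(2)$), $S^4=1$ and $(ST)^3=S^2$ up to this scalar then follow from the standard Gauss sum identity for $\kappa$. I will check this identity directly via the sine orthogonality $\sum_m \sin\frac{am\pi}{k+2}\sin\frac{bm\pi}{k+2}=\frac{k+2}{2}\delta_{ab}$.

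The factorization homology side gives the same representation. The mapping class group acts on $\iint_{T^2}$ by functoriality in embeddings, and under AKZ this action agrees with the RT action, because both are determined by the same modular data.

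The main obstacle is the BV-BFV part of the claim in (v), together with the identification of $\mathcal{H}^{\BV}_{T^2}$ with a $(k+1)$-dimensional space. This is not available from earlier results without Conjecture \ref{conj:E2-equiv}. My first attempt would be to restrict to the abelianized sector. On the maximal torus the theory is quadratic, and Proposition \ref{prop:abelian}(iv) makes it one-loop exact. The modular transformations then act on the $U(1)$ theta functions at level $2(k+2)$. Weyl antisymmetrization should turn this into the claimed representation, since the odd combinations of theta functions reproduce the $\sin$ kernel. The $\rho$-shift $k\to k+2$ would be justified by the one-loop $\eta$-invariant in \eqref{eq:pert-CS-structure}. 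For the full nonabelian claim I would state it conditionally on the main conjectures.
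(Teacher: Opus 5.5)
The genuine gap is the one you already flag yourself. On (i)--(iv) and on the RT and factorization-homology halves of (v), you follow the same route as the paper. The paper's justification consists only of the inline appeals to the Verlinde formula at $g=1$, Theorem~\ref{thm:AKZ}, BBJ, Andersen, and the explicit $\mathfrak{sl}_2$ modular data. Modulo the repairs in the second paragraph, that part is fine. Your $U(1)^2/W$ in (i) is actually more accurate than the statement: $[0,\pi]^2$ is not a fundamental domain for $(\alpha,\beta)\mapsto(-\alpha,-\beta)$.

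The proposition asserts \emph{unconditionally} that BV-BFV quantization gives a $(k+1)$-dimensional torus state space with the same $SL_2(\ZZ)$-action. Your proposal does not prove this, and the abelianization fallback would not close it, for three reasons.
\begin{enumerate}
\item The CMR state space $\mathcal{H}^{\BV}_{T^2}$ is a $\CC[\![\hbar]\!]$-module built perturbatively (Corollary~\ref{cor:BV-functor}, Remark~\ref{rmk:perturbative}). There $k$ enters only through the formal parameter $\hbar$, so no finite count $k+1$ can come out without non-perturbative input. That input is exactly what Conjectures~\ref{conj:E2-equiv} and~\ref{conj:main} are meant to supply.
\item $SU(2)$ Chern--Simons theory restricted to the Cartan is not a quadratic $U(1)$ theory. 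The off-diagonal components couple cubically to the diagonal ones and must be integrated out. Their determinant degenerates exactly at the four pillowcase corners, where the charged flat line bundle becomes trivial. It is this determinant that produces the Weyl antisymmetrization and the shift $k\to k+2$. Proposition~\ref{prop:abelian}(iv), which concerns genuinely abelian theory on closed 3-manifolds, gives no control over it.
\item Weyl-odd theta functions are a holomorphic quantization of the reduced phase space, i.e.\ item (iv) again, not BFV cohomology.
\end{enumerate}
The paper supplies no argument here either. Its BV-BFV clause tacitly identifies the BV-BFV state space with a quantization of the character variety (compare Proposition~\ref{prop:MCG}(ii)), which inside the paper is an instance of Conjecture~\ref{conj:main}. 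What is provable is the statement with ``BV-BFV'' replaced by geometric quantization. The BV-BFV clause should be stated conditionally, as you propose at the end.

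Several smaller repairs are also needed.
\begin{itemize}
\item \emph{The $S$-matrix.} Specializing \eqref{eq:KP-formula} literally to $\mathfrak{sl}_2$ gives $\sin(\pi ab/N)/(\sqrt{N}\sin(\pi/N))$, with $N=k+2$, $a=j+1$, $b=j'+1$. The displayed formula has a spurious Weyl denominator and the prefactor $(k+h^\vee)^{\rk(\fg)/2}$ in place of $|P/(k+h^\vee)Q^\vee|^{1/2}$. Instead compute $S=[ab]_q/\mathcal{D}$ from the Hopf link with $\mathcal{D}^2=N/(2\sin^2(\pi/N))$; this is the normalization your sine-orthogonality check actually uses.
\item \emph{The $T$-matrix.} Your $T$ carries the correction $-c/24$, whereas the displayed $T$ has $-1/8$. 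They differ by the scalar $e^{-\pi i/(2(k+2))}$, and for the displayed one $(ST)^3=e^{-3\pi i/(2(k+2))}S^2$. So state that the relation holds only projectively there.
\item \emph{Item (iii).} Excision along a circle combined with Proposition~\ref{prop:FH-cat}(iii) taken at face value gives the Hochschild trace of the monoidal category $\mathcal{Z}(\MTC)\simeq\MTC\boxtimes\overline{\MTC}$, which for $k\geq 1$ is not $\Vect$. Your $\bigoplus_i\mathrm{End}(V_i)$ is really the trace of $\id_{\MTC}$ in the $(3\text{-}2\text{-}1)$ picture. The clean route is Theorem~\ref{thm:AKZ} together with (ii).
\item \emph{Item (iv).} The shift $k\to k+2$ is not part of the prequantum condition: at level $k$ the interior Bohr--Sommerfeld fibres are $\alpha=j\pi/k$, $1\leq j\leq k-1$. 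It has to be imported as a half-form correction.
\item \emph{Mapping class group action.} ``Determined by the same modular data'' does not show that the mapping class group action on $u_{T^2}$ agrees with the RT action. That needs an actual input, e.g.\ realizing RT as a boundary theory of Crane--Yetter.
\end{itemize}
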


\subsection{Mapping class group representations}

A particularly stringent test of any TQFT-like construction is whether it produces the correct representations of mapping class groups. Since diffeomorphisms of a surface act as symmetries in all three of our frameworks, the resulting representations must agree if the frameworks are to be identified.

\begin{proposition}\label{prop:MCG}
For a closed surface $\Sigma_g$ with $g \geq 1$, all three frameworks produce projective representations of the mapping class group $\MCG(\Sigma_g)$, and these representations are identified under our conjectural equivalence.
\begin{enumerate}[label=(\roman*)]
\item \emph{The RT side.} The RT theory assigns a projective representation $\rho_{\mathrm{RT}}: \MCG(\Sigma_g) \to PGL(V(\Sigma_g))$ via the cobordism formalism. For a mapping class $\varphi \in \MCG(\Sigma_g)$, one forms the mapping cylinder $M_\varphi = \Sigma_g \times [0,1]\, /\, (x, 0) \sim (\varphi(x), 1)$, which is a cobordism from $\Sigma_g$ to itself. The RT partition function $\Zrt(M_\varphi): V(\Sigma_g) \to V(\Sigma_g)$ is the desired operator. Composition of mapping classes corresponds to gluing of mapping cylinders, which by the TQFT axiom corresponds to composition of linear maps, giving the representation.

\item \emph{The BV-BFV side.} A diffeomorphism $\varphi$ of $\Sigma_g$ acts as a symplectomorphism of the character variety $\Loc_G(\Sigma_g)$, preserving the Atiyah--Bott symplectic form. Upon quantization (either geometric or deformation), this symplectomorphism lifts to a projective unitary operator on the quantum Hilbert space. The representation is only projective, not linear, because the quantization requires a choice of metaplectic correction (half-form bundle), and different choices are related by the central extension.

\item \emph{The factorization homology side.} A diffeomorphism $\varphi: \Sigma_g \xrightarrow{\sim} \Sigma_g$ acts on factorization homology by functoriality. In particular, it induces an autoequivalence $\varphi_*: \iint_{\Sigma_g} \MTC \xrightarrow{\sim} \iint_{\Sigma_g} \MTC$, and the induced map on the Grothendieck group gives the projective representation on $V(\Sigma_g)$.
\end{enumerate}
The projective ambiguity, given by the central extension of $\MCG(\Sigma_g)$ by which the representation is twisted, is the same in all three cases, determined by the central charge $c = k\dim(G)/(k + h^\vee)$ of the associated Wess--Zumino--Witten conformal field theory.
\end{proposition}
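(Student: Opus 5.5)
The plan is to handle the three frameworks one at a time, show that each gives a projective representation of $\MCG(\Sigma_g)$, and then transport everything to $V_{\mathrm{RT}}(\Sigma_g)$ using the identifications already in the paper. The projective classes would be compared by computing the associated central extensions. On the RT side I would argue as follows. The mapping cylinder $M_\varphi$ is an invertible cobordism $\Sigma_g\to\Sigma_g$, and $M_\varphi\circ M_\psi\cong M_{\varphi\psi}$. Functoriality of $\Zrt$ then gives $\Zrt(M_{\varphi\psi})=\Zrt(M_\varphi)\Zrt(M_\psi)$, up to the framing and signature correction $\kappa^{-\sigma}$ coming from the surgery formula (Theorem \ref{thm:RT-invariance}). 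The failure of linearity is therefore a Maslov/Wall non-additivity cocycle multiplied by powers of $\kappa$. Since $\kappa=e^{2\pi i c/8}$ with $c=k\dim G/(k+h^\vee)$, this is the standard identification of the extension class.

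For factorization homology, functoriality of $\iint$ under diffeomorphisms gives an action of $\mathrm{Diff}(\Sigma_g)$ on $\iint_{\Sigma_g}\MTC$. Isotopies act by natural equivalences, so this descends to an action of $\MCG(\Sigma_g)$ by autoequivalences. I would pass to the distinguished object of Theorem \ref{thm:AKZ}. A framing or $2$-framing is needed to define $\iint$ over an oriented surface, and $\MCG$ does not preserve such a choice. The mismatch is measured by the invertible anomaly theory attached to $\MTC$, whose classification is the one cited in \cite{BJSS2020}, and this produces a central extension by roots of unity. The comparison with RT would come from excision. Cutting $\Sigma_g$ along a pants decomposition expresses $u_{\Sigma_g}$ through Hom spaces, as in \eqref{eq:conformal-blocks}. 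Dehn twists then act by twist eigenvalues $\theta_i$, and the $S$-move on a once-punctured torus acts by the $S$-matrix, which are exactly the RT formulas. Since Dehn twists generate $\MCG(\Sigma_g)$, checking the generators should be enough, and the genus-one case is the check already recorded in Proposition \ref{prop:genus-one}.

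For BV-BFV, $\varphi$ acts on $\mathcal{F}^\partial_\Sigma$ preserving $\omega^\partial_\Sigma$ and $S^\partial_\Sigma$, so it induces a symplectomorphism of $\Loc_G(\Sigma_g)$ (Proposition \ref{prop:step1}). I would realize its quantization either through Hitchin's connection, where parallel transport over Teichmüller space gives projective operators and Andersen's results identify these with $\rho_{\mathrm{RT}}$, or as the quantized Lagrangian correspondence of $M_\varphi$ (Conjecture \ref{conj:Lagrangian-quant}). The second route is conditional on the conjecture, so I would state that part accordingly.

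The main obstacle is matching the projective ambiguity across the three settings. The metaplectic and half-form correction on the geometric side shifts $k$ to $k+h^\vee$, and the framing anomaly from the $\eta$-invariant in \eqref{eq:pert-CS-structure} must reproduce exactly $c=k\dim G/(k+h^\vee)$. My first attempt would be to compute the curvature of the projectively flat Hitchin connection and compare it with the Atiyah $2$-framing correction $e^{2\pi i c/24}$ on a Dehn twist. On the BV side the dependence on the gauge-fixing metric would be tracked through the mQME (Proposition \ref{prop:CMR-mQME}), which is also the point where the statement becomes conditional on the Step 2 comparison.
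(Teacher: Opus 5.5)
Your three-part plan follows the paper's outline. The paper's only argument is the content of items (i)--(iii), and it asserts without proof that the extension classes coincide. Trying to compute and compare those classes therefore goes beyond what the paper does. Two steps, however, do not work as stated.

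\textbf{Where the projectivity comes from in factorization homology.} Your factorization-homology step puts the projectivity in the wrong place. The balancing makes $\MTC$ a framed $\EE_2$-algebra, so $\iint_{\Sigma_g}\MTC$ is defined for oriented surfaces with no framing choice. For $g\ge 2$ the surface $\Sigma_g$ is not even framable. Consequently $\mathrm{Diff}^+(\Sigma_g)$ acts genuinely on the category, and there is no non-invariant choice for $\MCG(\Sigma_g)$ to move.

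The ambiguity enters only when you linearize at the distinguished object. Via Theorem~\ref{thm:AKZ}, $\varphi_*$ becomes an autoequivalence of $\mathbf{H}$ together with an isomorphism $\varphi_*u_{\Sigma_g}\cong u_{\Sigma_g}$. The functor $\varphi_*$ can be identified with $\id_{\mathbf{H}}$ only up to a scalar in $\mathrm{Aut}(\id_{\mathbf{H}})$. The resulting class in $H^2(\MCG(\Sigma_g);\CC^\times)$ is the anomaly of the invertible category $\iint_{\Sigma_g}\MTC$, that is, of the Crane--Yetter theory.

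The signature part $\kappa^{\pm\sigma}$ of that theory, evaluated on $\Sigma_g$-bundles over closed surfaces, is $\kappa$ raised to Meyer's signature cocycle. It is this class, not a framing, that you should match with your Maslov/Wall cocycle on the RT side. Passing to $u_{\Sigma_g}$ rather than to a Grothendieck group, as item (iii) does, is the right move, since $K_0(\mathbf{H})\cong\ZZ$ carries no representation.

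\textbf{Checking agreement on generators.} The generator check is sound in principle. Two homomorphisms $\MCG(\Sigma_g)\to PGL(V)$ that agree on Dehn twists agree. This would also make your separate ``main obstacle'' automatic, because both extension classes are then pulled back from $PGL(V)$.

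As stated, though, the check is incomplete. The Lickorish/Humphries curves do not all lie in one pants decomposition. Knowing that pants-curve twists act by $\theta_i$ and that the torus move acts by an $S$-matrix does not determine the action. You must also match the change-of-decomposition moves: the $F$-move, the braiding, and the punctured $S$-matrices $S(k)$ on one-holed tori. That is the full Moore--Seiberg data. Alternatively, invoke the Bakalov--Kirillov correspondence between modular functors and modular tensor categories. Proposition~\ref{prop:genus-one} covers only the closed torus.

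\textbf{The BV-BFV side.} The Hitchin-connection route identifies the \emph{geometric} quantization representation with $\rho_{\mathrm{RT}}$. It does not identify the BV-BFV quantization over $\CC[\![\hbar]\!]$ with $\rho_{\mathrm{RT}}$. Relating those two is exactly Conjecture~\ref{conj:quant-agree}. So both of your routes in (ii) are conditional, not only the one through Conjecture~\ref{conj:Lagrangian-quant}.
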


An important rigidity result confirms that these representations carry substantial topological information.

\begin{theorem}[Andersen \cite{Andersen2006}; see also \cite{Andersen2012}]\label{thm:Andersen}
The RT representations of mapping class groups are \emph{asymptotically faithful}, i.e., for every non-trivial mapping class $\varphi \in \MCG(\Sigma_g)$, there exists a level $k_0$ (depending on $\varphi$) such that $\rho_{\mathrm{RT}}^{(k)}(\varphi) \neq \id$ for all $k \geq k_0$.
\end{theorem}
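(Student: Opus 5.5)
The statement is Andersen's asymptotic faithfulness theorem. I would follow his semiclassical strategy: realize the RT representations geometrically, then show that a mapping class acting trivially at infinitely many levels must act trivially on the classical phase space $\Loc_G(\Sigma_g)_{\mathrm{cl}}$. One caveat must be built in from the start. For $g=2$ and $G=SU(2)$ the hyperelliptic involution acts trivially on the moduli space and on all $V(\Sigma_2)$, so the statement only holds modulo the centre in that case. To work on a smooth compact symplectic manifold, I would use the moduli space $\mathcal{M}$ of flat $SU(n)$-connections on $\Sigma_g\setminus\{pt\}$ with holonomy $e^{2\pi i d/n}$ around the puncture, with $(n,d)=1$. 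The singular untwisted case would be handled afterwards or avoided, since faithfulness only needs one family of representations.

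\textbf{Step 1 (geometric model).} Using the Laszlo and Andersen--Ueno identification, I would identify $V^{(k)}(\Sigma_g)$ with $H^0(\mathcal{M}_\sigma,\mathcal{L}^k)$ for $\sigma$ in Teichm\"uller space $\mathcal{T}$, compatibly with the projectively flat Hitchin connection (Remark \ref{rmk:geom-quant}). Under this identification, $\rho_{\mathrm{RT}}^{(k)}(\varphi)$ becomes the composite of pullback by $\varphi$, which maps $H^0(\mathcal{M}_\sigma,\mathcal{L}^k)$ to $H^0(\mathcal{M}_{\varphi^*\sigma},\mathcal{L}^k)$, with Hitchin parallel transport $P_{\varphi^*\sigma,\sigma}$ back to the fibre over $\sigma$.

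\textbf{Step 2 (Toeplitz calculus).} For $f\in C^\infty(\mathcal{M})$, I would use the Berezin--Toeplitz operators $T^{(k)}_{f,\sigma}=\pi_\sigma^{(k)}\circ f$. The Bordemann--Meinrenken--Schlichenmaier theory gives $\|T^{(k)}_{f,\sigma}\|\to\sup|f|$ as $k\to\infty$. The key input, and the main obstacle, is an asymptotic Egorov-type statement for the Hitchin connection:
\[
\bigl\|P_{\sigma,\sigma'}\,T^{(k)}_{f,\sigma}\,P_{\sigma,\sigma'}^{-1}-T^{(k)}_{f,\sigma'}\bigr\|\longrightarrow 0 \qquad (k\to\infty).
\]
My first attempt would be to write the Hitchin connection as $\nabla^{\mathrm{triv}}+\tfrac{1}{k+n}\Delta_{G(V)}+O(1)$, with $\Delta_{G(V)}$ a second-order operator. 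I would then show that the derivative along $\mathcal{T}$ of the conjugated Toeplitz operator is a Toeplitz operator of order $1/k$, which can be integrated along a path from $\sigma$ to $\sigma'$ with uniform estimates. This requires the composition and commutator estimates for Toeplitz operators, uniformly in $\sigma$.

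\textbf{Steps 3 and 4 (conclusion).} Naturality gives $\varphi^*\,T^{(k)}_{f,\sigma}\,(\varphi^*)^{-1}=T^{(k)}_{f\circ\varphi,\varphi^*\sigma}$. Combining this with Step 2 shows that conjugating $T^{(k)}_{f,\sigma}$ by $\rho^{(k)}(\varphi)$ equals $T^{(k)}_{f\circ\varphi,\sigma}$ up to an error that tends to $0$ in norm. Now suppose $\rho^{(k)}(\varphi)$ is trivial for infinitely many $k$; projectively this means a scalar, and conjugation kills scalars. Then $\|T^{(k)}_{f-f\circ\varphi,\sigma}\|\to 0$ along that subsequence, so $\sup|f-f\circ\varphi|=0$ for every $f$, and $\varphi$ acts as the identity on $\mathcal{M}$. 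Finally I would argue topologically. The trace functions $\Tr\,\mathrm{hol}_\gamma$ of simple closed curves $\gamma$ are functions on $\mathcal{M}$, so triviality of the action forces $\Tr\,\mathrm{hol}_{\varphi(\gamma)}=\Tr\,\mathrm{hol}_\gamma$ on $\mathcal{M}$. Using that these functions separate isotopy classes of simple closed curves (for instance via abelian or diagonal representations together with Goldman twist flows), $\varphi$ fixes every isotopy class of curves. By a standard argument such a $\varphi$ is central in $\MCG(\Sigma_g)$, hence trivial for $g\ge3$, with the hyperelliptic exception for $g\le2$ as noted above.
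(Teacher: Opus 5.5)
Your outline is Andersen's own argument: Berezin--Toeplitz operators, asymptotic flatness of Toeplitz operators along the Hitchin connection, $\|T^{(k)}_f\|\to\sup|f|$, and then curves and the centre. The paper gives no proof beyond the citation and the heuristic that the quantum action degenerates to the classical action on the character variety. So the analytic core of your Steps 2--3 is the right plan. The problem is what you apply it to.

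\textbf{Step 1 is false as stated.} The twisted moduli space $\mathcal{M}$ (central holonomy $e^{2\pi i d/n}$ at the puncture, $(n,d)=1$) does not quantize to the closed-surface space $V^{(k)}(\Sigma_g)$. It quantizes to RT spaces of $\Sigma_g$ with one marked point carrying a non-trivial colour. For $SU(2)$, $d=1$, the odd-degree Verlinde formula identifies $H^0(\mathcal{M}_\sigma,\mathcal{L}^k)$ with an even-level space whose point is coloured by the invertible top object. Already for $g=2$ one has $\dim H^0(\mathcal{M}_\sigma,\mathcal{L})=6$ (an intersection of two quadrics in $\mathbb{CP}^5$), whereas $\dim V^{(k)}(\Sigma_2)=4,10,20,\dots$.
\begin{itemize}
\item The spaces $V^{(k)}(\Sigma_g)$ of Proposition \ref{prop:MCG} come from the untwisted moduli space, which is singular for $g\ge 3$. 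This is exactly why Andersen's theorem is stated for $(n,d)$ coprime, with $(n,d)=(2,0)$ allowed only in genus two, where that space is $\mathbb{CP}^3$.
\item Your remark that ``faithfulness only needs one family'' is a non sequitur. Asymptotic faithfulness of the twisted family says nothing about whether $\rho^{(k)}_{\mathrm{RT}}(\varphi)$ is a scalar on $V^{(k)}(\Sigma_g)$, and nothing in your argument transfers information between the two families.
\end{itemize}
So Steps 1--4 at best reprove Andersen's theorem for the twisted representations, not the stated result for closed surfaces. For the latter you need different input. One option is the Freedman--Walker--Wang skein argument for $SU(2)$. There, curve operators $Z(\gamma)$ on $V^{(k)}(\Sigma_g)$ replace Toeplitz operators, and $\rho(\varphi)Z(\gamma)\rho(\varphi)^{-1}=Z(\varphi(\gamma))$ holds exactly, so no Egorov theorem is needed. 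The work is an asymptotic computation showing that curve operators of non-isotopic curves differ at all large levels.

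\textbf{The separation step in Step 4 fails as sketched on your $\mathcal{M}$.}
\begin{itemize}
\item The product of commutators is a primitive $n$-th root of unity times the identity. So every point of $\mathcal{M}$ is irreducible: an invariant subspace of dimension $0<m<n$ would force $e^{2\pi i dm/n}=1$.
\item Hence there are no abelian or diagonal representations to test on. Equality of trace functions on $\mathcal{M}$ also says nothing about the untwisted fibre of the commutator map.
\item For a curve on the closed surface, $\Tr\,\mathrm{hol}_\gamma$ is defined on $\mathcal{M}$ only up to an $n$-th root of unity, since moving the curve across the puncture multiplies the holonomy by $e^{\pm 2\pi i d/n}$. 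You must work with $|\Tr\,\mathrm{hol}_\gamma|$ or something similar.
\end{itemize}
The lemma that these functions separate non-isotopic simple closed curves must therefore be proved with irreducible twisted representations. One route is Goldman's bracket against a curve disjoint from one of the two curves and essentially meeting the other, plus explicit representations showing the bracket is not identically zero. This is a genuine part of the proof that your sketch does not supply.

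In genus one the method gives nothing. The equation $ABA^{-1}B^{-1}=e^{2\pi i d/n}$ has a unique solution up to conjugation, so $\mathcal{M}$ is a point and all twisted spaces are one-dimensional. The torus has to be treated separately, for example through the explicit $S$ and $T$ matrices or skein theory. You are right that, for $g\le 2$, the statement as written only holds modulo the centre.
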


This result is consistent with the geometric quantization picture underlying the BV-BFV approach. In fact, as the level $k \to \infty$ (equivalently, $\hbar \to 0$), the quantum representations converge to the classical action of $\MCG(\Sigma_g)$ on the character variety $\Loc_G(\Sigma_g)$, and this classical action is faithful by a theorem of Goldman. The asymptotic faithfulness theorem says that no information is lost in the quantization (at least not permanently, since any given mapping class is detected at sufficiently high level).

\subsection{Lens spaces}

Lens spaces provide the most extensively studied family of examples for testing the asymptotic expansion conjecture \eqref{eq:asymptotic}, because both the RT invariant and the perturbative Chern--Simons invariant can be computed in closed form.

\begin{proposition}\label{prop:lens-spaces}
For the lens space $L(p,1)$ with gauge group $G = SU(2)$ the following hold.
\begin{enumerate}[label=(\roman*)]
\item The RT invariant is given explicitly by the surgery formula applied to the unknot with framing $p$:
\begin{equation}
\Zrt(L(p,1), k) = \sqrt{\frac{2}{p(k+2)}} \sum_{j=0}^{k} \sin^2\!\left(\frac{(j+1)\pi}{k+2}\right) e^{-\pi i\, j(j+2)\,p\,/\,(2(k+2))}.
\end{equation}

\item The flat connections on a general lens space $L(p,q)$ are representations of\\ $\pi_1(L(p,q)) = \ZZ/p\ZZ$ into $SU(2)$, which are classified (up to conjugation) by the image of the generator, that is an element of order dividing $p$ in $SU(2)$, modulo the Weyl group. These are labeled by $m = 0, 1, \ldots, \lfloor p/2 \rfloor$, where $m = 0$ is the trivial connection and $m \geq 1$ are the irreducible (non-central) representations.

\item The perturbative expansion around the trivial connection $A_0$ gives a one-loop contribution $\tau_{A_0}(L(p,q))^{1/2} = p^{-1/2}$ (the Reidemeister torsion) times a perturbative series
\begin{equation}
1 + \sum_{n=1}^\infty a_n(L(p,q))\, k^{-n},
\end{equation}
whose coefficients $a_n$ are computable as configuration space integrals (Feynman diagrams) on $L(p,q)$. Similar expansions exist around each non-trivial flat connection, with different torsion prefactors and Chern--Simons phases.

\item Jeffrey \cite{Jeffrey1992} verified that the asymptotic expansion of $\Zrt(L(p,q), k)$ as $k \to \infty$ matches the sum over flat connections in \eqref{eq:asymptotic} to all orders in $k^{-1}$. This provides one of the strongest pieces of evidence for the asymptotic expansion conjecture and is consistent with the algebraic bridge proposed in this paper. However, we note that the asymptotic matching, while necessary for our conjectures to hold, does not by itself establish the algebraic identification at the categorical level that our program requires.
\end{enumerate}
\end{proposition}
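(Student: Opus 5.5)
The four items are of different natures, and I would treat them separately: (i) is a direct evaluation of the surgery formula \eqref{eq:surgery}, (ii) is elementary group theory, (iii) is a one-loop torsion computation plus the formal structure of Proposition~\ref{prop:pert-Z}, and (iv) is an analytic statement that I would derive from (i) by a Gauss-sum reciprocity argument, following Jeffrey \cite{Jeffrey1992}.

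For (i), I would present $L(p,1)$ as surgery on the unknot $U$ with framing $p$, so $m=1$ and the linking matrix is $(p)$, with $\sigma(L)=\operatorname{sign}(p)$. In the graphical calculus of $\overline{\Rep}_q(\mathfrak{sl}_2)$, the $p$-framed unknot colored by $V_j$ evaluates to $\theta_j^{p}\dim_q(V_j)$. Here $\theta_j=q^{j(j+2)/2}$ in the convention $q=e^{i\pi/(k+2)}$, and $\dim_q V_j=\sin((j+1)\pi/(k+2))/\sin(\pi/(k+2))$. Substituting into \eqref{eq:surgery} gives
\[
\kappa^{-\operatorname{sign}p}\,\mathcal{D}^{-2}\sum_j \dim_q(V_j)^2\,\theta_j^{p}.
\]
I would then use $\mathcal{D}^{-1}\dim_q V_j=S_{0j}=\sqrt{2/(k+2)}\sin((j+1)\pi/(k+2))$ to turn $\dim_q^2/\mathcal{D}^2$ into $\tfrac{2}{k+2}\sin^2$. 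I would evaluate $\kappa$ as a quadratic Gauss sum, $\kappa=e^{2\pi i c/8}$ with $c=3k/(k+2)$. The remaining factor $p^{-1/2}$ and the overall phase are not visible in this naive evaluation. I expect them to come from a reciprocity rewriting of the sum (see (iv)) together with a choice of framing or normalization, $Z(S^3)=S_{00}$ versus $1$. I therefore expect that the displayed formula holds only up to a root-of-unity phase and the framing correction, and I would state it that way. This convention bookkeeping, including the sign of the exponent tied to orientation, is the first place where care is needed.

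For (ii), $\pi_1(L(p,q))=\ZZ/p$, so a representation into $SU(2)$ is determined by an element $g$ with $g^p=1$. Every such $g$ is conjugate into the maximal torus, giving $g\sim\operatorname{diag}(e^{2\pi i m/p},e^{-2\pi i m/p})$, and the Weyl group identifies $m$ with $-m$. This yields the labels $m=0,\dots,\lfloor p/2\rfloor$. I would add a caveat: the image of $\pi_1$ is abelian, so these representations are reducible but non-central for $0<m<p/2$, while $m=p/2$ (when $p$ is even) is central. The word ``irreducible'' in the statement should therefore be read as ``non-central.''

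For (iii), I would compute the Reidemeister torsion of $L(p,q)$ using the standard CW structure with one cell in each dimension. Around the trivial connection, $H^1$ vanishes, so $d_{A_0}=\dim\fg$ and the one-loop factor is determined by $H_1=\ZZ/p$. In the normalization of Proposition~\ref{prop:pert-Z}, this gives $\tau^{1/2}=p^{-1/2}$ up to group-volume factors. The higher coefficients $a_n$ are then the configuration-space integrals of Remark~\ref{rmk:Feynman}. This step is formal in the same sense as Proposition~\ref{prop:pert-Z}, and I would only assert the structural form.

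For (iv), the main step, I would rewrite the sum in (i) as a sum over $n=j+1\in\ZZ/2(k+2)$ of a quadratic exponential $e^{-\pi i p n^2/(2(k+2))}$ against $\sin^2$. Expanding $\sin^2$ in exponentials, I would apply the reciprocity formula for quadratic Gauss sums (equivalently, Poisson summation). This converts the sum into one over $m\in\ZZ/p$ with phases $e^{2\pi i k\,(-m^2/p)}=e^{2\pi ik\,\CS(A_m)}$, matching the Chern--Simons values of the flat connections in (ii). In each sector, the dependence on $k$ then appears as an explicit terminating expression. Expanding this in $k^{-1}$ gives the torsion prefactor and the series $\sum_n a_n k^{-n}$, and comparing with \eqref{eq:asymptotic} yields Jeffrey's result \cite{Jeffrey1992}. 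The main obstacle is identifying these algebraically produced coefficients with the Feynman-integral $a_n$ of (iii). For this I would simply invoke the cited literature rather than reprove it.
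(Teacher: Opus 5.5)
The paper gives no proof of this proposition (it only points to the surgery formula and to Jeffrey), so the comparison is with the statement itself.

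\textbf{Item (i).} Your direct evaluation of \eqref{eq:surgery} is correct. With the paper's normalization $\Zrt(S^3)=\mathcal{D}^{-1}=S_{00}$ it gives $\kappa^{-\operatorname{sign}p}\,\tfrac{2}{k+2}\sum_{j=0}^{k}\sin^2\bigl(\tfrac{(j+1)\pi}{k+2}\bigr)\theta_j^{p}$. The step that fails is your reconciliation with the displayed formula.
\begin{itemize}
\item Gauss reciprocity is an identity. It produces the prefactor $\sqrt{2/(p(k+2))}$ only in front of the \emph{transformed} sum over $m\in\ZZ/p$, never in front of the original sum over $j$.
\item Switching between the normalizations $Z(S^3)=S_{00}$ and $Z(S^3)=1$ rescales by $\mathcal{D}$, which does not depend on $p$.
\item The sum in the displayed formula is the complex conjugate of yours. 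The two expressions therefore differ by a factor of modulus $\sqrt{(k+2)/(2p)}$, which depends on $p$. It cannot be a phase, a framing correction, or a global normalization.
\end{itemize}
Concretely, at $k=1$ the displayed formula gives $|Z(L(1,1))|=\sqrt{3}/2$ and $|Z(L(3,1))|=1/2$. Equation \eqref{eq:surgery} gives $1/\sqrt{2}$ in both cases, and $1/\sqrt2=S_{00}$ is indeed the value for $L(1,1)=S^3$. So (i) cannot be established ``up to a root of unity''; it is false as displayed. What your computation proves is your own formula, or equivalently $\sqrt{2/(p(k+2))}$ times a phase times the reciprocal sum over $\ZZ/p$.

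\textbf{Item (iii).} A similar $p$-power error appears here, and your sketch inherits it.
\begin{itemize}
\item Carry out the torsion computation you describe: one cell per dimension, $\partial_2$ equal to multiplication by $p$. The coefficients are the trivial local system $\fg$, which has rank $3$. This gives torsion $p^{\pm3}$, so the one-loop factor scales as $p^{-3/2}$.
\item Group-volume factors are $p$-independent and cannot turn this into $p^{-1/2}$, which is the rank-one ($U(1)$-type) value.
\item The exact formula confirms this. After reciprocity, the $m=0$ term of $\Zrt(L(p,1))$ has modulus $\sqrt{2/(p(k+2))}\,\sin\bigl(\pi/(p(k+2))\bigr)\approx\sqrt{2}\,\pi\,p^{-3/2}(k+2)^{-3/2}$. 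For $p=1$ this is exactly $S_{00}$.
\item The same term shows that the trivial connection enters with $k^{-3/2}$. A literal comparison with \eqref{eq:asymptotic}, where $d_{A_0}=\dim H^0-\dim H^1=3$, therefore fails; the exponent must be $(\dim H^1-\dim H^0)/2$.
\end{itemize}

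\textbf{Item (iv).} Deriving the asymptotics ``from (i)'' covers only $q=1$. For $L(p,q)$ you need the continued-fraction surgery presentation, a chain of framed unknots (equivalently, the $SL_2(\ZZ)$ matrix element Jeffrey uses), not a single Gauss sum. Citing \cite{Jeffrey1992} also does not close the step you yourself flag. Jeffrey establishes the shape of the expansion, the Chern--Simons phases, and the leading torsion and spectral-flow terms. She does not identify the higher coefficients with the configuration-space integrals of Remark~\ref{rmk:Feynman}.

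\textbf{Item (ii).} Your correction is right: the $m\ge1$ representations are abelian, hence reducible, and $m=p/2$ is central for even $p$.
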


\subsection{Seifert fibered spaces}

\emph{Seifert fibered spaces}, which are 3-manifolds admitting a circle fibration with finitely many exceptional fibers, form a broad and computationally accessible class of examples that interpolate between the simple (lens spaces) and the general.

\begin{proposition}\label{prop:Seifert}
For Seifert fibered spaces $\Sigma(a_1, \ldots, a_n)$ with exceptional fiber types $a_1, \ldots, a_n$ the following hold.
\begin{enumerate}[label=(\roman*)]
\item \emph{Matching of invariants.} The RT invariant and the perturbative Chern--Simons invariant have been computed independently by Hansen \cite{Hansen2006}, Mari\~no \cite{Marino2004}, and others, using the surgery formula on the RT side and matrix integral techniques on the perturbative side. In all cases where the comparison has been carried out, the two sides agree: the asymptotic expansion of the RT invariant matches the sum over flat connections predicted by \eqref{eq:asymptotic}.

\item \emph{Structure of the character variety.} The character variety of a Seifert fibered space has a particularly tractable structure, owing to the $S^1$-fibration. Flat connections decompose into \emph{abelian} components (where the holonomy around the generic fiber is central in $G$) and \emph{non-abelian} components (where it is not). This decomposition is compatible with the BV-BFV formalism: the perturbative data at each component can be computed explicitly in terms of the Seifert invariants $(a_1, \ldots, a_n)$ and the representation theory of $G$, making Seifert manifolds an ideal testing ground for the conjectures of this paper.

\item \emph{Resurgence for the Poincar\'e homology sphere.} The most striking evidence comes from the Poincar\'e homology sphere $\Sigma(2,3,5)$, studied in Example \ref{ex:GMP}. This manifold has exactly two $SU(2)$ flat connections (one trivial and one irreducible) and the resurgent structure of the perturbative series around both has been analyzed in detail by Gukov, Mari\~no, and Putrov \cite{GMP2016}. The alien derivatives connecting the two perturbative sectors have been computed explicitly to high order, and they precisely account for the difference between the Borel-resummed perturbative contributions and the exact RT invariant. This demonstrates concretely that the full non-perturbative content of $\Zrt(\Sigma(2,3,5))$ is reconstructable from the perturbative data together with the Stokes constants, exactly as predicted by the resurgence--Koszul correspondence (Conjecture \ref{conj:resurgence-koszul}).
\end{enumerate}
\end{proposition}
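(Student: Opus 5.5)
The three assertions of Proposition \ref{prop:Seifert} are of different kinds, and I will treat each in the way its nature requires. Item (i) is a literature-comparison statement. I will reduce it to the surgery formula \eqref{eq:surgery} applied to the standard surgery presentation of $\Sigma(a_1,\dots,a_n)$: a central unknot with framing $e_0$, linked by meridians, each of which is replaced by a linear chain realizing the continued fraction of $a_i/b_i$. Each chain is evaluated by a product of $SL_2(\ZZ)$-matrices in the $S,T$ representation of Proposition \ref{prop:genus-one}(v). Using the Weyl-character form of the Kac--Peterson formula \eqref{eq:KP-formula}, the resulting finite sum can be rewritten, via Gauss reciprocity and Poisson resummation, as a contour integral over the Cartan torus. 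This is the Lawrence--Rozansky / Mari\~no matrix-integral representation.

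Stationary phase is then applied to that integral. The residues at the poles reproduce the abelian contributions. The saddle points give the non-abelian flat connections, with phases $e^{2\pi i k\CS(A)}$ and prefactors equal to the Reidemeister torsion. Comparing with \eqref{eq:asymptotic} term by term establishes (i), in the form proved by Hansen \cite{Hansen2006} and Mari\~no \cite{Marino2004}, whose results I will cite for the full expansion.

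Item (ii) I will prove directly from $\pi_1$. The fundamental group has generators $x_1,\dots,x_n,h$ with $h$ central and relations $x_i^{a_i}h^{b_i}=1$, $x_1\cdots x_n=h^{-e_0}$. For $G=SU(2)$, if $\rho(x_i)$ generates a non-abelian image, then $\rho(h)$ commutes with that image and so lies in the center $\{\pm 1\}$. This yields the dichotomy into abelian representations (factoring through $H_1$) and non-abelian ones. The non-abelian ones are parametrized by conjugacy classes of $\rho(x_i)$ with eigenvalues $a_i$-th roots of $\mp1$, which are rotation angles on $S^3$ subject to a product condition. The tangent complex $C^\bullet(M;\fg_{\Ad_\rho})$ of Proposition \ref{prop:Loc-explicit}(ii) can then be computed via a Mayer--Vietoris decomposition along the fibration, which gives the explicit dependence on $(a_i)$ claimed.

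Item (iii) is a restatement of Example \ref{ex:GMP}, so I will simply invoke \cite{GMP2016}. The final clause, which asserts consistency with Conjecture \ref{conj:resurgence-koszul}, must be downgraded to "compatible with", because that conjecture is unproven. This step is the main obstacle: the "proof" of (iii), and the phrase "in all cases" in (i), can honestly only be supported by citation together with the matrix-integral mechanism, not by an independent argument here.
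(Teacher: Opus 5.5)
The paper gives no proof of this proposition beyond its citations, so the real question is whether your sketch establishes what is written. It cannot, because two of the three items are false as stated, and your own arguments expose this without your noticing.

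\textbf{Item (ii).} Your centralizer argument is correct: if $\rho$ has non-abelian image in $SU(2)$, then $\rho(h)\in\{\pm 1\}$. But this is the opposite of the stated characterization, which says the non-abelian components are those where the fiber holonomy is \emph{not} central. The other direction fails too. For $L(p,1)$ with $p>2$, fibered over $S^2$, the fiber generates $\pi_1$, so nontrivial abelian representations have non-central $\rho(h)$.

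The correct dichotomy is the following. A non-central $\rho(h)$ forces the image into the maximal torus centralizing it. Every irreducible representation has $\rho(h)=\pm1$ and factors through a central extension of the orbifold group of the base. You should present this as a correction of (ii), not a proof of it. Note also that your Mayer--Vietoris step only produces the tangent complex (one-loop data), not the higher perturbative data that (ii) claims are explicit.

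\textbf{Item (iii).} Citing \cite{GMP2016} cannot establish the count, because the count is false. $\pi_1(\Sigma(2,3,5))$ is the binary icosahedral group. Its two faithful $2$-dimensional representations have different (Galois-conjugate) characters. So there are three $SU(2)$ flat connections: the trivial one and two irreducibles, with $\CS=-1/120$ and $-49/120$. Since the group is finite, there are no further $SL_2(\CC)$ ones. Moreover, the irreducible sectors carry no nontrivial series. What is resurgent is the trivial-connection series, whose Stokes jumps reproduce the exact irreducible contributions. Downgrading only the last clause to ``compatible with'' therefore leaves false assertions standing.

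\textbf{Item (i).} Here the mechanism you describe is reversed. In the Lawrence--Rozansky/Mari\~no representation, the integrand is a Gaussian in $y$ times $\prod_j \sinh(y/a_j)/\sinh(y)^{n-2}$.
\begin{itemize}
\item The Gaussian saddle at the origin gives the trivial-connection contribution; for homology spheres its $1/K$ expansion is the Ohtsuki series. The further Gaussian terms that appear when $H_1\neq 0$ give the abelian ones.
\item The residues at the poles of $\sinh(y)^{-(n-2)}$ give the irreducible flat connections, with phases $e^{-2\pi i K m^2/4P}$, where $K=k+2$ and $P=\prod_j a_j$. These match the Fintushel--Stern values; for $\Sigma(2,3,5)$, $m=1,7$ give $-1/120$ and $-49/120$.
\item For $n=3$ these poles are simple, which is exactly why the irreducible contributions are exact in $K$.
\end{itemize}

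Finally, this integral is an exact rewriting of the surgery sum \eqref{eq:surgery}. The matrix integral therefore sits on the RT side, not on an independent perturbative side. Your term-by-term comparison verifies only the \emph{form} of \eqref{eq:asymptotic}: phases, torsions and powers of $k$. Identifying the $1/k$ coefficients with the configuration-space integrals $a_n(A)$ is separate input that your route does not supply. So the claim in (i) that two independently computed sides agree is not something this argument establishes.
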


\section{Outlook and Future Directions}\label{sec:outlook}

We conclude by discussing the remaining gaps in the program, the prospects for closing them, and the broader implications of the framework for higher-dimensional topology, categorification, and the geometric Langlands program.

\subsection{Toward a proof of the main conjecture}

We are candid about the status of the program. Of the four steps in our proof strategy (Section \ref{sec:bridge}), Step 1 is the most well-understood but still lacks a fully rigorous derived-geometric treatment. Step 2, the comparison of BV-BFV quantization with deformation quantization of shifted symplectic stacks (Conjecture \ref{conj:quant-agree}), is the central open problem and requires substantial new ideas. Steps 3 and 4 would follow from general machinery once Step 2 is secured, but each involves non-trivial technical points (commutativity of semisimplification with factorization homology for Step 3; quantization of Lagrangian correspondences for Step 4) that are not yet resolved in the literature. The program as a whole should be regarded as a framework for organizing existing results and identifying the precise technical challenges, rather than as a near-complete proof. We identify three principal technical challenges.

\begin{enumerate}[label=(\alph*)]
\item \textbf{A derived BV-BFV formalism.} The BV-BFV formalism as developed by Cattaneo--Mnev--Reshetikhin is formulated in the language of differential geometry and infinite-dimensional functional analysis. In particular, the spaces of fields are Fr\'echet manifolds, the BV Laplacian is a functional-analytic operator, and the partition function is defined via regularized path integrals. A complete proof of Conjecture \ref{conj:quant-agree} requires reformulating this entire apparatus in derived algebraic geometry, where it can be compared directly with the Safronov deformation quantization. The work of Gwilliam and Rejzner \cite{GR2020} on BV quantization via factorization algebras, and of Gwilliam and Haugseng \cite{GwilliamHaugseng2018} on linear BV quantization as a functor of $\infty$-categories, provide significant starting points. They show that the BV quantization of a field theory on a manifold without boundary produces a factorization algebra of observables. However, the boundary (BFV) side remains to be developed in the derived setting. In particular, one needs a derived version of the BFV state space that produces an $(\infty,1)$-category (rather than a vector space or cochain complex), and the modified quantum master equation must be reformulated as a condition on objects in this derived category rather than as a differential equation on functions.

\item \textbf{A universal property for the BV pushforward.} The BV pushforward (the fiber integration that produces the partition function from the BV action) is currently defined as a concrete integral, i.e., a choice of gauge-fixing Lagrangian and a regularization procedure. For comparison with deformation quantization, what is needed is a characterization of the BV pushforward in terms of a \emph{universal property} within derived algebraic geometry. It should be the unique quantization compatible with the Lagrangian structure. We expect that the $(-1)$-shifted Lagrangian structure of Theorem \ref{thm:AKSZ-PTVV} encodes precisely this universal property, i.e., the quantized BV pushforward should be the unique bimodule over the quantized boundary algebras that is compatible with the Lagrangian morphism $\Loc_G(M) \to \Loc_G(\partial M)$. Making this expectation precise requires developing a theory of ``quantized Lagrangian correspondences'' in the shifted symplectic setting, hence a categorified and derived version of the Guillemin--Sternberg ``quantization commutes with reduction'' principle.

\item \textbf{Root of unity specialization.} Even after the formal comparison is established (matching the BV-BFV and deformation quantization outputs over $\CC[\![\hbar]\!]$), the passage to actual values of the quantum parameter remains non-trivial. At a root of unity $q = e^{i\pi/(k+h^\vee)}$, the category $\Rep_q(G)$ is no longer semisimple, and the modular tensor category is obtained only after semisimplification, i.e., the quotient by the tensor ideal $\mathcal{N}$ of negligible morphisms. This categorical localization interacts non-trivially with the other structures in our program. The key technical requirement is that semisimplification commutes with factorization homology giving
\begin{equation}
\iint_\Sigma \big(\mathcal{A}\, /\, \mathcal{N}\big) \;\simeq\; \Big(\iint_\Sigma \mathcal{A}\Big) \big/\, \mathcal{N}_\Sigma,
\end{equation}
where $\mathcal{N}$ is the negligible ideal and $\mathcal{N}_\Sigma$ is its image under factorization homology. Without this commutativity, the chain of identifications in Step 3 would break down. Establishing it requires understanding how tensor ideals in braided monoidal categories interact with the colimit that defines factorization homology. This is a question that, to our knowledge, has not been addressed in the literature.
\end{enumerate}

\subsection{Higher-categorical extensions}

Our framework extends naturally to the \emph{fully extended} TQFT setting, where data is assigned not only to surfaces and 3-manifolds but all the way down to points. The Cobordism Hypothesis of Baez--Dolan, proven by Lurie \cite{Lurie_cobordism}, asserts that a fully extended $n$-dimensional TQFT is completely determined by the object it assigns to a point, which must be fully dualizable in the target higher category. Applied to our setting, this yields the following conjecture.

\begin{conjecture}\label{conj:fully-extended}
The fully extended 3-dimensional TQFT corresponding to the RT theory assigns
\begin{enumerate}[label=(\roman*)]
\item to a point the $\EE_3$-algebra $\overline{\Rep}_q(G)$, viewed as a fully dualizable object in the $(\infty, 3)$-category of braided monoidal categories.
\item to a circle the Drinfeld center $\mathcal{Z}(\overline{\Rep}_q(G))$.
\item to a closed surface $\Sigma$ the factorization homology $\iint_\Sigma \overline{\Rep}_q(G)$.
\item to a closed 3-manifold $M$ the RT invariant $\Zrt(M) \in \CC$.
\end{enumerate}
The $\EE_2$-category $\overline{\Rep}_q(G)$ produced by BV-BFV quantization on the disk (Construction \ref{constr:BCS}) is promoted to an $\EE_3$-algebra (the value at a point in the fully extended theory) by the full dualizability result of \cite{BJS2021} and the Cobordism Hypothesis. Alternatively, the $\EE_3$-structure can be understood as arising from BV-BFV quantization of the 4-dimensional Crane--Yetter theory on $D^3 \times [0,1]$, where the 3-dimensional spatial slice $D^3$ provides three independent directions of disk embeddings; see \cite{KT2022} for this perspective.
\end{conjecture}

\begin{remark}\label{rmk:full-dualizability}
The claim that $\overline{\Rep}_q(G)$ is fully dualizable in the $(\infty, 3)$-category of braided monoidal categories is a non-trivial statement, requiring the existence of duals at every categorical level. At the level of objects, this is the rigidity (existence of duals) of the fusion category. At the level of 1-morphisms (functors between module categories) this requires the existence of adjoints, which amounts to the semisimplicity and finiteness of the module categories. At the level of 2-morphisms, it requires a sphericality or pivotality condition on the category. For modular tensor categories, full dualizability has been established by Douglas, Schommer-Pries, and Snyder \cite{DSPS2021}, and extended to braided tensor categories by Brochier, Jordan, and Snyder \cite{BJS2021}, who showed more generally that all fusion categories are fully dualizable, confirming that the Cobordism Hypothesis applies to the RT theory.
\end{remark}

\subsection{4-dimensional implications}

Looking one dimension higher, our framework makes predictions about 4-dimensional topology. The Crane--Yetter / Broda invariant of a closed 4-manifold $W$,
\begin{equation}
Z_{\mathrm{CY}}(W) = \mathcal{D}^{\chi(W)} \cdot \kappa^{-\sigma(W)},
\end{equation}
depends only on the Euler characteristic $\chi(W)$ and signature $\sigma(W)$, making it a relatively coarse invariant. Within our framework, it should arise from the BV-BFV quantization of 4-dimensional BF theory, via the $\EE_3$-structure on $\overline{\Rep}_q(G)$ integrated over 4-manifolds by factorization homology. This perspective connects to the program of Freed--Hopkins--Lurie--Teleman \cite{FHLT2010} on constructing fully extended TQFTs from compact Lie groups, and suggests that the Crane--Yetter theory is the shadow of a richer 4-dimensional structure. The factorization homology approach to the Crane--Yetter theory is developed in \cite{KT2022}. The slogan that the Witten--Reshetikhin--Turaev 3-TQFT is a boundary condition for the Crane--Yetter 4-TQFT has been recently established in \cite{Haioun2025}, including a generalization to the non-semisimple setting \cite{CGPM2023}.

The more interesting invariants arise when one considers \emph{defects}, which are embedded surfaces or links inside the 4-manifold, and this leads to categorification.

\begin{conjecture}[Categorification]\label{conj:categorification}
Khovanov homology of a link $L \subset S^3$ is the value of a fully extended 4-dimensional TQFT on the link complement $S^3 \setminus L$, viewed as a 3-manifold with boundary. The relevant 4-dimensional theory is associated to an $\EE_4$-algebra obtained from BV-BFV quantization of 4-dimensional Chern--Simons theory (in the sense of Costello) or Kapustin--Witten theory.
\end{conjecture}

The logic of this conjecture is as follows. The Jones polynomial of a link is a \emph{number}, namely, the value of the 3-dimensional RT theory on the link complement. Khovanov homology is a \emph{graded vector space} whose graded Euler characteristic recovers the Jones polynomial. This passage from a number to a vector space is precisely what happens when one lifts a 3-dimensional TQFT to a 4-dimensional one. The 3-manifold that was previously assigned a number (as a closed manifold in the 3d theory) is now a boundary component of a 4-manifold and is assigned a vector space. In other words, categorification \emph{is} the dimensional lift from $(2{+}1)$ to $(3{+}1)$ dimensions. Our framework, which connects the BV-BFV formalism to topological invariants via factorization homology, should extend to this 4-dimensional setting, with the $\EE_3$-algebra $\overline{\Rep}_q(G)$ of the 3d theory replaced by an $\EE_4$-algebra governing the 4d theory.

\subsection{Arithmetic and the geometric Langlands program}

The derived character stack $\Loc_G(\Sigma)$ that plays the central role in our program admits several variants, depending on the cohomology theory used to define ``local systems.'' These variants are deeply interrelated, and their relationships connect our work to one of the grand programs in modern mathematics called the \emph{geometric Langlands correspondence}.

The three principal incarnations of the character stack are:
\begin{enumerate}[label=(\roman*)]
\item \emph{Betti:} $\Loc_G^B(\Sigma) = \Map(\Sigma, BG)$, parametrizing flat connections (equivalently, local systems, or representations of $\pi_1$). This is the version we have used throughout this paper; it depends only on the homotopy type of $\Sigma$.
\item \emph{de Rham:} $\Loc_G^{dR}(\Sigma) = \mathrm{Conn}_G(\Sigma)/\mathcal{G}$, parametrizing all algebraic connections modulo gauge equivalence (not just flat ones). This requires a choice of algebraic structure on $\Sigma$ and is the natural home for $D$-modules and differential equations.
\item \emph{Dolbeault:} $\Loc_G^{Dol}(\Sigma) = \mathrm{Higgs}_G(\Sigma)$, parametrizing Higgs bundles, which are pairs of a holomorphic bundle and a holomorphic section of the twisted endomorphism bundle. This requires a complex structure on $\Sigma$.
\end{enumerate}
For surfaces, these three moduli spaces are related by deep and largely conjectural correspondences. The Riemann--Hilbert correspondence relates Betti and de Rham (flat connections to $D$-modules), while the non-abelian Hodge correspondence of Hitchin, Donaldson, Corlette, and Simpson relates de Rham and Dolbeault (flat connections to Higgs bundles). Together they form a triangle of equivalences that is central to the geometric Langlands program.

The geometric Langlands conjecture, in its strongest categorical form (due to Beilinson--Drinfeld, developed by Gaitsgory and collaborators), posits an equivalence of derived categories
\begin{equation}
\IndCoh\big(\Loc_{G^\vee}^{dR}(\Sigma)\big) \;\simeq\; D\mathrm{\text{-}\mathbf{mod}}\big(\mathrm{Bun}_G(\Sigma)\big),
\end{equation}
where $G^\vee$ is the Langlands dual group and $\mathrm{Bun}_G(\Sigma)$ is the moduli stack of $G$-bundles on $\Sigma$. This is an equivalence between ``spectral'' data (sheaves on the space of Langlands-dual local systems) and ``automorphic'' data ($D$-modules on the space of bundles).

Our Koszul reconstruction conjecture (Conjecture \ref{conj:koszul}) asserts that
\begin{equation}
\mathcal{B}_{\CS} \;\simeq\; \IndCoh\big(\Loc_G(M)^{\wedge}\big),
\end{equation}
identifying the category of Chern--Simons boundary conditions with ind-coherent sheaves on the character stack. This has a natural Langlands-dual reading. The category of boundary conditions of Chern--Simons theory with gauge group $G$ should be related, via the Langlands correspondence, to $D$-modules on $\mathrm{Bun}_{G^\vee}(\Sigma)$. In this way, our program can be viewed as a \emph{quantized, 3-dimensional analogue of geometric Langlands}. The 2-dimensional Langlands duality (between local systems and $D$-modules on a surface) is lifted to a 3-dimensional duality (between Chern--Simons boundary conditions and a yet-to-be-defined ``automorphic'' category on 3-manifolds). This perspective is closely related to the ``Betti geometric Langlands'' program of Ben-Zvi and Nadler \cite{BN2016}, which works with the Betti (topological) version of the character stack rather than the de Rham (algebraic) version, and to the foundational work of Gaitsgory and Lurie \cite{GL2019}. For the loop space perspective on character stacks, see \cite{BZN2018}; the quantum geometric Langlands conjecture, which provides a direct quantized analogue of the classical correspondence, is formulated in \cite{Gaitsgory2016}; and the role of singular support in the geometric Langlands conjecture is developed in \cite{AG2015}. Developing these connections further is an exciting direction for future work.

\subsection{Summary of conjectures}

For the reader's convenience, we collect the seven conjectures formulated in this paper and describe their logical interdependencies.

\begin{enumerate}[label=\textbf{C\arabic*.},leftmargin=3em]
\item \textbf{Conjecture \ref{conj:E2-equiv}} ($\EE_2$-equivalence): The BV-BFV category of boundary conditions is equivalent to the quantum group representation category as $\EE_2$-algebras: \[\mathcal{B}_{\CS}^{(k)} \simeq_{\EE_2} \overline{\Rep}_q(G).\]

\item \textbf{Conjecture \ref{conj:main}} (Main Conjecture): The non-perturbative BV-BFV and RT constructions agree as $(3\text{-}2\text{-}1)$-extended TQFTs valued in $2\text{-}\Vect$: \[Z_{\mathrm{BV,np}}^{(k)} \simeq Z_{\mathrm{RT}}^{(k)} : \Bord_3^{\mathrm{ext}} \to 2\text{-}\Vect.\]

\item \textbf{Conjecture \ref{conj:quant-agree}} (Quantization agreement): BV-BFV quantization of Chern--Simons theory on the cylinder $D^2 \times [0,1]$ agrees with shifted deformation quantization of the $1$-shifted symplectic stack $[G/G]$ at the $\EE_1$-level (monoidal categories).

\item \textbf{Conjecture \ref{conj:Lagrangian-quant}} (Quantized Lagrangian correspondences): The Lagrangian morphism \[\Loc_G(M) \to \Loc_G(\partial M)\] quantizes to the BV-BFV partition function.

\item \textbf{Conjecture \ref{conj:cellular}} (Cellular compatibility): The cellular BV-BFV partition function on a triangulation agrees with the factorization homology colimit.

\item \textbf{Conjecture \ref{conj:koszul}} (Koszul reconstruction): The category of Chern--Simons boundary conditions is equivalent to ind-coherent sheaves on the formal completion of the character stack:
\begin{equation*}
\mathcal{B}_{\CS} \;\simeq\; \IndCoh\big(\Loc_G(M)^{\wedge}\big).
\end{equation*}

\item \textbf{Conjecture \ref{conj:resurgence-koszul}} (Resurgence--Koszul correspondence): The Stokes automorphisms of the resurgent perturbative series are identified with the Koszul duality transition maps in $\IndCoh(\Loc_G(M)^\wedge)$.
\end{enumerate}

These seven conjectures form a coherent web of mutually reinforcing statements. The logical structure is as follows: C3 (quantization agreement) implies C1 ($\EE_2$-equivalence), which combined with factorization homology and the AKZ theorem gives the surface-level identifications of C2. However, the full Main Conjecture C2 (including cobordism maps) additionally requires C4 (quantized Lagrangian correspondences), which is logically independent of C1. C4 (quantized Lagrangian correspondences) is required to extend C2 from surface-level data to cobordism maps; this is a substantial independent conjecture, not a formal consequence of C1. C5 (cellular compatibility) provides concrete computational access to the factorization homology side. C6 (Koszul reconstruction) gives the global algebraic framework in which all the perturbative data is assembled. C7 (resurgence--Koszul) connects the algebraic approach of C6 to the analytic approach via Borel resummation and Stokes analysis, showing that the two routes to the non-perturbative answer (the algebraic (Koszul duality) and the analytic (resurgence)) encode the same information.

\bigskip

The synthesis proposed in this paper views the Reshetikhin--Turaev invariants and BV-BFV perturbative quantization as two facets of a single geometric object. Namely, the derived character stack $\Loc_G$, equipped with its shifted symplectic structure. Factorization homology provides the machine that translates between the local algebraic data (the $\EE_2$-category at a point, the perturbative BV-BFV expansion around a flat connection) and the global topological invariant (the RT partition function, the non-perturbative Chern--Simons path integral). Koszul duality provides the algebraic mechanism for the passage from perturbative to non-perturbative. If this program can be carried out, it would establish that the fundamental gap between perturbative and non-perturbative quantum field theory (at least in the topological setting of Chern--Simons theory) is not primarily an analytic problem requiring the resummation of divergent series, but an algebraic one. In particular, the passage from local to global in the sheaf theory on derived moduli spaces. The evidence assembled in this paper, while far from constituting a proof, suggests that this perspective is worth pursuing seriously, and we hope that the precise conjectures formulated here will serve as a useful roadmap for future work.

\end{document}